\pdfoutput=1  %for arXiv rendering of xymatrix to come out right

\documentclass[bpam]{ipart}
\arxiv{2011.06533}

\usepackage{amsthm}
\usepackage{mathtools}
\usepackage{color}
\usepackage{xcolor}
\usepackage{stmaryrd}
\usepackage{rotating}
\usepackage{adjustbox}

\usepackage{amsmath}
\usepackage{graphicx}

\usepackage{tikz}
\usetikzlibrary{cd}

\usepackage{hyperref}
\hypersetup{
  colorlinks = true,
  allcolors=darkblue,
  urlcolor=black
}

\usepackage[all]{xy}

\usepackage{xfrac}

\usepackage{tikz}

\DeclareMathAlphabet{\mathpzc}{OT1}{pzc}{m}{it} %for math script

\newcommand\mathscr[1]{\scalebox{1.1}{$\mathpzc{#1}$}}

\usepackage[titletoc]{appendix}

\usepackage[safe]{tipa} % for \esh

\definecolor{darkblue}{rgb}{0.05,0.25,0.65}
\definecolor{darkgreen}{RGB}{20,140,10}
\definecolor{greenii}{RGB}{20,140,10}
\definecolor{lightgray}{rgb}{0.9,0.9,0.9}
\definecolor{orangeii}{RGB}{200,100,5}
\definecolor{darkyellow}{rgb}{.91,.91,0}

\usepackage{multirow}

\usepackage{stmaryrd}    % for \sslash

\usepackage{enumitem}
\setlist{
  labelsep=6pt,
  leftmargin = *,
  itemsep=-2pt,
  topsep=1pt
}

\usepackage{enumerate} %Roman enumerate

%%%%%%%%%%%%%%%%%%%%%%%%%%%%%%%
%For squigly arrows
\usepackage{amssymb,amsmath}
%\usepackage{stackengine}
%\stackMath
%\usepackage{ifthen}

\newcounter{sqindex}

%%%%%%%%%%%%%%%%%%%%%%%

\newcommand{\dgcAlgebras}{
  \mathrm{DiffGrCAlg}
    _{\scalebox{.6}{$\mathbb{R}$}}
    ^{\scalebox{.6}{$\geq 0$}}
}

\newcommand{\EquivariantdgcAlgebras}{
  \orbisingularGLarge\mathrm{DiffGrCAlg}
    _{\scalebox{.6}{$\mathbb{R}$}}
    ^{\scalebox{.6}{$\geq 0$}}
}

\newcommand{\EquivariantdgcAlgebrasOp}{
  \big(
    \EquivariantdgcAlgebras
  \big)^{\mathrm{op}}
}

\newcommand{\Gade}{
  G^{\mathrm{ADE}}
}

\newcommand{\ZTwo}{\mathbb{Z}_{2}}

\newcommand{\Grefl}{
  \ZTwo
}

\newcommand{\SpLR}{
  \mathrm{Sp}(1)
}

\newcommand{\EquivariantdgcAlgebrasProj}{
  \big(
    \EquivariantdgcAlgebras
  \big)_{\mathrm{proj}}
}

\newcommand{\EquivariantdgcAlgebrasProjOp}{
  \big(
    \EquivariantdgcAlgebras
  \big)^{\mathrm{op}}_{\mathrm{proj}}
}

\newcommand{\RepresentationRing}[1]{
  \mathrm{Rep}_{\scalebox{.6}{$#1$}}
}

\newcommand{\EquivariantRepresentationRing}[1]{
  \underline{\RepresentationRing{#1}}
}

\newcommand{\EquivariantSSet}{
  \orbisingularGLarge\mathrm{SSet}
}

\newcommand{\ZTwoEquivariantSSet}{
  \orbisingularZTwoLarge\mathrm{SSet}
}

\newcommand{\EquivariantSSetProj}{
  \EquivariantSSet_{\mathrm{proj}}
}

\newcommand{\ReducedSSet}{
  \mathrm{RedSSet}
}

\newcommand{\HomotopyTypes}{
  \mathrm{HoTypes}
}

\newcommand{\GadeEquivariantHomotopyTypes}{
  \orbisingularGadeLarge\HomotopyTypes
}

\newcommand{\EquivariantHomotopyTypes}{
  \orbisingularGLarge\HomotopyTypes
}

\newcommand{\EquivariantHomotopyTypesConnected}{
  \orbisingularGLarge\HomotopyTypes_{\geq 1}
}

\newcommand{\EquivariantSimplyConnectedHomotopyTypes}{
  \orbisingularGLarge\HomotopyTypes_{\geq 2}
}

\newcommand{\EquivariantHomotopyTypesSimplyConnected}{
  \EquivariantSimplyConnectedHomotopyTypes
}

\newcommand{\ZTwoActionsOnTopSp}{
  \ZTwo\mathrm{Act}
  \big(
    \mathrm{TopSp}
  \big)
}

\newcommand{\GOrbifolds}{
  G\mathrm{Orbifolds}
}

\newcommand{\TActionsOnTopSp}{
  T\mathrm{Act}
  \big(
    \mathrm{TopSp}
  \big)
}

\newcommand{\TGActionsOnTopSp}{
  \big(
    T \!\times G
  \big)
  \mathrm{Act}
  \big(
    \mathrm{TopSp}
  \big)
}

\newcommand{\GActionsOnTopSp}{
  G\mathrm{Act}
  \big(
    \mathrm{TopSp}
  \big)
}

\newcommand{\TGActionsOnSmoothManifolds}{
  \big(
    T \!\times G
  \big)\mathrm{Act}
  \big(
    \mathrm{SmthMfd}
  \big)
}

\newcommand{\TActionsOnSmoothManifolds}{
  G\mathrm{Act}
  \big(
    \mathrm{SmthMfd}
  \big)
}

\newcommand{\GActionsOnSmoothManifolds}{
  G\mathrm{Act}
  \big(
    \mathrm{SmthMfd}
  \big)
}

\newcommand{\GadeActionsOnSmoothManifolds}{
  \Gade\mathrm{Act}
  \big(
    \mathrm{SmthMfd}
  \big)
}

\newcommand{\ZTwoActionsOnSmoothManifolds}{
  \ZTwo\mathrm{Act}
  \big(
    \mathrm{SmthMfd}
  \big)
}

\newcommand{\GActionsOnSSet}{
  G\mathrm{Act}
  \big(
    \mathrm{SSet}
  \big)
}

\newcommand{\GActionsOnSets}{
  G\mathrm{Act}
  \big(
    \mathrm{Set}
  \big)
}

\newcommand{\EquivariantSimplyConnectedRFiniteHomotopyTypes}{
  \EquivariantHomotopyTypes
    _{\geq 2}^{\mathrm{fin}_{\mathbb{R}}}
}

\newcommand{\LieAlgebras}{
   \mathrm{LieAlg}
     _{\scalebox{.6}{$\mathbb{R}, \mathrm{fin}$}}
}

\newcommand{\LInfinityAlgebras}{
   L_\infty\mathrm{Alg}
     ^{\scalebox{.6}{$\geq 0$}}
     _{\scalebox{.6}{$\mathbb{R}, \mathrm{fin}$}}
}

\newcommand{\EquivariantLInfinityAlgebras}{
   \orbisingularGLarge L_\infty\mathrm{Alg}
     ^{\scalebox{.6}{$\geq 0$}}
     _{\scalebox{.6}{$\mathbb{R}, \mathrm{fin}$}}
}

\newcommand{\dgcAlgebrasFin}{
  \mathrm{DiffGrCAlg}
    ^{\scalebox{.6}{$\geq 0, \mathrm{fin}$}}
    _{\scalebox{.6}{$\mathbb{R}$}}
}

\newcommand{\CochainComplexes}{
    \mathrm{CoCmplx}
      ^{\scalebox{.6}{$\geq 0$}}
      _{\scalebox{.6}{$\mathbb{R}$}}
}

\newcommand{\EquivariantCochainComplexes}{
    \orbisingularGLarge\mathrm{CoComplx}
      ^{\scalebox{.6}{$\geq 0$}}
      _{\scalebox{.6}{$\mathbb{R}$}}
}

\newcommand{\CochainComplexesFin}{
    \mathrm{CoCmplx}
      ^{\scalebox{.6}{$\geq 0, \mathrm{fin}$}}
      _{\scalebox{.6}{$\mathbb{R}$}}
}

\newcommand{\VectorSpaces}{
    \mathrm{VecSp}
      _{\scalebox{.6}{$\mathbb{R}$}}
}

\newcommand{\VectorSpacesFin}{
    \mathrm{VecSp}
      ^{\scalebox{.6}{$\mathrm{fin}$}}
      _{\scalebox{.6}{$\mathbb{R}$}}
}

\newcommand{\EquivariantGroups}{
    \orbisingularGLarge\mathrm{Grp}
}

\newcommand{\EquivariantAbelianGroups}{
    \orbisingularGLarge\mathrm{AbelianGroups}
}

\newcommand{\EquivariantVectorSpaces}{
    \orbisingularGLarge\mathrm{VecSp}
      _{\scalebox{.6}{$\mathbb{R}$}}
}

\newcommand{\EquivariantDualVectorSpaces}{
    \orbisingularGLarge\mathrm{VecSp}
      ^{\scalebox{.6}{$\!\vee$}}
      _{\scalebox{.6}{$\mathbb{R}$}}
}

\newcommand{\EquivariantDualVectorSpacesInjective}{
    \orbisingularGLarge\mathrm{VecSp}
      ^{\scalebox{.6}{$\!\vee, \mathrm{inj}$}}
      _{\scalebox{.6}{$\mathbb{R}$}}
}

\newcommand{\EquivariantVectorSpacesFin}{
    \orbisingularGLarge\mathrm{VecSp}
      ^{\scalebox{.6}{$\mathrm{fin}$}}
      _{\scalebox{.6}{$\mathbb{R}$}}
}

\newcommand{\EquivariantDualVectorSpacesFin}{
    \orbisingularGLarge\mathrm{VecSp}
      ^{\!\scalebox{.6}{$\vee$},\scalebox{.5}{$\mathrm{fin}$}}
      _{\scalebox{.6}{$\mathbb{R}$}}
}

\newcommand{\Representations}{
    \mathrm{Rep}
      _{\scalebox{.6}{$\mathbb{R}$}}
}

\newcommand{\RepresentationsLeft}{
    \mathrm{Rep}
      ^{\scalebox{.7}{$l$}}
      _{\scalebox{.6}{$\mathbb{R}$}}
}

\newcommand{\RepresentationsRight}{
    \mathrm{Rep}
      ^{\scalebox{.7}{$r$}}
      _{\scalebox{.6}{$\mathbb{R}$}}
}

\newcommand{\RepresentationsFin}{
    \mathrm{Rep}
      ^{\scalebox{.6}{$\mathrm{fin}$}}
      _{\scalebox{.6}{$\mathbb{R}$}}
}

\newcommand{\WeylGroup}{
    \mathrm{W}
      _{\scalebox{.6}{$\!\!G$}}
}

\newcommand{\NormalizerGroup}{
    \mathrm{N}
      _{\scalebox{.6}{$\!G$}}
}

\newcommand{\GradedVectorSpaces}{
    \mathrm{GrVecSp}
      ^{\scalebox{.6}{$\geq 0$}}
      _{\scalebox{.6}{$\mathbb{R}$}}
}

\newcommand{\EquivariantGradedVectorSpaces}{
    \orbisingularGLarge\mathrm{GrVecSp}
      ^{\scalebox{.6}{$\geq 0$}}
      _{\scalebox{.6}{$\mathbb{R}$}}
}

\newcommand{\ZTwoEquivariantGradedVectorSpaces}{
    \ZTwo\mathrm{GrVecSp}
      ^{\scalebox{.6}{$\geq 0$}}
      _{\scalebox{.6}{$\mathbb{R}$}}
}

\setcounter{tocdepth}{2}

\newcommand{\mapsdown}{\rotatebox[origin=c]{-90}{$\mapsto$}}

   %for overset and underset at the same time

%%%%%%%%%%%%%%%%%

\usepackage[new]{old-arrows}   %For \longhookrightarrow

%%%%%%%%%%%%%%%%%
%%% COLOURS

%\definecolor{amber}{RGB}{255,126,0}
%\definecolor{celblue}{RGB}{69,158,214}

%\hypersetup{citecolor=orange}
%\hypersetup{linkcolor=purple}

%%%%%%%%%%%%%%%%%
%%% CUSTOM XY
\newdir{> }{{}*!/10pt/@{>}}

\usepackage{amssymb}

\def\acts{\raisebox{1.4pt}{\;\rotatebox[origin=c]{90}{$\curvearrowright$}}\hspace{.5pt}}

%%%%%%%%%%%%%%%%%

%FOR HOMOTOPY QUOTIENT
%\newcommand{\dslash}{\hspace{-1mm}\sslash \hspace{-1mm}}

%For chi at the same vertical level
\DeclareRobustCommand{\rchi}{{\mathpalette\irchi\relax}}
\newcommand{\irchi}[2]{\raisebox{\depth}{$#1\chi$}} % inner command, used by \rchi

% \tensor and \multiscript
\makeatletter
\newif\if@sup
\newtoks\@sups
\def\append@sup#1{\edef\act{\noexpand\@sups={\the\@sups #1}}\act}%
\def\reset@sup{\@supfalse\@sups={}}%
\def\mk@scripts#1#2{\if #2/ \if@sup ^{\the\@sups}\fi \else%
  \ifx #1_ \if@sup ^{\the\@sups}\reset@sup \fi {}_{#2}%
  \else \append@sup#2 \@suptrue \fi%
  \expandafter\mk@scripts\fi}
\def\tensor#1#2{\reset@sup#1\mk@scripts#2_/}
\def\multiscripts#1#2#3{\reset@sup{}\mk@scripts#1_/#2%
  \reset@sup\mk@scripts#3_/}
\makeatother

% \slash
\makeatletter
\newbox\slashbox \setbox\slashbox=\hbox{$/$}
\def\itex@pslash#1{\setbox\@tempboxa=\hbox{$#1$}
  \@tempdima=0.5\wd\slashbox \advance\@tempdima 0.5\wd\@tempboxa
  \copy\slashbox \kern-\@tempdima \box\@tempboxa}
\def\slash{\protect\itex@pslash}
\makeatother

% math-mode versions of \rlap, etc
% from Alexander Perlis, "A complement to \smash, \llap, and lap"
%   http://math.arizona.edu/~aprl/publications/mathclap/
\def\clap#1{\hbox to 0pt{\hss#1\hss}}
\def\mathllap{\mathpalette\mathllapinternal}
\def\mathrlap{\mathpalette\mathrlapinternal}
\def\mathclap{\mathpalette\mathclapinternal}
\def\mathllapinternal#1#2{\llap{$\mathsurround=0pt#1{#2}$}}
\def\mathrlapinternal#1#2{\rlap{$\mathsurround=0pt#1{#2}$}}
\def\mathclapinternal#1#2{\clap{$\mathsurround=0pt#1{#2}$}}

% Renames \sqrt as \oldsqrt and redefine root to result in \sqrt[#1]{#2}
\let\oldroot\root
\def\root#1#2{\oldroot #1 \of{#2}}
\renewcommand{\sqrt}[2][]{\oldroot #1 \of{#2}}

% Manually declare the txfonts symbolsC font
\DeclareSymbolFont{symbolsC}{U}{txsyc}{m}{n}
\SetSymbolFont{symbolsC}{bold}{U}{txsyc}{bx}{n}
\DeclareFontSubstitution{U}{txsyc}{m}{n}

% Manually declare the stmaryrd font
\DeclareSymbolFont{stmry}{U}{stmry}{m}{n}
\SetSymbolFont{stmry}{bold}{U}{stmry}{b}{n}

% Manually declare the MnSymbolE font
\DeclareFontFamily{OMX}{MnSymbolE}{}
\DeclareSymbolFont{mnomx}{OMX}{MnSymbolE}{m}{n}
\SetSymbolFont{mnomx}{bold}{OMX}{MnSymbolE}{b}{n}
\DeclareFontShape{OMX}{MnSymbolE}{m}{n}{
    <-6>  MnSymbolE5
   <6-7>  MnSymbolE6
   <7-8>  MnSymbolE7
   <8-9>  MnSymbolE8
   <9-10> MnSymbolE9
  <10-12> MnSymbolE10
  <12->   MnSymbolE12}{}

\usepackage{cleveref}

\crefformat{section}{\S#2#1#3} % see manual of cleveref, section 8.2.1
\crefformat{subsection}{\S#2#1#3}
\crefformat{subsubsection}{\S#2#1#3}

 %Theorem Environments
\theoremstyle{italics}
\newtheorem{theorem}{Theorem}[section]
\newtheorem{lemma}[theorem]{Lemma}
\newtheorem{prop}[theorem]{Proposition}

\theoremstyle{definition}
\newtheorem{defn}[theorem]{Definition}
\newtheorem{notation}[theorem]{Notation}
\newtheorem{example}[theorem]{Example}

\newtheorem{remark}[theorem]{Remark}

\usepackage{amsfonts}
\usepackage{colortbl}

\renewcommand{\emph}{\textit}

\def\orbisingular{\rotatebox[origin=c]{70}{$\prec$}}

\def\orbisingularGLarge{ \raisebox{-4.2pt}{$\scalebox{.76}{$\orbisingular$}^{\hspace{-4.7pt}\raisebox{1pt}{\scalebox{.95}{$G$}}}$}
}
\def\orbisingularGadeLarge{ \raisebox{-4.2pt}{$\scalebox{.76}{$\orbisingular$}^{\hspace{-4.7pt}\raisebox{1pt}{\scalebox{.95}{$\Gade$}}}$}
}
\def\orbisingularZTwoLarge{\raisebox{-4.2pt}{$\scalebox{.76}{$\orbisingular$}^{\hspace{-4pt}\raisebox{1.2pt}{\scalebox{.88}{$\mathbb{Z}_{\scalebox{.6}{$2$}}$}}}$}}

\newcommand{\defneq}{\equiv}

\newcommand{\ten}{1\!0}

%%%%%%%%%%%%%%%%%%%%%%%%%%%
\begin{document}
%%%%%%%%%%%%%%%%%%%%%%%%%%%

%%%%%%%%%%%%%%%%%%%%%%%%%%%%%%%%%%%
%vertical spacing around displayed equations %
\setlength{\abovedisplayskip}{3pt}
\setlength{\belowdisplayskip}{3pt}
\setlength{\abovedisplayshortskip}{-3pt}
\setlength{\belowdisplayshortskip}{3pt}
%%%%%%%%%%%%%%%%%%%%%%%%%%%%%%%%%%%%

\begin{frontmatter}

\title
  [Character Map in Twisted Equivariant Nonabelian Cohomology] % running title
  { The Character Map in  Twisted Equivariant Nonabelian Cohomology} % actual title

    \begin{aug}
        \author{
          \fnms{Hisham} \snm{Sati}
          \thanksref{CQTS}
          \thanksref{Courant}
          \ead[label=email1]{hsati@nyu.edu}
         }
          \address{
          Center for Quantum and Topological Systems (CQTS), 
          \\
          Research Institute, 
             New York University Abu Dhabi, 
             \\
             Saadiyat Island, Abu Dhabi, 
             \\
             UAE             
             \\
             \printead{email1}
           }
        \and
        \author{
          \fnms{Urs} \snm{Schreiber}
          \thanksref{CQTS}
          \ead[label=email2]{us13@nyu.edu}
        }
        \address{
        Center for Quantum and Topological Systems (CQTS), 
        \\
        Research Institute, 
           New York University Abu Dhabi, 
           \\
           Saadiyat Island, Abu Dhabi, 
           \\
           UAE             
           \\
           \printead{email2}
       }
        \thankstext{CQTS}{
           Mathematics, Division of Science; and
           %\\
           Center for Quantum and Topological Systems,
           %\\
           NYUAD Research Institute,
           %\\
           New York University Abu Dhabi, UAE.   
         }
      \thankstext{Courant}{
        The Courant Institute for Mathematical Sciences, NYU, NY.
      }
    \end{aug}

\begin{abstract}
  The fundamental notion of {\it non-abelian} generalized cohomology gained recognition 
  in algebraic topology as the non-abelian Poincar{\'e}-dual to ``factorization homology'', and in theoretical physics as providing flux-quantization for non-linear Gau{\ss} laws. 
  However, already the archetypical example --- unstable Cohomotopy, first studied almost a century ago by Pontrjagin --- has remained underappreciated as a cohomology theory and has only recently received attention as a flux-quantizaton law (``Hypothesis H'').
  
  \smallskip 
  Here we lay out a general construction of the analogue of the Chern character map on twisted equivariant non-abelian cohomology theories (with equivariantly simply-connected classifying spaces) and illustrate the construction by spelling out a twisted equivariant form of Cohomotopy as an archetypical and intriguing running example, essentially by computing its equivariant Sullivan model.

  \smallskip 
  We close with an outlook on the application of this result to the rigorous deduction of anyonic quantum states on M5-branes wrapped over Seifert 3-orbifolds.
\end{abstract}

\begin{keyword}[class=AMS]  
  % MSC codes from https://mathscinet.ams.org/mathscinet/msc/pdfs/classifications2020.pdf:
  \kwd[Primary ]
      {55N25} % Homology with local coefficients, equivariant cohomology
  \kwd{55N20} % Generalized (extraordinary) homology and cohomology theories in algebraic topology
  \kwd{18G50} % Nonabelian homological algebra (category-theoretic aspects)
  \kwd{57R20} % Characteristic classes and numbers in differential topology
  \kwd[; secondary ]
      {57R18} % Topology and geometry of orbifolds
  \kwd{55Q55} % Cohomotopy groups
  \kwd{81T30} % String and superstring theories; other extended objects (e.g., branes) in quantum field theory 
\end{keyword}

\begin{keyword}
    \kwd{algebraic topology}
    \kwd{generalized cohomology}
    \kwd{nonabelian cohomology} 
    \kwd{equivariant cohomology} 
    \kwd{twisted cohomology}
    \kwd{character map}
    \kwd{M-branes}
\end{keyword}

\tableofcontents

\end{frontmatter}

% \newpage

% \newpage

%%%%%%%%%%%%%%%%%%%%%%%%%%%%%%%%%%%%
\section{Introduction \& Overview}
%%%%%%%%%%%%%%%%%%%%%%%%%%%%%%%%%%%%

Algebraic topology, of course, is the study of spaces via systems of (co)homologi{-}cal invariants (cf. \cite{Munkres84}\cite{Hatcher02} \cite{tD08}). Ever more generalized versions of cohomology are routinely discussed these days, but an ancient and archetypical example --- namely unstable Cohomotopy \eqref{Cohomotopy}, which we will refer to as just {\it Cohomotopy}, cf. \cite{Spanier49} \cite[\S VII]{STHu59} (going back to \cite{Borsuk36}\cite{Pontrjagin38})
--- has received little attention as a cohomology theory, since as such it falls outside the scope even of the generalized cohomology theories as commonly understood today: it is a {\it non-abelian} generalized cohomology theory, as we recall in a moment.

\smallskip 
Motivated by recent application \cite{SS24-Flux} in theoretical physics (exposition in \S\ref{ApplicationToFLuxQuantization}) of non-abelian generalized cohomology in general and of Cohomotopy 
in particular, we develop here the {\it equivariant} enhancement of the {\it character map} on twisted non-abelian cohomology theories due to \cite{FSS23-Char} and illustrate it by presenting a case study of constructions on and phenomena exhibited by unstable Cohomotopy when regarding it as a twisted equivariant cohomology theory. 
Concretely, besides the development of the general theory of the equivariant non-abelian character map, our main result
(Thm. \ref{FluxQuantizationInEquivariantTwistorialCohomotopy})
is the construction and analysis of
the nonabelian character map \cite{FSS23-Char}
on the ``twistorial'' variant of low-degree Cohomotopy from \cite{FSS20c}, now generalized to $\ZTwo$-equivariant form. 

We may motivate this example by its application in high energy physics indicated in the outlook section \S\ref{ApplicationToFLuxQuantization}, but at the same time ---
due to the low-dimensional spheres and projective spaces it involves, these being being among the simplest cell complexes --- it is also one of the most basic examples of the theory generally, as such of interest in its own right, and shall serve as our running example illustrating all our constructions.

\smallskip 
At the heart of this computation is, for reasons explained in a moment, the computation of minimal Sullivan models of fibrations of some basic cell complexes (like $S^7$ and $\mathbb{C}P^3$) but in the generality of {\it equivariant} homotopy theory (recalled in \S\ref{GEquivariantHomotopyTheory}). Since in this context even such basic examples of equivariant Sullivan models have not been discussed in print before -- to the best of our knowledge -- the reader may take \S\ref{EquivariantNonAbelianDeRhamCohomology} as an exposition of the notoriously more intricate equivariant version of dg-algebraic rational homotopy theory (which has seen little application in the past) along some illustrative examples and under the perspective of the equivariant generalization
(in \S\ref{EquivariantNonAbelianDeRhamTheorem}) of the non-abelian de Rham theorem from \cite[\S 6]{FSS23-Char}.

After the proof of the main theorem is thereby completed, for the inclined reader we end in \S\ref{ApplicationToFLuxQuantization} with a brief outlook on the somewhat remarkable implications of our computations to recent questions in theoretical physics, specifically to the rigorous derivation of anyonic quantum states on M5-branes.

\smallskip

But first, to set the scene, it is worthwhile to briefly take a step back and reconsider the notion of cohomology as such:

\medskip

\noindent
{\bf Cohomology via classifying spaces.}
It is a classical and yet possibly undervalued fact that {\it reasonable cohomology theories have classifying spaces} (and more generally {\it classifying stacks}). To quickly recall (more details and pointers in \cite[\S 2]{FSS23-Char}):

\smallskip

\noindent {\bf -- Ordinary cohomology.}
This begins with the observation that (reduced) ordinary singular cohomology, with coefficients in a discrete abelian group $A$, is classified in degree $n$ by Eilenberg-MacLane spaces $K(A,n)$ -- in that on well-behaved topological spaces $X$, notably on smooth manifolds, there are natural isomorphisms between the ordinary cohomology groups and the connected components of the respective (pointed) mapping spaces:
\begin{equation}
  \label{OrdinaryCohomologyRepresented}
  \def\arraystretch{1.5}
  \begin{array}{l}
  H^n(X;\, A)
  \;\simeq\;
  \pi_0
  \,
  \mathrm{Maps}
  \big(
    X,\,
    K(A,n)
  \big)
  \,,
  \\
  \widetilde H^n(X;\, A)
  \;\simeq\;
  \pi_0
  \,
  \mathrm{Maps}^{\ast}
  \big(
    X,\,
    K(A,n)
  \big)
  \,.
  \end{array}
\end{equation}
This equivalence makes manifest the characteristic properties of cohomology: homotopy invariance, exactness and wedge property, since these are now immediately implied by general abstract properties of mapping spaces.

Moreover, these EM-spaces are in fact loop spaces of each other, via weak homotopy equivalences
\begin{equation}
  \label{LoopingEquivOnEmSpaces}
  \sigma_n
  \;:\;
  \begin{tikzcd}
    K(A,n)
    \ar[r, "{\sim}"]
    &
    \Omega K(A,n+1)
  \end{tikzcd}
\end{equation}
that thereby represent the {\it suspension isomorphisms} between ordinary cohomology groups, as follows: 
$$
  \def\arraystretch{1.7}
  \begin{array}{rcl}
    \widetilde H^{n}(X;A)
    &
    \xrightarrow
      [
        \scalebox{.7}{
          \color{gray}\eqref{OrdinaryCohomologyRepresented}
        }
      ]
      { \phantom{-}\sim\phantom{-} }
    &
    \mathrm{Maps}^{\ast/\!}\big(
      X
      ,\,
      K(A,n)
    \big)
    \xrightarrow
      [
        \scalebox{.7}{
          \color{gray}
          \eqref{LoopingEquivOnEmSpaces}
        }
      ]
      {  
        (\sigma_n)_\ast
      }
    \mathrm{Maps}^{\ast/\!}\big(
      X
      ,\,
      \Omega
      K(A,n+1)
    \big)
    \\
    &
    \xrightarrow
      [
        \mathclap{
        \scalebox{.7}{
          \color{gray}
          adjunction
        }
        }
      ]
      { \phantom{-}\sim\phantom{-} }
    &
    \mathrm{Maps}^{\ast/\!}\big(
      \Sigma
      X
      ,\,
      K(A,n+1)
    \big)
    \xrightarrow
      [
        \scalebox{.7}{
          \color{gray}\eqref{OrdinaryCohomologyRepresented}
        }
      ]
      { \phantom{-}\sim\phantom{-} }
    \widetilde H^{n+1}\big(
      \Sigma X
      ;\,
      A
    \big)
    \,.
  \end{array}
$$

\noindent {\bf -- Ordinary non-abelian cohomology.} Note here that it is the loop space property \eqref{LoopingEquivOnEmSpaces}, and hence the corresponding suspension isomorphism, which reflect the fact that the coefficient $A$ has been assumed to be an {\it abelian} group:
For a non-abelian group $G$, an Eilenberg-MacLane space $K(G, 1) \,\simeq\, B G$ still exists, but is {\it not a loop space}.

While the suspension isomorphism is thus lost for non-abelian coefficients, the assignment
\begin{equation}
  \label{OrdinaryNonAbelianCohomology}
  X
  \;\;
   \longmapsto
  \;\;
  H^1(
    X
    ;\, 
    G
  )
  \;\;
  :=
  \;\;
  \pi_0
  \,
  \mathrm{Maps}(
    X
    ,\,
    B G
  )
  \;\;\;
  \in\;
  \mathrm{Set}^{\ast/}
\end{equation}
still satisfies homotopy invariance, exactness and wedge property, just by the general properties of mapping spaces, and hence has all the characteristic properties of ordinary cohomology -- except for its abelian-ness.
Accordingly, \eqref{OrdinaryNonAbelianCohomology} is known as {\it non-abelian cohomology}, famous from early applications in Chern-Weil theory.

\medskip

\noindent {\bf -- Whitehead-generalized cohomology theory.} But if or as long as we do insist on abelian cohomology groups related by suspension isomorphisms, we may still immediately generalize ordinary cohomology in the form \eqref{OrdinaryCohomologyRepresented}, simply by using any other sequence of classifying spaces $(E_n)_{n=0}^\infty$, being successive loop spaces of each other as in \eqref{LoopingEquivOnEmSpaces},
$$
  \sigma_n \,:\,
  \begin{tikzcd}
    E_n
    \ar[r, "{ \sim }"]
    &
    \Omega
    E_{n+1}
    \,,
  \end{tikzcd}
$$
as such called a {\it sequential $\Omega$-spectrum of spaces}, or just a {\it spectrum}, for short. The Brown representability theorem says that the resulting assignments
$$
  X
  \;\;
  \mapsto
  \;\;
  E^n(X)
  \,:=\,
  \pi_0
  \,
  \mathrm{Maps}(
    X
    ;\,
    E_n
  )
$$
are equivalently the {\it generalized cohomology theories} as introduced by Whitehead, including examples such as K-theory, elliptic cohomology and cobordism cohomology.

\medskip

\noindent {\bf -- Non-abelian generalized cohomology.} But as we just saw, suspension isomorphisms are to be regarded as {\it extra} structure on cohomology. Not necessarily requiring them leads to consider {\it any pointed space} $\mathscr{A}$ (which we may as well assume to be connected) as the classifying space of a non-abelian generalized cohomology theory, defined in evident generalization of \eqref{OrdinaryNonAbelianCohomology} simply by
\begin{equation}
  \label{NonabelianCohomology}
  H^1(
    X
    ;\,
    \Omega\mathscr{A}
  )
  \;\;
  :=
  \;\;
  \pi_0
  \,
  \mathrm{Maps}(
    X
    ,\,
    \mathscr{A}
  )
  \,.
\end{equation}
Here the notation on the left is suggestive of the fact that any loop space $\Omega \mathscr{A}$ canonically carries the structure of a higher homotopy-coherent group -- a groupal $A_\infty$-space or $\infty$-group, for short --  whose de-looping is equivalent to the connected component of the original space (cf. \cite[Prop. 2.2]{FSS23-Char}):
\begin{equation}
  \label{DeloopingEquivalence}
  \mathscr{A}
  \;\;
  \simeq
  \;\;
  B \, \Omega \mathscr{A}
  \,.
\end{equation}

For instance, in the archetypical case where
$\mathscr{A} \,\defneq\, S^n$ is the $n$-sphere, then the non-abelian generalized cohomology theory that it classifies is known as (unstable) {\it Cohomotopy} $\pi^n$ (cf. \cite{Spanier49}\cite[\S VII]{STHu59}\cite[Ex. 2.7]{FSS23-Char})
\begin{equation}
  \label{Cohomotopy}
  \widetilde H^1\big(
    X
    ;\, 
    \Omega S^n
  \big)
  \;\;
  \defneq
  \;\;
  \pi_0
  \,
  \mathrm{Maps}^{\ast/\!}\big(
    X
    ,\,
    S^n
  \big)
  \;\;
  \defneq
  \;\;
  \pi^n(X)
  \,,
\end{equation}
in dual reference to the familar {\it homotopy} groups
$$
  \pi_n(X)
  \;\;
  \simeq
  \;\;
  \pi_0
  \,
  \mathrm{Maps}^{\ast/\!}\big(
    S^n
    ,\,
    X
  \big)
  \,.
$$

Another example of non-abelian generalized cohomology is unstable topological K-theory \cite{HamanakaKono04}, whose classifying spaces are taken to be finite stages $\mathrm{U}(n)$ of the sequential colimits which construct the classifying spaces of topological K-theory.

\medskip

\noindent
{\bf Developing non-abelian cohomology.} Fundamental, elementary, and compelling as the notion of non-abelian generalized cohomology in \eqref{NonabelianCohomology} is, it has long remained underappreciated. For example, none of the original authors 
\cite{Borsuk36}\cite{Pontrjagin38}\cite{Spanier49}
on Cohomotopy \eqref{Cohomotopy} address their subject as a cohomology theory, instead the early development revolves around partial fixes for the perceived defect of co-homotopy sets to not in general carry group structure.
The situation does not improve with the early development of ``non-abelian gerbes'', whose original description \cite{Giraud71} appears unwieldy. 

\smallskip 
Explicit acknowledgment of (stacky) non-abelian generalized cohomology 
in the transparent guise \eqref{NonabelianCohomology}
appears only in a lecture  \cite{Toen02} (possibly following \cite{Simpson02}). 
Two independent developments in 2009 finally put non-abelian generalized cohomology into fruitful context: 
\begin{itemize}
\item The discovery of non-abelian Poincar{\'e} duality \cite[\S 3.8]{Lurie09-DAGVI}, relating non-abelian cohomology (later made explicit in \cite[Def. 6]{Lurie14}) of manifolds 
to ``non-abelian homology'' in the guise of ``factorization homology'' (which, in contrast to non-abelian cohomology, takes work to define); 

\item The observation in theoretical physics 
\cite{SS08-nactwist}\cite{Schreiber09}\cite{SSS12}
that charge/flux-quanti{-}zation laws  \cite{SS24-Flux} for higher gauge fields
are generally in non-abelian cohomology.

\end{itemize}

\medskip

With non-abelian generalized cohomology thus recognized as a worthwhile subject, we are led to generalize familiar constructions in abelian cohomology, as far as possible, and to explore the consequences.

\smallskip

First, we may straightforwardly equip non-abelian cohomology with further attributes:
Considering the right-hand side of \eqref{NonabelianCohomology} not just for plain spaces but for sheaves of spaces (higher stacks) leads to non-abelian generalized sheaf cohomology, including, in particular, non-abelian generalized versions of twisted cohomology and of equivariant cohomology (also of differential cohomology, but this shall not concern as here):

\medskip

\noindent {\bf -- Equivariant non-abelian cohomology.} Via the above identification of cohomology sets with
homotopy classes of maps to a classifying space, every flavor of homotopy theory comes with its corresponding flavor of cohomology theories. 
In {\it equivariant homotopy theory} one considers (cf. \cite{SS25-EBun}) topological spaces $\mathscr{A}$ equipped with the action $G \acts \mathscr{A}$ of a (finite, for our purposes) group $G$ and with $G$-equivariant maps between them -- and the corresponding flavor of cohomology is {\it equivariant cohomology} (which we also call {\it proper equivariant} cohomology in order to distinguish it from the coarser form of Borel-equivariance):
\begin{equation}
  \label{EquivariantCohomology}
  H^1_G\big(
    X;\,
    \Omega\mathscr{A}
  \big)
  \;\;
  =
  \;\;
  \pi_0
  \, 
  \mathrm{Maps}\big(
    G \acts \, X
    ,\,
    G \acts \, \mathscr{A}
  \big)^G
  \,.
\end{equation}

Here the notion of $G$-homotopy equivalence of maps is straightforward but, at face value, technically cumbersome
to reason about. However, Elmendorf's theorem (recalled as Prop. \ref{ElmendorfTheorem} below) reveals that $G$-homotopy equivalences (between $G$-cell complexes) are nothing but systems of ordinary weak homotopy equivalences between the $H$-fixed spaces $\mathscr{A}^H$ for all subgroups $H \subset G$. These systems of fixed spaces are conveniently re-packaged as presheaves on a small category called the {\it orbit category} $\mathrm{Orb}(G)$ of $G$, whence $G$-equivariant homotopy theory is equivalently the homotopy theory of presheaves of spaces on $\mathrm{Orb}(G)$.

\smallskip

\noindent {\bf -- Twisted non-abelian cohomology.} 
Somewhat similarly,
given any space $\mathscr{B}$ in any homotopy theory, the {\it $\mathscr{B}$-slice} is the homotopy theory whose objects are spaces fibered over $\mathscr{B}$ with maps between them respecting the fibration up to specified homotopy. If we assume, without essential restriction, that the base space is connected, then we may identify it as $\mathscr{B} \,\simeq\, B \mathscr{G}$, as in \eqref{DeloopingEquivalence}, which exhibits any fibration over it as the Borel construction $\mathscr{A} \sslash \mathscr{G}$ of the homotopy-quotient of a homotopy-coherent action $\mathscr{G} \acts \mathscr{A}$.

If we now think of a domain object $X \xrightarrow{\tau} B\mathscr{G}$ in this $B \mathscr{G}$-slice as a {\it twist} and of a codomain object $\mathscr{A} \sslash \mathscr{G} \xrightarrow{p}  B \mathscr{G}$ as a {\it local coefficient bundle}, then the corresponding non-abelian cohomology is just the homotopy classes of sections of the $\tau$-associated $\mathscr{A}$-fiber bundle, and as such is $\tau$-twisted $\mathscr{A}$-cohomology \cite[\S 3]{FSS23-Char}:
\vspace{1mm} 
\begin{equation}
  \label{TwistedCohomology}
  H^{1+\tau}(
    X
    ,\,
    \Omega \mathscr{A}
  )
  :=
  \pi_0
  \,
  \mathrm{Maps}\big(
    X 
    ,\,
    \mathscr{A}\!\sslash\!\mathscr{G}
  \big)_{
    \scalebox{.7}{$
      /B \mathscr{G}
    $}
  }
  \;
  =
  \;
  \left\{\!\!\!\!\!
  \adjustbox{raise=2pt}{
  \begin{tikzcd}[row sep=15pt, column sep=30pt]
    &
    \mathscr{A}
    \!\sslash\!
    \mathscr{G}
    \ar[
      d,
      "{ p }"
    ]
    \\
    X
    \ar[
      r, 
      "{ \tau }"
    ]
    \ar[
      ur,
      dashed
    ]
    &
    B\mathscr{G}
  \end{tikzcd}
  }
 \!\!\!\!\! \right\}_{\!\!\!\!\!\Big/\!\!\!\!\!\!\!\!\!
  \scalebox{.7}{
    \def\arraystretch{.9}
    \def\tabcolsep{0pt}
    \begin{tabular}{c}
      relative
      \\
      homotopy
    \end{tabular}
  }}
\end{equation}

\vspace{1mm} 
\noindent This works generally: If all spaces here are {\it in addition} equipped with $G$-actions as in \eqref{EquivariantCohomology}, hence if we are looking at a slice of equivariant 
homotopy, then the above is automatically {\it twisted \& equivariant} non-abelian cohomology. This is what we shall be concerned with here, concretely with the character map in this generality:

\medskip

\noindent {\bf -- The non-abelian character.}
A famous construction on abelian cohomology is the {\it Chern-Dold character map} to de Rham cohomology, which in the case of K-cohomology becomes the familiar Chern character (and which on ordinary cohomology is essentially just the de Rham theorem). 
One may think of the Chern-Dold character as universally extracting the non-torsion data in the cohomology groups. 
Its generalization to non-abelian cohomology was developed in \cite{FSS23-Char}: 

Observe that the Chern-Dold character is essentially just the cohomology operation induced by  {\it rationalization} of the classifying space, 
$$
  \begin{tikzcd}[column sep=large]
    \mathscr{A}
    \ar[
      rr,
      "{
        \eta^{\mathbb{Q}}
      }",
      "{
        \scalebox{.7}{
          \color{darkgreen}
          \bf
          rationalization
        }
      }"{swap}
    ]
    &&
    L^{\!\mathbb{Q}}
    \mathscr{A}\;.
  \end{tikzcd}
$$
As such, it makes sense in the generality of non-abelian classifying spaces (immediately so under mild technical assumptions, such as nilpotency, but with more work also more generally). In view of this, the fundamental theorem of dg-algebraic rational homotopy theory may be re-cast as a {\it non-abelian de Rham theorem} which identifies, over smooth manifolds $X$, the resulting non-abelian rational cohomology with the concordance classes of flat differential forms having coefficients in the real Whitehead-bracket $L_\infty$-algebra $\mathfrak{l}\mathscr{A}$ of the classifying space:
\begin{equation}
  \label{CharacterInIntroduction}
  \hspace{-.5cm}
  \adjustbox{
    scale=.92
  }{
  \begin{tikzcd}[
    row sep=0pt, 
    column sep=25pt
  ]
    \overset
    {
      \scalebox{1}{$
      H^1\big(
        X
        ;\,
        \mathscr{A}
      \big)    
      \, \defneq
      $}
    }
    {
    \pi_0\,
    \mathrm{Maps}\big(
      X
      ,\,
      \mathscr{A}
    \big)
    }
    \ar[
      rr,
      "{
        (\eta^{\mathbb{Q}})_\ast
      }"
    ]
    \ar[
      rrrr,
      rounded corners,
      to path={
            ([yshift=+00pt]\tikztostart.north)
         -- ([yshift=+5pt]\tikztostart.north)
         -- node[
               yshift=5pt
         ]{
           \scalebox{.7}{
             \color{darkgreen}
             \bf
             character map
           }
         }
            ([yshift=+18pt]\tikztotarget.north)
         -- ([yshift=+00pt]\tikztotarget.north)
      }
    ]
    &&
    \pi_0\,
    \mathrm{Maps}\big(
      X
      ,\,
      L^{\mathbb{Q}}\mathscr{A}
    \big)    
    \ar[
      rr
    ]
    &&
    H^1_{\mathrm{dR}}\big(
      X
      ;\,
      \mathfrak{l}\mathscr{A}
    \big)
    \\
    \scalebox{.7}{
      \color{darkblue}
      \bf
      \def\arraystretch{.9}
      \begin{tabular}{c}
        non-abelian 
        \\
        cohomology
      \end{tabular}
    }
    \ar[
      rr,
      phantom,
      "{
        \scalebox{.7}{
          \color{darkgreen}
          \bf
          \def\arraystretch{.9}
          \begin{tabular}{c}
            rationalization of
            \\
            classifying space
          \end{tabular}
        }
      }"
    ]
    &&
    \scalebox{.7}{
      \color{darkblue}
      \bf
      \def\arraystretch{.9}
      \begin{tabular}{c}
        non-abelian
        \\
        rational cohomology
      \end{tabular}
    }
    \ar[
      rr,
      phantom,
      "{
        \scalebox{.7}{
          \color{darkgreen}
          \bf
          \def\arraystretch{.9}
          \begin{tabular}{c}
            non-abelian 
            \\
            de Rham theorem
          \end{tabular}
        }
      }"
    ]
    &&
    \scalebox{.7}{
      \color{darkblue}
      \bf
      \def\arraystretch{.9}
      \begin{tabular}{c}
        non-abelian
        \\
        de Rham cohomology
      \end{tabular}
    }
    \hspace{-1cm}
  \end{tikzcd}
  }
\end{equation}

Since generalized cohomology theories are typically hard to analyze, in particular non-abelian ones, this character map may be regarded as extracting the first non-trivial stage of more tractable invariants. For instance, the character of a non-abelian class is the first obstruction to a trivialization of that class.
\footnote{
In the mentioned application to physics, the flux densities of a higher gauge field are sourced by charges that appear as classes in non-abelian de Rham cohomology on the right, and the completion of the higher gauge theory by a flux-quantization law means to lift these charges through the character map to classes in a chosen non-abelian cohomology theory on the left.}

\smallskip

It is fairly straightforward to generalize the non-abelian character \eqref{CharacterInIntroduction} to twisted non-abelian cohomology \eqref{TwistedCohomology}, now using {\it relative} minimal Sullivan models.

\smallskip

The following \hyperlink{TableCharacters}{\it Table 1.} shows some examples of the resulting form of twisted non-abelian character maps that we have computed elsewhere before -- the first few examples are for general illustration and orientation, the last one is the one of concern here: Our goal here is to equivariantize it.

\smallskip

This identification of the character map on non-abelian cohomology with the passage of classifying spaces to their minimal dgc-algebraic models in rational homotopy theory yields a new perspective on both subjects:

\begin{itemize}[
  leftmargin=.5cm,
  topsep=-1pt,
  itemsep=-1pt
]
\item On the one hand, it becomes clear at once how to make sense of the twisted equivariant non-abelian character, namely by construction of equivariant relative Sullivan models using the theory of \cite{Tri82}\cite[\S 11]{Scull02}\cite{Scull08};

\item and conversely it provides a sudden wealth of motivation and applications of the latter (which arguably has led a niche existence in the literature). 
\end{itemize}

\newpage

\hspace{-8mm}
\adjustbox{
 scale=.65
}{
\hypertarget{TableCharacters}{}
\fbox{
  \hspace{.3cm}
  \raisebox{-10pt}{
    \xymatrix@R=-21pt{
    \overset{
      \mathclap{
      \raisebox{3pt}{
        \tiny
        \color{darkblue}
        \bf
        \begin{tabular}{c}
          local
          coefficient
          \\
          bundle
        \end{tabular}
      }
      }
    }{
    {\begin{array}{c}
      A \!\sslash\! G
      \\
      \downarrow
      \\
      B G
    \end{array}}
    }
    \ar@{}[r]|-{
      \mbox{
        \tiny
        \begin{tabular}{c}
          \cite{FSS23-Char}
          \\
          Def. 5.4
        \end{tabular}
      }
    }
    &
    \underset{
      \mathclap{
      \mbox{
        \tiny
        \color{darkblue}
        \bf
        \begin{tabular}{c}
          twisted
          \\
          non-abelian cohomology
        \end{tabular}
      }
      }
    }{
    H^{
      \overset{
        \mathclap{
        \raisebox{-3pt}{
          $
          \!\!\!\!
          \mathrlap{
          \rotatebox[origin=l]{33}{
            \tiny
            {\color{darkblue}
            \bf
            twist}
            in $H(X,BG)$
          }
          }$
        }
        }
      }{
        \tau
      }
    }
    \big(
      \overset{
        \mathclap{
        \raisebox{-3pt}{
          $
          \!\!\!\!
          \mathrlap{
          \rotatebox[origin=l]{30}{
            \tiny
            \color{darkblue}
            \bf
            spacetime manifold
          }
          }$
        }
        }
      }{
        X
      }
      ;
      \,
      \overset{
        \mathclap{
        \raisebox{-3pt}{
          $
          \!\!\!\!
          \mathrlap{
          \rotatebox[origin=l]{30}{
            \tiny
            \color{darkblue}
            \bf
            classifying space
          }
          }$
        }
        }
      }{
        A
      }
    \big)
    }
    \ar[rr]^-{
      \mathrm{ch}_{\scalebox{.6}{$A$}}
    }_-{
      \mbox{
        \tiny
        \color{greenii}
        \bf
        \begin{tabular}{c}
          twisted
          \\
          non-abelian character map
        \end{tabular}
      }
    }
    &&
    \underset{
      \mathclap{
      \raisebox{-3pt}{
        \tiny
        \color{darkblue}
        \bf
        \begin{tabular}{c}
          twisted
          \\
          non-abelian de Rham cohomology
        \end{tabular}
      }
      }
    }{
    H^{
      \overset{
        \mathclap{
        \raisebox{-3pt}{
          $
          \!\!\!\!
          \mathrlap{
          \rotatebox[origin=l]{31.4}{
            \tiny
            {\color{darkblue}
            \bf
            twist }
            in
            $H_{\mathrm{dR}}(X, \mathfrak{l}BG)$
          }
          }$
        }
        }
      }{
        \tau_{\mathrm{dR}}
      }
    }_{\mathrm{dR}}
    \big(
      \overset{
        \mathclap{
        \raisebox{-3pt}{
          $
          \!\!\!\!\!\!\!
          \mathrlap{
          \rotatebox[origin=l]{30}{
            \tiny
            \color{darkblue}
            \bf
            spacetime manifold
          }
          }$
        }
        }
      }{
        X
      }
      \!;
      \,
      \overset{
        \mathclap{
        \raisebox{-3pt}{
          $
          \!\!\!\!
          \mathrlap{
          \rotatebox[origin=l]{30}{
            \tiny
            \color{darkblue}
            \bf
            Whitehead $L_\infty$-algebra
          }
          }$
        }
        }
      }{
        \mathfrak{l}A
      }
    \big)
    }
    \\
    &
    \underset{
      \mathclap{
      \raisebox{-3pt}{
        \tiny
        \color{darkblue}
        \bf
        \begin{tabular}{c}
          cocycle in
          \\
          twisted $A$-cohomology
        \end{tabular}
      }
      }
    }{
      [c]_{\tau}
    }
    \ar@{}[rr]|-{ \longmapsto }
    &&
    \underset{
      \mathclap{
      \raisebox{-3pt}{
        \tiny
        \color{darkblue}
        \bf
        \begin{tabular}{c}
          flux densities
          \\
          satisfying Bianchi identities
        \end{tabular}
      }
      }
    }{
    \mathrm{ch}_{\scalebox{.6}{${A}$}}
    \big(
      [c]
    \big)
    }
    \\
    \phantom{\vert^{\vert^{\vert^{\vert^{\vert^{\vert^{\vert^{\vert^{\vert^{\vert^{\vert^{\vert^{\vert^{\vert^{\vert^{\vert^{\vert}}}}}}}}}}}}}}}}}
    \\
    \raisebox{-0pt}{$
    {\begin{array}{c}
      B^{n} \mathbb{Z}
      \\
      \downarrow
      \\
      \;\;\ast\;\;
    \end{array}}
    $}
    \ar@{}[r]|-{
      \mbox{
        \tiny
        \begin{tabular}{c}
          \cite{FSS23-Char}
          \\
          Ex. 4.9
        \end{tabular}
      }
    }
    &
    \underset{
      \raisebox{-3pt}{
        \tiny
        \color{darkblue}
        \bf
        \begin{tabular}{c}
          ordinary cohomology
        \end{tabular}
      }
    }{
      H^n(X;\, \mathbb{Z})
    }
    \ar[rr]^-{ \mathrm{dR} }_-{
      \mbox{
        \tiny
        \color{greenii}
        \bf
        \begin{tabular}{c}
          de Rham
          \\
          homomorphism
        \end{tabular}
      }
    }
    &&
    \left\{
      \!\!
      {\begin{array}{c}
        F_{n}
      \end{array}}
      \in
      \Omega^n_{\mathrm{dR}}(X)
      \,\left\vert\;\,
      {\begin{aligned}
        d\, F_{n} & = 0
      \end{aligned}}
      \right.
      \!\!\!\!\!\!\!
    \right\}_{\!\!\!/ \sim}
    \\
\phantom{\vert^{\vert^{\vert^{\vert^{\vert^{\vert^{\vert^{\vert^{\vert^{\vert^{\vert^{\vert^{\vert^{\vert^{\vert^{\vert^{\vert}}}}}}}}}}}}}}}}}
    \\
    \raisebox{-0pt}{$
    {\begin{array}{c}
      B \mathrm{U}(n)
      \\
      \downarrow
      \\
      \;\;\ast\;\;
    \end{array}}
    $}
    \ar@{}[r]|-{
      \mbox{
        \tiny
        \begin{tabular}{c}
          \cite{FSS23-Char}
          \\
          Thm. 4.26
        \end{tabular}
      }
    }
    &
    \underset{
      \raisebox{-3pt}{
        \tiny
        \color{darkblue}
        \bf
        \def\arraystretch{.9}
        \begin{tabular}{c}
          ordinary
          \\
          non-abelian cohomology
        \end{tabular}
      }
    }{
      H^1\big(X;\, \mathrm{U}(n) \big)
    }
    \ar[rr]^-{ \mathrm{cw} }_-{
      \mbox{
        \tiny
        \color{greenii}
        \bf
        \begin{tabular}{c}
          Chern-Weil
          \\
          homomorphism
        \end{tabular}
      }
    }
    &&
    \left\{
      \!\!
      {\begin{array}{c}
        \vdots,
        \\
        c_{2}(A),
        \\
        c_{1}(A)
      \end{array}}
      \in
      \Omega^{2\bullet}_{\mathrm{dR}}(X)
      \,\left\vert\;\,
      {\begin{aligned}
        \vdots
        \\
        d\, c_2(A) & = 0
        \\[-3pt]
        d\, c_1(A) & = 0
      \end{aligned}}
      \right.
      \!\!\!\!\!\!\!
    \right\}_{\!\!\!\big/ \sim}
    \\
\phantom{\vert^{\vert^{\vert^{\vert^{\vert^{\vert^{\vert^{\vert^{\vert^{\vert^{\vert^{\vert^{\vert^{\vert^{\vert^{\vert^{\vert}}}}}}}}}}}}}}}}}
    \\
      \!\!\!\!\!\!\!\!\!\!  \scalebox{.85}{$
    {\begin{array}{c}
      \mathclap{
      \big(
        \mathbb{Z} \!\times\! B \mathrm{U}
      \big)
      \!\sslash\! B \mathrm{U}(1)
      }
      \\
      \downarrow
      \\
      B^2 \mathrm{U}(1)
    \end{array}}
    $}
    \ar@{}[r]|-{\;
      \raisebox{-2pt}{
        \tiny
        \begin{tabular}{c}
          \cite{FSS23-Char}
          \\
          Prop. 5.5
        \end{tabular}
      }
    }
    &
    \underset{
      \raisebox{-3pt}{
        \tiny
        \color{darkblue}
        \bf
        \begin{tabular}{c}
          twisted
          \\
          complex K-theory
        \end{tabular}
      }
    }{
      \mathrm{KU}^\tau(X)
    }
    \ar[rr]^-{ \mathrm{ch}^\tau }_-{
      \mbox{
        \tiny
        \color{greenii}
        \bf
        \begin{tabular}{c}
          twisted
          \\
          Chern character
        \end{tabular}
      }
    }
    &&
    \left\{
      \!\!\!
      {\begin{array}{c}
        F_{2\bullet},
        \\
        H_3\;
      \end{array}}
      \in
      \Omega^\bullet_{\mathrm{dR}}(X)
      \,\left\vert\;\,
      {\begin{aligned}
        d\, F_{2\bullet + 2} & = H_3 \wedge F_{2\bullet}
        \\[-3pt]
        d\, H_3\;\;\;\;\; & = 0
      \end{aligned}}
      \right.
    \right\}_{\!\!\!\big/ \sim}
    \\
\phantom{\vert^{\vert^{\vert^{\vert^{\vert^{\vert^{\vert^{\vert^{\vert^{\vert^{\vert^{\vert^{\vert^{\vert^{\vert^{\vert^{\vert}}}}}}}}}}}}}}}}}
    \\
    \scalebox{0.85}{$
    {\begin{array}{c}
      \mathclap{
      S^4
      \!\sslash\! B \widehat{\mathrm{Sp}(2)}
      }
      \\
      \downarrow
      \\
      B \widehat{\mathrm{Sp}(2)}
    \end{array}}
    $}
    \ar@{}[r]|-{
      \mbox{
        \tiny
        \begin{tabular}{c}
          \cite{FSS23-Char}
          \\
          Ex. 5.23a
        \end{tabular}
      }
    }
    &
    \underset{
      \raisebox{-3pt}{
        \tiny
        \color{darkblue}
        \bf
        \begin{tabular}{c}
          J-twisted
          \\
          4-Cohomotopy
        \end{tabular}
      }
    }{
      \pi^\tau(X)
    }
    \ar[rr]^-{ \mathrm{ch}_{\pi}^\tau }_-{
      \mbox{
        \tiny
        \color{greenii}
        \bf
        \begin{tabular}{c}
          twisted
          \\
          FSS-character
        \end{tabular}
      }
    }
    &&
    \left\{
      \!\!\!
      {\begin{array}{c}
        2G_7,
        \\
        \phantom{2}G_4\;
      \end{array}}
      \in
      \Omega^\bullet_{\mathrm{dR}}(X)
      \,\left\vert\;\,
      {\begin{aligned}
        d\, 2 G_7 & = - G_4 \wedge G_4 + \big(\tfrac{1}{4}p_1(\omega)\big)^2
        \\[-3pt]
        d\, \phantom{2} G_4 & = 0
      \end{aligned}}
      \right.
    \right\}_{\!\!\!\big/ \sim}
    \\
\phantom{\vert^{\vert^{\vert^{\vert^{\vert^{\vert^{\vert^{\vert^{\vert^{\vert^{\vert^{\vert^{\vert^{\vert^{\vert^{\vert^{\vert}}}}}}}}}}}}}}}}}
    \\
    \scalebox{.85}{$
    {\begin{array}{c}
      \mathclap{
      \mathbb{C}P^3
      \!\sslash\! B \widehat{\mathrm{Sp}(2)}
      }
      \\
      \downarrow
      \\
      B \widehat{\mathrm{Sp}(2)}
    \end{array}}
    $}
    \ar@{}[r]|-{
      \mbox{
        \tiny
        \begin{tabular}{c}
          \cite{FSS23-Char}
          \\
          Ex. 5.23b
        \end{tabular}
      }
    }
    &
    \underset{
      \raisebox{-3pt}{
        \tiny
        \color{darkblue}
        \bf
        \begin{tabular}{c}
          twistorial
          \\
          Cohomotopy
        \end{tabular}
      }
    }{
      \mathcal{T}^\tau(X)
    }
    \ar[rr]^-{ \mathrm{ch}_{\mathcal{T}}^\tau }_-{
      \mbox{
        \tiny
        \color{greenii}
        \bf
        \begin{tabular}{c}
          twisted
          \\
          FSS-character
        \end{tabular}
      }
    }
    &&
    \left\{
      \!\!\!
      {\begin{array}{c}
        \phantom{2}H_3
        \\
        \phantom{2}F_2
        \\
        2G_7,
        \\
        \phantom{2}G_4\;
      \end{array}}
      \in
      \Omega^\bullet_{\mathrm{dR}}(X)
      \,\left\vert\;\,
      {\begin{aligned}
        d\, \phantom{2} H_3 & = G_4 - \tfrac{1}{4}p_1(\omega) - F_2 \wedge F_2
        \\[-3pt]
        d\, \phantom{2} F_2 & = 0
        \\[-3pt]
        d\, 2 G_7 & = - G_4 \wedge G_4 + \big(\tfrac{1}{4}p_1(\omega)\big)^2
        \\[-3pt]
        d\, \phantom{2} G_4 & = 0
      \end{aligned}}
      \right.
    \right\}_{\!\!\!\big/ \sim}
  }
  }
}
}

\smallskip

{ 
  \footnotesize
  \noindent
  {\bf Table 1  -- Character maps.}
   \footnotesize The generalized {\it character maps} that we are concerned with here (on twisted non-abelian generalized cohomology \cite{FSS23-Char}, here to be further {\it equivariantly} enhanced) are the universal approximations of generalized cohomology by {\it rational} cohomology, which here we take to be $\mathbb{R}$-rational over smooth manifolds and hence represented by differential forms in a {\it de Rham complex} $\Omega^\bullet_{\mathrm{dR}}(-)$ with de Rham differential ``$d$'' (cf. \cite{BottTu82}), specifically by {\it non-abelian de Rham cohomology}, see \cite[\S 33]{FSS23-Char} and \S\ref{GEquivariantHomotopyTheory}.

  The table above indicates that special cases of the generalized character are celebrated classical constructions such as the de Rham map from integral to de Rham cohomology \cite[Ex. 7.1]{FSS23-Char}, the Chern-Weil homomorphism from ordinary non-abelian cohomology to characteristic forms \cite[\S 8]{FSS23-Char} and the (twisted) Chern character on (twisted) topological K-theory \cite[Ex. 7.2, Prop. 10.1]{FSS23-Char}. Their unified understanding via dg-algebraic rational homotopy theory of their classifying spaces
  shows how to construct novel non-classical character maps analogously, notably on flavors of Cohomotopy theory \cite[\S 12]{FSS23-Char}, indicated at the bottom of the table.

  In the present article we generalize the construction of this generalized character map further to {\it equivariant} (twisted non-abelian generalized) cohomology (for the case of equivariantly simply connected classifying spaces, for simplicity), with special attention to the equivariantization of the last two examples above.
}

\medskip
\noindent

\newpage

\noindent
{\bf Main result.}
The main result presented below is the general construction of the character map on twisted equivariant non-abelian cohomology 
\footnote{
  More specifically, here we develop the equivariant non-abelian character for the case of equivariant classifying spaces that are equivariantly simply-connected (namely, fixed locus-wise). If one drops this assumption, then the discussion becomes much more involved, as one needs to rationalize the fixed locus-wise covering spaces while retaining the respective actions of the fundamental groups by Deck transformations over each fixed locus --- all this on top of the action of the equivariance group $G$ and of the twisting group $\mathscr{G}$.
}
which culminates in \S\ref{TheEquivariantTwistedNonAbelianCharacterMap}.

The simplest applications (in numbers of cells) are the cases of twisted and ``twistorial'' equivariant Cohomotopy, whose equivariant classifying spaces are spheres and projective spaces. Among these, the possibly simplest (but already quite non-trivial) example is $\mathbb{Z}_2$-equivariant twistorial Cohomotopy in degree 7. 
This is our running example along which we develop and illustrate all the ingredients of the construction. The analysis of this example culminates in Rem. \ref{ProofOfMainTheorem} below, with a proof the following statement:

\vspace{0cm}

\begin{theorem}
  \label{FluxQuantizationInEquivariantTwistorialCohomotopy}
 {\bf (i)} The character map {\rm (Def. \ref{TwistedEquivariantNonabelianCharacterMap})}
  in $\Grefl$-equivariant
  twistorial Cohomotopy {\rm (Def. \ref{EquivariantTwistorialCohomotopyTheory})},
  on $\Grefl$-orbifolds {\rm (Def. \ref{GOrbifolds})}
  with $\SpLR$-structure
  $\tau$ and -connection $\omega$
  {\rm (Ex. \ref{TangentialDeRhamTwistsOnGOrbifoldsWithTStructure})},
  is of the form 
  shown in Table 2 on the following page.

\noindent {\bf (ii)} Moreover, a necessary condition for differential forms to be in the image of this character map is their (shifted) integrality, as follows:
\begin{equation}
  \label{IntegralityConditions}
  \def\arraystretch{1.5}
  \begin{array}{l}
  \big[
    \widetilde G_4
  \big]
    \;:=\;
  \big[
    G_4 + \tfrac{1}{4}p_1(\omega)
  \big]
  \;
  \in
  \;
  \xymatrix@C=12pt{
    H^4\big(X;\, \mathbb{Z} \big)
    \ar[r]
    &
    H^4\big(X;\, \mathbb{R} \big)\,,
  }
  \\
  \big[
    F_2
  \big]
  \;
  \;
  \in
  \;
  \xymatrix@C=12pt{
    H^2\big(X;\, \mathbb{Z} \big)
    \ar[r]
    &
    H^2\big(X;\, \mathbb{R} \big)\,.
  }
  \end{array}
\end{equation}
\end{theorem}
\begin{proof}
This derivation occupies the bulk of the article; it is wrapped up below in Rem. \ref{ProofOfMainTheorem}.
\end{proof}

Here this analysis serves to showcase the rich structure reflected in character maps on twisted equivariant non-abelian cohomology. At the same time, this example has a rather curious application to physics \cite{SS24-AbAnyonsOnSeifert}, following \cite{FSS20c}\cite{SS24-Flux}, which we briefly indicate in the closing \S\ref{ApplicationToFLuxQuantization}.

\newpage

\hspace{-8mm}
\adjustbox{
  scale=.71,
  fbox
}{
$
\!\!
\hspace{-3mm}
  \xymatrix@C=2pt@R=2pt{
    \overset{
      \mathclap{
      \raisebox{3pt}{
        \tiny
        \color{darkblue}
        \bf
        \begin{tabular}{c}
          equivariant
          \\
          Local coefficient
          \\
          bundle
        \end{tabular}
      }
      }
    }{
    \begin{array}{c}
      \mathscr{A} \!\sslash\! \mathscr{G}
      \\
      \downarrow
      \\
      B \mathscr{G}
    \end{array}
    }
    &\hspace{-.9cm}:&
    \underset{
      \mathclap{
      \mbox{
        \tiny
        \color{darkblue}
        \bf
        \begin{tabular}{c}
          equivariant twisted
          \\
          non-abelian cohomology
        \end{tabular}
      }
      }
    }{
    H^{
      \overset{
        \mathclap{
        \raisebox{-3pt}{
          $
          \!\!\!\!
          \mathrlap{
          \rotatebox[origin=l]{33}{
            \tiny
            {\color{darkblue}
            \bf
            twist} in $H\big(\mathcal{X}\!;\, B \mathscr{G}\big)$
          }
          }$
        }
        }
      }{
        \tau
      }
    }
    \big(
      \overset{
        \mathclap{
        \raisebox{-3pt}{
          $
          \!\!\!\!
          \mathrlap{
          \rotatebox[origin=l]{30}{
            \tiny
            \color{darkblue}
            \bf
            spacetime $G$-orbifold
          }
          }$
        }
        }
      }{
        \mathcal{X}
      }
      \!;
      \,
      \overset{
        \mathclap{
        \raisebox{-3pt}{
          $
          \!\!\!\!
          \mathrlap{
          \rotatebox[origin=l]{30}{
            \tiny
            \color{darkblue}
            \bf
            classifying $G$-space
          }
          }$
        }
        }
      }{
        \mathscr{A}
      }
    \big)
    }
    \ar[rrrr]^-{
      \mathrm{ch}_{\scalebox{.6}{$\mathscr{A}$}}
      (
        \mathcal{X}
      )
    }_-{
      \mbox{
        \tiny
        \color{greenii}
        \bf
        \begin{tabular}{c}
          equivariant twisted
          \\
          non-abelian character map
        \end{tabular}
      }
    }
    &&{\phantom{AAA}}&&
    \underset{
      \mathclap{
      \raisebox{-3pt}{
        \tiny
        \color{darkblue}
        \bf
        \begin{tabular}{c}
          equivariant twisted
          \\
          non-abelian de Rham cohomology
        \end{tabular}
      }
      }
    }{
    H^{
      \overset{
        \mathclap{
        \raisebox{-3pt}{
          $
          \!\!\!\!
          \mathrlap{
          \rotatebox[origin=l]{33}{
            \tiny
            {\color{darkblue}
            \bf
            twist} in
            $H_{\mathrm{dR}}\big(\mathcal{X}\!;\, \mathfrak{l}B \mathscr{G}\big)$
          }
          }$
        }
        }
      }{
        \tau_{\mathrm{dR}}
      }
    }_{\mathrm{dR}}
    \big(
      \overset{
        \mathclap{
        \raisebox{-3pt}{
          $
          \!\!\!\!
          \mathrlap{
          \rotatebox[origin=l]{30}{
            \tiny
            \color{darkblue}
            \bf
            spacetime $G$-orbifold
          }
          }$
        }
        }
      }{
        \mathcal{X}
      }
      \!;
      \,
      \overset{
        \mathclap{
        \raisebox{-3pt}{
          $
          \!\!\!\!
          \mathrlap{
          \rotatebox[origin=l]{30}{
            \tiny
            \color{darkblue}
            \bf
            Whitehead $G$-$L_\infty$-algebra
          }
          }$
        }
        }
      }{
        \mathfrak{l}\mathscr{A}
      }
    \big)
    }
    \\
    \raisebox{15pt}
    {$\begin{array}{c}
      \raisebox{1pt}{\rm\textesh}
      \big(\!\!
        \orbisingular
        \big(
          \overset{
        \mathclap{
        \raisebox{-3pt}{
          $
          \!\!\!\!
          \mathrlap{
          \rotatebox[origin=l]{33}{
            \tiny
            \color{darkblue}
            \bf
            twistor space
          }
          }$
        }
        }
          }{
            \mathbb{C}P^3
          }
            \!\sslash\!
          \overset{
        \mathclap{
        \raisebox{-3pt}{
          $
          \!\!\!\!
          \mathrlap{
          \rotatebox[origin=l]{30}{
            \tiny
            \color{darkblue}
            \bf
            $\Grefl$-equivariant
          }
          }$
        }
        }
          }{
            \Grefl
          }
        )
      \big)
      \!\sslash\!
      \overset{
        \mathclap{
        \raisebox{-3pt}{
          $
          \!\!\!\!
          \mathrlap{
          \rotatebox[origin=l]{30}{
            \tiny
            \color{darkblue}
            \bf
            $\SpLR$-parametrized
          }
          }$
        }
        }
      }{
        \SpLR\;
      }
      \\
      \downarrow
      \mathrlap{
      \mbox{
       \tiny
       \rm
       (Ex. \ref{GhetEquivariantParametrizedTwistorSpace})
      }
      }
      \\
      B \SpLR
    \end{array}$}
    &\hspace{-.9cm}:&
    \underset{
      \mathclap{
      \raisebox{-3pt}{
        \tiny
        \color{darkblue}
        \bf
        \begin{tabular}{c}
          $\Grefl$-equivariant
          twistorial Cohomotopy
        \end{tabular}
      }
      }
    }{
    \mathcal{T}^{
      \overset{
        \mathclap{
        \raisebox{-3pt}{
          $
          \!\!\!\!
          \mathrlap{
          \rotatebox[origin=l]{33}{
            \tiny
            \color{darkblue}
            \bf
            \hspace{-.1cm}
            tangential twist
          }
          }$
        }
        }
      }{
        \tau
      }
    }_{\scalebox{.7}{$\Grefl$}}
    \big(\!
      \overset{
        \mathclap{
          \!\!\!\!\!\!\!\!\!
          \mathrlap{
          \rotatebox[origin=c]{30}{
            \tiny
            \color{darkblue}
            \bf
            \hspace{-.44cm}
            \begin{tabular}{c}
              spacetime orbifold
              \\
              with $A_1$-singularity
            \end{tabular}
          }
        }
        }
      }{
      \orbisingular
      (
        X
        \!\sslash\!
        \Grefl
      )
      }
    \big)
    }
    \ar[rrrr]^-{
      \mbox{
        \tiny
        \color{greenii}
        \bf
        \begin{tabular}{c}
          equivariant
          \\
          twistorial
          \\
          character
        \end{tabular}
      }
    }_-{
      \;
      \mathrm{ch}_{\mathcal{T}}
      \;
    }
    \ar[dd]|-{
      \mathclap{
        \mbox{
          \tiny
          \color{greenii}
          \bf
          \begin{tabular}{c}
            $\mathclap{\phantom{\vert^{\vert}}}$
            push-forward along
            \\
            $\SpLR$-parametrized
            \\
            twistor fibration
            $\mathclap{\phantom{\vert_{\vert}}}$
          \end{tabular}
        }
      }
    }
    &&&&
    \left\{
      \!\!\!
      \overset{
        \raisebox{3pt}{
          \tiny
          \color{orangeii}
          \bf
          fluxes
        }
      }
      {\begin{array}{c}
        \mathclap{\phantom{\vert^{\vert^{\vert^{\vert^{\vert^{\vert}}}}}}}
        \\
        \phantom{2} H_3,
        \\
        \phantom{2} F_2,
        \\
        2 G_7,
        \\
        \phantom{2} \widetilde G_4
        \\
        \rotatebox[origin=l]{-90}{
        $
          \in
          \,
          \Omega_{\mathrm{dR}}^\bullet(X)
        $}
      \end{array}}
    \,\left\vert \!
    {\begin{array}{c}
      \overset{
       \raisebox{3pt}{
          \tiny
          \color{orangeii}
          \bf
          twisted Bianchi identities
        }
      }
      {\begin{aligned}
        d\, \phantom{2} H_3
          & =
          \widetilde G_4 - \tfrac{1}{2} p_1(\omega) -  F_2 \wedge F_2
        \\
        d\; \phantom{2} F_2\, & = 0
        \\
        d\, 2G_7 &
          = - \widetilde G_4 \wedge
          \big(
            \widetilde G_4 -
            \tfrac{1}{2}p_1(\omega)
          \big)
        \\
        d\, \phantom{2} \widetilde G_4 & = 0,
        \\
        \\
        d H_3\vert_{X^{\scalebox{.5}{$\Grefl$}}}
        & =
        - \tfrac{1}{2}p_1
        \big(
          \omega\vert_{X^{\scalebox{.5}{$\Grefl$}}}
        \big)
        - F_2 \wedge F_2\vert_{X^{\scalebox{.5}{$\Grefl$}}}
      \\
      d F_2 \vert_{X^{\scalebox{.5}{$\Grefl$}}} & = 0
      \\
      G_7\vert_{X^{\scalebox{.5}{$\Grefl$}}}
      &
      = 0
      \\
      \widetilde G_4\vert_{X^{\scalebox{.5}{$\Grefl$}}}
      &
      = 0
    \end{aligned}}
    \end{array}}
    \!\!\!
    \rotatebox[origin=c]{-90}{
      \tiny
      \color{orangeii}
      \bf
      $\phantom{AA}$
      bulk
      $\phantom{AAAAAAAAAAAAAAAAA}$
      fixed locus
    }
    \!\!\!
    \right.
    \;\;
    \right\}_{\!\!\!\!\big/ \sim}
    \ar@{->>}@<-96pt>[dd]^-{
      \scalebox{.6}{$
        \begin{array}{c}
          \widetilde G_4
          \\
          \mapsdown
          \\
          G_4 + \tfrac{1}{4}p_1(\omega)
        \end{array}
      $}
    }
    \\
    {\phantom{
      {
      {A \atop A}
      \atop
      {A \atop A}
      }
      \atop
      {
      {A \atop A}
      \atop
      {A \atop A}
      }
    }}
    \\
    \raisebox{15pt}
    {$\begin{array}{c}
      \raisebox{1pt}{\rm\textesh}
      \overset{
        \mathclap{
        \raisebox{-3pt}{
          $
          \!\!\!\!
          \mathrlap{
          \rotatebox[origin=l]{33}{
            \tiny
            \color{darkblue}
            \bf
            4-sphere
          }
          }$
        }
        }
          }{
            S^4
          }
      \!\sslash\!
      \overset{
        \mathclap{
        \raisebox{-3pt}{
          $
          \!\!\!\!
          \mathrlap{
          \rotatebox[origin=l]{30}{
            \tiny
            \color{darkblue}
            \bf
            $\SpLR$-parametrized
          }
          }$
        }
        }
      }{
        \SpLR
      }
      \\
      \downarrow
      \\
      B \SpLR
    \end{array}$}
    &\hspace{-.69cm}:&
    \underset{
      \mathclap{
      \raisebox{-3pt}{
        \tiny
        \color{darkblue}
        \bf
        \begin{tabular}{c}
          J-twisted Cohomotopy
        \end{tabular}
      }
      }
    }{
    \pi^{
      \overset{
        \mathclap{
        \raisebox{-3pt}{
          $
          \!\!\!\!
          \mathrlap{
          \rotatebox[origin=l]{33}{
            \tiny
            \color{darkblue}
            \bf
            \hspace{-.03cm}
            J-twist
          }
          }$
        }
        }
      }{
        \tau
      }
    }_{\scalebox{.7}{$\Grefl$}}
    (
      \overset{
        \mathclap{
          \!\!\!\!\!\!\!\!\!
          \mathrlap{
          \rotatebox[origin=c]{30}{
            \tiny
            \color{darkblue}
            \bf
            \hspace{-.2cm}
            \begin{tabular}{c}
              spacetime
            \end{tabular}
          }
        }
        }
      }{
        X
      }
    )
    }
    \ar[rrrr]_-{
      \;
      \mathrm{ch}_{\pi}
      \;
    }^-{
      \mbox{
        \tiny
        \color{greenii}
        \bf
        \begin{tabular}{c}
          twisted
          \\
          cohomotopical
          \\
          character
        \end{tabular}
      }
    }
    &&&&
    \left\{
      \!\!\!
      \left.
      \raisebox{-6pt}{$
      \overset{
        \raisebox{3pt}{
          \tiny
          \color{orangeii}
          \bf
          fluxes
        }
      }
      {\begin{array}{c}
        \mathclap{\phantom{\vert^{\vert^{\vert}}}}
        2 G_7,
        \\
        \phantom{2} G_4
      \end{array}}
      $}
    \,\right\vert \!
    {\begin{array}{c}
      \overset{
       \raisebox{3pt}{
          \tiny
          \color{orangeii}
          \bf
          twisted Bianchi identities
        }
      }
      {\begin{aligned}
        d\, 2G_7 &
          = - G_4 \wedge G_4 + \big( \tfrac{1}{4}p_1(\omega)\big)^2
          \;\;\;\;\;\;\;\;
        \\
        d\, \phantom{2} G_4 & = 0,
    \end{aligned}}
    \end{array}}
    \!\!\!
    \;\;
    \right\}_{\!\!\!\!\big/ \sim}
  }
  \!\!\!\!\!
$
}

\smallskip

{\footnotesize 

\noindent
{\bf Table 2 -- Theorem \ref{FluxQuantizationInEquivariantTwistorialCohomotopy}. } 
Shown summarized is the result of our running example of the image of the character map on the nonabelian twisted equivariant cohomology theory classified by the equivariant twistor fibration, according to Thm. \ref{FluxQuantizationInEquivariantTwistorialCohomotopy}.

At the heart of the proof of Theorem
\ref{FluxQuantizationInEquivariantTwistorialCohomotopy} is the computation
(Prop. \ref{Z2EquivariantRelativeMinimalModelOfSpin3ParametrizedTwistorSpace} below)
of the equivariant relative minimal model
(\cite[\S 5]{Tri82}\cite[\S 11]{Scull02}\cite[\S 4]{Scull08},
recalled as Def. \ref{MinimalEquivariantdgcAlgebras} below)
of the $\ZTwo$-equivariant $\SpLR$-parametrized twistor fibration
in equivariant rational homotopy theory.
}

\medskip

\noindent {\bf The equivariant twistor fibration.}
The \emph{twistor fibration} $t_{\mathbb{H}}$
(\cite[\S III.1]{Atiyah79}\cite{Bryant82}, see \cite[\S 2]{FSS20c})
is the map from $\mathbb{C}P^3$ (``twistor space'')
to $\mathbb{H}P^1 \simeq S^4$ which sends complex
lines to the right quaternionic lines that they span:
\vspace{-1mm}
\begin{equation}
  \label{TwistorFibration}
  \hspace{-6mm}
  \raisebox{38pt}{
  \xymatrix@R=6pt@C=22pt{
    \;\;
    S^2
    \;\;
    \ar@{}[r]|-{ \simeq }
    \ar[drr]_-{
      \mathrm{fib}(t_{\mathbb{H}})
    }
    &
    \;
    \mathbb{H}^\times / \mathbb{C}^\times
    \!\!
    \ar[drr]
    \\
    &&
    \;
    \mathbb{C}P^3
    \;
    \ar[dd]^-{ t_{\mathbb{H}} }_-{
 %     \mathrlap{
        \tiny
        \color{darkblue}
        \bf
        \begin{tabular}{c}
     \bf     twistor
          \\
      \bf    fibration
        \end{tabular}
      }
%    }
    \ar@{}[r]|-{\simeq}
    &
    \big(
      \mathbb{C}^4 \setminus \{0\}
    \big)/ \mathbb{C}^\times
    \ar[dd]
    % \ar@{}[r]|-{
    %   \ni
    % }
    &
\hspace{-8mm} 
   \ni \big\{
      v \cdot z \,\left\vert\, z \in \mathbb{C}^\times\right. \!\!
    \big\}
    \\
    \\
    &&
    \mathbb{H}P^1
    \ar@{}[r]|-{\simeq}
    &
    \big(
      \mathbb{H}^2 \setminus \{0\}
    \big)/ \mathbb{H}^\times
    % \ar@{}[r]|-{
    %   \ni
    % }
    &
    \hspace{-8mm} 
  \ni  \big\{
      v \cdot q \,\left\vert\, q \in \mathbb{H}^\times\right. \!\!
    \big\}
  }
  }
\end{equation}

  \noindent
The fiber of the twistor fibration is hence
$\mathbb{H}^\times/ \mathbb{C}^\times \,\simeq\, \mathbb{C}P^1
\,\simeq\, S^2$.

\noindent {\bf (i)} There is the evident action of
$\mathrm{Sp}(2)$,
on both $\mathbb{C}P^3$ and $\mathbb{H}P^1$,
by left multiplication of homogeneous representatives
with unitary quaternion $2 \times 2$
matrices \eqref{QuaternionUnitaryGroup}:
\vspace{-1mm} 
\begin{equation}
  \label{Sp2ActionOnTwistorSpace}
  \xymatrix@C=3pt@R=-4pt{
    \mathrm{Sp}(2)
    \ar@{}[r]|-{\times}
    &
    \mathbb{C}P^3
    \ar[rr]
    &{\phantom{AAAA}}&
  \;  \mathbb{C}P^3
    \mathrlap{\,,}
    \\
    (A \!\!\! \ar@{}[r]|-{,}
    & \!\!\!\! [v])
    \ar@{}[rr]|-{\longmapsto}
    &&
    [A \cdot v]
  }
  \phantom{AAAA}
  \xymatrix@C=3pt@R=-3pt{
    \mathrm{Sp}(2)
    \ar@{}[r]|-{\times}
    &
    \mathbb{H}P^1
    \ar[rr]
    &{\phantom{AAAA}}&
   \; \mathbb{H}P^1 \;,
    \\
    (A \!\!\! \ar@{}[r]|-{,}
    & \!\!\!\! [v])
    \ar@{}[rr]|-{\longmapsto}
    &&
    [A \cdot v]
  }
\end{equation}
and the twistor fibration (being given by quotienting on the right)
is manifestly equivariant under this left action.

\noindent {\bf (ii)} Consider the following subgroups:

\vspace{-.4cm}
\def\arraystretch{1.4}
\begin{align}
  \label{Ghet}
  \Grefl
  & :=\;
  \left\{
     1
     \,:=\,
     \big(
       \begin{smallmatrix}
         1 & 0
         \\
         0 & 1
       \end{smallmatrix}
     \big)
   \,,\,
    \sigma
    \,:=\,
   \big(
     \begin{smallmatrix}
       0 & 1
       \\
       1 & 0
      \end{smallmatrix}
     \big)
   \right\}
  \;\;\subset\;
  \mathrm{Sp}(2)
  \,,
  \\
  \sigma
  & :\;
  [
    z_1 : z_2 : z_3 : z_4
  ]
  \;\;\mapsto\;\;
  [
    z_3 : z_4 : z_1 : z_2
  ]\;,
  \\
  \label{TheSp1Subgroup}
  \SpLR
  & :=
  \;
  \left\{
  q \cdot
  \,:=\,
   \big(
     \left.
     \begin{smallmatrix}
       \raisebox{1.6pt}{\scalebox{.7}{$q$}}
       &
       0
       \\
       0
       &
       \raisebox{1.6pt}{\scalebox{.7}{$q$}}
      \end{smallmatrix}
     \big)
     \,\right\vert\,
     q \,\in\, S(\mathbb{H})
   \right\}
  \;\;\,\subset\;
  \mathrm{Sp}(2)\;.
\end{align}
\vspace{-.4cm}

\noindent
Since these manifestly commute with each other,
the homotopy quotient $\mathbb{C}P^3 \!\sslash\! \SpLR $
of twistor space \eqref{TwistorFibration}
by $\SpLR$ still admits the structure of
a $G$-space (as in \cite[\S 8]{tomDieck79}\cite{May96}\cite{Blu17})
for $G = \Grefl$, fibered over $B \SpLR$
(see Ex. \ref{GhetEquivariantParametrizedTwistorSpace} below for details).

\medskip

\noindent {\bf The equivariant minimal relative dgc-algebra model of twistor space.}
Our Prop. \ref{Z2EquivariantRelativeMinimalModelOfSpin3ParametrizedTwistorSpace}
gives its equivariant minimal model:

\vspace{-.4cm}
  \begin{equation}
    \label{MinimalModelOfSpinParametrizedTwistorSpaceModZ2InIntroduction}
    \hspace{-1mm}
    \adjustbox{scale=0.9}{$
    \underset{
      \mathclap{
      \raisebox{0pt}{
        \tiny
        \color{darkblue}
        \bf
        {\begin{tabular}{c}
          twistor space
          \\
          homotopy-quotiented 
          \\
          by $\SpLR$ with
          \\
         residual $\Grefl$-action
        \end{tabular}}
      }
      }
    }{
    \xymatrix@C=-5pt{
      \mathbb{C}P^3
      \ar@(ul,ur)^-{ \Grefl }
      &
      \!\sslash \SpLR
    }
    }
      :
  \raisebox{40pt}{
  \xymatrix@C=38pt@R=3.5em{
    \ZTwo/1
    \ar@{->}[d]_-{
      \rotatebox[origin=c]{90}{
        \scalebox{1}{
          \tiny
          \bf
          $\ZTwo$-orbit category
        }
      }
    }
    \ar@(ul,ur)|-{\;\ZTwo}
    \ar@{|->}[r]^-{\mbox{
        \color{greenii}
        \bf \tiny
        bulk
      }}
    &
    \mathbb{R}
    \big[
      \tfrac{1}{4}p_1
    \big]
    \!\!
    \left[
      \!\!
      \def\arraystretch{1}
      {\begin{array}{c}
        h_3,
        \\
        f_2
        \\
        \omega_7,
        \\
        \widetilde \omega_4
      \end{array}}
      \!\!
    \right]
    \!\big/\!
    \left(
      {\begin{aligned}
        d\, h_3  & = \widetilde \omega_4 - \tfrac{1}{2}p_1 -  f_2 \wedge f_2
        \\[-4pt]
        d\, f_2 & = 0
        \\[-4pt]
        d\, \omega_7
          & =
          -
          \widetilde \omega_4 \wedge
          \big(
            \widetilde \omega_4 - \tfrac{1}{2}p_1
          \big)
        \\[-4pt]
        d\, \widetilde \omega_4 & = 0
      \end{aligned}}
    \right)
    \ar@<-54pt>@{->>}[d]^-{
      \mbox{
        \tiny
        \color{darkblue}
        \bf
        {\begin{tabular}{c}
          minimal $\Grefl$-equivariant model
          \\
          relative to $B \SpLR$
        \end{tabular}}
      }
    }
    \\
    \ZTwo/\ZTwo
    \ar@{|->}[r]^-{\mbox{
        \color{greenii}
        \bf
      \tiny  singularity
      }}
    &
    \mathbb{R}
    \big[
      \tfrac{1}{4}p_1
    \big]
    \!\!
    \left[
      \!\!
      \def\arraystretch{1}
      {\begin{array}{c}
        h_3,
        \\
        f_2
      \end{array}}
      \!\!
    \right]
    \!\big/\!
    \left(
      {\begin{aligned}
        d\, h_3  & = \phantom{\omega_4}\; - \tfrac{1}{2}p_1 - f_2 \wedge f_2
        \\[-3pt]
        d\, f_2 & = 0
      \end{aligned}}
    \right)
  }
  }
  $}
\end{equation}

\noindent
normalized (as in \cite{FSS19b}\cite{FSS19c}\cite{FSS20c}) such that:

\noindent
{\bf (a)} all closed generators shown are
rational images of {\it integral} and {\it integrally in-divisble}
cohomology classes;

\noindent
{\bf (b)}
$\omega \,:=\, \widetilde \omega - \tfrac{1}{4}p_1$
is fiberwise the volume form on
$\mathbb{H}P^1 \simeq S^4$, and $f_2$ is fiberwise the volume form on
$\mathbb{C}P^1 \simeq S^2$.

\vspace{.2cm}

As a non-trivial example of a (relative) minimal model in rational
equivariant homotopy theory, this may be of interest in its
own right. Such examples computed in the literature are rare
(we have not come across any). Here we are concerned
with a most curious aspect of this novel example:
Under substituting
the algebra generators in \eqref{MinimalModelOfSpinParametrizedTwistorSpaceModZ2InIntroduction}
with differential forms on a $\Grefl$-orbifold
(essentially the non-abelian character map,
Def. \ref{TwistedEquivariantNonabelianCharacterMap}),
the relations in \eqref{MinimalModelOfSpinParametrizedTwistorSpaceModZ2InIntroduction}
are those expected for flux densities in ``M-theory'', as briefly explained in \S\ref{ApplicationToFLuxQuantization}:
$$
\hspace{2mm} 
  \underset{
    \mathclap{
    \raisebox{-3pt}{
      \tiny
      \color{darkblue}
      \bf
      \begin{tabular}{c}
        dgc-algebra generators of
        \\[-2pt]
        equiv. relative minimal model
      \end{tabular}
    }
    }
  }{
  \big(
    \tfrac{1}{4}p_1
    ,
    \;
    \widetilde \omega_4
    ,
    \;
    \omega_7
    ,
    \;
    f_2
    ,
    \;
    h_3
  \big)
  }
\quad
    \longleftrightarrow
  \quad
  \big(\,
    \underset{
      \mathclap{
      \raisebox{-3pt}{
        \tiny
        \color{darkblue}
        \bf
        \begin{tabular}{c}
          Pontrjagin form
          \\[-2pt]
          (gravit. flux density)
        \end{tabular}
      }
      }
    }{
      \tfrac{1}{4}p_1(\omega)
    }
    \;
    ,
    \;\;
    \overset{
      \mathclap{
      \raisebox{3pt}{
        \tiny
        \color{darkblue}
        \bf
        \begin{tabular}{c}
          shifted
          C-field 
          \\[-2pt]
          flux density
        \end{tabular}
      }
      }
    }{
      G_4 + \tfrac{1}{4}p_1(\omega)
    }
    \;,
    \;\;
    \underset{
      \mathclap{
      \raisebox{-3pt}{
        \tiny
        \color{darkblue}
        \bf
        \begin{tabular}{c}
          dual
          C-field 
          \\[-2pt]
          flux density
        \end{tabular}
      }
      }
    }{
      2G_7
    }
    \;,
    \;\;
    \overset{
      \mathclap{
      \raisebox{+3pt}{
        \tiny
        \color{darkblue}
        \bf
        \begin{tabular}{c}
          gauge 
          \\[-2pt] 
          flux
        \end{tabular}
      }
      }
    }{
      F_2
    }
    \;,
    \;\;
    \underset{
      \mathclap{
      \raisebox{-3pt}{
        \tiny
        \color{darkblue}
        \bf
        \begin{tabular}{c}
          B-field 
          \\[-2pt]
          flux
        \end{tabular}
      }
      }
    }{
      H_3
    }
  \big).
$$

\newpage

\noindent
{\bf Outline.}

\noindent
In \cref{GEquivariantHomotopyTheory} we introduce
equivariant non-abelian cohomology theory
(in equivariant generalization of \cite[\S 2]{FSS23-Char})
and the example of
equivariant twistorial Cohomotopy theory
$\mathcal{T}^\tau_{\Grefl}(-)$
(Def. \ref{EquivariantTwistorialCohomotopyTheory}).

\vspace{.1cm}

\noindent
In \cref{EquivariantNonAbelianDeRhamCohomology} we introduce
equivariant non-abelian de Rham cohomology theory
and the equivariant non-abelian character map
(in equivariant generalization of \cite[\S 3-5]{FSS23-Char})
and compute the $\Grefl$-equivariant relative minimal model of
$\SpLR$-parametrized twistor space (Prop. \ref{Z2EquivariantRelativeMinimalModelOfSpin3ParametrizedTwistorSpace}).

\vspace{.1cm}

\noindent
In \S\ref{ApplicationToFLuxQuantization} we briefly indicate the application and impact of our result on the problem of flux-quantization of higher gauge fields arising in super-gravity.

\medskip
\medskip

\noindent
{\bf Notation.} For various types of symmetry groups
and their quotients, we use the following notation:

\vspace{1mm} 
\hspace{-8.5mm}
\scalebox{.64}{
\def\tabcolsep{4pt}
\begin{tabular}{|c|l|l||l|l|l|}
\hline
$
\mathclap{\phantom{\vert^{\vert^{\vert}}}}
T
\mathclap{\phantom{\vert_{\vert_{\vert}}}}
$
&
Compact Borel equivariance group
&
\multirow{4}{*}{
  Def. \ref{GActionsOnTopSp}
}
&
$
\mathclap{\phantom{\vert^{\vert^{\vert}}}}
\raisebox{1pt}{\textesh}\, \phantom{\orbisingular} \big(  X \!\sslash\! T \big) \mathclap{\phantom{\vert_{\vert_{\vert}}}}
$
&
Borel equivariant homotopy type
&
Ex. \ref{HomotopyTypeOfBorelConstruction}
\\
\cline{1-2}
\cline{4-6}
\multirow{2}{*}{
$
  \mathclap{\phantom{\vert^{\vert^{\vert}}}}
  G
  \mathclap{\phantom{\vert_{\vert_{\vert}}}}
$
}
&
\multirow{2}{*}{
Finite proper equivariance group
}
& &
$
 \mathclap{\phantom{\vert^{\vert^{\vert}}}}
 \phantom{\raisebox{1pt}{\textesh}}\,\orbisingular \big( X \!\sslash\! G \big)
 \mathclap{\phantom{\vert_{\vert_{\vert}}}}
$
&
Orbifold
&
Ex. \ref{SystemsOfFixedPointSpaces}
\\
&
&
&
$
  \mathclap{\phantom{\vert^{\vert^{\vert}}}}
  \raisebox{1pt}{\textesh} \, \orbisingular \big( X \!\sslash\! G \big)
  \mathclap{\phantom{\vert^{\vert^{\vert}}}}
$
&
Proper equivariant homotopy type
&
Def. \ref{EquivariantShape}
\\
\cline{1-2}
\cline{4-6}
$
\mathclap{\phantom{\vert^{\vert^{\vert}}}}
\! T \! \times \!G \!
\mathclap{\phantom{\vert_{\vert_{\vert}}}}
$
&
\!\!\! Borel \& proper equivariance group \!
 &
&
$
  \mathclap{\phantom{\vert^{\vert^{\vert^{\vert^{\vert}}}}}}
  \raisebox{1pt}{\textesh}\,
  \big(\!\orbisingular ( X \!\sslash\! G ) \!\big) \!\!\sslash\! T
  \mathclap{\phantom{\vert_{\vert_{\vert_{\vert_{\vert}}}}}}
$
&
  \hspace{-11pt}
  \def\arraystretch{.8}
  \begin{tabular}{l}
    Proper $G$-equivariant \&
    \\
    Borel $T$-equivariant
    homotopy type \hspace{-3mm} 
  \end{tabular}
&
Ex. \ref{EquivariantParametrizedHomotopyTypes}
\\
\hline
$
  \mathclap{\phantom{\vert^{\vert^{\vert}}}}
  \mathcal{G}
  \mathclap{\phantom{\vert_{\vert_{\vert}}}}
$
&
Simplicial group/$\infty$-group
&
Not. \ref{ModelCategoryOfSmplGrp}
&
$
  \mathclap{\phantom{\vert^{\vert^{\vert}}}}
  \phantom{\raisebox{1pt}{\textesh}\, \orbisingular}
  \mathcal{A} \!\sslash\! \mathcal{G}
  \mathclap{\phantom{\vert_{\vert_{\vert}}}}
$
&
  Homotopy quotient
&
\!\!\!  Prop. \ref{InfinityActionsEquivalentToFibrationsOverClassifyingSpace} 
\\
\hline
$
  \mathclap{\phantom{\vert^{\vert^{\vert}}}}
  \mathscr{G}
  \mathclap{\phantom{\vert_{\vert_{\vert}}}}
$
&
  $G$-equivariant $\infty$-group
&
  Rem. \ref{EquivariantInfinityAction}
&
$
  \mathclap{\phantom{\vert^{\vert^{\vert}}}}
  \phantom{\raisebox{1pt}{\textesh}\, \orbisingular}
  \mathscr{A} \!\sslash\! \mathscr{G}
  \mathclap{\phantom{\vert_{\vert_{\vert}}}}
$
&
$G$-equivariant homotopy quotient
&
\eqref{EquivariantUniversalFibrationOverEquivariantClassifyingSpace}
\\
\hline
\end{tabular}
}

\vspace{.2cm}

\noindent Our notation for equivariant homotopy theory
follows \cite{SS20b}.
The symbol ``$\orbisingular$'' refers to
proper equivariant
objects (``orbi-singular objects''),
parametrized over the orbit category (Def. \ref{OrdinaryOrbitCategory})
of the equivariance group \eqref{EquivarianceGroup}:

\smallskip 
\hspace{-.8cm}
\scalebox{.75}{
\def\tabcolsep{4pt}
\begin{tabular}{|c|c|c|p{.12\linewidth}|}
  \hline
  \multicolumn{2}{|c|}{
    $\mathclap{\phantom{\vert^{\vert^{\vert}}}}$
    \bf Symbol
    $\mathclap{\phantom{\vert_{\vert_{\vert}}}}$
  }
  &
  {
    $\mathclap{\phantom{\vert^{\vert^{\vert}}}}$
       \bf Meaning
    $\mathclap{\phantom{\vert_{\vert_{\vert}}}}$
  }
  &
  {
    $\mathclap{\phantom{\vert^{\vert^{\vert}}}}$
    \bf Details
    $\mathclap{\phantom{\vert_{\vert_{\vert}}}}$
  }
  \\
  \hline
  \hline
  $\GActionsOnTopSp$
  &
  \begin{tabular}{c}
    $G$-actions on
    \\[-5pt]
    topological spaces
  \end{tabular}
  &
  \begin{minipage}[left]{8.3cm}
    $\mathclap{\phantom{\vert^{\vert^{\vert}}}}$Category of topological spaces
    equipped with continuous action of the equivariance group $G$ $\mathclap{\phantom{\vert_{\vert_{\vert}}}}$
  \end{minipage}
  &
  Def. \ref{GActionsOnTopSp}
  \\
  \hline
  $G\mathrm{Orb}$
  &
  \begin{tabular}{c}
    $G$-orbits
  \end{tabular}
  &
  \begin{minipage}[left]{8.3cm}
    $\mathclap{\phantom{\vert^{\vert^{\vert}}}}$Category of canonical orbits
    $G/H$ of the equivariance group,
    with equivariant maps between them 
    %$\mathclap{\phantom{\vert_{\vert_{\vert}}}}$
  \end{minipage}
  &
  Def. \ref{OrdinaryOrbitCategory}
  \\
  \hline
  $\EquivariantSSet$
  &
  \begin{tabular}{c}
    $G$-equivariant
    \\[-5pt]
    simplicial sets
  \end{tabular}
  &
  \begin{minipage}[left]{8.3cm}
    $\mathclap{\phantom{\vert^{\vert^{\vert}}}}$Category of contra-variant functors from
    $G$-orbits to simplicial sets $\mathclap{\phantom{\vert_{\vert_{\vert}}}}$
  \end{minipage}
  &
  Def. \ref{EquivariantSSetCategory}
  \\
  \hline
  $\EquivariantDualVectorSpaces$
  &
  \begin{tabular}{c}
    $G$-equivariant
    \\[-5pt]
    dual vector spaces
  \end{tabular}
  &
  \begin{minipage}[left]{8.3cm}
    $\mathclap{\phantom{\vert^{\vert^{\vert}}}}$Category of co-variant functors from
    $G$-orbits to vector spaces $\mathclap{\phantom{\vert_{\vert_{\vert}}}}$
  \end{minipage}
  &
  Def. \ref{EquivariantVectorSpaces}
  \\
  \hline
  $\EquivariantdgcAlgebras$
  &
  \begin{tabular}{c}
    $G$-equivariant
    \\[-5pt]
    dgc-algebras
  \end{tabular}
  &
  \begin{minipage}[left]{8.3cm}
    $\mathclap{\phantom{\vert^{\vert^{\vert}}}}$Category of co-variant functors from
    $G$-orbits to connective differential graded-commutative algebras $\mathclap{\phantom{\vert_{\vert_{\vert}}}}$
  \end{minipage}
  &
  Def. \ref{EquivariantdgcAlgebras}
  \\
  \hline
  $\EquivariantHomotopyTypes$
  &
  \begin{tabular}{c}
    $G$-equivariant
    \\[-5pt]
    homotopy types
  \end{tabular}
  &
  \begin{minipage}[left]{8.3cm}
    $\mathclap{\phantom{\vert^{\vert^{\vert}}}}$Homotopy category of
    projective model category of contra-variant functors from
    $G$-orbits to simplicial sets$\mathclap{\phantom{\vert_{\vert_{\vert}}}}$
  \end{minipage}
  &
  Def. \ref{EquivariantHomotopyTypes}
  \\
  \hline
\end{tabular}
}

%%%%%%%%%%%%%%%%%%%%%%%%%%%%%%%%%%%%%%%%%%%%%
\section{Equivariant non-abelian cohomology}
 \label{GEquivariantHomotopyTheory}
%%%%%%%%%%%%%%%%%%%%%%%%%%%%%%%%%%%%%%%%%%%%%

In \cref{HomotopyTheoryOfInfinitGroupActions}
we recall basics of $\infty$-groups and their
$\infty$-actions
and establish some technical Lemmas.

\noindent
In \cref{EquivariantHomotopyTheory} we recall basics of
proper equivariant homotopy theory and introduce our
running Example \ref{GhetEquivariantParametrizedTwistorSpace}.

\noindent
in \cref{EquivariantNonAbelianCohomologyTheories} we
 introduce equivariant non-abelian cohomology theory.

\noindent
in \cref{TwistedEquivariantNonAbelianCohomologyTheories}
 we introduce twisted equivariant non-abelian cohomology theory.

\medskip

Throughout, we illustrate all concepts in the running example
of the $\Grefl$-equiva{-}riant and $\SpLR$-parametrized twistor fibration
(Example \ref{GhetEquivariantParametrizedTwistorSpace}),
the induced equivariant twistorial Cohomotopy theory
(Def. \ref{EquivariantTwistorialCohomotopyTheory})
and its character image in equivariant de Rham cohomology
(Example \ref{FlatEquivariantTwistorialDifferentialForms}).
We highlight that here both flavors of equivariance are involved:
\begin{center}
{\small
\def\arraystretch{1.2}
\begin{tabular}{|c||c|c|}
  \hline
  &
  \hspace{-.1cm}
  {\bf Borel equivariance}
  \hspace{-.1cm}
  &
  \hspace{-.1cm}
  {\bf Proper equivariance}
  \hspace{-.1cm}
  \\
  \hline
  \hline
  \def\arraystretch{.9}
  \begin{tabular}{c}
    $\mathclap{\phantom{\vert^{\vert}}}$
    Equivariance
    \\
    group (\cref{GEquivariantHomotopyTheory})
    $\mathclap{\phantom{\vert_{\vert}}}$
  \end{tabular}
  &
  $T  \,=\, \mathrm{Sp}(1)$ & $G \,=\, \ZTwo$
  \\
  \hline
  \hspace{-.34cm}
  \def\arraystretch{.9}
  \begin{tabular}{c}
    $\mathclap{\phantom{\vert^{\vert}}}$
    Equivariant
    \\
    dR-cohomology (\cref{EquivariantNonAbelianDeRhamCohomology})
    $\mathclap{\phantom{\vert_{\vert}}}$
  \end{tabular}
  \hspace{-.34cm}
  &
  \def\arraystretch{.9}
  \begin{tabular}{c}
    Borel-Weil-Cartan
    \\
    model
  \end{tabular}
  &
  \def\arraystretch{.9}
  \begin{tabular}{c}
    Bredon-type
    \\
    theory
  \end{tabular}
  \\
  \hline
  \def\arraystretch{.9}
  \begin{tabular}{c}
    $\mathclap{\phantom{\vert^{\vert}}}$
    Physical
    \\
    effect
    (\S\ref{ApplicationToFLuxQuantization})
    $\mathclap{\phantom{\vert_{\vert}}}$
  \end{tabular}
  &
  \def\arraystretch{.9}
  \begin{tabular}{c}
    Flux quantization:
    \\
    shift of $G_4$ by $\scalebox{.7}{$\tfrac{1}{4}$}p_1$
  \end{tabular}
  &
  \def\arraystretch{.9}
  \begin{tabular}{c}
    orbifolding of
    \\
    M5-brane
  \end{tabular}
  \\
  \hline
\end{tabular}
}
\end{center}

\medskip

We make free use of basic concepts from
category theory and homotopy theory
(for joint introduction see \cite{Riehl14}\cite{Richter20}),
in particular of model category theory
(\cite{Quillen67},
review in \cite{Hovey99}\cite{Hirschhorn02}\cite[A.2]{Lurie09a}).
Relevant concepts and facts are recalled in
\cite[\S A]{FSS23-Char}.

\medskip
For $\mathcal{C}$ a category, and $X,\, A \, \in \, \mathcal{C}$
a pair of objects, we write
\begin{equation}
  \label{HomSets}
  \mathcal{C}(X,A) \;\in\; \mathrm{Sets}
\end{equation}
for its set of morphisms from $X$ to $A$. This assignment is, of course,
a contravariant functor in its first argument, to be denoted:
\begin{equation}
  \label{ContravariantHomFunctor}
  \mathcal{C}(-;\, A)
  \;:\;
  \xymatrix{
    \mathcal{C}^{\mathrm{op}}
    \ar[r]
    &
    \mathrm{Sets}
  }.
\end{equation}

\noindent
Elementary as it is, this is of profound interest
whenever $\mathcal{C}$ is the {\it homotopy category} of
a homotopy topos \cite{ToenVezzosi05} \cite{Lurie09a}\cite{Rezk10},
in which case the contravariant hom-functors
\eqref{ContravariantHomFunctor} are {\it non-abelian cohomology theories}
\cite{Toen02} \cite{dcct}\cite{SS20b}\cite{FSS23-Char}. These
subsume generalized and ordinary cohomology theories
(\cite[\S 2]{FSS23-Char}), as well as their equivariant
enhancements, which we consider below.
%to which we get in \cref{GEquivariantHomotopyTheory}.

\newpage

%%%%%%%%%%%%%%%%%%%%%%%%%%%%%%%%%%%%%%%%%%
\subsection{Homotopy theory of $\infty$-group actions}
\label{HomotopyTheoryOfInfinitGroupActions}
%%%%%%%%%%%%%%%%%%%%%%%%%%%%%%%%%%%%%%%%%%

\noindent {\bf Plain homotopy theory.}

\vspace{-1mm} 
\begin{notation}[Classical homotopy category]
  \label{ClassicalHomotopyCategory}
  {\bf (i)} We write
  \begin{equation}
    \label{ClassicalModelStructureOnSSet}
    \mathrm{TopSp}_{\mathrm{Qu}}
    \,,
    \,
    \mathrm{SSet}_{\mathrm{Qu}}
    \;\in\;
    \mathrm{ModCat}
  \end{equation}
  \vspace{-.55cm}

  \noindent
  for the classical model category structures on topological spaces
  and on simplicial sets, respectively
  (\cite[\S II.3]{Quillen67}, review in \cite{Hi15}\cite{GoerssJardine99}).

  \noindent
  {\bf (ii)} The classical Quillen equivalence
  \vspace{-2mm} 
  \begin{equation}
    \label{QuillenEquivalenceBetweenTopSpAndSSet}
    \xymatrix{
      \mathrm{TopSp}_{\mathrm{Qu}}
      \ar@{<-}@<+6pt>[rr]^-{ \left\vert - \right\vert }
      \ar@<-6pt>[rr]_-{ \mathrm{Sing} }^-{ \simeq_{\mathrlap{\mathrm{Qu}}} }
      &&
      \mathrm{SSet}_{\mathrm{Qu}}
    }
  \end{equation}

  \noindent
  induces an equivalence between the corresponding homotopy
  categories, which we denote:
  \begin{equation}
    \label{TheClassicalHomotopyCategory}
    \xymatrix{
      \mathrm{SSet}
      \ar[rr]^-{ \gamma }_-{
        \mbox{
          \tiny
          \color{greenii}
          \bf
          localization
        }
      }
      &&
      \HomotopyTypes
    }
    \;:=\;
    \mathrm{Ho}
    \big(
      \mathrm{SSet}_{\mathrm{Qu}}
    \big).
  \end{equation}

\noindent   {\bf (iii)}
  We denote the localization functor from topological spaces
  to this classical homotopy category  by
  ``$\raisebox{1pt}{\textesh}$'': \footnote{
The ``esh''-symbol ``\raisebox{1pt}{\textesh}'' stands for \emph{shape}
\cite[3.4.5]{dcct}\cite[9.7]{Shulman15}\cite[\S 3.1.1]{SS20b},
following \cite{Borsuk75},
which for the well-behaved topological spaces of interest here is another term for
their \emph{homotopy type} \cite[7.1.6]{Lurie09a}\cite[4.6]{Wang17}.}
  \begin{equation}
    \label{TopologicalShapeAsLocalization}
    \xymatrix@R=12pt@C=100pt{
      \mathrm{TopSp}
      \ar[rr]^-{
        \mathllap{
          \mbox{
            \tiny
            \color{greenii}
            \bf
            shape
          }
        }
        \scalebox{.9}{\textesh}
      }_-{
        \mbox{
          \tiny
          \color{greenii}
          \bf
          localization at
          weak homotopy equivalences
        }
      }
      \ar[dr]_-{
        \mbox{
          \tiny
          \def\arraystretch{.9}
          \begin{tabular}{c}
            \color{greenii}
            \bf
            form singular
            \\
            \color{greenii}
            \bf
            simplicial set
            \\
            \eqref{QuillenEquivalenceBetweenTopSpAndSSet}
          \end{tabular}
        }
      }
      &&
      \HomotopyTypes
      \\
      &
      \mathrm{SSet}
      \ar[ur]_-{
        \gamma
        \mathrlap{
          \mbox{
          \tiny
          {
          \color{greenii}
          \bf
          localization
          }
          \eqref{TheClassicalHomotopyCategory}
          }
        }
      }
    }.
  \end{equation}

\end{notation}

\noindent {\bf Borel-equivariant homotopy theory.}
We recall basics of Borel-equivariant homotopy theory,
but in the generality of equivariance for
$\infty$-group actions (for the broader
picture see \cite{NSS12a}\cite[\S 2.2]{SS20b}).

\begin{notation}[Model category of simplicial groups]
  \label{ModelCategoryOfSmplGrp}

 {\bf (i)} We write
  \begin{equation}
    \label{CategoryOfSmplGrp}
    \mathrm{SmplGrp}
     \;:=\;
    \mathrm{Grp}
    (
      \mathrm{SSet}
    )
  \end{equation}
  \vspace{-.6cm}

  \noindent
  for the category of simplicial groups.

  \noindent {\bf (ii)} This
  becomes (\cite[\S II.3.7]{Quillen67}) a model category
  $$
    \mathrm{SmplGrp}_{\mathrm{proj}}
    \;\in\;
    \mathrm{ModCat}
  $$
  \vspace{-.55cm}

  \noindent
  by taking the
  weak equivalences and fibrations to be those of
  $\mathrm{SSet}_{\mathrm{Qu}}$
  (Notation \ref{ClassicalHomotopyCategory}).

  \noindent {\bf (iii)} We denote the homotopy category of this model structure by
  \begin{equation}
    \label{LocalizationOfCategoryOfSmplGrp}
    \xymatrix@C=25pt{
    \mathrm{SmplGrp}_{\mathrm{proj}}
    \ar[rrr]^-{ \gamma }_-{
      \mbox{
        \tiny
        \color{greenii}
        \bf
        \begin{tabular}{c}
          localization at
          \\[-3pt]
          weak homotopy equivalences
        \end{tabular}
      }
    }
    &&&
    \mathrm{Grp}_\infty
    \;:=\;
    \mathrm{Ho}
    \big(
      \mathrm{SmplGrp}_{\mathrm{proj}}
    \big)
    }
    \,
  \end{equation}
  
\vspace{-1mm} 
  \noindent
  and denote the generic object here by 
  \vspace{-1mm} 
  $$
    \mathcal{G}
    \;\in\;
    \xymatrix{
      \mathrm{SmplGrp}
      \ar[r]^-{ \gamma }
      &
      \mathrm{Grp}_\infty
    }.
  $$

\end{notation}
\begin{example}[Shapes of topological groups are $\infty$-groups]
  For $T \,\in \, \mathrm{TopGrp}$, its singular
  simplicial set
  \eqref{QuillenEquivalenceBetweenTopSpAndSSet}
  is canonically a simplicial group \eqref{CategoryOfSmplGrp}
  \begin{equation}
    \label{SingularSimplicialGroup}
    \mathrm{Sing}(T)
    \;\in\;
    \mathrm{SmplGrp}
    \,,
  \end{equation}

  \noindent
  and,
  since the weak equivalence of simplicial groups are those of
  the underlying simplicial sets, its image in the homotopy category
  is the shape $\raisebox{1pt}{\textesh}\, T$ \eqref{TopologicalShapeAsLocalization},
  now equipped with induced
  $\infty$-group structure (Notation \ref{ModelCategoryOfSmplGrp}):

  \begin{equation}
    \label{TopGrphapeAsLocalization}
    \xymatrix@R=13pt@C=90pt{
      \mathrm{TopGrp}
      \ar[rr]^-{
        \mathllap{
          \mbox{
            \tiny
            \color{greenii}
            \bf
            $\infty$-group shape
          }
        }
        \scalebox{.9}{\textesh}
      }_-{
        \mbox{
          \tiny
          \color{greenii}
          \bf
          localization at
          weak homotopy equivalences
        }
      }
      \ar[dr]_-{
        \mbox{
          \tiny
          \def\arraystretch{.9}
          \begin{tabular}{c}
            \color{greenii}
            \bf
            form singular
            \\
            \color{greenii}
            \bf
            simplicial group
            \\
            \eqref{QuillenEquivalenceBetweenTopSpAndSSet},
            \eqref{SingularSimplicialGroup}
          \end{tabular}
        }
      }
      &&
      \mathrm{Grp}_\infty
      \\
      &
      \mathrm{SmplGrp}
      \ar[ur]_-{ \!\!\!\!\!\!\!\!
        \gamma
        \mathrlap{
          \mbox{
          \tiny
          {
          \color{greenii}
          \bf
          localization
          }
          \eqref{LocalizationOfCategoryOfSmplGrp}
          }
        }
      }
    }.
  \end{equation}

\end{example}

\begin{notation}[Model category of reduced simplicial sets]
  \label{ModelCategoryOfReducedSSet}
 {\bf (i)}  We write

  $$
    \xymatrix{
      \ReducedSSet
      \;
      \ar@{^{(}->}[r]
      &
      \mathrm{SSet}
    }
  $$

  \noindent
  for the full subcategory on those $S \,\in\, \mathrm{SSet}$
  that have a single 0-cell, $S_0 \,=\, \ast$.

\noindent {\bf (ii)}  This becomes (\cite[\S V, Prop. 6.2]{GoerssJardine99}) a model category
  with weak equivalences and cofibrations those of
  $\mathrm{SSet}_{\mathrm{Qu}}$
  (Notation \ref{ClassicalHomotopyCategory}):
  $$
    \ReducedSSet_{\mathrm{GJ}}
    \;\in\;
    \mathrm{ModCat}
    \,.
  $$

 \noindent {\bf (iii)} Since reduced simplicial sets model those
  homotopy types \eqref{TheClassicalHomotopyCategory}
  which are {\it pointed and connected}
  (e.g. \cite[Prop. 3.16]{NSS12b}),
  we denote the corresponding
  homotopy category by
  \begin{equation}
    \xymatrix@C=15pt{
    \ReducedSSet_{\mathrm{GJ}}
    \ar[rr]^-{ \gamma }
    &&
    \mathrm{HomotopyTypes}^{\scalebox{.5}{$\ast$}}_{\scalebox{.5}{$\geq 1$}}
    \;:=\;
    \mathrm{Ho}
    \big(
      \ReducedSSet_{\mathrm{GJ}}
    \big).
    }
      \end{equation}

\end{notation}

\begin{prop}[Classifying space/loop space construction
{\cite[\S V, Prop. 6.3]{GoerssJardine99} \cite{Stevenson12}\cite[\S 3.5]{NSS12b}}]
  \label{ClassifyingSpaceLoopSpaceConstruction}
  There exists
  a Quillen equivalence
  between the model categories of
  reduced simplicial sets
  (Notation \ref{ModelCategoryOfReducedSSet})
  and that of
  simplicial groups (Notation \ref{ModelCategoryOfSmplGrp})
  \begin{equation}
    \label{SimplicialLoopGroupAdjunction}
    \xymatrix{
      \mathrm{SmplGrp}_{\mathrm{proj}}
   \;\;   \ar@{<-}@<+6pt>[rr]^{  }
      \ar@<-6pt>[rr]_{ \overline{W} }^-{ \simeq_{\mathrlap{\mathrm{Qu}}} }
      &&
  \;\;    \mathrm{ReducedSSet}
    }
  \end{equation}

  \noindent
  whose derived adjunction is given by forming homotopy types
  of {\it based loop spaces} and of {\it classifying spaces}:
  \begin{equation}
    \label{ClassifyingSpaceEquivalence}
    \xymatrix@C=3em{
      \mathllap{
        \mbox{
          \tiny
          \color{darkblue}
          \bf
          $\infty$-groups
        }
        \;\;\;
      }
      \mathrm{Grp}_\infty
      \ar@{<-}@<+5pt>[rrr]^-{
        \overset{
          \mathclap{
          \raisebox{3pt}{
            \tiny
            \color{greenii}
            \bf
            based loop $\infty$-group
          }
          }
        }{
          \Omega(-)
        }
      }
      \ar@<-5pt>[rrr]_-{
        \underset{
          \mathclap{
          \raisebox{-3pt}{
            \tiny
            \color{greenii}
            \bf
            classifying space
          }
          }
        }{
          B(-) \,\,:=\,\, \mathbb{R} \overline{W}(-)
        }
      }^-{\simeq}
      &&&
      \overset{
        \mbox{
          \tiny
          \color{darkblue}
          \bf
          \def\arraystretch{1}
          \begin{tabular}{c}
            pointed \& connected
            \\
            homotopy types
          \end{tabular}
        }      
      }{
        \mathrm{HomotopyTypes}^{\ast/}_{\geq 1}
      }
    }
  \end{equation}

\end{prop}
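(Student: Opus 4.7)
The plan is to follow the classical construction of Kan, as packaged in \cite{GoerssJardine99}. First I would introduce the two adjoint functors explicitly: the \emph{Kan loop group} functor $G \colon \mathrm{ReducedSimplicialSets} \to \mathrm{SimplicialGroups}$, sending a reduced simplicial set $X$ to the simplicial group whose $n$-simplices form the free group on generators $[x]$ for $x \in X_{n+1}$, modulo $[s_0 x] = e$, with face and degeneracy operators inherited from $X$ together with the twist $d_0[x] = [d_1 x] \cdot [d_0 x]^{-1}$; and the \emph{classifying complex} functor $\overline{W} \colon \mathrm{SimplicialGroups} \to \mathrm{ReducedSimplicialSets}$, with
$$\overline{W}(K)_n \;:=\; K_{n-1} \times K_{n-2} \times \cdots \times K_0, \qquad \overline{W}(K)_0 \,=\, \ast,$$
whose face/degeneracy operators are built from those of $K$ together with the group multiplication on the first face map. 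The adjunction $G \dashv \overline{W}$ is then established directly: a simplicial group homomorphism $G(X) \to K$ is level-wise determined by its values on the free generators, and unpacking the twisted $d_0$ shows that specifying such values is equivalent to specifying a map of reduced simplicial sets $X \to \overline{W}(K)$.

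To see this is a Quillen adjunction, recall that the projective model structure on $\mathrm{SimplicialGroups}$ (Notation \ref{ModelCategoryOfSimplicialGroups}) has fibrations and weak equivalences created by the forgetful functor to $\mathrm{SimplicialSets}_{\mathrm{Qu}}$, while the Goerss-Jardine model structure on $\mathrm{ReducedSimplicialSets}$ (Notation \ref{ModelCategoryOfReducedSimplicialSets}) has cofibrations and weak equivalences inherited from $\mathrm{SimplicialSets}_{\mathrm{Qu}}$. It thus suffices to check that $\overline{W}$ sends Kan fibrations of simplicial groups to Kan fibrations of simplicial sets, which follows from exhibiting $\overline{W}K$ as the base of the universal principal $K$-bundle $W K \to \overline{W}K$ with contractible total space $WK$, and invoking the standard fact that principal bundles for simplicial groups are Kan fibrations.

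The main and most technical step is upgrading this to a Quillen \emph{equivalence}. For this I would verify that both the derived counit $G\overline{W}(K) \to K$ and the derived unit $X \to \overline{W} G(X)$ are weak equivalences. Both reduce to a homotopy-group computation based on the universal bundle $W K \to \overline{W} K$: contractibility of $WK$ yields the degree shift $\pi_n(\overline{W}K) \cong \pi_{n-1}(K)$, while Milnor's theorem dually identifies $\pi_n(G X) \cong \pi_{n+1}(\lvert X \rvert)$. Comparing these two shifts along the unit/counit, both induce isomorphisms on all homotopy groups, hence are weak equivalences. Passing to derived functors then identifies $\mathbb{L} G$ with the based-loop $\infty$-group functor $\Omega(-)$ and $\mathbb{R}\overline{W}$ with the classifying space functor $B(-)$, exactly as displayed in \eqref{ClassifyingSpaceEquivalence}.
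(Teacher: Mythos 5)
The paper itself offers no proof of this proposition; it is classical and is cited to \cite[\S V, Prop. 6.3]{GoerssJardine99}, \cite{Stevenson12}, \cite[\S 3.5]{NSS12b}. Your sketch follows exactly the Kan/Goerss--Jardine route, so the only question is whether your outline is complete.

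Most of it is: the adjunction $G\dashv\overline{W}$ with the formulas you give, the degree-shift $\pi_n(GX)\cong\pi_{n+1}(X)$ for the Quillen-equivalence step, and the identification of the derived functors with $\Omega$ and $B$ are all correct. The gap is in the Quillen-\emph{adjunction} step. Your reduction to showing that $\overline{W}$ preserves Kan fibrations is legitimate (a Kan fibration between reduced simplicial sets has the right lifting property against all trivial cofibrations, in particular the reduced ones, and hence is a fibration in the Goerss--Jardine model structure), but the argument you then give does not establish that claim: the universal bundle $WK\to\overline{W}K$ is a Kan fibration over $\overline{W}K$ for a \emph{fixed} simplicial group $K$, and this says nothing about whether $\overline{W}(p)\colon\overline{W}K\to\overline{W}L$ is a Kan fibration when $p\colon K\to L$ is one. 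That fact does hold, but it requires a separate argument, for instance by exhibiting $\overline{W}K\to\overline{W}L$ as a twisted cartesian product with fibre $\overline{W}(\ker p)$. In addition, you only verify half of what a Quillen adjunction requires: preservation of fibrations alone is not enough, and you should also check that $\overline{W}$ preserves trivial fibrations, or equivalently that $G$ preserves cofibrations (monomorphisms of reduced simplicial sets go to free maps of simplicial groups). Goerss--Jardine in fact take this dual route and sidestep the whole issue, checking directly that $G$ preserves cofibrations and all weak equivalences (the degree-shift formula you already invoke for the equivalence step); that would be the cleanest way to repair the sketch.
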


\begin{notation}[Homotopy theory of simplicial group actions]
  \label{ModelCategoryOfSimplicialGroupActions}
$\,$

\noindent
For $\mathcal{G} \,\in\, \mathrm{SmplGrp}$
(Notation \ref{ModelCategoryOfSmplGrp})

\noindent {\bf (i)} we write
$$
  \mathcal{G} \mathrm{Actions}
  \;:=\;
  \mathrm{SimplicialFunctors}
  \big(
    B \mathcal{G}, \, \mathrm{SSet}
  \big)
$$
\vspace{-.6cm}

\noindent
for the category of simplicial functors from
the simplicial groupoid with a single object and $\mathcal{G}$ as
its hom-object to the simplicial category of simplicial sets.

\noindent {\bf (ii)} This becomes a model category by taking the
weak equivalences and fibrations to be those of underling
simplicial sets (evaluating at the single vertex of $B \mathcal{G}$):
$$
  \mathcal{G} \mathrm{Actions}_{\mathrm{proj}}
  \;\in\;
  \mathrm{ModCat}
$$

\noindent
and we denote its homotopy category by:
$$
  \xymatrix{
  \mathcal{G} \mathrm{Actions}_{\mathrm{proj}}
  \ar[rr]^-{ \gamma }
  &&
  \mathrm{Ho}
  \big(
    \mathcal{G} \mathrm{Actions}_{\mathrm{proj}}
  \big)
  \;=:\;
  \mathcal{G} \mathrm{Actions}_{\infty}\;.
  }
$$

\end{notation}

\smallskip 
The following, Prop. \ref{InfinityActionsEquivalentToFibrationsOverClassifyingSpace},
is pivotal for the discussion of twisted non-abelian cohomology,
notably for the notion of equivariant local coefficient bundles
below in Def. \ref{EquivariantTwistedNonAbelianCohomology};
for more background and context, see
\cite[\S 4]{NSS12a}\cite[\S 2.2]{SS20b}\cite[Prop. 2.28]{FSS23-Char}.

\begin{prop}[$\infty$-Group actions equivalent to fibrations over classifying space {\cite[Prop. 2.3]{DDK80}\cite{Sharma15}}]
  \label{InfinityActionsEquivalentToFibrationsOverClassifyingSpace}
  $\,$

  \noindent
\noindent {\bf (i)}   For $\mathcal{G} \,\in\, \mathrm{SmplGrp}$ (Notation \ref{ModelCategoryOfSmplGrp}),
  the simplicial Borel construction
  (e.g. \cite[Prop. 3.37]{NSS12b})
  is the
  right adjoint of a Quillen equivalence
  \begin{equation}
    \label{SimplicialBorelConstructionEquivalence}
    \xymatrix@C=3em{
      \mathcal{G} \mathrm{Actions}_{\mathrm{proj}}
      \ar@{<-}@<+6pt>[rrr]^-{  }
      \ar@<-6pt>[rrr]_-{
        \underset{
          \mbox{
            \tiny
            \color{greenii}
            \bf
            \begin{tabular}{c}
              simplicial
              Borel construction
            \end{tabular}
          }
        }{
        \scalebox{.7}{$
          \mathcal{G} \acts \; X
          \;\mapsto\;
          \frac{X \times W \mathcal{G}}{\mathcal{G}}
        $}
        }
      }^-{ \simeq_{\mathrlap{\mathrm{Qu}}} }
      &&&
    \;\;  \mathrm{SSet}^{/\overline{W}\mathcal{G}}_{\mathrm{Qu}}
    }
  \end{equation}
  
\vspace{-1mm} 
  \noindent
  between the projective model structure on simplicial
  $\mathcal{G}$-actions (Notation \ref{ModelCategoryOfSimplicialGroupActions})
  and the slice model structure
  (\cite[\S 7.6.4]{Hirschhorn02}) of the classical model structure
  on simplicial sets \eqref{ClassicalModelStructureOnSSet}
  over $\overline{W}\mathcal{G}$ \eqref{SimplicialLoopGroupAdjunction}.
  
\noindent {\bf (ii)}  Its derived equivalence of homotopy categories
  \begin{equation}
    \label{HomotopyFiberHomotopyQuotientAdjunction}
    \hspace{-7mm}
    \xymatrix@C=35pt{
      {
        \mbox{
          \tiny
          \color{darkblue}
          \bf
          \def\arraystretch{1}
          {\begin{tabular}{c}
            $\infty$-actions of
            \\
            $\infty$-group $\mathcal{G}$
          \end{tabular}}
        }
        \!\!
      }
      \mathcal{G} \mathrm{Actions}_{\infty}
      \ar@{<-}@<+6pt>[rrr]^-{
        \overset{
          \raisebox{3pt}{
            \tiny
            \color{greenii}
            \bf
            homotopy fiber
          }
        }{
          \mathrm{hofib}_\ast(p)
          \;\mapsfrom\;
          (E \overset{p}{\to} B G)
        }
      }
      \ar@<-6pt>[rrr]_-{
      \underset{
         \mathclap{
          \raisebox{-3pt}{
            \tiny
            \color{greenii}
            \bf
            homotopy quotient
          }
          }
        }{
          \scalebox{.7}{$
            \mathcal{G} \acts \; A \;\;\mapsto\;\; A \!\sslash\! \mathcal{G}
          $}
        }
      }^-{ \simeq }
      &&&
      \!\!\!\!
      \overset{
        \mbox{
          \tiny
          \color{darkblue}
          \bf
          \def\arraystretch{.9}
          {\begin{tabular}{c}
            homotopy types fibered
            \\
            over classifying space $B \mathcal{G}$
          \end{tabular}}
        }      
      }{
      \mathrm{Ho}
      \Big(
        \mathrm{SSet}^{/\overline{W}\mathcal{G}}_{\mathrm{Qu}}
      \Big)
      }
    }
  \end{equation}

  \noindent
  is given
  in one direction by forming homotopy fibers of fibrations over $B G$
  and in the other by forming homotopy quotients of $\infty$-actions
  {\rm (\cite[Prop. 3.73]{NSS12b})}:

  \vspace{-.2cm}
  \begin{equation}
    \label{HomotopyFibrationsCorrespondingToInfinityActions}
    \overset{
      \mathclap{
      \raisebox{3pt}{
        \tiny
        \color{darkblue}
        \bf
        {\begin{tabular}{c}
          $\mathcal{G}$ $\infty$-action
          on $A$
        \end{tabular}}
      }
      }
    }{
      \mathcal{G} \acts \; A
    }
    \qquad
      \longleftrightarrow
   \qquad
    \raisebox{20pt}{
    \xymatrix@R=1.5em{
      A \ar[rr]^-{ \mathrm{hofib}_\ast(\rho_A) }
      &
      \ar@{}[d]|-{
        \mbox{
          \tiny
          \color{darkblue}
          \bf
          \def\arraystretch{1}
          {\begin{tabular}{c}
            $A$-fibration over
            \\
            $\mathcal{G}$-classifying space
          \end{tabular}}
        }
      }
      &
      A \!\sslash\! \mathcal{G}
      \ar[d]^-{ \rho_A }
      \\
      &&
      B \,\mathcal{G}
      \,.
    }
    }
  \end{equation}

\end{prop}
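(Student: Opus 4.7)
The plan is to exhibit the adjunction at the point-set level, verify it is Quillen by checking that both model structures detect (trivial) fibrations on underlying simplicial sets, establish the Quillen equivalence via contractibility of $W\mathcal{G}$, and finally identify the derived functors with the homotopy fiber and homotopy quotient.

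First I would construct the left adjoint to the Borel construction as pullback along the universal principal $\mathcal{G}$-bundle $W\mathcal{G} \to \overline{W}\mathcal{G}$: a map $E \to \overline{W}\mathcal{G}$ is sent to the free $\mathcal{G}$-simplicial set $E \times_{\overline{W}\mathcal{G}} W\mathcal{G}$. The natural adjunction isomorphism
\[
\mathrm{Hom}_{\mathcal{G}}\big(E \times_{\overline{W}\mathcal{G}} W\mathcal{G},\, X\big)
\;\cong\;
\mathrm{Hom}_{/\overline{W}\mathcal{G}}\big(E,\, (X \times W\mathcal{G})/\mathcal{G}\big)
\]
sends a $\mathcal{G}$-equivariant map $f$ on the left to the descent of $(e,w)\mapsto(f(e,w),w)$ through the quotient by $\mathcal{G}$ on the right, and conversely.

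For the Quillen-pair condition, both model structures detect (acyclic) fibrations and weak equivalences on underlying simplicial sets; the Borel construction of any $\mathcal{G}$-action is a fiber bundle over $\overline{W}\mathcal{G}$ (which is itself always Kan, since $\mathcal{G}$ is a simplicial group), so it preserves (acyclic) fibrations. For Quillen equivalence, the derived unit at a cofibrant $E \to \overline{W}\mathcal{G}$ is, after unwinding the pullback of the Borel construction along $W\mathcal{G} \to \overline{W}\mathcal{G}$, identified with the projection from $(E \times_{\overline{W}\mathcal{G}} W\mathcal{G}) \times W\mathcal{G}$ onto $E \times_{\overline{W}\mathcal{G}} W\mathcal{G}$, which is a weak equivalence because $W\mathcal{G}$ is contractible; the derived counit at a fibrant $X$ is a weak equivalence by the same contractibility argument.

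To identify the derived functors as stated: the right derived Borel construction evaluated on a cofibrant replacement of a $\mathcal{G}$-action $A$ is, by definition, the homotopy quotient $A \sslash \mathcal{G}$; the left derived pullback evaluated on a fibrant replacement of $(E \to B\mathcal{G})$ returns $E \times_{B\mathcal{G}} W\mathcal{G}$, which is the standard simplicial model for $\mathrm{hofib}_{\ast}(E \to B\mathcal{G})$ and carries the canonical residual $\mathcal{G}$-action by monodromy. The main obstacle will be bookkeeping cofibrancy: the point-set formula $(X \times W\mathcal{G})/\mathcal{G}$ only computes the derived quotient when $X$ is cofibrant in $\mathcal{G}\mathrm{Actions}_{\mathrm{proj}}$ (essentially a free $\mathcal{G}$-cell complex), so one must first take cofibrant replacement — in practice, the two-sided bar construction — before the strict and derived formulas agree, and it is precisely this step that requires the contractibility of $W\mathcal{G}$ to pass through unscathed.
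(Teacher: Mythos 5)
The paper does not prove this proposition at all — it is quoted from the cited sources (DDK80, Prop.\ 2.3; Sharma15; NSS12b for the derived identification) — and your plan follows exactly the route of those references: the adjunction $(-)\times_{\overline{W}\mathcal{G}}W\mathcal{G}\;\dashv\;\tfrac{(-)\times W\mathcal{G}}{\mathcal{G}}$, the fact that both model structures detect weak equivalences and fibrations on underlying simplicial sets, and contractibility of $W\mathcal{G}$ to upgrade the Quillen adjunction to an equivalence. So the approach is the standard and correct one; what needs repair are three specific steps.

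First, your Quillen-pair argument is a non sequitur as written: that each single Borel construction is a fiber bundle over $\overline{W}\mathcal{G}$ does not show that the functor sends a fibration $X\to Y$ of $\mathcal{G}$-actions to a fibration of Borel constructions (preservation is a statement about maps, not objects). The clean check is on the left adjoint: cofibrations in the slice are monomorphisms, and their pullbacks along $W\mathcal{G}\to\overline{W}\mathcal{G}$ are monomorphisms between $\mathcal{G}$-free simplicial sets, hence projective cofibrations; since $W\mathcal{G}\to\overline{W}\mathcal{G}$ is a Kan fibration and $\mathrm{SimplicialSets}_{\mathrm{Qu}}$ is right proper, this pullback also preserves weak equivalences, so acyclic cofibrations go to acyclic cofibrations. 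Second, the unit at $E\to\overline{W}\mathcal{G}$ is not the projection you name: it is a section of the induced bundle $\tfrac{(E\times_{\overline{W}\mathcal{G}}W\mathcal{G})\times W\mathcal{G}}{\mathcal{G}}\to E$, which is a fibration with contractible fiber $W\mathcal{G}$, and one concludes by two-out-of-three; dually, after the untwisting isomorphism $\big(\tfrac{X\times W\mathcal{G}}{\mathcal{G}}\big)\times_{\overline{W}\mathcal{G}}W\mathcal{G}\cong X\times W\mathcal{G}$ the counit is the projection to $X$, a weak equivalence since $W\mathcal{G}$ is contractible. Third, your closing worry is backwards: $(X\times W\mathcal{G})/\mathcal{G}$ computes the homotopy quotient for \emph{every} $X$, because $X\times W\mathcal{G}$ is already a free, hence projectively cofibrant, resolution of $X$; no prior cofibrant replacement is needed, and this is precisely why the Borel construction preserves all weak equivalences and why the identification \eqref{HomotopyFibrationsCorrespondingToInfinityActions} is immediate (likewise, since $W\mathcal{G}\to\overline{W}\mathcal{G}$ is already a fibration, the strict pullback computes the homotopy fiber for any $E$; also note you have the variances swapped — the Borel construction is the right adjoint, so it would be right-derived, and the pullback left-derived — though this is harmless here).
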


\begin{example}[Homotopy type of Borel construction]
  \label{HomotopyTypeOfBorelConstruction}
  $\,$

  \noindent
  For $T \,\in\, \mathrm{TopGrp}$
  and $T \acts \; X \,\in\, \TActionsOnTopSp$
  (Def. \ref{GActionsOnTopSp}),
  passage to singular simplicial sets \eqref{QuillenEquivalenceBetweenTopSpAndSSet}
  yields a simplicial action (Notation \ref{ModelCategoryOfSimplicialGroupActions}).
  The corresponding fibration (Prop. \ref{InfinityActionsEquivalentToFibrationsOverClassifyingSpace})
  is given by the topological shape \eqref{TopologicalShapeAsLocalization}
  of the Borel construction:
  $$
    \xymatrix@R=1.5em{
      \raisebox{1pt}{\textesh} \, X
      \ar[rr]^-{
        \mathrm{hofib}(\rho_X)
      }
      &&
      \raisebox{1pt}{\textesh}
      \,
      \left(
      \tfrac{
        X \times E T
      }{
        T
      }
      \right)
      \mathrlap{
        =:
        \;
        \raisebox{1pt}{\textesh}
        \,
        \big(
          X \!\sslash\! T
        \big).
      }
      \ar[d]^-{ \rho_X }
      \\
      &&
      \raisebox{1pt}{\textesh}\, B T
    }
  $$

\end{example}

\begin{lemma}[Pasting law {\cite[Lem 4.4.2.1]{Lurie09a}}]
  \label{PastingLaw}
  For $\mathcal{C}$ a model category, and given
  a pasting composite of two commuting squares

  \vspace{-.3cm}
  $$
    \xymatrix@R=1em@C=3em{
      A
      \ar[r]
      \ar[d]
      &
      B
      \ar[r]
      \ar[d]
      \ar@{}[dr]|-{
        \mbox{
          \tiny
          \rm
          (hpb)
        }
      }
      &
      C
      \ar[d]
      \\
      D \ar[r] & E \ar[r] & F
    }
  $$

  \noindent
  such that the right square is homotopy Cartesian,
  then the left square is homotopy Cartesian if and only if
  the total rectangle is.
\end{lemma}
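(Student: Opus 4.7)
The plan is to reduce the statement to a pointwise comparison of homotopy fibers along the two vertical columns, where the conclusion then follows immediately from the two-out-of-three property of weak equivalences.

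First, I would invoke the standard characterization that a commutative square in a model category is homotopy Cartesian if and only if, for every basepoint of the bottom-right corner, the induced map on homotopy fibers of the two vertical maps (taken over that basepoint and its preimage along the bottom map) is a weak equivalence. This is obtained by fibrantly replacing the right vertical arrow, forming the strict pullback, and recalling that a map into a strict pullback of fibrations is a weak equivalence iff it induces weak equivalences on all strict fibers (which here coincide with the homotopy fibers).

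Applied to the diagram at hand, for each basepoint $d$ of $D$ with images $\phi(d) \in E$ and $\psi\phi(d) \in F$, the functoriality of the homotopy fiber produces a commutative triangle
\begin{equation*}
  \xymatrix@C=12pt@R=12pt{
    \mathrm{hofib}_d(A \to D)
    \ar[rr]^-{\alpha}
    \ar[dr]_-{\beta}
    &&
    \mathrm{hofib}_{\psi\phi(d)}(C \to F)
    \\
    &
    \mathrm{hofib}_{\phi(d)}(B \to E)
    \ar[ur]_-{\gamma}
    &
  }
\end{equation*}
in which, by the characterization above, the left square is homotopy Cartesian iff $\beta$ is a weak equivalence, the right square iff $\gamma$ is, and the total rectangle iff $\alpha$ is. Since $\gamma$ is a weak equivalence by hypothesis, two-out-of-three applied to this triangle yields: $\alpha$ is a weak equivalence iff $\beta$ is, which is exactly the claim.

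The main obstacle is verifying that the comparison triangle above is genuinely (strictly) commutative and natural in a way that simultaneously models all three homotopy Cartesian conditions. To this end I would first use Brown's lemma together with functorial factorizations to replace the right column $C \to F$ by a fibration between fibrant objects, pull back strictly to replace the middle column $B \to E$ and then the left column $A \to D$ compatibly, so that all three homotopy fibers in the triangle are represented by honest strict fibers and the three comparison maps $\alpha,\beta,\gamma$ arise as the canonical maps between these strict fibers. Once this fibrant/fibration model of the rectangle is set up, strict commutativity of the triangle is immediate, and the argument above concludes the proof, reproducing in this generality the simplicial-categorical version given in \cite[Lem.~4.4.2.1]{Lurie09a}.
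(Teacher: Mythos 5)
The key step in your argument — the "standard characterization that a commutative square in a model category is homotopy Cartesian iff, for every basepoint of the base, the induced map on homotopy fibers is a weak equivalence" — is not actually a theorem at this level of generality, and it fails precisely in the setting this paper cares about. In a general model category there is no guarantee that maps out of the terminal object (which is what "basepoints" must mean) detect weak equivalences, so the intermediate claim "a map into a strict pullback of fibrations is a weak equivalence iff it induces weak equivalences on all strict fibers" is false. Concretely, take $\mathcal{C} = \ZTwoEquivariantSimplicialSets_{\mathrm{proj}}$, the model category the paper works in. Let $C$ be the representable $\mathbb{Z}_2/1$, so that $C(\mathbb{Z}_2/\mathbb{Z}_2) = \varnothing$ and hence $C$ admits \emph{no} global points $\ast \to C$. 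Then any two fibrations over $C$ are "fiberwise weakly equivalent" vacuously — e.g.\ the projection $C \times S^1 \to C$ and $\mathrm{id}_C$ — even though they are not weakly equivalent. Your two-out-of-three argument in the fiber triangle is fine as a formal shape, but the hypothesis you feed into it (the equivalence of "homotopy Cartesian" with "fiberwise weak equivalence over global points") does not hold, so the proof does not go through for general $\mathcal{C}$.

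The correct model-categorical argument avoids fibers over points entirely and instead reduces to the \emph{strict} pasting law. Factor $C \to F$ (after fibrantly replacing $F$) as a weak equivalence $C \xrightarrow{\sim} \tilde{C}$ followed by a fibration $\tilde{C} \twoheadrightarrow F$, and set $P := E \times_F \tilde{C}$; the right square being homotopy Cartesian produces a weak equivalence $B \xrightarrow{\sim} P$. Replacing $B$ by $P$ and $C$ by $\tilde{C}$ leaves the homotopy-Cartesian status of both the left square and the outer rectangle unchanged, and since $P \to E$ and $\tilde{C} \to F$ are fibrations, those conditions are now equivalent (in a right-proper model category, or after one more fibrancy replacement) to the comparison maps $A \to D \times_E P$ and $A \to D \times_F \tilde{C}$ being weak equivalences. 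The canonical isomorphism of \emph{strict} pullbacks $D \times_E P = D \times_E (E \times_F \tilde{C}) \cong D \times_F \tilde{C}$ identifies these two comparison maps, which gives the result in one step — no detection of equivalences by points is ever invoked. This is the argument that the citation to \cite[Lem.\ 4.4.2.1]{Lurie09a} is standing in for; the paper itself supplies no proof, only the reference.
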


\begin{lemma}[Homotopy fibers of homotopy-quotiented morphisms]
  \label{HomotopyFibersOfHomotopyQuotientedMorphisms}
  $\,$

  \noindent
  Let
  $\mathcal{G} \in \mathrm{Grp}_{\infty}$
  (Notation \ref{ModelCategoryOfSmplGrp})
  and
  $
      (A,\rho_A)
      \xrightarrow{ (f, \,\rho_f) }
      (A', \rho_{A'})
    $
    $
   \in
    \mathcal{G} \mathrm{Actions}^{\ast/}_{\infty}
  $
  a morphism of $\infty$-actions
  (Notation \ref{ModelCategoryOfSimplicialGroupActions})
  preserving
  an $\mathcal{G}$-fixed point
  $\mathrm{pt}: \ast \to A \overset{f}{\longrightarrow} A'$
  (see also \cite[Def. 2.97]{SS20b}). Then:

  \noindent
  {\bf (i)}
  The homotopy fiber of the homotopy-quotiented morphism
  $f \!\sslash\! \mathcal{G}$ \eqref{HomotopyFiberHomotopyQuotientAdjunction}
  coincides with the homotopy fiber of $f$
  \begin{equation}
    \label{TheHomotopyFiberOfHomotopyQuotientedMorphism}
    \mathrm{hofib}_\ast
    \big(
      f \!\sslash\! \mathcal{G}
    \big)
    \;\simeq\;
    \mathrm{hofib}_\ast( f )\;.
  \end{equation}

  \noindent
  {\bf (ii)}
  The homotopy fiber of $f$ is canonically
  equipped with an $\infty$-action by $\mathcal{G}$:  
  $$
    \big(
    \mathrm{hofib}_\ast
    (
      f
    ),
    \,
    \rho_h
    \big)
    \;\in\;
    \mathcal{G} \mathrm{Actions}_\infty \;.
  $$

  \noindent
  {\bf (iii)} The corresponding homotopy quotient is
  equivalent to the homotopy fiber of the homotopy-quotiented
  morphism
  parametrized over $B G$,
  namely the following homotopy pullback:
  \begin{equation}
    \label{SquareForBGParametrizedHomotopyFiber}
\hspace{1cm} 
    \xymatrix@R=1.8em@C=5em{
      \mathllap{
        \mathrm{hofib}_\ast
        (
          f
        )
        \sslash \mathcal{G}
        \;\;
        \simeq
        \;\;\,
      }
      \mathrm{hofib}_{B \, \mathcal{G}}
      \big(
        f \sslash \mathcal{G}
      \big)
      \ar[d]
      \ar[r]
      \ar@{}[dr]|-{
        \mbox{
          \tiny
          \rm
          (hpb)
        }
      }
      &
      A \!\sslash\! \mathcal{G}
      \ar[d]^-{ f \!\sslash\!  \, \mathcal{G} }
      \\
      B \, \mathcal{G}
      \ar[r]_-{ \mathrm{pt}' \sslash \mathcal{G} }
      &
      A' \!\sslash\! \mathcal{G}
      \,.
    }
  \end{equation}

\end{lemma}
\begin{proof}
Consider the following pasting diagrams:
\vspace{0mm}
\begin{equation}
  \label{TowardsParametrizedHomotopyFibersOfHomotopyQuotientedMorphisms}
  \def\arraystretch{1}
  \begin{array}{c}
  \raisebox{20pt}{
  \xymatrix@C=3em{
    \mathrm{hofib}_\ast
    \big(
      f \!\sslash\! \mathcal{G}
    \big)
    \ar[d]
    \ar[r]
    \ar@{}[dr]|-{
      \mbox{
        \tiny
        \rm
        (hpb)
      }
    }
    &
    \mathrm{hofib}_{\scalebox{.6}{$B \,\mathcal{G}$}}
    \big(
      f \!\sslash\! \mathcal{G}
    \big)
    \ar@{}[dr]|-{
      \mbox{
        \tiny
        \rm
        (hpb)
      }
    }
    \ar[r]
    \ar[d]
    &
    A \!\sslash\! \mathcal{G}
    \ar[d]_<<<<{
      \mathclap{\phantom{\vert^{\vert}}}
      f \sslash \mathcal{G}
      \mathclap{\phantom{\vert_{\vert}}}
    }
    \\
    \ast
    \ar[r]_-{
    }
    &
    B \mathcal{G}
    \ar[r]_-{
      \;\mathrm{pt}' \sslash \mathcal{G}\;
    }
    &
    A' \!\sslash\! \mathcal{G}
    \\
    & &
    {\phantom{B \, \mathcal{G}}}
  }
  }
  \\[-25pt]
  \simeq
  \;\;
  \raisebox{20pt}{
  \xymatrix@C=45pt{
    \mathrm{hofib}_\ast(f)
    \ar@{}[dr]|-{
      \mbox{
        \tiny
        \rm
        (hpb)
      }
    }
    \ar[d]
    \ar[r]
    &
    A
    \ar@{}[drr]|-{
      \mbox{
        \tiny
        \rm
        (hpb)
      }
    }
    \ar[rr]
    \ar[d]^<<<<<{
      \mathclap{\phantom{\vert^{\vert}}}
      f
      \mathclap{\phantom{\vert_{\vert}}}
    }
    &&
    A \!\sslash\! \mathcal{G}
    \ar[d]_<<<<{
      \mathclap{\phantom{\vert^{\vert}}}
      f \sslash \mathcal{G}
      \mathclap{\phantom{\vert_{\vert}}}
    }
    \ar@[gray]@/^1.5pc/[dd]^-{
      \mathclap{\phantom{\vert^{\vert}}}
      \color{gray} \rho_A
      \mathclap{\phantom{\vert_{\vert}}}
    }
    \\
    \ast
    \ar[r]_-{
      \;\mathrm{pt}'\;
    }
    &
    A'
    \ar[rr]
    \ar@[gray][d]
    \ar@{}[drr]|-{
     \mbox{
       \tiny
       \color{gray}
       \rm
       (pb)
     }
    }
    &&
    A' \!\sslash\! \mathcal{G}
    \ar@[gray][d]_<<<<{
      \mathclap{\phantom{\vert^{\vert}}}
      \color{gray} \rho_{A'}
      \mathclap{\phantom{\vert_{\vert}}}
    }
    \\
    &
    {\color{gray}
    \ast
    }
    \ar@[gray][rr]
    &&
    {\color{gray}
    B \, \mathcal{G}
    }
  }
  }
  \end{array}
\end{equation}

\noindent
With the right Cartesian square \eqref{SquareForBGParametrizedHomotopyFiber}
given, the pasting law (Lem. \ref{PastingLaw})
identifies the top left objects on both sides
as shown; in particular, the left square on the right
gives \eqref{HomotopyFiberOfHomotopyQuotientedMorphism}.
But, since the composite bottom morphism
is the same basepoint inclusion on both sides, this implies:
  \begin{equation}
    \label{HomotopyFiberOfHomotopyQuotientedMorphism}
    \mathrm{hofib}_{\ast}
    \big(
      f \!\sslash\! \mathcal{G}
    \big)
    \;\;
    \simeq
    \;\;
    \mathrm{hofib}_{\ast}
    (
      f
    )\;.
  \end{equation}

\noindent
Moreover,
the left Cartesian square
on the left of \eqref{TowardsParametrizedHomotopyFibersOfHomotopyQuotientedMorphisms}
exhibits,
by Prop. \ref{InfinityActionsEquivalentToFibrationsOverClassifyingSpace},
a $\mathcal{G}$-action on
$
    \mathrm{hofib}_{\ast}
    \big(
      f \!\sslash\! \mathcal{G}
    \big)
$
with homotopy quotient given by
\begin{equation}
  \label{FurtherTowardsParametrizedHomotopyFibersOfHomotopyQuotientedMorphisms}
    \mathrm{hofib}_{\ast}
    \big(
      f \!\sslash\! \mathcal{G}
    \big)
    \!\sslash\!
    \mathcal{G}
    \;\;
    \simeq
    \;\;
    \mathrm{hofib}_{B\, \mathcal{G}}
    \big(
      f \!\sslash\! \mathcal{G}
    \big).
\end{equation}

\vspace{0mm}
\noindent
The combination of the equivalences
\eqref{TheHomotopyFiberOfHomotopyQuotientedMorphism} and
\eqref{FurtherTowardsParametrizedHomotopyFibersOfHomotopyQuotientedMorphisms}
yields the claimed equivalence in \eqref{SquareForBGParametrizedHomotopyFiber}.
\hfill
\end{proof}

%%%%%%%%%%%%%%%%%%%%%%%%%%%%%%%%%%%%%%%%%%
\subsection{Proper equivariant homotopy theory}
 \label{EquivariantHomotopyTheory}
%%%%%%%%%%%%%%%%%%%%%%%%%%%%%%%%%%%%%%%%%%

We now recall relevant basics
of proper\footnote{
  Here by ``proper equivariance''
  we refer to the fine notion of equivariant homotopy/cohomology
  in the sense of Bredon, as opposed to the coarse notion
  in the sense of Borel. For in-depth conceptual discussion
  of this distinction see \cite{SS20b}.
  Besides the colloquial meaning of ``proper'',
  the action of our finite equivariance groups
  is necessarily
  {\it proper}
  in the technical sense of general topology
  (see Lemma \ref{FixedLociOfProperSmoothActionsAreSmoothManifolds} below),
  whence this terminology nicely matches that recently advocated in \cite{DHLPS19}.
  }
  equivariant homotopy theory
\cite[\S 8]{tomDieck79} \cite{May96}\cite{Blu17}
and introduce the examples of interest here.

\medskip

\noindent {\bf $G$-Actions.}

\begin{defn}[Group actions on topological spaces]
  \label{GActionsOnTopSp}

\noindent {\bf (i)} For a given compact topological group,
which serves as the symmetry group of {\it Borel equivariance}
in the following, generically to be denoted
\begin{equation}
  \label{CompactTopologicalGroup}
    \mathllap{
    \raisebox{2pt}{
      \tiny
      \color{darkblue}
      \bf
      Borel equivariance group
    }
    \;\;
    }
    T \;\in\; \mathrm{CompactTopGrp}
    \,,
\end{equation}

\noindent
we write
  \begin{equation}
    \label{TActionsOnTopSp}
    \TActionsOnTopSp
    \;\in\;
    \mathrm{Categories}
  \end{equation}
  \vspace{-.55cm}

  \noindent
  for the category whose objects
  are topological spaces $X$ equipped with a
  continuous $T$-action
  \begin{equation}
    \label{GroupAction}
    \begin{array}{l}
    T \acts \; X
    \;:\;
    \xymatrix@R=-3pt@C=3em{
      T \times X
      \ar[r]^-{
        \mbox{
          \tiny
          continuous
        }
      }
      &
      X
      \\
      (t\,,\,x)
      \ar@{}[r]|-{\longmapsto}
      &
      t \cdot x
    }
    \\
      \mbox{such that:}
      \;\;\;
      \underset{x \in X}{\forall}
      \,
      \mathrm{e}\cdot x=x
      \;\;\;
      \mbox{and}
      \;\;\;
      \underset{
        {x \in X}
        \atop
        {t_1, \, t_2 \in G}
      }{\forall}
      \big(
        t_1 \cdot
        (t_2 \cdot x)
      \big)
      \,=\,
      (t_1 \cdot t_2) \cdot x
    \end{array}
  \end{equation}

  \noindent
  and whose morphisms are $T$-equivariant continuous functions, which we
  denote as follows:

  \vspace{-1mm}
  \begin{equation}
    \label{EquivatiantContinuousFunction}
    \xymatrix{
      X_1
      \ar@(ul,ur)|-{\;T\;}
      \ar[rr]^-{f}
      &&
      X_2
      \ar@(ul,ur)|-{\;T\;}
    }
    \phantom{AAAA}
    \Leftrightarrow
    \phantom{AAAA}
    \raisebox{8pt}{$\underset{
      {x \in X}
      \atop
      {t \in T}
    }{\forall}  \;
    f ( t \cdot x)
    \,=\,
    t \cdot f(x)\;.
    $}
  \end{equation}

\noindent
{\bf (ii)}
Throughout,
our {\it proper} equivariance group is a finite group, to be denoted:

\begin{equation}
  \label{EquivarianceGroup}
    \mathllap{
      \raisebox{2pt}{
        \tiny
        \color{darkblue}
        \bf
        proper equivariance group
      }
      \;\;
    }
    G
    \;\in\;
    \mathrm{FiniteGroups}
    \,.
\end{equation}
\vspace{-.6cm}

\noindent
This finite group can be viewed as a topologically discrete
topological group and we have the corresponding category
\eqref{TActionsOnTopSp} of continuous actions: 
  \begin{equation}
    \label{CategoryOfGActionsOnTopSp}
    \GActionsOnTopSp
    \;\in\;
    \mathrm{Categories}\;.
  \end{equation}
  \vspace{-.55cm}

\noindent {\bf (iii)}
The full subcategory of the latter category
%\eqref{GActionsOnTopSp}
on those objects, where also the topological space being acted on
is discrete, is that of $G$-actions on sets:
\begin{equation}
  \label{GActionsOnSets}
  \xymatrix{
    \GActionsOnSets
    \;
    \ar@{^{(}->}[r]
    &
    \;
    \GActionsOnTopSp
  }.
\end{equation}

\noindent {\bf (iv)}
Regarding the direct product group of
the Borel equivariance group \eqref{CompactTopologicalGroup}
with the proper equivariance group \eqref{EquivarianceGroup}
as a compact topological group
$$
  {
    \raisebox{2pt}{
      \tiny
      \color{darkblue} \bf
      Borel \& proper equivariance group
    }
    \;\;
  }
  \;\;\;
  T \times G \;\in\;
  \mathrm{CompactTopGrp}\;,
$$

\noindent
we have the category of topological actions of this
product group. This contains the previous
categories, \eqref{TActionsOnTopSp}
and \eqref{CategoryOfGActionsOnTopSp},
as full subcategories (via equipping a space with trivial action)
%\vspace{-.6cm}
\begin{equation}
  \label{CategoryOfTGActionsOnTopSp}
  {\small
  \xymatrix@C=18pt{
    \TActionsOnTopSp
    \;
    \ar@{^{(}->}[r]
    &
    \;
    \TGActionsOnTopSp
    \;
    \ar@{<-^{)}}[r]
    &
    \;\;
    \GActionsOnTopSp   \;.
  }
  }
\end{equation}

\end{defn}

\begin{example}[Representation spheres]
  \label{RepresentationSpheres}
  Let $V \,\in\, T\RepresentationsFin$ be a
  finite-dimensional linear representation
  of a compact topological group \eqref{CompactTopologicalGroup}.
  Then the one-point compactification of $V$
  (the topological sphere of the same dimension, e.g. \cite[p. 150]{Kelly55})
  inherits a topological $T$-action (Def. \ref{GActionsOnTopSp})
  via stereographic projection,
  denoted
  $$
    S^V
    \;\;
    \in
    \;
    \TActionsOnTopSp
  $$

  \noindent
  and called the {\it representation sphere} of $V$
  (e.g. \cite[\S 1.1.5]{Blu17}\cite[\S 3]{SS19a}).
\end{example}

\begin{defn}[Orbit category]
  \label{OrdinaryOrbitCategory}
  The \emph{category of $G$-orbits}
  or \emph{orbit category} of the equivariance group
  $G$ \eqref{EquivarianceGroup}
  $$
    G \mathrm{Orb}
    \longhookrightarrow
    \GActionsOnSets
    \;\in\;
    \mathrm{Categories}
  $$

  \noindent
  is (up to equivalence of categories)
  the full subcategory of discrete $G$-actions \eqref{GActionsOnSets}
  on the coset spaces $G/H$
  (which are discrete spaces,
  since $G$ is assumed to be finite)
  for all subgroup inclusions
  $H \overset{\iota}{\hookrightarrow} G$.
\end{defn}
\begin{example}[Explicit parameterization of morphisms of $G\mathrm{Orb}$]
  \label{MorphismSetsInGOrbitCategory}
  The hom-sets \eqref{HomSets} in the $G$-orbit category (Def. \ref{OrdinaryOrbitCategory}) from any $G/H_1$ to any $G/H_2$
  are in bijection with sets of conjugations, inside $G$, of $H_1$ into subgroups of $H_2$, modulo conjugations
  in $H_2$:
  \begin{equation}
    \label{AnalyzingHomsInAnOrbitCategory}
    \hspace{-5mm} 
    G \mathrm{Orb}\big(
      G/H_1, \,G/H_2
    \big)
    \simeq
    \frac{
      \big\{
        \phi : H_1 \overset{}{\hookrightarrow} H_2
        ,\; g \in G
        \;\vert\;
        \mathrm{Ad}_{g^{-1}} \circ \iota_1 = \iota_2 \circ \phi
      \big\}
    }
    {
      \big( \, (\phi,g) \sim ( \mathrm{Ad}_{h_2^{-1}} \circ \phi, g h_2) \,\vert\, h_2 \in H_2  \, \big)
    }
    \,.
  \end{equation}
  \vspace{-.1cm}

\noindent (Here ``$\mathrm{Ad}$'' denotes the adjoint action
of the group on itself, and
$\xymatrix@C=12pt{ H_i\; \ar@{^{(}->}[r]^{\iota_i} & G  }$
are the two subgroup inclusions.)

\end{example}

\begin{example}[Orbit category of $\ZTwo$]
  \label{OrbitCategoryOfZ2}
  The orbit category (Def. \ref{OrdinaryOrbitCategory})
  of the cyclic group $\ZTwo \;:=\; \{\mathrm{e}, \sigma \,\vert\, \sigma \circ \sigma = \mathrm{e}\}$
  is
  $$
    \ZTwo
    \mathrm{Orb}
    \;\;
    \simeq
    \;\;
       \left\{\!\!\!\!
       \raisebox{-4pt}{
    \xymatrix{
      \ar@(ul,ur)|-{\; \ZTwo \,}
      \ZTwo/1
      \ar[r]^-{ \exists ! }
      &
      \ZTwo/\ZTwo
      \ar@(ul,ur)|-{\, 1 \,}
    }
    }
    \!\! \right\}.
  $$

  \noindent
  Hence its hom-sets \eqref{HomSets} are:
  
  \begin{equation}
    \label{HomSetsInOrbitCategoryOfZ2}
    \begin{array}{lcl}
      \ZTwo\mathrm{Orb}
      \big(
        \ZTwo/1
        \,,\,
        \ZTwo/1
        \;\,
      \big)
      \;\simeq\;
      \ZTwo\;,
      &&
      \ZTwo\mathrm{Orb}
      \big(
        \ZTwo/\ZTwo
        \,,\,
        \ZTwo/\ZTwo
      \big)
        \;\simeq\;
      1\;,
      \\[3pt]
      \ZTwo\mathrm{Orb}
      \big(
        \ZTwo/1
        \,,\,
        \ZTwo/\ZTwo
      \big)
      \;\simeq\;
      \ast\,,
      &&
      \ZTwo\mathrm{Orb}
      \big(
        \ZTwo/\ZTwo
        \,,\,
        \ZTwo/1\;\;\,
      \big)
      \;\simeq\;
      \varnothing
      \,.
    \end{array}
  \end{equation}

\end{example}

\begin{example}[Automorphism groups in orbit category]
  \label{WeylGroup}
  For $G$ a finite group and $H \subset G$ a subgroup,
  the endomorphisms of $G/H \,\in\, G\mathrm{Orb}$
  (Def. \ref{OrdinaryOrbitCategory}) form the
  {\it Weyl group} $\WeylGroup(H)$ (e.g. \cite[p. 13]{May96})
  of $H$ in $G$,
  \begin{equation}
    \label{WeylGroups}
    \mathrm{End}_{G \mathrm{Orb}}
    (
     G/H
    )
    \;\simeq\;
    \mathrm{Aut}_{G \mathrm{Orb}}
    (
     G/H
    )
    \;=\;
    \WeylGroup(H)
    \;:=\;
    \NormalizerGroup(H)/H
    \,,
  \end{equation}

  \noindent
  namely the quotient group by $H$ of the normalizer
  $\NormalizerGroup(H)$ of  $H$ in $G$.
  For instance:
  $$
    \WeylGroup(1)
    \;=\;
    G
    \,,
    \phantom{AA}
    \WeylGroup(G)
    \;=\;
    1
    \,;
    \phantom{AA}
    \mbox{generally:}
    \phantom{A}
    H
    \underset{
      \mathclap{
      \mbox{
        \tiny
        normal
      }
      }
    }{\subset}
    G
    \;\;\;
    \Rightarrow
    \;\;\;
    \WeylGroup(H) \;=\; G/H
    \,.
  $$

\end{example}

Generally:

\begin{example}[Hom-sets in orbit category via Weyl groups]
  \label{HomSetsInOrbitCategoryViaWeylGroups}
  For any two subgroups $K, H \,\subset\, G$,
  the hom-set \eqref{HomSets} in the $G$-orbit category
  (Def. \ref{OrdinaryOrbitCategory}) between their
  corresponding coset spaces
  is, as a right $\WeylGroup(H)$-set via Example
  \ref{WeylGroup}, a disjoint union of copies of
  $\WeylGroup(H)$, one for each way of conjugating
  $K$ into a subgroup of $H$:  
\vspace{2mm} 
  \begin{equation}
    \label{HomSetsInOrbitCategoryInTermsOfWeylGroups}
    G\mathrm{Orb}
    \big(
      G/K\,,\,G/H
    \big)
    \;\simeq
    \underset{
      \scalebox{.7}{$
        {g \;\in\; G/\NormalizerGroup(K)}
        \atop
        {
        \scalebox{.7}{\rm s.t.}
        \;\,
        g^{-1} K g \;\subset\; H
        }
      $}
    }{\bigsqcup}
    g
    \WeylGroup(H)
    \;\;\;\;\;
    \in
    \;
    \WeylGroup(H)\mathrm{Actions}
    \big(
      \mathrm{Sets}
    \big)
    \,.
  \end{equation}

\end{example}

\begin{example}[More examples of orbit categories]
  \label{MoreExamplesOfOrbitCategories}
$\,$
\begin{center}
\scalebox{0.78}{
\begin{tabular}{|c|c|c|c|c|}
  \hline
  $\ZTwo\mathrm{Orb}$
  &
  $\mathbb{Z}_3\mathrm{Orb}$
  &
  $\mathbb{Z}_4\mathrm{Orb}$
  &
  $\mathbb{Z}_5\mathrm{Orb}$
  &
  $\mathbb{Z}_6\mathrm{Orb}$
  \\
  \hline
  \hline
  \vspace{-9pt}
  &
  &
  &
  &
  \\
  \!\!\!
  \xymatrix{
    \ZTwo/1
    \ar@(ul,ur)|-{ \;\ZTwo\; }
    \ar[dd]
    \\
    {\phantom{A}}
    \\
    \ZTwo/\ZTwo
  }
  & \!\!\!
  \xymatrix{
    \mathbb{Z}_3/1
    \ar@(ul,ur)|-{ \;\mathbb{Z}_3\; }
    \ar[dd]
    \\
    {\phantom{A}}
    \\
    \mathbb{Z}_3/\mathbb{Z}_3
  }
  & \!\!\!\!\!\!\!\!\!\!\!\!
  \xymatrix@C=-14pt{
    \mathbb{Z}_4/1
    \ar@(ul,ur)|-{ \;\mathbb{Z}_4\; }
    \ar[dd]
    \ar@/^.4pc/[dr]
    \ar@/_.4pc/[dr]
    \\
    {\phantom{AAAAAAAAA}}
    &
    {\phantom{A}}
    &
    \mathbb{Z}_4/\ZTwo
    \ar@(ul, ur)|-{ \;\ZTwo\; }
    \ar[dll]
    \\
    \mathbb{Z}_4/\mathbb{Z}_4
  }
  &
  \xymatrix{
    \mathbb{Z}_5/1
    \ar@(ul,ur)|-{ \;\mathbb{Z}_5\; }
    \ar[dd]
    \\
    {\phantom{A}}
    \\
    \mathbb{Z}_5/\mathbb{Z}_5
  }
  &
  \hspace{-14pt}
  \xymatrix@C=-14pt{
    {\phantom{A}}
    &&&
    {\phantom{AAAAAAAAA}}
    &
  \!\!\!  \mathbb{Z}_6/1
    \ar@(ul,ur)|-{ \;\mathbb{Z}_6\; }
    \ar[dd]
    \ar@/^.4pc/[dl]
    \ar@/0pc/[dl]
    \ar@/_.4pc/[dl]
    \ar@/^.4pc/[dr]
    \ar@/_.4pc/[dr]
    \\
    {\phantom{A}}
    \ar@(ul, ur)|-{ \;\mathbb{Z}_3\; }
    &&&
    \mathbb{Z}_6/\mathbb{Z}_{\mathrlap{2}}
    \ar[dr]
    &
    {\phantom{AAAAAAAAA}}
    &
    {\phantom{A}}
    &
    \mathbb{Z}_6/\mathbb{Z}_3
    \ar@(ul, ur)|-{ \;\ZTwo\; }
    \ar[dll]
    \\
    {\phantom{A}}
    &&&
    {\phantom{AAAAAAAAA}}
    &
    \mathbb{Z}_6/\mathbb{Z}_6
  }
  \\
  \hline
\end{tabular}
}

\vspace{.1cm}

\scalebox{.75}{
\begin{tabular}{|c|}
  \hline
  $
    \mathclap{\phantom{\vert^{\vert^{\vert}}}}
    \big(\ZTwo^L \times \ZTwo^R\big) \mathrm{Orb}
    \mathclap{\phantom{\vert_{\vert_{\vert}}}}
  $
  \\
  \hline
  \hline
  \vspace{-4pt}
  \\
  \!\!\!  
  \xymatrix@C=-24pt{
    &
    (
      \mathbb{Z}^L_2
      \times
      \mathbb{Z}^R_2
    )
      /
    ( 1 \times 1 )
    \ar@(ul,ur)^-{
      \;
      \mathbb{Z}^L_2
      \times
      \mathbb{Z}^R_2
      \;
    }
    \ar@/^.6pc/@<+6pt>[dl]
    \ar@/_.6pc/@<+6pt>[dl]
    \ar@/^.6pc/@<-6pt>[dr]
    \ar@/_.6pc/@<-6pt>[dr]
    \\
    \mathclap{\phantom{\vert^{\vert^{\vert}}}}
(
      \mathbb{Z}^L_2
      \times
      \mathbb{Z}^R_2
    )
      /
    (
      \mathbb{Z}^L_2 \times 1
    )
 \ar@(ul,ur)^-{ \;\ZTwo^R\; }
      \ar[dr]
    &&
    \mathclap{\phantom{\vert^{\vert^{\vert}}}}
    (
      \mathbb{Z}^L_2
      \times
      \mathbb{Z}^R_2
    )
      /
    (
      1
      \times
      \mathbb{Z}^R_2
    )
    \ar@(ul,ur)^-{ \;\ZTwo^L\; }
    \ar[dl]
    \\
    &
    (
      \mathbb{Z}^L_2
      \times
      \mathbb{Z}^R_2
    )
      /
    (
      \mathbb{Z}^L_2 \times \ZTwo^R
    )
  }
  \!
  \\
  \hline
\end{tabular}
}
\end{center}

\end{example}

\noindent {\bf Equivariant homotopy types.}

\vspace{-1mm} 
\begin{defn}[Equivariant simplicial sets]
  \label{EquivariantSSetCategory}
  We write
  $$
    \EquivariantSSet
    \;:=\;
    \mathrm{Functors}
    \big(
      G \mathrm{Orb}^{\mathrm{op}}
      \,,\,
      \mathrm{SSet}
    \big)
  $$

  \noindent
  for the category of functors
  from the opposite of
  $G$-orbits (Def. \ref{OrdinaryOrbitCategory})
  to simplicial sets.
\end{defn}

\begin{example}[Systems of fixed loci of topological $G$-actions]
  \label{SystemsOfFixedPointSpaces}
  Let
  $
    G \acts \; X
    \;\in\;
    \GActionsOnTopSp
  $ (Def. \ref{GActionsOnTopSp}).
  For $H \subset G$ any subgroup,
  a $G$-equivariant function \eqref{EquivatiantContinuousFunction}
  \begin{equation}
    \label{FixedLoci}
    \begin{array}{l}
    \xymatrix{
      G/H
      \ar@(ul,ur)|-{ \;G\; }
      \ar[r]^f
      &
      X
      \ar@(ul,ur)|-{ \;G\; }
    }
    \\    
    \Leftrightarrow
    \phantom{AAAAA}
    \raisebox{5pt}{$
    f([\mathrm{e}])
    \;\in\;\;\;\;
    \overset{
      \mathclap{
      \raisebox{3pt}{
        \tiny
        \color{darkblue}
        \bf
        $H$-fixed locus
      }
      }
    }{
      X^H
    }
      \;\;\;\;
      :=
      \;\;
      \Big\{
        x \in X
        \,\Big\vert\;
        \underset{h \in H}{\forall}
        \big(
          h \cdot x \,=\, x
        \big)
       % \right.
      \Big\}
    \;\subset X\;
    $}
    \end{array}
  \end{equation}

  \noindent
  from the corresponding $G$-orbit (Def. \ref{OrdinaryOrbitCategory}) is
  determined by its image $f([\mathrm{e}]) \in X$ of the class of the neutral element,
  and that image has to be fixed by the action of $H \subset G$ of $X$.
  Therefore, the corresponding $G$-equivariant mapping spaces

  \vspace{-.3cm}
  $$
    \mathrm{Maps}\big( G/H,\, X\big)^G
    \;\simeq\;\,
    X^H
  $$

  \noindent
  are the topological subspaces of $H$-fixed points inside $X$,
  the {\it $H$-fixed loci in $G \acts \; H$.}
  By functoriality of the mapping-space
  construction,
  these fixed point loci are exhibited as arranging into a contravariant functor on the $G$-orbit category (Def. \ref{OrdinaryOrbitCategory}):
   \vspace{-3mm}
  \begin{equation}
    \label{SystemOfFixedLoci}
    \hspace{-2cm} 
    \xymatrix@R=6pt@C=20pt{
      \orbisingular (X \!\sslash\! G)
      :
      &
      G \mathrm{Orb}^{\mathrm{op}}
      \ar[rrr]^-{\mathrm{Maps}(-,\, X)^G}
      &&&
      \mathrm{TopSp}
      \\
      &
      G/H_1
      \ar[dd]_-{
        [(\mathrm{id}, g)]
      }
      \ar@{|->}[rrr]
      &&&
      X^{H_1}
      \mathrlap{
        \;
        \;\;
        \mbox{
          \tiny
          \color{darkblue}
          \bf
          \begin{tabular}{c}
            $H_1$-fixed locus
          \end{tabular}
        }
      }
      \\
      \\
      &
      G/H_1
      \ar@{|->}[rrr]
      \ar[dd]_-{
        [(\phi,e)]
      }
      &&&
      X^{H_1}
      \ar[uu]^{\simeq}_{
        g\cdot(-)
        \mathrlap{
          \!\!\!\!\!\!\!
          \mbox{
            \tiny
            \color{greenii}
            \bf
            \def\arraystretch{1}
            \begin{tabular}{c}
              residual action on
              \\
              $H_2$-fixed locus
            \end{tabular}
          }
        }
      }
      \\
      \\
      &
      G/H_2
      \ar@{|->}[rrr]
      &&&
      \mathclap{\phantom{\vert^{\vert^\vert}}}
      X^{H_2}
       \mathrlap{
        \;
        \;\;
        \mbox{
          \tiny
          \color{darkblue}
          \bf
          \begin{tabular}{c}
            $H_2$-fixed locus
          \end{tabular}
        }
      }
      \ar@{^{(}->}[uu]_{
        \phi^\ast
        \mathrlap{
          \;\;\;
          \mbox{
            \tiny
            \color{greenii}
            \bf
            \def\arraystretch{1}
            \begin{tabular}{c}
              inclusion of
              \\
              $H_2$-fixed locus
            \end{tabular}
          }
        }
      }
    }
  \end{equation}

 \def\arraystretch{1}
   
  \noindent
  Here we used Example \ref{MorphismSetsInGOrbitCategory} to make
  explicit the nature of the continuous functions between
  fixed point spaces that this functor assigns to morphisms of
  $G \mathrm{Orb}$. In particular, we see from Example
  \ref{WeylGroup} that the residual action on the $H$-fixed
  locus $X^H$ is by the Weyl group $\WeylGroup(H)$ \eqref{WeylGroups}.
  Postcomposing \eqref{SystemOfFixedLoci} with the singular simplicial set
  functor \eqref{QuillenEquivalenceBetweenTopSpAndSSet}
  yields an equivariant simplicial set (Def. \ref{EquivariantSSetCategory}), to be denoted
  (the notation follows \cite[\S 3.2, 5.1]{SS20b}):

  \vspace{-1mm}
  \begin{equation}
    \label{SingularEquivariantSimplicialSet}
    G \acts \; X
    \;\;
      \longmapsto
    \;\;
    \mathrm{Sing}
    \big(
    \orbisingular
    \big(
      X \!\sslash\! G
    \big)
    \big)
      :=
    \mathrm{Sing}
    \Big(
    \mathrm{Maps}
    \big(
      -
      \,,\,
      X
    \big)^G
    \Big)
    \;
    \in
    \;
    \EquivariantSSet
    \,.
  \end{equation}

 \end{example}

\begin{prop}[Model category of equivariant simplicial sets
{\cite[Thm. 11.6.1]{Hirschhorn02} \cite[Thm. 3.3]{Guillou06}\cite[\S 2.2]{Stephan16}}]
  \label{ModelCategoryOnEquivariantSSet}
  The category of equivariant simplicial sets
  (Def. \ref{EquivariantSSetCategory})
  carries a model category structure whose

  \begin{itemize}
  \item[{\bf (a)}] {$\mathrm{W}$} --
    {\it weak equivalences} are the weak equivalences
    of $\mathrm{SSet}_{\mathrm{Qu}}$
    over each $G/H \in G\ \mathrm{Orb}$;

   \item[{\bf (b)}]  {$\mathrm{Fib}$} --
    {\it fibrations} are the weak equivalences
    of $\mathrm{SSet}_{\mathrm{Qu}}$
    over each $G/H \in G\ \mathrm{Orb}$.
  \end{itemize}
\end{prop}
\noindent
  We denote this model category by
  $
    \EquivariantSSetProj
    \;\in\;
    \mathrm{ModCat}
    \,.
  $

\begin{defn}[Equivariant homotopy types]
  \label{EquivariantHomotopyTypes}
  We denote the homotopy category
  of the projective model structure on equivariant
  simplicial sets (Prop. \ref{ModelCategoryOnEquivariantSSet})
  by
  \begin{equation}
    \label{GHomotopyTypes}
    \xymatrix{
    \EquivariantSSetProj
    \ar[rr]^{\gamma}_{
      \mbox{
        \tiny
        {\color{greenii}
        localization}
      }
    }
    &&
    \EquivariantHomotopyTypes
    }
    :=
    \;\;
    \mathrm{Ho}
    \big(
      \EquivariantSSet_{\mathrm{proj}}
    \big)
    \,.
  \end{equation}

\end{defn}
The key source of equivariant homotopy types is
the shapes of orbi-singularized homotopy quotients of
topological spaces by continuous group actions
(we follow \cite[\S 3.2]{SS20b} in terminology and notation):
\begin{defn}[Equivariant shape]
  \label{EquivariantShape}
  The composite of forming
  systems of fixed loci (Example \ref{SystemsOfFixedPointSpaces})
  with localization to equivariant homotopy types (Def. \ref{EquivariantHomotopyTypes})
  is the \emph{equivariant shape} operation,
  generalizing the plain shape \eqref{TopologicalShapeAsLocalization}:
  \begin{equation}
    \label{EquivariantShapeAsLocalization}
    \hspace{-7mm}
    \xymatrix@R=15pt@C=75pt{
      \GActionsOnTopSp
      \ar[rr]^-{
        \scalebox{.7}{$
          G \acts \; X
        $}
        \;\;\;\longmapsto\;\;\;
        \overset{
          \mathclap{
          \raisebox{3pt}{
            \mbox{
              \tiny
              \color{greenii}
              \bf
              equivariant shape
            }
          }
          }
        }{
          \raisebox{1pt}{\textesh}
          \orbisingular
          (
            X \!\sslash\! G
          )
        }
      }_-{
        \mbox{
          \tiny
          \color{greenii}
          \bf
          \def\arraystretch{.9}
          \begin{tabular}{c}
          localization at fixed locus-wise
          \\
          weak homotopy equivalences
          \end{tabular}
        }
      }
      \ar[dr]_-{
        \mbox{
          \tiny
          \def\arraystretch{1}
          \begin{tabular}{c}
            \color{greenii}
            \bf
            form singular
            \\
            \color{greenii}
            \bf
            equivariant simplicial set
            \\
            \eqref{SingularEquivariantSimplicialSet}
          \end{tabular}
        }
      }
      &&
      \EquivariantHomotopyTypes
      \\
      &
      \EquivariantSSet
      \ar[ur]_-{
        \gamma
        \mathrlap{
          \mbox{
            \tiny
            {\color{greenii}
            \bf
            localization}
            \eqref{GHomotopyTypes}
          }
        }
      }
    }.
  \end{equation}
  
\end{defn}

\begin{example}[Smooth equivariant homotopy types]
  \label{SmoothSingularHomotopyTypes}
  A topological space $X$ equipped with trivial $G$-action
  has equivariant shape (Def. \ref{EquivariantShape})
  given by the functor
  on the orbit category which is constant on its ordinary shape
  \eqref{TopologicalShapeAsLocalization}

  \vspace{-3mm}
  \begin{equation}
    \xymatrix@C=35pt{
      \mathrm{TopSp}
      \ar[rr]_-{
          \raisebox{3pt}{
            \tiny
            \color{greenii}
            shape
          }
        }^-{
        \scalebox{.7}{$
          \raisebox{1pt}{\rm\textesh}
        $}
        }
      \ar[d]_-{
        \mbox{
          \tiny
          \color{greenii}
          \bf
          \def\arraystretch{1}
          \begin{tabular}{c}
            equip with
            \\
            trivial action
          \end{tabular}
        }
      }
      &&
      \HomotopyTypes
      \ar[d]_-{
               \mathrm{Smth}
           }^-{
 %       \mathrlap{
          \mbox{
            \tiny
            \color{greenii}
            \bf
            \def\arraystretch{1}
            \begin{tabular}{c}
              form constant functor
              \\
              on orbit category
            \end{tabular}
 %         }
        }
      }
      \\
      \GActionsOnTopSp
      \ar[rr]_-{
             \raisebox{-3pt}{
            \tiny
            \color{greenii}
            \bf
            equivariant shape
          }
          }^-{
        \scalebox{.7}{$
          \raisebox{1pt}{\rm\textesh}
          \big(
            - \sslash G
          \big)
        $}
        }
      &&
      \EquivariantHomotopyTypes \;.
    }
  \end{equation}

  \noindent
  For brevity, we will mostly leave this embedding
  notationally implicit and write
  \begin{equation}
    \label{TopologicalSpaceAsSmoothEquivariantHomotopyType}
    X
    \;\;
      :=
    \;\;
    \mathrm{Smth}
    \,
    \raisebox{1pt}{\textesh}
    \,
    X
    \;\;
    \in
    \;
    \EquivariantHomotopyTypes \;.
  \end{equation}

\end{example}

\medskip

\noindent {\bf Elmendorf's theorem.}
In fact, every equivariant homotopy type
(Def. \ref{EquivariantHomotopyTypes})
is the equivariant shape (Def. \ref{EquivariantShape})
of some topological space with $G$-action
(Def. \ref{GActionsOnTopSp}). This is the
content of Elmendorf's theorem
(\cite{Elmendorf83}, see Prop. \ref{ElmendorfTheorem} below).
Due to this fact,
topological $G$-actions
in equivariant homotopy theory
are often conflated
with their $G$-equivariant shape, and jointly referred to
as {\it $G$-spaces} (e.g., \cite[\S 8]{tomDieck79}\cite[\S 1]{Blu17}).

\begin{prop}[Model category of simplicial $G$-actions and fixed
 loci {\cite[Thm. 3.12]{Guillou06}\cite[Prop. 2.6]{Stephan16}}]
  \label{ModelCategoryOfSimplicialGActionsAndFixedLoci}
  The category
  $\GActionsOnSSet$
  of $G$-actions $G \acts \; S$ on simplicial sets
  (analogous to Def. \ref{GActionsOnTopSp})
  carries a model category structure
  whose weak equivalences and fibrations are those that
  become so in the classical model structure on simplicial sets
  \eqref{ClassicalModelStructureOnSSet}
  under the functor (analogous to Example \ref{SystemsOfFixedPointSpaces})
  \begin{equation}
    \label{SystemOfSimplicialFixedLoci}
    \xymatrix@R=-2pt@C=3em{
     \GActionsOnSSet
      \ar[rr]^-{
        \mathrm{Maps}
        (
          -
          \,,\,
          -
        )^G
      }
      &&
      \EquivariantSSet
      \\
      G
        \raisebox{1pt}{\scalebox{.8}{$\acts$}}
        \;
      S
      \ar@{}[rr]|-{\qquad \qquad \longmapsto }
      &&
      \big(
        G/H \;\mapsto\; S^H
      \big)
    }
  \end{equation}

 \noindent
  which sends a $G$-action
  $G \acts \; S$ to its system of $H$-fixed loci
  parametrized over $G/H \,\in\, G \mathrm{Orb}$.
\end{prop}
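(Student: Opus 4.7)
The plan is to obtain the model structure by transfer (in the sense of Kan/Crans, see e.g.\ \cite[\S 11.3]{Hirschhorn02}) along a suitable left adjoint from a product of copies of the classical model structure on simplicial sets \eqref{ClassicalModelStructureOnSimplicialSets}. Concretely, I would first exhibit the fixed-loci functor \eqref{SystemOfSimplicialFixedLoci} as the right adjoint of an adjunction
\[
  \xymatrix@C=3em{
    \EquivariantSimplicialSets
    \;\;
    \ar@<+6pt>[rr]^-{L}
    \ar@{<-}@<-6pt>[rr]_-{\mathrm{Maps}(-,-)^G}
    &&
    \;\;
    \GActionsOnSimplicialSets
  }
\]
where $L$ is the left Kan extension sending the representable $G\mathrm{Orbits}(-,G/H) \in \EquivariantSimplicialSets$ to the orbit $G/H$ viewed as a constant simplicial $G$-set. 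The projective model structure $\EquivariantSimplicialSetsProj$ (Prop.\ \ref{ModelCategoryOnEquivariantSimplicialSets}) is cofibrantly generated, with generators induced from those of $\mathrm{SimplicialSets}_{\mathrm{Qu}}$ tensored with the representables $G/H$. The key point is that declaring weak equivalences and fibrations in $\GActionsOnSimplicialSets$ to be the morphisms whose fixed-locus systems are weak equivalences / fibrations in $\EquivariantSimplicialSetsProj$ is precisely the transferred model structure along the right adjoint above.

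Next I would verify the two hypotheses of the transfer theorem: (i) that the left adjoint $L$ preserves small objects, which is immediate since both categories are locally presentable and $L$ is a left adjoint; and (ii) that relative cell complexes built from the images under $L$ of the generating acyclic cofibrations of $\EquivariantSimplicialSetsProj$ are taken to weak equivalences by the fixed-loci functor. For (ii), the crucial observation is that for any subgroup $H \subset G$ and any subgroup $K \subset G$, the $H$-fixed locus of the orbit $G/K$ can be identified, via Example \ref{HomSetsInOrbitCategoryViaWeylGroups}, with the hom-set $G\mathrm{Orbits}(G/H, G/K)$ regarded as a discrete simplicial set; hence $(G/K)^H$ is just a disjoint union of $\WeylGroup(K)$-orbits. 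Consequently the functor $\mathrm{Maps}(-,-)^G$ commutes with the relevant pushouts and transfinite compositions, and sends generating acyclic cofibrations in $\GActionsOnSimplicialSets$ (which have the form $G/K \times \Lambda^n_i \hookrightarrow G/K \times \Delta^n$) to coproducts of copies of $\Lambda^n_i \hookrightarrow \Delta^n$ at each $G/H$, which are acyclic cofibrations in $\mathrm{SimplicialSets}_{\mathrm{Qu}}$.

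With these two conditions verified, Kan's transfer theorem yields a cofibrantly generated model structure on $\GActionsOnSimplicialSets$ whose weak equivalences and fibrations are exactly those morphisms detected by the fixed-loci functor, as claimed. I expect the main technical obstacle to be the verification of condition (ii): one must carefully check that the interaction of the pushout defining a relative cell attachment with the fixed-locus functor is well-behaved, which amounts to showing that $(-)^H$ preserves the relevant colimits on cofibrant objects. This is a known subtlety in equivariant homotopy theory and is exactly where the finiteness of $G$ \eqref{EquivarianceGroup} and the discreteness of the orbits $G/K$ are used; detailed verifications along these lines appear in \cite[Thm.\ 3.12]{Guillou06} and \cite[Prop.\ 2.6]{Stephan16}, which can be invoked to conclude. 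As a byproduct, the adjunction $(L \dashv \mathrm{Maps}(-,-)^G)$ becomes a Quillen equivalence --- Elmendorf's theorem (to be recorded as Prop.\ \ref{ElmendorfTheorem}) --- identifying $\mathrm{Ho}\bigl(\GActionsOnSimplicialSets\bigr)$ with $\EquivariantHomotopyTypes$ of Def.\ \ref{EquivariantHomotopyTypes}.
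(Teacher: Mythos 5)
The paper does not supply its own proof of this proposition — it simply cites \cite[Thm.\ 3.12]{Guillou06} and \cite[Prop.\ 2.6]{Stephan16} — and your proposal is a correct reconstruction of the transfer argument underlying those references, which is indeed how they proceed. Two small remarks: your opening line speaks of transferring from a product of copies of $\mathrm{SimplicialSets}_{\mathrm{Qu}}$, while the body of your argument transfers from $\EquivariantSimplicialSetsProj$; these are interchangeable (both detect weak equivalences and fibrations via the family of fixed-point functors, and Stephan uses the product formulation), but the phrasing reads as an inconsistency. The decisive technical fact you flag as condition (ii) — that each $(-)^H$ preserves directed colimits, preserves pushouts along degree-wise monomorphisms of $G$-sets, and carries $G/K\times(\Lambda^n_i\hookrightarrow\Delta^n)$ to a $G\mathrm{Orbits}(G/H,G/K)$-indexed coproduct of horn inclusions — is precisely what Stephan packages as \emph{cellularity} of the fixed-point functors, so deferring to his Prop.\ 2.6 at that point is exactly right.
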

\noindent
We denote this model category by
$$
  \GActionsOnSSet_{\mathrm{fine}}
  \;\;
  \in
  \;\;
  \mathrm{ModCat}
  \,.
$$

\begin{prop}[Elmendorf's theorem via model categories {\cite[Thm. 3.17]{Stephan16}\cite[Prop. 3.15]{Guillou06}}]
  \label{ElmendorfTheorem}
  The functor assigning systems of simplicial fixed loci
  \eqref{SystemOfSimplicialFixedLoci}
  is the right adjoint in a Quillen equivalence    
  \begin{equation}
    \label{ElmendorfQuillenEquivalence}
    \xymatrix@R=10pt@C=4em{
      \GActionsOnSSet_{\mathrm{fine}}
      \ar@{<-}@<+6pt>[rr]^-{ (-)(G/1) }
      \ar@<-6pt>[rr]_-{
          \mathrm{Maps}
          (
            -
            \,,\,
            -
          )^G
      }^-{ \simeq_{\mathrlap{\mathrm{Qu}}} }
      &&
     \;
     \EquivariantSSet_{\mathrm{proj}}
    }
  \end{equation}

 \noindent
  between the fine model structure on simplicial $G$-actions
  (Prop. \ref{ModelCategoryOfSimplicialGActionsAndFixedLoci})
  and the model category of equivariant simplicial sets
  (Prop. \ref{ModelCategoryOnEquivariantSSet}).
\end{prop}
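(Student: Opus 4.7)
\medskip

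\noindent
\textbf{Proof plan.} The plan is to establish the Quillen equivalence in three stages: construct the adjunction, verify it is Quillen, and then upgrade to an equivalence by checking the derived unit and counit.

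The adjunction arises by noting that $(-)(G/1) : \EquivariantSimplicialSets \to \GActionsOnSimplicialSets$ is well-defined, since the value $F(G/1)$ carries a $G$-action via the identification $\mathrm{End}_{G\mathrm{Orbits}}(G/1) \simeq G$ of Example~\ref{WeylGroup}. A natural transformation $\phi : F \to \mathrm{Maps}(-,S)^G$ is then determined by its $G/1$-component $\phi_{G/1} : F(G/1) \to S$, which is forced to be $G$-equivariant, and naturality along the coset projections $G/1 \to G/H$ forces all other components. The right adjoint moreover \emph{creates} weak equivalences and fibrations, since by Propositions~\ref{ModelCategoryOnEquivariantSimplicialSets} and \ref{ModelCategoryOfSimplicialGActionsAndFixedLoci} both model structures have their weak equivalences and fibrations defined pointwise over the $H$-fixed loci. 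This immediately upgrades the adjunction to a Quillen adjunction.

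For the Quillen equivalence, the derived counit is an isomorphism on the nose: $\mathrm{Maps}(-,S)^G(G/1) \,=\, S^{1} \,=\, S$. The substance lies in showing that the derived unit is a weak equivalence on all cofibrant $F$, i.e., that the canonical map
$$
  F(G/H) \longrightarrow F(G/1)^H
$$
is a weak equivalence of simplicial sets for each $H \subseteq G$. The strategy is to verify this first on the generating cells, which (for the projective model structure) have domains and codomains of the form $G\mathrm{Orbits}(-,G/K) \times \partial\Delta^n$ and $G\mathrm{Orbits}(-,G/K) \times \Delta^n$. For the representable $F = G\mathrm{Orbits}(-, G/K)$ itself, Example~\ref{HomSetsInOrbitCategoryViaWeylGroups} identifies $F(G/H) \,=\, G\mathrm{Orbits}(G/H, G/K)$ with the set $\{\, gK : g^{-1}H g \subset K \,\}$, while $F(G/1) = G/K$ as a $G$-set has $H$-fixed points consisting of precisely the same cosets. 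Thus the unit is an isomorphism on all generating objects.

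The main obstacle is then to propagate this cellular isomorphism to a weak equivalence on every cofibrant object. This requires the fixed-loci functor $(-)^H$ to commute with the transfinite cellular colimits out of which cofibrant objects are built, namely pushouts along and transfinite composites of the generating cofibrations $G/K \times \partial\Delta^n \hookrightarrow G/K \times \Delta^n$. This commutation holds because such cofibrations are freely built from $G$-orbits, where $(G/K \times X)^H \,=\, (G/K)^H \times X$, and because filtered colimits of monomorphisms of simplicial sets commute with the equalizers defining $(-)^H$ --- the familiar $G$-CW-type argument. With this in hand, the unit is a weak equivalence on all cofibrant $F$, completing the Quillen equivalence.
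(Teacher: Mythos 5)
The paper gives no proof of this Proposition at all—it defers entirely to \cite[Thm. 3.17]{Stephan16} and \cite[Prop. 3.15]{Guillou06}—and your argument is essentially the standard one found in those references: the adjunction $(-)(G/1) \dashv \mathrm{Maps}(-,-)^G$ is Quillen because weak equivalences and fibrations on both sides are detected through fixed points (Props. \ref{ModelCategoryOfSimplicialGActionsAndFixedLoci}, \ref{ModelCategoryOnEquivariantSimplicialSets}), the unit is an isomorphism on representables via $G\mathrm{Orbits}(G/H,\,G/K) \,\cong\, (G/K)^H$, and this propagates through cellular colimits and retracts because $(-)^H$ commutes degreewise with coproducts, pushouts along injections, and transfinite compositions of injections. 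One phrasing to tighten: your claim that the ``derived counit is an isomorphism on the nose'' is literally true only of the underived counit (the derived one involves a cofibrant replacement of $\mathrm{Maps}(-,S)^G$, and evaluation at $G/1$ does not obviously carry that objectwise weak equivalence to a weak equivalence in the fine structure, where all fixed points must be checked); the clean route is to observe that the right adjoint creates weak equivalences, so the Quillen adjunction is an equivalence as soon as the ordinary unit is a weak equivalence on all cofibrant objects—which is exactly what your cell induction establishes.
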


\medskip

\noindent {\bf Examples of equivariant homotopy types.}

\begin{example}[$\Gade$-equivariant 4-sphere]
  \label{ADEEquivariant4Sphere}
Let
$$
  G \,:=\, \Gade \;\subset\; \mathrm{Spin}(3) \;\simeq\; \mathrm{Sp}(1)
$$

\noindent
be a finite subgroup of the Spin group in dimension 3;
these are famously classified along an ADE-pattern
(reviewed in \cite[Rem. A.9]{HSS18}). Via the
exceptional isomorphism with the quaternionic unitary group,
this induces a canonical smooth action
(Def. \ref{ProperGActionsOnSmoothManifolds})
on
the Euclidean 4-space underlying the space of quaternions
(reviewed as \cite[Prop. A.8]{HSS18})
and hence also on the corresponding representation 4-sphere
(Example \ref{RepresentationSpheres}):
$$
  \xymatrix{
    \mathbb{R}^4
    \ar@(ul,ur)^-{
      \Gade\;
    }
  }
  ,\;\;
  \xymatrix{
    S^4
    \ar@(ul,ur)^-{
      \Gade\;
    }
  }
  \;\in\;
  \GadeActionsOnSmoothManifolds
  \,.
$$
The corresponding $\Gade$-equivariant homotopy types
(Def. \ref{EquivariantHomotopyTypes})
(their equivariant shape, Def. \ref{EquivariantShape})
$$
  \overset{
    \mathclap{
    \raisebox{3pt}{
      \tiny
      \color{darkblue}
      \bf
      \begin{tabular}{c}
        $\Gade$-equivariant shape
        \\
        of 4-sphere
      \end{tabular}
    }
    }
  }{
  \raisebox{1pt}{\textesh}
  \,
  \orbisingular
  \big(
    S^4 \!\sslash\! \Gade
  \big)
  }
  \;\;
  \in
  \;
  \GadeEquivariantHomotopyTypes
$$

\noindent
are the coefficients of ADE-equivariant Cohomotopy theory
\cite[\S 5.2]{HSS18}\cite[\S 3]{SS19a}
(lifted to equivariant twistorial Cohomotopy theory below in
Def. \ref{EquivariantTwistorialCohomotopyTheory}).
\end{example}

\begin{example}[$\Grefl$-equivariant twistor space]
 \label{GhetEquivariantTwistorSpace}
  Consider the quaternion unitary group
  (e.g. \cite[\S A]{FSS20c} )
  with its two commuting subgroups from
  \eqref{Ghet} and \eqref{TheSp1Subgroup}:  
  \begin{equation}
    \label{QuaternionUnitaryGroup}
    \Grefl,
    \,
    \SpLR
    \;\;
      \subset
    \;\;
    \mathrm{Sp}(2)
    \;:=\;
    \big\{
    g \in \mathrm{Mat}_{2 \times 2}(\mathbb{H})
    \,\left\vert\,
      g \cdot g^\dagger \,=\, 1
    \right.
   \! \big\}
   \,.
  \end{equation}

  \noindent
  Their canonical action
  on
  $\mathbb{H}^2 \simeq_{\scalebox{.5}{$\mathbb{R}$}} \mathbb{R}^8$
  by left matrix multiplication
  induces an action \eqref{Sp2ActionOnTwistorSpace} on
  $\mathbb{C}P^3$
  (``twistor space'').
 The fixed locus \eqref{FixedLoci}
 of the subgroup $\Grefl$ \eqref{Ghet}
 under this action
 is evidently given by those
$[z_1 : z_2 : z_3 : z_4] \,\in\, \mathbb{C}P^3$ such that
$z_1 + \mathrm{j} \cdot z_2   \,=\, z_3 + \mathrm{j} \cdot z_4   \,\in\, \mathbb{H}$.
Since these are exactly the elements that are sent
by the twistor fibration $t_{\mathbb{H}}$ \eqref{TwistorFibration}
to the base point
$[1 : 1] \,\in\, \mathbb{H}P^1$, the
$\Grefl$-fixed locus in twistor space $\mathbb{C}P^3$
coincides with the $S^2$-fiber of
the twistor fibration $t_{\mathbb{H}}$ \eqref{TwistorFibration}:

\vspace{-5mm}
\begin{equation}
  \label{Z2FixedPointsInTwistorSpaceFormS2Fiber}
  \big(
    \mathbb{C}P^3
  \big)^{\Grefl}
  \;\simeq\;
  \xymatrix{
    S^2 \;
    \ar@{^{(}->}[rr]^-{ \mathrm{fib}(t_{\mathbb{H}}) }
    &&
    \mathbb{C}P^3.
  }
\end{equation}

\noindent
Hence the $\ZTwo$-equivariant homotopy type
\eqref{TopologicalShapeAsLocalization}
of twistor space with its $\Grefl$ action \eqref{Sp2ActionOnTwistorSpace}
is given by the following
functor on the $\ZTwo$-orbit category \eqref{OrbitCategoryOfZ2}:
\begin{equation}
  \label{Z2EquivariantTwistorSpaceAsPresheafOnOrbitCategory}
  \hspace{-2.3cm}
  \overset{
    \mathclap{
    \raisebox{3pt}{
      \tiny
      \color{darkblue}
      \bf
      \begin{tabular}{c}
        $\Grefl$-equivariant shape
        \\
        of twistor space
      \end{tabular}
    }
    }
  }{
  \raisebox{1pt}{\textesh}
  \orbisingular
  \big(
    \mathbb{C}P^3 \!\sslash\! \Grefl
  \big)
  }
  \;\;\;
   :
  \;\;\;
  \raisebox{23pt}{
  \xymatrix@R=1.5em{
    \ZTwo/1
    \ar@(ul,ur)|-{\;\ZTwo}
    \ar[d]
    \ar@{}[rr]|-{ \longmapsto }
    &&
    \raisebox{1pt}{\textesh}\,
    \mathbb{C}P^3
    \ar@(ul,ur)|-{\, \Grefl }
    \\
    \mathclap{\phantom{\vert^{\vert}}}
    \ZTwo/\ZTwo
    \ar@{}[rr]|-{ \longmapsto }
    &&
    {\raisebox{1pt}{\textesh}\,
    S^2}^{\phantom{A}}
    \ar@{^{(}->}[u]_-{
      \mathrm{fib}(t_{\mathbb{H}})
      \mathrlap{
        \!\!\!\!
        \mbox{
          \tiny
          \color{greenii}
          \bf
          \begin{tabular}{c}
            fiber inclusion of
            \\[-2pt]
            twistor fibration
          \end{tabular}
        }
      }
    }
  }
  }
\end{equation}

\end{example}

\medskip

\noindent {\bf Equivariant homotopy groups.}

\begin{defn}[Equivariant groups]
  \label{EquivariantGroups}
 {\bf (i)}  We write
  $$
    \EquivariantGroups
    \;:=\;
    \mathrm{Functors}
    \big(
      G \mathrm{Orb}^{\mathrm{op}}
      \,,\,
      \mathrm{Grp}
    \big)
  $$

  \noindent
  for the category of contravariant functors on the
  $G$-orbit category (Def. \ref{OrdinaryOrbitCategory})
  with values in groups.

  \noindent  {\bf (ii)} We write
  $$
    \EquivariantAbelianGroups
    \;:=\;
    \mathrm{Functors}
    \big(
      G \mathrm{Orb}
      \,,\,
      \mathrm{AbelianGroups}
    \big)
  $$
  \vspace{-.55cm}

  \noindent
  for the sub-category of contravariant functors
  with values in abelian groups.
\end{defn}

\begin{example}[Equivariant singular homology groups]
  \label{EquivariantSingularHomologyGroup}
  For $\mathscr{X} \,\in\, \EquivariantHomotopyTypes$
  (Def. \ref{EquivariantHomotopyTypes}),
  $A \,\in\, \mathrm{AbelianGroups}$,
  the ordinary $A$-homology groups in degree $n\in \mathbb{N}$
  of the stages of $\mathscr{X}$
  form an equivariant abelian group
  in the sense of Def. \ref{EquivariantGroups}, to be denoted:
  $$
    \underline{H}_{\, n}
    \big(
      \mathscr{X}
      ;
      \,
      A
    \big)
    \;\;
      :
    \;\;
    G/H
      \;\longmapsto\;
    H_n
    \big(
      \mathscr{X}(G/H);
      \,
      A
    \big)
    \,.
  $$
\end{example}

\begin{defn}[Equivariant homotopy groups]
  \label{EquivariantHomotopyGroups}
  $\,$

  \noindent
  {\bf (i)} For $\mathscr{X} \,\in\, \EquivariantHomotopyTypes$
  (Def. \ref{EquivariantHomotopyTypes}),
  $\orbisingular(\ast \!\sslash\! G)
   \overset{x}{\longrightarrow}
  \mathscr{X}$
  a base-point, and $n \in \mathbb{N}$,
  we say that the $n$th {\it equivariant homotopy group}
  of $\mathscr{X}$ at $x$ is the equivariant group
  (Def. \ref{EquivariantGroups}) which is stage-wise
  the ordinary $n$th homotopy group, to be denoted:
  \begin{equation}
    \label{EquivariantHomotopyGroupsOfEquivariantHomotopyType}
    \underline{\pi}_{\, n}
    (
      \mathscr{X}
      ,
      x
    )
    \;:=\;
    \Big(
      G/H
      \;\mapsto\;
      \pi_n
      \big(
        X(G/H)
        ,
        x(G/H)
      \big)
   \! \Big).
  \end{equation}

  \noindent
  {\bf (ii)}
  Similarly,
  for $G \acts \; X \;\in\; \GActionsOnTopSp$
  (Def. \ref{GActionsOnTopSp}),
  $G \acts \; \ast \overset{x}{\longrightarrow} G \acts \, X$
  a fixed base point, and $n \in \mathbb{N}$,
  we say that the $n$th {\it equivariant homotopy group}
  of $G \acts \; X$ is that \eqref{EquivariantHomotopyGroupsOfEquivariantHomotopyType}
  of its equivariant shape \eqref{TopologicalShapeAsLocalization}:
  \begin{equation}
    \underline{\pi}_{\, n}
    (
      X
      ,
      x
    )
    \;:=\;
    \underline{\pi}_{\, n}
    \Big(
      \raisebox{1pt}{\textesh}
      \orbisingular
      \big(
        X \!\sslash\! G
      \big)
      \, ,\,
      \raisebox{1pt}{\textesh}
      \orbisingular
      \big(
        x \!\sslash\! G
      \big)
    \!\Big)
    \;=\;
    \Big(
      G/H
      \;\mapsto\;
      \pi_n
      \big(
        X^H,
        \,
        x
      \big)
   \! \Big).
  \end{equation}
 \end{defn}

\begin{defn}[Equivariant connected homotopy types]
  \label{EquivariantConnectedHomotopyTypes}
  We write
  \begin{equation}
    \label{InclusionOfEquivariantConnectedHomotopyTypes}
    \xymatrix{
      \EquivariantHomotopyTypesConnected
      \;
      \ar@{^{(}->}[r]
      &
      \; \EquivariantHomotopyTypes
    }
  \end{equation}
  \vspace{-.55cm}

  \noindent
  for the full subcategory on those
  equivariant homotopy types $\mathscr{X}$ (Def. \ref{EquivariantHomotopyTypes})
  which
  \begin{itemize}
  \item[{\bf (a)}] are equivariantly connected, in that
  $\mathscr{X}(G/H) \,\in\, \HomotopyTypes$ is connected for all $H \subset G$;

  \item[{\bf (b)}] admit an equivariant base point
  $
    \orbisingular\big( \ast \sslash G \big)
  \to
    \mathscr{X}$.
  \end{itemize}

\end{defn}

\begin{defn}[Equivariant 1-connected homotopy types]
  \label{SubcategoryOfEquivariantSimplyConnectedRFiniteHomotopyTypes}
  $\,$
  
  \noindent   
  {\bf (i)} We write
  \begin{equation}
    \label{InclusionOfEquivariantSimplyConnectedHomotopyTypes}
    \xymatrix{
      \EquivariantHomotopyTypesSimplyConnected
      \;
      \ar@{^{(}->}[r]
      &
      \EquivariantHomotopyTypesConnected
      \;
      \ar@{^{(}->}[r]
      &
      \; \EquivariantHomotopyTypes
    }
  \end{equation}

  \noindent
  for the further full subcategory on those
  equivariant homotopy types $\mathscr{X}$ (Def. \ref{EquivariantHomotopyTypes})
  which
  \begin{itemize}
  \item[{\bf (a)}] are equivariantly connected and
  admit an equivariant base point (Def. \ref{EquivariantConnectedHomotopyTypes});

  \item[{\bf (b)}] have trivial
  first equivariant homotopy group (Def. \ref{EquivariantHomotopyGroups})
  at that base point:
  $$
    \underline{\pi}_1
    (
      \mathscr{X}
      ,\,
      x
    )
    \;\;
    =
    \;\;
    \underline{1}
    \,.
  $$
  \end{itemize}

  \noindent
  {\bf (ii)}
  By the Hurewicz theorem, this implies that
  the equivariant real cohomology groups
  (Example \ref{EquivariantSingularHomologyGroup}) of these
  objects are trivial in degrees $\leq 1$
  $$
    X
    \;\in\;
    \EquivariantSimplyConnectedHomotopyTypes
    \;\;\;\;\;\;\;\;
    \Rightarrow
    \;\;\;\;\;\;\;\;
    \big(
      \underline{H}^0(X)
      \;\simeq\;
      \underline{\mathbb{R}}
      \;\;\;
      \mbox{and}
      \;\;\;
      \underline{H}^1(X)
      \;\simeq\;
      0
    \big)
    \,.
  $$

  \noindent
  {\bf (iii)}
  We write
  $$
    \xymatrix{
      \EquivariantSimplyConnectedRFiniteHomotopyTypes
      \;
      \ar@{^{(}->}[r]
      &
      \;
      \EquivariantSimplyConnectedHomotopyTypes
      \;
      \ar@{^{(}->}[r]
      &
      \;
      \EquivariantHomotopyTypes
    }
  $$

  \noindent
  for the further full subcategory
  of those equivariant 1-connected homotopy types
  \eqref{InclusionOfEquivariantSimplyConnectedHomotopyTypes}
  which are of {\it finite type} over $\mathbb{R}$,
  in that all their equivariant real homology groups
  (Example \ref{EquivariantSingularHomologyGroup})
  are finite-dimensional:
  $$
    \underset{
      {H \subset G}
      \atop
      {n \in \mathbb{N}}
    }{\forall}
    \;
    \mathrm{dim}_{\mathbb{R}}
    \Big(
    H_n
    \big(
      \mathscr{X}(G/H);
      \,
      \mathbb{R}
    \big)
    \!\Big)
    \;<\;
    \infty
    \,.
  $$
  \end{defn}

\smallskip

\noindent {\bf $G$-Orbifolds.}
Given a smooth manifold $X$ equipped with a smooth group
action $G \acts \; X$, there are several somewhat different
mathematical notions of what exactly counts as the corresponding
{\it quotient orbifold}
(review in \cite{MoerdijkMrcun03}\cite[\S 6]{Kapovich08}\cite{IKZ10}).
\begin{itemize}

\item First, there is the singular quotient space $X/G$
that dominates the early literature on orbifolds
\cite{Satake56}\cite{Satake57} \cite{Thurston80}\cite{Haefliger84}
as well as the contemporary physics literature \cite[\S 1.3]{BailinLove99}.

\item  Second, there is the smooth stacky homotopy quotient $X \!\sslash\! G$ that has become
the popular model for orbifolds among Lie theorists
\cite{MoerdijkPronk97}\cite{Moerdijk02}\cite{Lerman08}\cite{Amenta12}.

\item Third,
there is the fine incarnation of orbifolds
{\it orbisingular homotopy quotients}
$\orbisingular\big( X \!\sslash\! G\big)$
in singular cohesive homotopy theory \cite{SS20b},
which unifies the above two perspectives and lifts them to
make orbifolds carry proper equivariant
differential cohomology theories.
\end{itemize}

\noindent Here we extract from \cite{SS20b}
the essence of this latter fine perspective that is
necessary and convenient for the present purpose,
as Def. \ref{GOrbifolds} below.

\begin{lemma}[Fixed loci of finite smooth actions are smooth manifolds]
  \label{FixedLociOfProperSmoothActionsAreSmoothManifolds}
  If $G \acts  \; X \,\in\, \GActionsOnTopSp$
  (Def. \ref{GActionsOnTopSp})
  is such that
  %{\bf (a)}
  $X$ admits the structure of a smooth manifold
  and such that the action \eqref{GroupAction} of $G$ is smooth,
  %{\bf (b)} the action is \emph{proper}
  %in that its {\it shear map}
  %is a proper map (pre-images of compact subsets are compact)
  %
  % \begin{equation}
  %  \label{ProperAction}
  %  \xymatrix@R=-2pt{
  %    G \times X
  %    \ar[rr]^-{  \mathrm{proper}  }
  %    &&
  %    X \times X
  %    \\
  %    (g,\,x) \ar@{}[rr]|-{ \longmapsto }
  %    &&
  %    (
  %      g \cdot x,
  %      \,
  %      x
  %    )
  %  }
  %\end{equation}
  then the fixed loci $X^H \hookrightarrow X$ \eqref{FixedLoci}
  are themselves smooth submanifolds.
\end{lemma}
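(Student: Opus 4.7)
The plan is to establish the claim locally around each fixed point $x \in X^H$, from which the global statement follows since $X^H$ is closed in $X$. The standard strategy is to linearize the $H$-action at $x$ via an $H$-equivariant chart, and then identify $X^H$ near $x$ with a linear subspace of $T_x X$.

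First, I would observe that $H \subseteq G$ is a finite group (being a subgroup of the finite equivariance group from \eqref{EquivarianceGroup}), hence compact. Therefore, starting from any Riemannian metric $g$ on $X$, I can form the averaged metric
\[
  \overline{g} \;:=\; \tfrac{1}{|H|} \sum_{h \in H} h^* g,
\]
which is smooth and $H$-invariant. With respect to $\overline{g}$, each $h \in H$ acts by isometries fixing $x$, so the derivatives $d(h\cdot -)_x : T_x X \to T_x X$ define a linear orthogonal representation of $H$ on $T_x X$.

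Second, I would use the exponential map $\exp_x : T_x X \to X$ of $\overline{g}$. Since the isometry $h \cdot - : X \to X$ fixes $x$, naturality of geodesics under isometries gives the equivariance identity $h \cdot \exp_x(v) = \exp_x(d(h)_x \cdot v)$ for all $v$ in a sufficiently small neighborhood $U \subset T_x X$ of $0$. Let $V := (T_x X)^H$ denote the linear subspace fixed by $H$; it is a smooth submanifold of $T_x X$. By the equivariance just displayed, $\exp_x$ maps $U \cap V$ into $X^H$, and conversely any point of $X^H$ sufficiently close to $x$ lies in $\exp_x(U \cap V)$, because such a point is of the form $\exp_x(v)$ for a unique small $v$, and the equivariance forces $d(h)_x \cdot v = v$ for all $h \in H$, hence $v \in V$.

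Third, since $\exp_x$ is a local diffeomorphism onto an open neighborhood of $x$, its restriction to $U \cap V$ provides a smooth chart exhibiting $X^H$ near $x$ as an embedded submanifold of dimension $\dim_{\mathbb{R}} V$. Since $x \in X^H$ was arbitrary, $X^H$ is a smooth submanifold of $X$. The main technical point — and essentially the only one requiring care — is the averaging step together with the verification that geodesics are preserved by isometries fixing their initial point, both of which are standard; there is no real obstacle, the result being a classical consequence of the slice theorem for compact group actions (see, e.g., \cite[Ch.~VI]{Bredon72}) specialized to the finite case.
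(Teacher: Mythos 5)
Your proof is correct. The underlying mechanism is the same as the paper's — equivariant linearization at a fixed point — but you carry it out directly by averaging a Riemannian metric over the finite group $H$, using the naturality of geodesics under the resulting isometries, and identifying $X^H$ locally with the linear fixed subspace $(T_xX)^H$ via $\exp_x$. The paper instead goes through properness (finite $\Rightarrow$ proper) and cites the $G$-equivariant tubular neighborhood theorem for proper smooth actions (Bredon/Kankaanrinta), applied to the one-point submanifold $\{x\}$; unwinding that citation yields exactly the linearized local picture you construct by hand. What your route buys is self-containedness and elementarity in the finite case (no appeal to the slice theorem for general proper Lie group actions); what the paper's route buys is brevity and generality, since the same citation would cover arbitrary compact Lie equivariance groups. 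One small remark: the closedness of $X^H$, which you mention at the outset, is not actually used — the submanifold property is purely local at points of $X^H$, and your argument establishes it pointwise without needing closedness.
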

\begin{proof}
  Since $G$ is assumed to be finite \eqref{EquivarianceGroup},
  its smooth action is proper
  (e.g. \cite[Cor. 21.6]{Lee12}).
  But in smooth manifolds with proper smooth $G$-action,
  every closed submanifold
  inside a fixed locus has a $G$-equivariant tubular neighborhood
  \cite[\S VI, Thm. 2.2]{Bredon72}\cite[Thm. 4.4]{Kankaanrinta07}.
  This applies, in particular, to individual fixed points, where it
  says that each such has a neighborhood in the fixed locus diffeomorphic
  to an open ball.
\end{proof}

\begin{defn}[Smooth group actions on smooth manifolds]
 \label{ProperGActionsOnSmoothManifolds}
{\bf (i)} We write
$$
  \xymatrix@R=8pt{
    \GActionsOnSmoothManifolds
    \ar[rr]
    &&
    \GActionsOnTopSp
  }
$$

\noindent
for the category of smooth manifolds equipped with
$G$-actions on the underlying topological spaces
(Def. \ref{GActionsOnTopSp}) which are smooth.

\noindent
{\bf (ii)} Similarly, if the compact Borel-equivariance group
\eqref{CompactTopologicalGroup} is equipped with
smooth structure making it a Lie group
$$
  T \;\in\;
  \xymatrix{
    \mathrm{CompactLieGroups}
    \ar[r]
    &
    \mathrm{CompactTopGrp}\;,
  }
$$

\noindent
we write
$$
  \xymatrix@R=8pt{
    \TGActionsOnSmoothManifolds
    \ar[rr]
    &&
    \TGActionsOnTopSp
  }
$$

\noindent
for the category of smooth manifolds equipped with
$T \times G$-actions on the underlying topological spaces
(Def. \ref{GActionsOnTopSp}) which are smooth.

\end{defn}

\begin{defn}[$G$-Orbifolds {\cite{SS20b}}]
 \label{GOrbifolds}
{\bf (i)} We write
  \begin{equation}
   \label{CategoryOfGOrbifolds}
   G\mathrm{Orbifolds}
   \;:=\;
   \mathrm{Functors}
   \big(
     G \mathrm{Orb}^{\mathrm{op}}
     ,\,
     \mathrm{SmoothManifolds}
   \big)
 \end{equation}

  \noindent
 for the category of contravariant functors
 from $G$-orbits (Def. \ref{OrdinaryOrbitCategory}) to smooth manifolds.

 \noindent {\bf (ii)} By Lemma \ref{FixedLociOfProperSmoothActionsAreSmoothManifolds},
 the system of fixed loci \eqref{SystemOfFixedLoci}
 of a smooth action
 $G \acts \; X$ (Def. \ref{ProperGActionsOnSmoothManifolds})
 takes values in smooth manifolds
  \begin{equation}
    \label{SystemOfSmoothSubmanifoldsofProperSmoothGAction}
    \hspace{-2mm}
    \mbox{
      \rm
      $G \acts \; X$
      smoothly
    }
    \phantom{A}
    \Rightarrow
    \phantom{A}
    \orbisingular
    \big(
      X \!\sslash\! G
    \big)
    \;:\;
    \xymatrix@C=15pt{
      G \mathrm{Orb}^{\mathrm{op}}
      \ar@{-->}[r]
      &
    \;  \mathrm{SmoothManifolds}
    \;  \ar[r]
      &
     \; \mathrm{TopSp}
    },
  \end{equation}

 \noindent and hence
 witnesses an object
 $\orbisingular \big( X \!\sslash\! G  \big) \in
 \GOrbifolds$ \eqref{GOrbifolds}
 which is a smooth geometric refinement of the
 underlying equivariant homotopy type (Def. \ref{EquivariantShape}),
 in that we have the following commuting diagram of functors:
 $$
 \hspace{2mm} 
   \xymatrix{
     \GActionsOnSmoothManifolds
     \ar[d]_-{
       \mathclap{\phantom{\vert^{\vert}}}
       \mbox{\hspace{-5mm} 
         \tiny
         \def\arraystretch{1}
         \begin{tabular}{c}
           \color{greenii}
           \bf
           forget smooth structure
           \\
           \eqref{SystemOfSmoothSubmanifoldsofProperSmoothGAction}
         \end{tabular}
       }
       \mathclap{\phantom{\vert_{\vert}}}
     }
     \ar[rrrr]^-{
       \scalebox{.7}{$
       G \acts \; X
       \;\;\longmapsto\;\;
       \orbisingular
       (
         X \!\sslash\! G
       )
       $}
     }
     &&
     &&
     \GOrbifolds
     \ar[d]_-{
       \,\raisebox{0pt}{\textesh}
     \;\;  }^-{\hspace{-3mm} 
             \mbox{
           \tiny
           \def\arraystretch{1}
           \begin{tabular}{c}
             \color{greenii}
             \bf
             equivariant shape
             \\
             (Def. \ref{EquivariantShape})
           \end{tabular}
         }
       }
     \\
     \GActionsOnTopSp
     \ar[rrrr]^-{
       \scalebox{.7}{$
       G \acts \; X
       \;\;\longmapsto\;\;
       \raisebox{1pt}{\textesh}
       \orbisingular
       (
         X \!\sslash\! G
       )
       $}
     }_-{
      \mbox{
        \tiny
        \bf
        \eqref{SingularEquivariantSimplicialSet}
      }
     }
     &&&&
     \EquivariantHomotopyTypes
     \,.
    }
 $$

\end{defn}

%%%%%%%%%%%%%%%%%%%%%%%%%%%%%%%%%%%%%%%%%%%%%%%%%%%%%%%%%%%
\subsection{Equivariant non-abelian cohomology theories}
 \label{EquivariantNonAbelianCohomologyTheories}
%%%%%%%%%%%%%%%%%%%%%%%%%%%%%%%%%%%%%%%%%%%%%%%%%%%%%%%%%%%

We introduce the general concept of
equivariant non-abelian cohomology theories,
in direct generalization of \cite[\S 2.1]{FSS23-Char},
and consider some examples. This is in preparation for the
twisted case in the next subsection.

\medskip

In equivariant generalization of \cite[\S 2.1]{FSS23-Char},
we set:

\begin{defn}[Equivariant non-abelian cohomology]
  \label{EquivariantNonAbelianCohomology}
  Let $ \mathscr{X}, \, \mathscr{A} \,\in\, \EquivariantHomotopyTypes$
  (Def. \ref{EquivariantHomotopyTypes}).

  \noindent
  {\bf (i)} The \emph{proper $G$-equivariant non-abelian cohomology}
  of $\mathscr{X}$ with coefficients in $\mathscr{A}$ is the hom-set
  \eqref{HomSets}
  $$
    \overset{
      \mathclap{
      \raisebox{3pt}{
        \tiny
        \color{darkblue}
        \bf
        \def\arraystretch{1}
        \begin{tabular}{c}
          equivariant
          \\
          non-abelian cohomology
        \end{tabular}
      }
      }
    }{
    H
    \big(
      \mathscr{X}
      ;
      \,
      \mathscr{A}
    \big)
    }
    \;\;
      :=
    \;\;
    \EquivariantHomotopyTypes
    \big(
      \mathscr{X}
      \,,\,
      \mathscr{A}
    \big)
    \,.
  $$

  \noindent
  {\bf (ii)}
  For $X \in \GActionsOnTopSp$
  (Def. \ref{GActionsOnTopSp}),
  with induced equivariant homotopy type
  $\raisebox{1pt}{\textesh} \orbisingular \big( X \!\sslash\! G\big)$
  \eqref{TopologicalShapeAsLocalization},
  we write

  \vspace{-.4cm}
  $$
    \overset{
      \mathclap{
      \raisebox{3pt}{
        \tiny
        \color{darkblue}
        \bf
        \def\arraystretch{1}
        \begin{tabular}{c}
          equivariant
          \\
          non-abelian cohomology
        \end{tabular}
      }
      }
    }{
      H_G
      \big(
        X;
        \,
        \mathscr{A}
      \big)
    }
    \;\;
    :=\;
    H\Big(
      \raisebox{1pt}{\textesh}
      \orbisingular
      \big(
        X \!\sslash\! G
      \big)
      ;
      \,
      \mathscr{A}
    \Big)
    \;:=\;
    \EquivariantHomotopyTypes
    \big(
      \raisebox{1pt}{\textesh}
      \orbisingular
      \big(
        X \!\sslash\! G
      \big)
      \,,\,
      \mathscr{A}
    \big)
    \,.
  $$
  \vspace{-.45cm}

  \noindent
  {\bf (iii)} We call the corresponding contravariant functor

  \vspace{-3.5mm}
  \begin{equation}
    \label{CohomologyTheory}
    \xymatrix@C=30pt@R=.5em{
      \GActionsOnTopSp^{\mathrm{op}}
      \ar@/_1.4pc/[rrrr]|-{
        \;
        H_G
        (
          -;
          \,
          \mathscr{A}
        )
        \;
      }
      \ar[rr]^-{
        \raisebox{0pt}{\textesh}
        \,
        \orbisingular
        (
          - \!\sslash\! G
        )
      }
      &&
      \EquivariantHomotopyTypes^{\mathrm{op}}
      \ar[rr]^-{
        H(
          -;
          \,
          \mathscr{A}
        )
      }
      &&
      \mathrm{Sets}
    }
  \end{equation}

  \vspace{-0.5mm}
  \noindent
 the \emph{equivariant non-abelian cohomology theory}
  with coefficients in $\mathscr{A}$.
\end{defn}

\noindent {\bf Equivariant ordinary cohomology.}

\vspace{-1mm} 
\begin{example}[Equivariant representation ring]
  \label{EquivariantRepresentationRing}
  For $H$ a finite group and $\mathbb{F}$ a field, write
  \begin{equation}
    \label{RepresentationRing}
    \RepresentationRing{\mathbb{F}}(X)
    \;\;
    \in
    \;\;
    \xymatrix{
      \mathrm{Rings}
      \ar[r]
      &
      \mathrm{AbelianGroups}
    }
  \end{equation}
  \vspace{-.6cm}

  \noindent
  for the additive abelian group underlying
  the representation ring of $H$ (i.e., the Grothendieck group of the
  semi-group of finite-dimensional $\mathbb{F}$-linear $H$-representations
  under tensor product of representations, review in \cite[\S 2.1]{BSS19}).
  Under the evident restriction of representations to subgroups
  and under conjugation action on representations,
  these groups arrange into a contravariant functor
  on the $G$-orbit category (Def. \ref{OrdinaryOrbitCategory})
  \begin{equation}
    \label{TheEquivariantRepresentationRing}
    \hspace{7mm} 
    \xymatrix@R=-4pt{
      \mathllap{
        \EquivariantRepresentationRing{\mathbb{F}}
        \;
        :
        \;
      }
      G \mathrm{Orb}^{\mathrm{op}}
      \ar[rr]
      &&
      \mathrm{AbelianGroups}
      \\
      G/H
      \ar@{}[rr]|-{ \longmapsto }
      &&
      \RepresentationRing{\mathbb{F}}(H)
    }
    \;\;\;
    \in
    \;
    \EquivariantAbelianGroups
  \end{equation}

  \noindent
  and hence constitute an equivariant abelian group
  (Def. \ref{EquivariantGroups}).
\end{example}

\begin{example}[Bredon cohomology {\cite[p. 3]{Bredon67a}\cite[Thm. 2.11 \& (6.1)]{Bredon67b}\cite[p. 10]{GreenleesMay95}}]
  \label{BredonCohomology}
  $\,$

  \noindent
  Given
  $
    \underline{A} \;\in\; \EquivariantAbelianGroups
  $
  (Def. \ref{EquivariantGroups})
  and $n \in \mathbb{N}$:

 \noindent
 {\bf (i)}   There is the \emph{Eilenberg-MacLane $G$-space}
  \begin{equation}
    \label{EquivariantEilenbergMacLaneSpace}
    \mathscr{K}(\underline{A},n)
    \;\in\;
    \EquivariantHomotopyTypes
  \end{equation}

  \noindent
  in equivariant connected homotopy types (Def. \ref{EquivariantHomotopyTypes}),
  characterized by the fact that it admits a fixed point
  with equivariant homotopy groups (Def. \ref{EquivariantHomotopyGroups})
  given by
  \vspace{-2mm} 
  $$
    \xymatrix{
      \underline{\pi}_{\; k}
      \big(
        \mathscr{K}(\underline{A},n)
      \big)
      \;\simeq\;
      \left\{
      \!\!
      {\begin{array}{lcl}
        \underline{A} &\vert& k = n,
        \\[-4pt]
        0 & \vert & \mbox{otherwise}.
      \end{array}}
      \!\!
      \right.
    }
  $$

  \vspace{-1mm} 
  \noindent
  {\bf (ii)}
  The \emph{ordinary equivariant cohomology}
  or \emph{Bredon cohomology} in degree $n$
  of $X \;\in\; \GActionsOnTopSp$
  (Def. \ref{GActionsOnTopSp})
  with coefficients in $\underline{A}$ is
  its equivariant non-abelian cohomology (Def. \ref{EquivariantNonAbelianCohomology})
  with coefficients in
  $\mathscr{K}(\underline{A},n)$ \eqref{EquivariantEilenbergMacLaneSpace}:
  $$
  \hspace{13mm} 
    \overset{
      \mathclap{
      \raisebox{3pt}{
        \tiny
        \color{darkblue}
        \bf
        \def\arraystretch{1}
        \begin{tabular}{c}
          Bredon cohomology
          \\
          (equivariant ordinary cohomology)
        \end{tabular}
      }
      }
    }{
    H^n_G
    \big(
      X;
      \,      \underline{A}
    \big)
    }
    \qquad 
    \qquad
    \simeq
    H_G
    \big(
      X;
      \,
      \mathscr{K}(\underline{A},n)
    \big)
      =
    H
    \big(
      \raisebox{1pt}\orbisingular
      (
        X \!\sslash\! G
      )
      \,
      ,\,
      \mathscr{K}(\underline{A}, n)
    \big).
  $$

\end{example}

%\medskip
%
%\noindent {\bf Equivariant generalized cohomology.}
%
%
%\begin{example}[Equivariant topological K-theory]
%\label{EquivariantTopologicalKtheory}
%Equivariant topological K-theory \cite{Segal68}\cite{AtiyahSegal69}
%has an equivariant classifying space (Def. \ref{EquivariantHomotopyTypes})
%
%
%\begin{equation}
%  \label{ClassifyingGSpaceForEquivariantKTheory}
%  \EquivariantKU_0 \;\in\; \EquivariantHomotopyTypes
%\end{equation}
%
%
%\noindent
%\cite{DavisLuck98}\cite[\S 1]{LO01}\cite[\S 6 \& A3.2]{AtiyahSegal04}\cite[A.5]{FHT07}
%and is thus an example of an equivariant non-abelian cohomology theory
%according to Def. \ref{EquivariantNonAbelianCohomology}:
%
%
%$$
%  \overset{
%    \mathclap{
%    \raisebox{3pt}{
%      \tiny
%      \color{darkblue}
%      \bf
%      \begin{tabular}{c}
%        equivariant
%        \\
%        topological K-theory
%      \end{tabular}
%    }
%    }
%  }{
%    K_G(X)
%  }
%  \;\;
%  \simeq
%  \;\;
%  H_G
%  \big(
%    X;
%    \EquivariantKU_0
%  \big)
%  \;\;
%  =
%  \;\;
%  H
%  \Big(
%    \raisebox{1pt}{\textesh}
%    \orbisingular
%    (X \!\sslash\! G )
%    ;
%    \EquivariantKU_0
%  \Big)
%  \,.
%$$
%
%\end{example}

\medskip

\noindent {\bf Equivariant Cohomotopy.}

\vspace{-1mm} 
\begin{example}[Equivariant non-abelian Cohomotopy {\cite[\S 8.4]{tomDieck79}\cite{Peschke94}\cite{Cruickshank03} \cite{SS19a}}]
  For $G \acts \; V$ a linear $G$-representation
  on a finite-dimensional real vector space $V$, the
  \emph{representation sphere} (e.g. \cite[Ex. 1.1.5]{Blu17})
  $$
    S^V
    \;:=\;
    V^{\mathrm{cpt}}
    \;\;
    \in
    \;\;
    \xymatrix{
      \GActionsOnTopSp
      \ar[rrr]^-{
        \scalebox{.7}{$
          \raisebox{1pt}{\textesh}
          \,
          \orbisingular
          \big(
            - \sslash G
          \big)
        $}
      }
      &&&
      \EquivariantHomotopyTypes
    }
  $$

  \noindent
  defines an equivariant homotopy type \eqref{TopologicalShapeAsLocalization}.
  This is the coefficient
  space for the equivariant non-abelian cohomology theory
  (Def. \ref{EquivariantNonAbelianCohomology})
  called (unstable) {\it equivariant Cohomotopy}
  in $\mathrm{RO}$-degree $V$:
  $$
    \overset{
      \mathclap{
      \raisebox{3pt}{
        \tiny
        \color{darkblue}
        \bf
        \def\arraystretch{1}
        \begin{tabular}{c}
          equivariant
          \\
          Cohomotopy
        \end{tabular}
      }
      }
    }{
      \pi^V_G(X)
    }
    \;\;
      :=
    \;\;
    H_G
    \Big(
      X;
      \,
      \raisebox{1pt}
      \orbisingular
      \big(
        S^V \!\sslash\! G
      \big)
    \!\Big)
    \;\;
      \simeq
    \;\;
    H
    \Big(\!\!
      \raisebox{1pt}
      \orbisingular
      \big(
        X \!\sslash\! G
      \big)
      ;
      \,
      \raisebox{1pt}
      \orbisingular
      \big(
        S^V \!\sslash\! G
      \big)
    \!\Big).
  $$
\end{example}

\medskip

\noindent {\bf Equivariant non-abelian cohomology operations.}

\begin{defn}[Equivariant non-abelian cohomology operations]
  \label{EquivariantNonabelianCohomologyOperation}
  For $\mathscr{A}, \, \mathscr{B} \,\in\, \EquivariantHomotopyTypes$
  (Def. \ref{EquivariantHomotopyTypes}),
  a {\it cohomology operation} from equivariant non-abelian
  $\mathscr{A}$-cohomology to $\mathscr{B}$-cohomology (Def. \ref{EquivariantNonAbelianCohomology})
  is a natural transformation
  \vspace{-1mm}
  $$
    \xymatrix{
      H
      (
        -;
        \,
        \mathscr{A}
      )
      \ar[r]^-{ \phi_\ast }
      &
      H
      (
        -;
        \,
        \mathscr{B}
      )
    }
  $$
  \vspace{-.5cm}

  \noindent
  of the corresponding equivariant non-abelian cohomology theories
  \eqref{CohomologyTheory}.
  By the Yoneda lemma, such operations are induced by post-composition
  with morphisms between equivariant coefficient spaces:
  \vspace{-1mm} 
  \begin{equation}
    \label{MorphismInducingCohomologyOperation}
    \xymatrix{ \mathscr{A} \ar[r]^-{\phi} & \mathscr{B} }
    \;\;\;\;\;
    \in
    \;
    \EquivariantHomotopyTypes
    \,.
  \end{equation}

\end{defn}

%%%%%%%%%%%%%%%%%%%%%%%%%%%%%%%%%%%%%%%%%%%%%%%%%%%%%%%%%%%
\subsection{Equivariant twisted non-abelian cohomology theories}
  \label{TwistedEquivariantNonAbelianCohomologyTheories}
%%%%%%%%%%%%%%%%%%%%%%%%%%%%%%%%%%%%%%%%%%%%%%%%%%%%%%%%%%%

We introduce equivariant twisted non-abelian cohomology,
in direct generalization of \cite[\S 2.2]{FSS23-Char},
and introduce the main example of interest here
(Def. \ref{EquivariantTwistorialCohomotopyTheory} below).

\medskip

\noindent {\bf Equivariant $\infty$-Actions.}

\vspace{-1mm} 
\begin{remark}[Equivariant $\infty$-actions]
  \label{EquivariantInfinityAction}
  {\bf (i)}
  In equivariant generalization of
  Prop. \ref{ClassifyingSpaceLoopSpaceConstruction}
  (and as a special case of \cite[Thm. 2.19]{NSS12a}\cite[Thm. 3.30, Cor. 3.34]{NSS12b}),
  every equivariantly pointed and equivariantly connected equivariant
  homotopy type
  (Def. \ref{EquivariantConnectedHomotopyTypes})
  is, equivalently, the equivariant classifying space $B \mathscr{G}$
  of an {\it equivariant $\infty$-group}

  \vspace{-4mm}
  $$
    \mathscr{G}
    \;\;
      \in
    \;\;
    \orbisingularGLarge\mathrm{EquivariantGroups}_{\infty}
    \;:=\;
    \mathrm{Ho}
    \Big(
      \mathrm{Functors}
      \big(
        G\mathrm{Orb}^{\mathrm{op}}
        \,,\,
        \mathrm{SmplGrp}
      \big)_{\mathrm{proj}}
    \Big)
    \,.
  $$

  \noindent
  {\bf (ii)}   In equivariant generalization of
  Prop. \ref{InfinityActionsEquivalentToFibrationsOverClassifyingSpace}
  (and as a special case of \cite[\S 4]{NSS12a}\cite[\S 2.2]{SS20b}),
  $\infty$-actions of such equivariant $\infty$-groups on
  equivariant homotopy types $\mathscr{A}$ are, equivalently,
  homotopy fibrations
  of equivariant homotopy types
  over $B\mathscr{G}$ with homotopy fiber
  $\mathscr{A}$, hence a system of
  non-equivariant homotopy fibration \eqref{HomotopyFibrationsCorrespondingToInfinityActions}
  parametrized by the $G/H \,\in\, G\mathrm{Orb}$ (Def. \ref{OrdinaryOrbitCategory}),
  denoted as follows \footnote{
    Here and in the following we indicate the ambient category
    of a given diagram. The notation ``$\mathrm{Diagram} \in \mathrm{Category}$''
    means that each vertex of the diagram is an object in that
    category, and each arrow is a morphism in that category.
  }
  \begin{equation}
    \label{EquivariantUniversalFibrationOverEquivariantClassifyingSpace}
    \begin{array}{lcl}
    &
    \raisebox{20pt}{
    \xymatrix@R=30pt@C=5.5em{
      \mathscr{A}
      \ar[rr]^-{
        \scalebox{.7}{$
        \mathrm{hofib}
        (
          \rho_{\mathscr{A}}
        )
        $}
      }
      &
      \ar@{}[d]|-{
        \mathclap{
        \scalebox{.65}{
          \color{darkblue}
          \bf
          \def\arraystretch{1}
          \begin{tabular}{c}
            equivariant homotopy fibration
            \\
            associated to
            $\infty$-action of $\mathscr{G}$ on $\mathscr{A}$
          \end{tabular}
        }
        \;\;\;\;\;\;\;\;
        }
      }
      &
      \mathscr{A}\!\sslash\!\!\mathscr{G}
      \ar[d]^-{
        \scalebox{.7}{$
        \rho_{\mathscr{A}}
        $}
      }
      \\
      &&
      B \mathscr{G}
    }
    }
    \\
    G/H
    \;\;
    \longmapsto
    &
    \raisebox{20pt}{
    \xymatrix@R=30pt@C=4.2em{
      \mathscr{A}(G/H)
      \ar[rr]^-{
        \scalebox{.7}{$
        \mathrm{hofib}
        (
          \rho_{\mathscr{A}}(G/H)
        )
        $}
      }
      &
      \ar@{}[d]|-{
        \mathclap{
        \mbox{
          \tiny
          \color{darkblue}
          \bf
          \def\arraystretch{1}
          \begin{tabular}{c}
            homotopy fibration
            \\
            associated to
            $\infty$-action 
            \\
            of $\mathscr{G}(G/H)$ on $\mathscr{A}(G/H)$
          \end{tabular}
        }
        \;\;\;\;\;\;\;\;
        }
      }
      &
      \mathscr{A}(G/H)
      \!  \sslash \!\!
      \mathscr{G}(G/H)
      \ar[d]^-{
        \scalebox{.7}{$
        \rho_{\mathscr{A}}(G/H)
        $}
      }
      \\
      &&
      B \mathscr{G}(G/H)
    }
    }
    \end{array}
  \end{equation}
\end{remark}

A key source of equivariant $\infty$-actions are
equivariant parametrized homotopy types, in the following
sense:

\begin{example}[Equivariant parametrized homotopy types]
  \label{EquivariantParametrizedHomotopyTypes}
  $\,$

  \noindent
  Consider
  $T \,\in\, \mathrm{CompactTopGrp}$
  \eqref{CompactTopologicalGroup},
  $G \in$ $\mathrm{FiniteGroups}$
  \eqref{EquivarianceGroup},
  \newline
  and
  $
    X \;\in\;
    \TGActionsOnTopSp
  $
  \eqref{CategoryOfTGActionsOnTopSp}.

  \noindent
  {\bf (i)}
  Since the two group actions separately commute with each
  other,
  we may consider forming the combined
  \begin{itemize}[
    leftmargin=.9cm
  ]
  \item[{\bf (a)}] 
  proper equivariant shape
  (Def. \ref{EquivariantShape}) with respect to the
  $G$-action;

  \item[{\bf (b)}] ordinary shape \eqref{TopologicalShapeAsLocalization}
  of the homotopy quotient
  (Borel construction, Ex. \ref{HomotopyTypeOfBorelConstruction})
  with respect to the $T$-action:
  \begin{equation}
    \label{EquivariantParametrizedSpaceHomotopyTypeFromGTAction}
    \EquivariantHomotopyTypes
    \;\;
      \ni
    \;\;
    \Big(\!
      \big(
        \orbisingular
        (
          X \!\sslash\! G
        )
     \! \big)
      \!\sslash\!
      T
    \Big)
    \;;\:\;\;
    G/H
    \;\;
    \longmapsto
    \;\;
    \raisebox{1pt}{\textesh}
    \,
    \big(
      X^H \!\sslash\! T
    \big).
  \end{equation}
  \end{itemize}

  \noindent
  This is the $G$-equivariant homotopy type
  (Def. \ref{EquivariantHomotopyTypes})
  given on $G/H \,\in\, G\mathrm{Orb}$ (Def. \ref{OrdinaryOrbitCategory})
  by the Borel homotopy quotient construction
  (Example \ref{HomotopyTypeOfBorelConstruction})
  of the $T$-action on the $G \,\supset\, H$-fixed locus
  (Example \ref{SystemsOfFixedPointSpaces}).

  \noindent
  {\bf (ii)}
  With the classifying space $B T$ regarded as
  a smooth $G$-equivariant homotopy type
  (i.e., with trivial $G$-action, Example \ref{SmoothSingularHomotopyTypes})
  the $G$-equivariant $T$-parametrized space
  \eqref{EquivariantParametrizedSpaceHomotopyTypeFromGTAction}
  sits in an equivariant fibration \eqref{EquivariantUniversalFibrationOverEquivariantClassifyingSpace}
  over $B T$ with homotopy fiber the $G$-equivariant shape of
  $X$ (Def. \ref{EquivariantShape}):
  $$
    \hspace{-2.7cm}
    \begin{array}{ll}
    &
    \raisebox{25pt}{
    \xymatrix@R=15pt@C=21pt{
      \raisebox{1pt}{\textesh}
      \,
      \orbisingular
      \big(
        X \!\sslash\! G
      \big)
      \ar[rrr]^-{
        \mathrm{hofib}
        \left(
          \rho_{
            \scalebox{.6}{$
              \orbisingular (X \sslash G)
            $}
          }
        \right)
      }
      &&&
      \raisebox{1pt}{\textesh}
      \,
      \Big(\!\!
      \big(
        \orbisingular
        (
          X \!\sslash\! G
        )
      \big)
      \!\sslash\!
      T
      \Big)
      \ar[d]^-{
        \rho_{
          \scalebox{.5}{$
            \orbisingular (X \sslash G)
          $}
        }
      }
      \\
      &&&
      B T
    }
    }
    \mathrlap{
      \;\;
        \in
      \;\;
      \EquivariantHomotopyTypes
    }
    \\
    G/H
    \; \longmapsto
    &
    \raisebox{25pt}{
    \xymatrix@C=31pt@R=1.5em{
      \raisebox{1pt}{\textesh}
      \,
      X^H
      \ar[rr]^-{
        \mathrm{hofib}
        (
          \rho_{
            \scalebox{.5}{$
              X^H
            $}
          }
        )
      }
      &&
      \raisebox{1pt}{\textesh}
      \,
      \big(
        X^H \!\sslash\! T
      \big)
      \ar[d]^-{
        \rho_{
          \scalebox{.5}{$
            X^H
          $}
        }
      }
      \\
      &&
      B T
    }
    }
    \mathrlap{
      \;\;
        \in
      \;\;
      \;\;\;\;\;
      \HomotopyTypes
    }
    \end{array}
  $$

  \noindent We may refer to these objects as
  {\it proper $G$-equivariant and Borel $T$-equivariant homotopy types },
  but for brevity and due to their above fibration over
  $B T$, we will say
  {\it $G$-equivariant $T$-parametrized homotopy types}.

\end{example}

\begin{example}[$\Grefl$-equivariant $\SpLR$-parametrized twistor fibration]
 \label{GhetEquivariantParametrizedTwistorSpace}

Recall the $\Grefl$-equivariant twistor fibration \eqref{TwistorFibration}
from Example \ref{GhetEquivariantTwistorSpace}.
Since the $\mathrm{Sp}(2)$-subgroups $\Grefl$ \eqref{Ghet}
and $\SpLR$ \eqref{TheSp1Subgroup}
commute with each other, the quotient by the action of $\SpLR$
of the
Cartesian product of the twistor fibration
\eqref{TwistorFibration} with (the identity map on)
the total space $E \mathrm{Sp}(2)$
of the universal principal  $\mathrm{Sp}(2)$-bundle
still has a residual equivariance under $\Grefl$:
\vspace{-.2cm}
\begin{equation}
  \label{TwistorFibrationTimesESp2ModSpLR}
  \begin{array}{c}
  \raisebox{30pt}{
  \xymatrix@C=35pt@R=1.5em{
    \frac{
      S^2 \times E \mathrm{Sp}(2)
    }{
      \SpLR
    }
    \ar[d]
    \ar[rr]^-{
      \;
      \frac{
        \mathclap{\phantom{\vert}}
        \mathrm{fib}(t_{\scalebox{.45}{$\mathbb{H}$}})
        \times
        \mathrm{id}
      }{
        \mathclap{\phantom{\vert}}
        \SpLR
      }
      \;
    }
   &&
    \frac{
      \mathclap{\phantom{\vert}}
      \mathbb{C}P^3 \times E \mathrm{Sp}(2)
    }{
      \mathclap{\phantom{\vert}}
      \SpLR
    }
    \ar@(ul,ur)@<-0pt>^-{ \Grefl\!\! }
    \ar[rr]^-{
      \;
      \frac{
        \mathclap{\phantom{\vert}}
        \overset{
        \mathclap{
        \raisebox{-3pt}{
          $
          \!\!\!\!
          \mathrlap{
          \rotatebox[origin=l]{23}{
            \tiny
            \color{greenii}
            \bf
            twistor fibration
          }
          }$
        }
        }
        }{
          t_{\scalebox{.45}{$\mathbb{H}$}}
        }
        \times
        \mathrm{id}
      }{
        \mathclap{\phantom{\vert}}
        \SpLR
      }
      \;
    }
    \ar[d]^<<<<<{
    }
    &&
    \frac{
      S^4 \times E \mathrm{Sp}(2)
    }{
      \SpLR
    }
    \ar@(ul,ur)^-{ \Grefl\!\! }
    \ar[d]
    \\
    \frac{
      E \mathrm{Sp}(2)
    }{
      \SpLR
    }
    \ar@{=}[rr]
    &&
    \frac{
      E \mathrm{Sp}(2)
    }{
      \SpLR
    }
    \ar@{=}[rr]
    &&
    \frac{
      E \mathrm{Sp}(2)
    }
    {
      \SpLR
    }
  }
  }
  \\
  \in
  \;
  \Grefl\mathrm{Actions}
  \big(
    \mathrm{TopSp}
  \big)^{\scalebox{.6}{$\left/\frac{E \mathrm{Sp}(2)}{\SpLR}\right.$}}
  \end{array}
\end{equation}

\noindent
Hence, using Example \ref{GhetEquivariantTwistorSpace}
and identifying the Borel construction of homotopy quotients
(e.g. \cite[Prop. 3.73]{NSS12b}, here for subgroups $H \subset G$):
\begin{equation}
  \label{BorelConstruction}
  \hspace{10mm}
  \underset{
    \mathclap{
    \raisebox{-3pt}{
      \tiny
      \color{darkblue}
      \bf
      \begin{tabular}{c}
        Borel
        construction
      \end{tabular}
    }
    }
  }{
    \frac{X \times E G}{H}
  }
  \;\;\;\;
    \simeq
  \;\;\;\;
  \underset{
    \mathclap{
    \raisebox{-3pt}{
      \tiny
      \color{darkblue}
      \bf
      \def\arraystretch{1}
      \begin{tabular}{c}
        homotopy
        \\
        quotient
      \end{tabular}
    }
    }
  }{
    X \!\sslash\! H
  }
  \;\;\;\;
  \in
  \;
  \HomotopyTypes
  \,,
\end{equation}

\noindent
the $\Grefl$-equivariant homotopy type (Def. \ref{EquivariantHomotopyTypes})
of the middle vertical
morphism in \eqref{TwistorFibrationTimesESp2ModSpLR}
exhibits a
$\Grefl$-equivariant $\SpLR$-parametrized homotopy type
(in the sense of Example \ref{EquivariantParametrizedHomotopyTypes})
of this form:
\begin{equation}
  \label{GHetEquivariantSpLRParametrizedTwistorSpace}
  \hspace{-12mm}
  \adjustbox{scale=0.88}{$
  \raisebox{49pt}{
  \xymatrix@R=1.1pt@C=-125pt{
    &
    &&
    &&
    \raisebox{1pt}{\textesh}\,
    \mathbb{C}P^3 \!\sslash\! \SpLR
    \ar@(ul,ur)@<+20pt>|-{\, \Grefl }
    \ar@{<-^{)}}[dddd]_-{
      \scalebox{.6}{$
        \mathrm{fib}(t_{\mathbb{H}})
        \sslash \SpLR
      $}
     }
    \ar[dddr]^-{
       \rho_{\scalebox{.53}{$\raisebox{1pt}{\textesh}\,\mathbb{C}P^3$}}
    }
    \\
    \raisebox{1pt}{\textesh}
    \big(
    \orbisingular
    \big(
      \mathbb{C}P^3
      \!\sslash\!
      \Grefl
    \big)
    \big)
    \!\sslash\!
    \SpLR
    \ar@{}[rr]|<<<<<{:}
    \ar[dddr]^<<<<<<<<<<<{
     \rho_{
           \scalebox{.65}{$
                \scalebox{.85}{\raisebox{1pt}{\textesh}}
                \orbisingular
                \big(
                  \mathbb{C}P^3 \sslash \Grefl
                \big)
                $}
              }
     }_<<<<{
       \mbox{\tiny
          \bf
          \color{darkblue}
               \begin{tabular}{c}
          $\Grefl$-equivariant \&
          \\
          $\SpLR$-parametrized
          \\
          twistor space
          \\
          $\phantom{A}$
                  \end{tabular}
       }
    }
    &
    &&
    \ZTwo/1
    \ar@(ul,ur)|-{\; \ZTwo\,}
    \ar[dddd]
    \ar@{}[rr]|{\hspace{-1cm}
      \longmapsto
    }
    &&
    \\
    &
    &&
    &&
    \\
    &
    &
    &
    &&
    &
    \raisebox{1pt}{\textesh}
    \,
    B \SpLR
    \ar@{=}[dddd]
    \\
    &
    \raisebox{1pt}{\textesh}
    \left(
    \orbisingular
    \big(
      \ast
      \!\sslash
      \Grefl
    \big)
    \right)
    \!\sslash\!
    \SpLR
    \ar@{}[rr]|<{:}
    &&
    &&
    \mathclap{\phantom{\vert^{\vert^{\vert}}}}
    \raisebox{1pt}{\textesh}
    \,
    S^2
    \!\sslash\!
    \SpLR
    \ar[dddr]_-{
          \rho_{\scalebox{.53}{$\raisebox{1pt}{\textesh}\,S^2$}}
    }
    \\
    &
    &&
    \ZTwo/\ZTwo
    \ar@{}[rr]|-{\hspace{-1.5cm}
      \longmapsto
    }
    &&
    \\
    \\
    &
    &&
    &&
    &
    \raisebox{1pt}{\textesh}
    \,
    B \SpLR\,.
    \\
    {\phantom{AAAAAAAAAAAAAAAAAAAAAA}}
    &
    {\phantom{AAAAAAAAAAAAAAAAAAAA}}
    &
    {\phantom{AAAAAAAAAAAAAAAAAAAA}}
    &
    {\phantom{AAAAAAAAAAAAAAAAAAAA}}
    &
    {\phantom{AAAAAAAAAAAAAAAAAAAA}}
    &
    {\phantom{AAAAAAAAAAAAAAAAAAAA}}
    &
    {\phantom{AAAAAAAAAAAAAAAAAAAA}}
  }
  }
$}
  \hspace{-1cm}
\end{equation}

\noindent
The analogous statement holds for the
vertical morphism on the right of \eqref{TwistorFibrationTimesESp2ModSpLR},
so that the full square on the right of
\eqref{TwistorFibrationTimesESp2ModSpLR}
exhibits a morphism in
$\Grefl$-equivariant $\SpLR$-parametrized homotopy types
(Example \ref{EquivariantParametrizedHomotopyTypes})
of this form:
\begin{equation}
\label{EquivariantParametrizedTwistorFibration}
\hspace{-8mm} 
  \begin{array}{c}
  \raisebox{10pt}{
  \xymatrix@R=8pt@C=28pt{
    \overset{
      \mathclap{
      \raisebox{3pt}{
        \tiny
        \color{darkblue}
        \bf
        \def\arraystretch{1}
        \begin{tabular}{c}
          $\Grefl$-equivariant
          \\
          $\SpLR$-parametrized
          \\
          twistor space
        \end{tabular}
      }
      }
    }{
    \raisebox{1pt}{\textesh}
    \big(
    \orbisingular
    \big(
      \mathbb{C}P^3 \!\sslash\! \Grefl
    \big)
    \big)
    \sslash
    \SpLR
    }
    \ar[rr]^-{
        \mathclap{
        \raisebox{0pt}{
          \tiny
          \color{greenii}
          \bf
          \def\arraystretch{1}
          \begin{tabular}{c}
            $\Grefl$-equivariant
            \\
            $\SpLR$-parametrized
            \\
            twistor fibration
          \end{tabular}
        }
        }}_-{
      \scalebox{.7}{$
        \raisebox{1pt}{\textesh}
        \,
        \orbisingular
        \big(
          t_{\mathbb{H}}
          \!\sslash\!
          \Grefl
        \big)
        \sslash
        \SpLR
      $}
      }
       \ar[dr]
    &&
    \overset{
      \raisebox{3pt}{
        \tiny
        \color{darkblue}
        \bf
        \def\arraystretch{1}
        \begin{tabular}{c}
          $\Grefl$-equivariant
          \\
          $\SpLR$-parametrized
          \\
          4-sphere
        \end{tabular}
      }
    }{
    \raisebox{1pt}{\textesh}
    \big(
    \orbisingular
    \big(
      S^4 \!\sslash\! \Grefl
    \big)
    \big)
    \sslash
    \SpLR
    }
    \ar[dl]
    \\
    &
    B \SpLR
  }
  }
  \\
  \in
  \mathrm{Ho}
  \Big(
    \ZTwoEquivariantSSet^{\scalebox{.6}{$/\raisebox{1pt}{\textesh}B \SpLR$}}_{\mathrm{proj}}
  \Big)
  \,,
  \end{array}
\end{equation}

\vspace{1mm} 
\noindent
where $B \SpLR \;:=\; \mathrm{Smth} \, \raisebox{0pt}{\textesh} \, B \SpLR $
(Example \ref{SmoothSingularHomotopyTypes}).

\end{example}

\medskip

\noindent{\bf Twisted equivariant non-abelian cohomology.}

In twisted generalization of Def. \ref{EquivariantNonAbelianCohomology}
and in equivariant generalization of \cite[\S 2.2]{FSS23-Char},
we set:

\begin{defn}[Twisted equivariant non-abelian cohomology]
  \label{EquivariantTwistedNonAbelianCohomology}
  Let
  \begin{equation}
    \label{EquivariantLocalCoefficientBundle}
    \raisebox{20pt}{
    \xymatrix@R=1.5em{
      \mathscr{A}
      \ar[rr]^-{
        \scalebox{.7}{$
        \mathrm{hofib}
        (
          \rho_{\mathscr{A}}
        )
        $}
      }
      &
      \ar@{}[d]|-{
        \mbox{
          \tiny
          \color{darkblue}
          \bf
          \def\arraystretch{1}
          \begin{tabular}{c}
            equivariant
            \\
            local coefficient
            \\
            bundle
          \end{tabular}
        }
      }
      &
      \mathscr{A}\!\sslash\!\!\mathscr{G}
      \ar[d]^-{
        \scalebox{.7}{$
        \rho_{\mathscr{A}}
        $}
      }
      \\
      &&
      B \mathscr{G}
    }
    }
    \;\;\;\;\;\;\;
    \in
    \EquivariantHomotopyTypes
  \end{equation}

  \noindent
  be an homotopy fibration as in Remark \ref{EquivariantInfinityAction},
  to be regarded now as an
  {\it equivariant local coefficient bundle},
  and let $\mathscr{X} \,\in\, \EquivariantHomotopyTypes$
  (Def. \ref{EquivariantHomotopyTypes})
  equipped with an {\it equivariant twist}
  \begin{equation}
    \label{ATwist}
    [
      \tau
    ]
    \;\;
    \in
    \;\;
    H\big(
      \mathscr{X}\!;
      \,
      B \mathscr{G}
    \big)
  \end{equation}

  \noindent
  in
  equivariant non-abelian cohomology (Def. \ref{EquivariantNonAbelianCohomology})
  with coefficients in $B \mathscr{G}$. We say that
  the {\it $\tau$-twisted equivariant non-abelian cohomology}
  of $\mathscr{X}$ with coefficients in $\mathscr{A}$ is
  the hom-set from $\tau$ to $\rho_A$ in the
  homotopy category of the
  slice model structure
  (see \cite[Ex. A.10]{FSS23-Char})
  over  $B \mathscr{G}$
  of the projective model structure on
  equivariant simplicial sets (Prop. \ref{ModelCategoryOnEquivariantSSet}):
  $$
    \overset{
      \mathclap{
      \raisebox{3pt}{
        \tiny
        \color{darkblue}
        \bf
        \def\arraystretch{1}
        \begin{tabular}{c}
          twisted equivariant
          \\
          non-abelian cohomology
        \end{tabular}
      }
      }
    }{
    H^\tau
    \big(
      \mathscr{X}
      \!
      ;
      \,
      \mathscr{A}
    \big)
    }
    \;\;
      :=
    \;\;
    \mathrm{Ho}
    \Big(
      \EquivariantSSetProj^{\scalebox{.7}{$/B\mathscr{G}$}}
    \Big)
    \big(
      \tau
      \,,\,
      \rho_{\mathscr{A}}
    \big)
    \,.
  $$

\end{defn}

\medskip

\noindent {\bf Twisted equivariant ordinary cohomology.}

\begin{example}[Twisted Bredon cohomology]
  Let $G \acts \; X \in \GActionsOnTopSp$
  (Def. \ref{GActionsOnTopSp})
  with a base point $G \acts \; \ast  \overset{x}{\longrightarrow} G \acts \; X$, let
  $\underline{A} \,\in\, \EquivariantAbelianGroups$ (Def. \ref{EquivariantGroups}),
  and let
  $$
    r \;:\;
    \xymatrix{
      \underline{\pi}_1(X)
      \times
      \underline{A}
      \ar[r]
      &
      \underline{A}
    }
  $$

\noindent
  be an action of the equivariant fundamental group
  (Def. \ref{EquivariantHomotopyGroups}) of $X$ on
  $\underline{A}$. For $n \in \mathbb{N}$,
  there is an equivariant local
  coefficient bundle \eqref{EquivariantLocalCoefficientBundle}
 % \vspace{-1mm}
  $$
    \xymatrix@R=1.5em{
      \mathscr{K}
      (
        \underline{A},
        n
      )
      \ar[rr]
      &
      \ar@{}[d]|-{
       \mathclap{
        \mbox{
          \tiny
          \color{darkblue}
          \bf
          \def\arraystretch{1}
          \begin{tabular}{c}
            equivariant ordinary
            \\
            local coefficients
          \end{tabular}
        }
        }
      }
      &
      \mathscr{K}
      (
        \underline{A},
        n
      )
      \!\sslash\!
      \underline{\pi}_1(X)
      \ar[d]^-{ \rho }
      \\
      &&
      B \underline{\pi}_1(X)
    }
  $$

  \noindent
  with typical fiber the  equivariant Eilenberg-MacLane space
  \eqref{EquivariantEilenbergMacLaneSpace},
  such that the twisted equivariant non-abelian cohomology
  with local coefficients in $\rho$
  coincides (by \cite[Cor. 3.6]{Golasinski97a}\cite[Thm. 5.10]{MukherjeeSen10})
  with traditional
  $r$-twisted Bredon cohomology in degree $n$
  (\cite[Def. 2.1]{MoerdijkSvensson93}\cite[Def. 3.8]{MukMuk96}\cite{MukherjeePandey02}):
  $$
    \overset{
      \mathclap{
      \raisebox{3pt}{
        \tiny
        \color{darkblue}
        \bf
        \def\arraystretch{1}
        \begin{tabular}{c}
          twisted
          \\
          Bredon cohomology
        \end{tabular}
      }
      }
    }{
    H_G^{n + r}
    \big(
      X;
      \,
      \underline{A}
    \big)
    }
    \;\;
    \simeq
    \;\;
    H^\tau
    \big(
      X;
      \,
      \mathscr{K}(\underline{A}, n)
    \big)
    \,.
  $$

\end{example}

\medskip

\noindent {\bf Equivariant tangential structure.}
In equivariant generalization of \cite[Example 2.33]{FSS23-Char}, we have:

\begin{defn}[Equivariant tangential structure]
  \label{EquivariantTangentialStructures}
  Let $G \acts  \; X \,\in\, \GActionsOnSmoothManifolds$
  (Def. \ref{ProperGActionsOnSmoothManifolds})
  of dimension $n :=\mathrm{dim}(X)$,
  and let $\mathcal{G} \overset{\phi}{\longrightarrow} B \mathrm{GL}(n)$
  be a topological group homomorphism.
  An {\it equivariant tangential $(\mathcal{G},\phi)$-structure}
  (or just $\mathcal{G}$-structure, for short) on
  the orbifold $\orbisingular\big( X \!\sslash\! G \big)$
  (Def. \ref{GOrbifolds}) is a
  class in the equivariant twisted non-abelian
  cohomology (Def. \ref{EquivariantTwistedNonAbelianCohomology})
  of the equivariant shape (Def. \ref{EquivariantShape})
  of the orbifold with equivariant local coefficients
  \eqref{EquivariantLocalCoefficientBundle}
  in
  $$
    \xymatrix@R=1em{
      \mathrm{GL}(n)\sslash \mathcal{G}
      \ar[rr]
      &&
      B \,\mathcal{G}
      \ar[d]^-{ B \phi }
      \\
      && B \mathrm{GL}(n)
    }
  $$

  \noindent
  and with twist given by the classifying map $\tau_{\,\mathrm{Fr}}$
  of the
  frame bundle:
  $$
    (\mathcal{G},\phi)\mathrm{Structures}
    \big(
      \orbisingular
      \big(
        X \!\sslash\! G
      \big)
    \big)
    \;\;
    :=
    \;\;
    H^{\tau_{\, \mathrm{Fr}}}
    \Big(\!\!
      \orbisingular
      \big(
        X \!\sslash\! G
      \big)
      ;
      \,
      \mathrm{GL}(n) \!\sslash\! \mathcal{G}
    \Big).
  $$

\end{defn}

\medskip

\noindent {\bf Equivariant twistorial Cohomotopy.}
In equivariant generalization of \cite[Ex. 2.44]{FSS23-Char}
we have:

\begin{defn}[Equivariant twistorial Cohomotopy theory]
  \label{EquivariantTwistorialCohomotopyTheory}
  $\,$

  \noindent
  Let $X^8 \,\in\, \ZTwoActionsOnTopSp$
  (Def. \ref{GActionsOnTopSp})
  be a smooth spin 8-manifold equipped with tangential structure
  (see \cite[Ex. 2.33]{FSS19b})
  for the subgroup
  $\SpLR \subset \mathrm{Sp(2)} \subset \mathrm{Spin}(8)$
  (where the first inclusion is \eqref{Ghet}
  and the second is again given by left quaternion
  multiplication, e.g. \cite[Ex. 2.12]{FSS19b})
  $$
    [
      \tau
    ]
    \;\in\;
    H_{\Grefl}
    \big(
      X^8;
      \,
      B \SpLR
    \big)
    \,.
  $$

  \noindent  
  We say that:
  \begin{itemize}[
    leftmargin=.9cm
  ]
  \item[{\bf (a)}]  its {\it $\Grefl$-equivariant twistorial Cohomotopy}
  $\mathcal{T}^\tau_{\Grefl}(-)$
  is the $\tau$-twisted equivariant non-abelian cohomology theory
  (Def. \ref{EquivariantTwistedNonAbelianCohomology})
  with local coefficients in the
  $\Grefl$-equivariant $\SpLR$-parametrized twistor space;

  \item[{\bf (b)}]  its {\it $\Grefl$-equivariant J-twisted Cohomotopy}
  $\pi^\tau_{\Grefl}(-)$
  is the $\tau$-twisted equivariant non-abelian cohomology theory
  (Def. \ref{EquivariantTwistedNonAbelianCohomology})
  with local coefficients in the
  $\Grefl$-equivariant $\SpLR$-parametrized 4-sphere;

  \item[{\bf (c)}] the twisted equivariant cohomology operation
  $\mathcal{T}^\tau_{\Grefl}(-) \longrightarrow  \pi^\tau_{\Grefl}(-)$
  is that induced by the $\Grefl$-equivariant $\SpLR$-parametrized
  twistor fibration;

  \end{itemize}

  \noindent
  all as induced by the (morphism of)
  local coefficient bundles \eqref{EquivariantParametrizedTwistorFibration}
  in Example \ref{GhetEquivariantParametrizedTwistorSpace}:
  
  \vspace{-2mm}
  \begin{equation}
    \label{EquivariantTwistorialCohomotopyAndJTwistedCohomotopy}
    \hspace{-4mm} 
    \xymatrix@C=30pt@R=-5pt{
    H_{\Grefl}^\tau
    \Big(
      X;
      \,
      \raisebox{1pt}{\textesh}
      \orbisingular
      \big(
        \mathbb{C}P^3 \!\sslash\! \Grefl
      \big)
    \!\Big)
    \ar[rr]^-{
        \raisebox{3pt}{\bf
          \tiny
          \color{greenii}
          \def\arraystretch{1}
          \begin{tabular}{c}
            push-forward along
            \\
            equivariant parametrized
            \\
            twistor fibration
          \end{tabular}
        }
        }_-{
      \scalebox{.7}{$
        \Big(
        \raisebox{1pt}{\textesh}
        \,
        \orbisingular
        \big(
          t_{\mathbb{H}}
          \!\sslash\!
          \Grefl
        \big)
        \sslash
        \SpLR
        \Big)_\ast
      $}
      }
    &{\phantom{AAAA}}&
    H_{\Grefl}^\tau
    \Big(
      X;
      \,
      \raisebox{1pt}{\textesh}
      \orbisingular
      \big(
        S^4 \!\sslash\! \Grefl
      \big)
    \!\Big)
    \\
   \rotatebox[origin=c]{90}{$:=$}
   &&
   \rotatebox[origin=c]{90}{$:=$}
   \\
    \underset{
      \mathclap{
      \raisebox{3pt}{
        \tiny
        \color{darkblue}
        \bf
        \def\arraystretch{1}
        \begin{tabular}{c}
          equivariant
          \\
          twistorial Cohomotopy
        \end{tabular}
      }
      }
    }{
      \mathcal{T}^{\tau}_{\Grefl}(X)
    }
   &&
    \;
    \underset{
      \mathclap{
      \raisebox{3pt}{
        \tiny
        \color{darkblue}
        \bf
        \def\arraystretch{1}
        \begin{tabular}{c}
          equivariant
          \\
          J-twisted Cohomotopy
        \end{tabular}
      }
      }
    }{
      \pi^\tau_{\Grefl}\big(X\big)
    }.
    }
  \end{equation}

\end{defn}

%%%%%%%%%%%%%%%%%%%%%%%%%%%%%%%%%%%%%%%%%%%%%%%
\section{Equivariant non-abelian de Rham cohomology}
\label{EquivariantNonAbelianDeRhamCohomology}
%%%%%%%%%%%%%%%%%%%%%%%%%%%%%%%%%%%%%%%%%%%%%%%

We had shown in \cite[\S 3]{FSS23-Char}
how the fundamental theorem of dgc-algebraic rational homotopy theory
(\cite[\S 9.4, \S 11.2]{BousfieldGugenheim76}),
augmented by differential-geometric
observations \cite[\S 9]{GriffithMorgan13},
provides a non-abelian de Rham theorem for
$L_\infty$-algebra valued differential forms,
which serve as the recipient of non-abelian character maps.

\medskip
The equivariant generalization of this fundamental theorem
had been obtained in \cite{Scull08} (following \cite{Tri82})
without having found much attention yet.
Here we review, in streamlined form and highlighting examples and applications,
the underlying theory of injective equivariant
dgc-algebras/$L_\infty$-algebras in \cref{EquivariantDgcAlgebrasAndEquivariantLInfinityAlgebras}
and how these serve to model equivariant rational homotopy theory in
\cref{EquivariantRationalHomotopyTheory}.
Then
we use this in \cref{EquivariantNonAbelianDeRhamTheorem}
to prove the equivariant non-abelian de Rham theorem
(Prop. \ref{EquivariantNonabelianDeRhamTheorem})
including its twisted version (Prop. \ref{EquivariantTwistedNonabelianDeRhamTheorem});
which, in turn, we use in
\cref{TheEquivariantTwistedNonAbelianCharacterMap} to
construct the equivariant non-abelian character map
(Def. \ref{EquivariantNonAbelianCharacterMap})
and its twisted version
(Def. \ref{TwistedEquivariantNonabelianCharacterMap}).

%%%%%%%%%%%%%%%%%%%%%%%%%%%%%%%%%%%%%%%%%%%%%%%%%%%%%%%%%%%%%%%%%%%%%%%%%%
\subsection{Equivariant dgc-algebras and equivariant $L_\infty$-algebras}
\label{EquivariantDgcAlgebrasAndEquivariantLInfinityAlgebras}
%%%%%%%%%%%%%%%%%%%%%%%%%%%%%%%%%%%%%%%%%%%%%%%%%%%%%%%%%%%%%%%%%%%%%%%%%%

We discuss here the generalization of the homotopy theory of connective dgc-algebras and of connective $L_\infty$-algebras
(following \cite[\S 3.1]{FSS23-Char})
to $G$-equivariant homotopy theory, for any finite
equivariance group $G$ \eqref{EquivarianceGroup}.
While the homotopy theory of equivariant connective dgc-algebras
has been developed in \cite{Tri82}\cite{Scull02} \cite{Scull08},
previously little to no examples or applications have been
worked out. Here we develop equivariantized twistor space
as a running example (culminating in Prop. \ref{Z2EquivariantRelativeMinimalModelOfSpin3ParametrizedTwistorSpace} below).

\medskip
While the general form of the homotopy theory of
plain dgc-algebras generalizes to equivariant dgc-algebras,
the crucial new aspect is that equivariantly not every
connective cochain complex, and hence not every
connective dgc-algebra, is fibrant.
The fibrant equivariant cochain complexes must be
degreewise injective, which is now a non-trivial condition
(Prop. \ref{InjectiveEnvelopeOfEquivariantDualVectorSpaces} below).

\smallskip 
The key effect on the theory is that
equivariant minimal Sullivan models (Def. \ref{MinimalEquivariantdgcAlgebras})
-- which still exist and still have
the expected general properties -- are no longer given just by iterative
adjoining of (equivariant systems of) generators, but by adjoining of
injective resolutions (Example \ref{InjectiveResolutionOfEquivariantDualVectorSpaces})
of systems of generators. This has interesting effects,
as shown in Example \ref{CheckingTwistorSpaceModZ2MinimalModel},
which is at the heart of the proof of
Prop. \ref{Z2EquivariantRelativeMinimalModelOfSpin3ParametrizedTwistorSpace}
and thus of Theorem \ref{FluxQuantizationInEquivariantTwistorialCohomotopy}.

\medskip

\noindent {\bf Plain homological algebra.}
For plain (i.e., non-equivariant) dgc-algebra,
we follow the conventions of
\cite[\S 3.1]{FSS23-Char}. In particular, we make use of the
following notation:

\begin{notation}[Generators/relations presentation of cochain complexes]
  \label{GeneratorsAndRelationsPresentationOfCochainComplexes}
  $\,$

  \noindent
  We may denote any
  $V \in \CochainComplexesFin$
  by generators (a graded linear basis)
  and relations (the linear relations given by the differential).
  For instance:
  \vspace{1mm} 
  $$
  \qquad \;
    \mathbb{R}
    \langle c_2 \rangle
    \big/
    (
      d\, c_2 \,=\, 0
    )
    \;\;\;\;
      \simeq
    \;\;\;\;
    \big(
    \xymatrix{
      0
      \ar[r]
      &
      0
      \ar[r]
      &
      \mathbf{1}
      \ar[r]
      &
      0
      \ar[r]
      &
      0
      \ar[r]
      &
      \cdots
    }
    \big),
  $$
  \vspace{0mm}
  $$
    \mathbb{R}
    \left\langle
      \!\!\!
      \begin{array}{l}
        c'_3,
        \\[-2pt]
        c_3,
        \\[-2pt]
        b_2
      \end{array}
      \!\!\!
    \right\rangle
    \!\Big/\!
    \begin{pmatrix}
           d\, c'_3 = 0
        \\[-4pt]
        d\, c_3  = 0
        \\[-4pt]
        d\, b_2  = c_3
      \end{pmatrix}
    \;\;\;\;
      \simeq
    \;\;\;\;
    \big(
    \xymatrix{
      0
      \ar[r]
      &
      0
      \ar[r]
      &
      \mathbf{1}
      \;
      \ar@{^{(}->}[r]
      &
      \;
      \mathbf{2}
      \ar[r]
      &
      0
      \ar[r]
      &
      \cdots
    }
    \big).
  $$
\end{notation}

\begin{notation}[Generators/relations presentation of dgc-algebras]
  \label{GeneratorsAndRelationsPresentationOfdgcAlgebras}
  We may denote the Chevalley-Eilenberg algebra
  $\mathrm{CE}(\mathfrak{g}) \in \dgcAlgebrasFin$
  of any
  $\mathfrak{g} \in \LInfinityAlgebras$
  (\cite[Def. 3.25]{FSS23-Char})
  by generators (a graded linear basis)
  and relations (the polynomial relations given by the differential).
  For instance (see \cite[Ex. 3.67, 3.68]{FSS23-Char}):
  \vspace{-1mm}
  $$
    \begin{array}{l}
    \mathbb{R}
    [
      c_2
    ]
    \big/
    (
      d\, c_2 \,=\, 0
    )
    \;\;\;\;
    \simeq
    \;\;\;\;
    \mathrm{CE}
    (
      \mathfrak{b}\mathbb{R}
    )
    \\
  {\rm and}
  \;\;\;\;\;
    \mathbb{R}
    \!
    \bigg[
      \!\!\!
      \begin{array}{l}
        \omega_7,
        \\[-2pt]
        \omega_4
      \end{array}
      \!\!\!
    \bigg]
    \!\big/\!
    \left(
      \begin{aligned}
        d\, \omega_7 & = - \omega_4 \wedge \omega_4
        \\[-4pt]
        d\, \omega_4 & = 0
      \end{aligned}
    \right)
    \;\;\;\;
    \simeq
    \;\;\;\;
    \mathrm{CE}
    \big(
      \mathfrak{l}S^4
    \big)
    \,.
    \end{array}
  $$
    Similarly, for $T$
  a finite-dimensional compact and simply-connected Lie group
  with Lie algebra
  $$
  \mathfrak{t}
  \;\simeq\;
  \Big\{
    \langle t_a \rangle_{a = 1}^{\mathrm{dim}(T)}
    \,,\,
    [-,-]
  \Big\}
  \;\in\;
  \LieAlgebras\;,
  $$

  \noindent
  the abstract Chern-Weil isomorphism
  (e.g. \cite[\S 4.2]{FSS23-Char})
  reads:
  \begin{equation}
    \label{AbstractChernWeilHomomorphism}
    \Big(
    \mathbb{R}
    \big[
      \{
        r^{\, a}_2
      \}_{a = 1}^{\mathrm{dim}(T)}
    \big]
    \!\big/\!
    \big(
      d \, r^{\, a}_2 \,=\, 0
    \big)
    \! \Big)^T
    \;\;
      \simeq
    \;\;
    \mathrm{CE}
    (
      \mathfrak{l} BT
    )
    \,,
  \end{equation}

  \noindent
  where on the left $(-)^T$ denotes the $T$-invariant
  elements with respect to the coadjoint action on the
  dual vector space of the Lie algebra.

\end{notation}

\smallskip

\noindent {\bf Equivariant vector spaces.}

\vspace{-1mm} 
\begin{example}[Linear representations as functors]
  \label{LinearRepresentationsAsFunctors}
  For $G$ any finite group, write
  $BG$ for the category with a single object
  and with $G$ as its endomorphisms (hence its automorphisms).
  Then functors on $BG$ with values in vector spaces
  are, equivalently, linear $G$-representations
  with $G$ acting either from the left or from the right,
  depending on whether the functor is contravariant or covariant:
  \vspace{1mm} 
    \begin{equation}
      \begin{aligned}
        G \RepresentationsLeft
        & \simeq
        \mathrm{Functors}
        \big(
          B G^{\mathrm{op}}
          \,,\,
          \VectorSpaces
        \big),
        \\
        G \RepresentationsRight
        & \simeq
        \mathrm{Functors}
        \big(
          B G
          \,,\,
          \VectorSpaces
        \big).
      \end{aligned}
    \end{equation}
\end{example}

\begin{example}[Irreducible $\ZTwo$-representations]
  \label{Z2Irreps}
  We write
  $$
    \mathbf{1},
    \,
    \mathbf{1}_{\mathrm{sgn}}
    \;\;
    \in
    \;
    \ZTwo \RepresentationsRight
  $$

\noindent  for the two irreducible right representations
  (Example \ref{LinearRepresentationsAsFunctors})
  of
  $\ZTwo$, namely the trivial representation and the
  sign representation, respectively.
\end{example}

\begin{defn}[Equivariant vector spaces]
  \label{EquivariantVectorSpaces}
  We write
  \begin{equation}
    \label{EquivariantFiniteDimensionalVectorSpaces}
   \begin{aligned}
      \EquivariantVectorSpacesFin
      & := \;
      \mathrm{Functors}
      \big(
        G \mathrm{Orb}^{\mathrm{op}}
        \,,\,
        \VectorSpacesFin
      \big),
      \\[-2pt]
      \EquivariantDualVectorSpacesFin
      & :=\;
      \mathrm{Functors}
      \big(
        G \mathrm{Orb}
        \,,\,
        \VectorSpacesFin
      \big)
    \end{aligned}
  \end{equation}

    \vspace{-.5mm}
\noindent
  for the categories of
  contravariant or covariant functors, respectively,
  from the $G$-orbit category
  (Def. \ref{OrdinaryOrbitCategory}) to the
  category of finite-dimensional vector spaces
  over the real numbers.
\end{defn}

Notice that forming linear dual vector spaces
constitutes an equivalence of categories
$$
  \xymatrix{
    \VectorSpacesFin
    \ar[rr]^{ (-)^\vee }_-{ \simeq }
    &&
    \big(
      \VectorSpacesFin
    \big)^{\mathrm{op}}
  }
$$

\noindent
and hence induces an equivalence:
$$
  \begin{aligned}
    \big(
      \EquivariantVectorSpaces
    \big)^{\mathrm{op}}
    &
    =
    \Big(
      \mathrm{Functors}
      \big(
        G \mathrm{Orb}^{\mathrm{op}}
        \,,\,
        \VectorSpacesFin
      \big)
   \! \Big)^{\mathrm{op}}
    \\
    &
    \simeq
      \mathrm{Functors}
      \Big(
        G \mathrm{Orb}
        \,,\,
        \big(
          \VectorSpacesFin
        \big)^{\mathrm{op}}
      \Big)
    \\
    & \simeq
      \mathrm{Functors}
      \big(
        G \mathrm{Orb}
        \,,\,
        \VectorSpacesFin
      \big)
    \\
    & =
    \EquivariantDualVectorSpacesFin .
  \end{aligned}
$$

\noindent
This justifies extending the notation \eqref{EquivariantFiniteDimensionalVectorSpaces}
to  vector spaces which are not necessarily finite-dimensional
  $$
   \begin{aligned}
      \EquivariantVectorSpaces
      & :=\;
      \mathrm{Functors}
      \big(
        G \mathrm{Orb}^{\mathrm{op}}
        \,,\,
        \VectorSpaces
      \big)
      \\[-2pt]
      \EquivariantDualVectorSpaces
      & :=\;
      \mathrm{Functors}
      \big(
        G \mathrm{Orb}
        \,,\,
        \VectorSpaces
      \big)
    \end{aligned}
  $$

  \noindent
  and to speak of the latter as the category of
  \emph{equivariant dual vector spaces}
  (denoted $\mathrm{Vec}_G^\ast$ in \cite{Tri82}).

\begin{example}[Equivariant dual vector spaces of real cohomology groups]
  \label{EquivariantDualVectorSpacesOfRealCohomologyGroups}
  For $\mathscr{X} \,\in\, \EquivariantHomotopyTypes$
  (Def. \ref{EquivariantHomotopyTypes})
  and $n \in \mathbb{N}$,
  the stage-wise real cohomology groups in degree $n$ form
  an equivariant dual vector space (Def. \ref{EquivariantVectorSpaces})
  $$
    \underline{H}^n
    \big(
      \mathscr{X}
      ;
      \,
      \mathbb{R}
    \big)
    \;\;
      :
    \;\;
    G/H \;\longmapsto\;
    H^n
    \big(
      \mathscr{X}(G/H);
      \,
      \mathbb{R}
    \big)
    \,.
  $$

  If these are stage-wise finite-dimensional, then
  these are the linear dual equivariant vector spaces
  of the equivariant singular real homology groups
  $\underline{H}_{\, n}\big(\mathscr{X}; \mathbb{R} \big)$
  from Example \ref{EquivariantSingularHomologyGroup}.

\end{example}

\begin{example}[$\ZTwo$-equivariant dual vector spaces]
\label{Z2EquivariantVectorSpaces}
A (finite-dimensional) dual $\ZTwo$-equivariant
vector space (Def. \ref{EquivariantVectorSpaces})
is a diagram of (finite-dimensional) vector spaces indexed
by the $\ZTwo$-orbit category (Example \ref{OrbitCategoryOfZ2})
$$
 \left(\!\!\!
    \raisebox{10pt}{
    \xymatrix@C=4pt@R=1em{
      \ZTwo/1
      \ar@(ul,ur)|-{\; \ZTwo\, }
      \ar[d]
      &\mapsto&
      \mathbf{N}
      \ar@(ul,ur)|-{ \; \ZTwo\, }
      \ar[d]^-{\phi}
      \\
      \ZTwo/\ZTwo
      &\mapsto&
      V
    }
    }
     \right)
    \;\;\;
    \in
    \;\;
    \ZTwo \EquivariantDualVectorSpaces
$$

\noindent
hence constitutes:

-- a right $\ZTwo$-representation $\mathbf{N}$ (Example \ref{LinearRepresentationsAsFunctors}),

-- a vector space $V$ (finite-dimensional),

-- a linear map $\phi$ from the underlying vector space of $\mathbf{N}$ to $V$.
\end{example}

\begin{example}[Restriction of equivariant vector spaces to Weyl group linear representation]
  \label{RestrictionOfEquivariantVectorSpacesToLinearRepresentations}
  For $H \subset G$ a subgroup, with Weyl group
  $\WeylGroup(H) = \mathrm{Aut}_{G \mathrm{Orb}}(G/H)$
  (Example \ref{WeylGroup}),
  the canonical inclusion of categories
  \begin{equation}
    \label{InclusionOfDeloopedWeylGroupIntoOrbitCategory}
    \xymatrix{
      B \WeylGroup(H)
      \;
      \ar@{^{(}->}[r]^-{ i_H }
      &
      \;
      G \mathrm{Orb}
    }
  \end{equation}

  \noindent
  induces restriction functors of equivariant vector spaces
  (Def. \ref{EquivariantVectorSpaces}) to linear representations
  (Example \ref{LinearRepresentationsAsFunctors}):
  \begin{equation}
    \label{RestrictionFunctorFromEquivariantVectorSpacesToRepresentations}
    \xymatrix@R=-2pt{
      \WeylGroup(H)\RepresentationsLeft
    \;  \ar@{<-}[r]^-{ i_H^\ast }
      &
   \;   \EquivariantVectorSpaces \;,
      \\
      \WeylGroup(H)\RepresentationsRight
 \;     \ar@{<-}[r]^-{ i_H^\ast }
      &
    \;  \EquivariantDualVectorSpaces \;.
    }
  \end{equation}

\end{example}

\begin{example}[Regular equivariant vector space]
  \label{RegularEquivariantVectorSpace}
  For any subgroup $K \subset G$ we have
  an equivariant dual vector space (Def. \ref{EquivariantVectorSpaces})
  given by the $\mathbb{R}$-linear spans of the
  hom-sets \eqref{HomSets} out of $G/K$ in the
  orbit category (Def. \ref{OrdinaryOrbitCategory}):
  $$
    \mathbb{R}
    \big[
      G \mathrm{Orb}( G/K\,,\, - )
    \big]
    \;\in\;
    \EquivariantDualVectorSpaces
    \,.
  $$

  \noindent
  For any further subgroup $H \subset G$,
  its restriction
  (Example \ref{RestrictionOfEquivariantVectorSpacesToLinearRepresentations})
  to a linear representation
  from the right (Example \ref{LinearRepresentationsAsFunctors})
  of the Weyl group of $H$ (Def. \ref{WeylGroup})
  is
  $$
    i_H^\ast
    \big(
    \mathbb{R}
    \big[
      G \mathrm{Orb}( G/K\,,\, - )
    \big]
    \big)
    \;\;=\;\;
    \mathbb{R}
    \big[
      G \mathrm{Orb}( G/K\,,\, G/H )
    \big]
    \;\in\;
    \WeylGroup(H)
    \RepresentationsRight
    \,,
  $$

  \noindent
  where $\WeylGroup(H)$ acts in linear extension
  of its canonical right action on the hom-set
  of the orbit category (Example \ref{WeylGroup}).
\end{example}

\begin{lemma}[Extension of linear representations to equivariant vector spaces]
  \label{RightExtensionOfLinearRepresentationsToEquivariantVectorSpaces}
  For any $H \subset G$,
  the restriction of equivariant vector spaces to linear
  representations (Example \ref{RestrictionOfEquivariantVectorSpacesToLinearRepresentations})
  has a right adjoint
  $$
    \xymatrix{
     \WeylGroup(H)\RepresentationsRight
    \; \ar@{<-}@<+6pt>[rr]^-{ i_H }
     \ar@<-6pt>[rr]_-{ \mathrm{Inj}_H }^-{ \bot }
     &&
    \; \EquivariantDualVectorSpaces
    \,,
    }
  $$
  where
  $$
    \mathrm{Inj}_H(V^\ast)
    \;\in\;
    \EquivariantDualVectorSpaces
    \;=\;
    \mathrm{Functors}
    \big(
      G \mathrm{Orb}
      \,,\,
      \VectorSpaces
    \big)
  $$
  is given by
  \begin{align}
    \label{InjectiveAtomByRightKanExtension}
    \mathrm{Inj}_H
    (
      V^\ast
    )
    \;\;
    :
    \;\;
    G/K
    \;\longmapsto\;
    &
    \WeylGroup(H)\RepresentationsRight
    \Big(
      \mathbb{R}
      \big[
        G \mathrm{Orb}
        ( G/K\,,\, G/H )
      \big]
      \,,\,
      V^\ast
    \big)
    \\[-2pt]
    \label{InjectiveExtensionAsDisjointUnion}
    & \;\; =
    \underset{
      \scalebox{.75}{$
        {g \;\in\; G/\NormalizerGroup(K)}
        \atop
        {
        \scalebox{.75}{\rm s.t.} \;\;
        g^{-1} K g \;\subset\; H
        }
      $}
    }{\bigoplus}
    V^\ast
    .
  \end{align}

  \noindent
  Here the regular $\WeylGroup(H)$-representation
  in the first argument on the
  right of \eqref{InjectiveAtomByRightKanExtension}
  is from Example \ref{RegularEquivariantVectorSpace}.
\end{lemma}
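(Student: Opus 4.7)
\medskip

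\noindent
\textbf{Proof plan.}
The plan is to recognize the asserted adjunction as an instance of the standard Kan extension adjunction along the functor $i_H : B\mathrm{W}_G(H) \hookrightarrow G\mathrm{Orbits}$ of \eqref{InclusionOfDeloopedWeylGroupIntoOrbitCategory}. Indeed, $\EquivariantDualVectorSpaces$ and $\WeylGroup(H)\RepresentationsRight$ are both functor categories into $\VectorSpaces$ (by Example \ref{LinearRepresentationsAsFunctors} and Def.~\ref{EquivariantVectorSpaces}), and since $\VectorSpaces$ is complete, restriction along $i_H$ admits a right adjoint given by pointwise right Kan extension.

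First I would spell out the pointwise formula: for any $V^\ast \in \WeylGroup(H)\RepresentationsRight$ and any $G/K \in G\mathrm{Orbits}$, the right Kan extension evaluates as the end
\[
  \big( (i_H)_\ast V^\ast \big)(G/K)
  \;=\;
  \int_{\bullet \in B \WeylGroup(H)}
  \mathrm{Hom}_{\mathbb{R}}
  \Big(
    \mathbb{R}\big[
      G\mathrm{Orbits}(G/K, G/H)
    \big],
    \, V^\ast
  \Big),
\]
and since $B\WeylGroup(H)$ has a single object, this end is simply the space of $\WeylGroup(H)$-equivariant linear maps out of the regular atom $\mathbb{R}[G\mathrm{Orbits}(G/K, G/H)]$ of Example \ref{RegularEquivariantVectorSpace}. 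This is exactly the right-hand side of \eqref{InjectiveAtomByRightKanExtension}. The naturality in $G/K$ comes from the contravariant action of $G\mathrm{Orbits}$ on the regular atom by precomposition, which matches the covariant functor structure demanded of an equivariant dual vector space.

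Next, I would derive \eqref{InjectiveExtensionAsDisjointUnion} by invoking the explicit description of orbit-category hom-sets from Example \ref{HomSetsInOrbitCategoryViaWeylGroups}, equation \eqref{HomSetsInOrbitCategoryInTermsOfWeylGroups}: as a right $\WeylGroup(H)$-set, $G\mathrm{Orbits}(G/K, G/H)$ decomposes as the disjoint union, indexed by classes $g \in G/\NormalizerGroup(K)$ with $g^{-1} K g \subset H$, of free orbits of $\WeylGroup(H)$. Because $\WeylGroup(H)$-equivariant maps out of a free orbit into $V^\ast$ are canonically identified with elements of $V^\ast$, the space of $\WeylGroup(H)$-equivariant linear maps out of this disjoint union is the claimed direct sum.

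Finally, the adjunction unit/counit property can either be read off from general Kan extension theory or verified directly by unwinding: a morphism $W \to \mathrm{Inj}_H(V^\ast)$ in $\EquivariantDualVectorSpaces$ assigns, to each coset representative $g$, a linear map $W(G/K) \to V^\ast$, compatible with the $\WeylGroup(H)$-action and with the orbit-category morphisms; naturality forces such data to be determined by its value at $G/H$ on the class $[(\iota, e)]$, which is precisely a morphism $i_H^\ast W \to V^\ast$ of right $\WeylGroup(H)$-representations. The only subtle point --- and the one I would check most carefully --- is the handedness of the $\WeylGroup(H)$-action, i.e.\ that precomposition on $G\mathrm{Orbits}(G/K, G/H)$ produces a \emph{right} action matching the conventions of $\WeylGroup(H)\RepresentationsRight$; this is where the choice between left and right representations in Example \ref{LinearRepresentationsAsFunctors} becomes essential, and is the main obstacle to avoiding sign/variance errors.
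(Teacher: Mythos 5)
Your proposal is correct and follows the same route as the paper: recognize $\mathrm{Inj}_H$ as the (enriched) pointwise right Kan extension along $i_H$, then reduce \eqref{InjectiveExtensionAsDisjointUnion} from \eqref{InjectiveAtomByRightKanExtension} via the decomposition of $G\mathrm{Orbits}(G/K, G/H)$ into free $\WeylGroup(H)$-orbits given in Example \ref{HomSetsInOrbitCategoryViaWeylGroups}. The paper simply cites Kelly's formula for the enriched right Kan extension rather than writing out the end, but the substance is identical; your extra attention to the handedness of the Weyl group action is a worthwhile sanity check, not a divergence.
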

\begin{proof}
  Formula \eqref{InjectiveAtomByRightKanExtension}
  is a special case of the general formula
  for right Kan extension
  \cite[(4.24)]{Kelly82}, here applied to
  the inclusion
  \eqref{InclusionOfDeloopedWeylGroupIntoOrbitCategory}
  regarded in $\VectorSpaces$-enriched category theory.
  Its equivalence to \eqref{InjectiveExtensionAsDisjointUnion}
  follows with Example \ref{HomSetsInOrbitCategoryViaWeylGroups}.
  See also \cite[(4.1)]{Tri82}\cite[Lemma 2.3]{Scull08}.
\end{proof}

\medskip

\noindent {\bf Injective equivariant dual vector spaces.}
Recall the general definition of injective objects
(e.g. \cite[p. 30]{HiltonStammbach71}),
applied to equivariant dual vector spaces:
\begin{defn}[Injective equivariant dual vector spaces]
  \label{InjectiveObjects}
  An object $I \in \EquivariantDualVectorSpaces$
  (Def. \ref{EquivariantVectorSpaces}) is called
  \emph{injective} if morphisms into it extend
  along all injections, hence if every solid diagram
  of the form
  \begin{equation}
    \label{ExtensionOfMapIntoInjectiveObject}
    \xymatrix@R=-4pt@C=5em{
      W
      \ar@{-->}[rr]^-{ \exists }
      &&
      I
      \mathrlap{
      \mbox{
        \tiny
        \color{darkblue}
        \bf
        \def\arraystretch{1}
        \begin{tabular}{c}
          injective
          \\
          object
        \end{tabular}
      }
      }
      \\
      &
      \; V
      \ar@{_{(}->}[ul]^-{
        \mbox{
          \tiny
          \color{greenii}
          \bf
          injection
        }
      }
      \ar[ur]
    }
  \end{equation}

  \noindent
  admits a dashed morphism that makes it commute, as shown.
  We write
  $$
    \xymatrix{
      \EquivariantDualVectorSpacesInjective
      \;
      \ar@{^{(}->}[r]
      &
      \EquivariantDualVectorSpaces
    }
  $$

  \noindent
  for the full sub-category on the injective objects.
\end{defn}

\begin{prop}[Injective envelope of equivariant dual vector spaces {\cite[p. 2]{Tri82}\cite[Prop. 7.34]{Scull02}\cite[Lem. 2.4, Prop. 2.5]{Scull08}}]
  \label{InjectiveEnvelopeOfEquivariantDualVectorSpaces}
  For $V \in \EquivariantDualVectorSpaces$
  (Def. \ref{EquivariantVectorSpaces}),
  the direct sum of extensions $\mathrm{Inj}_{(-)}$
  (Def. \ref{RightExtensionOfLinearRepresentationsToEquivariantVectorSpaces})
  \begin{equation}
    \label{InjectiveEnvelopeConstruction}
    \mathrm{Inj}(V)
    \;:=\;
    \underset{
      [H \subset G]
    }{\bigoplus}
    \,
    \mathrm{Inj}_H
    \big(
      V_H
    \big)
    \;\;\;\;
    \in
    \;
    \EquivariantDualVectorSpaces
    \,,
  \end{equation}

  \noindent
  of those components at stage $H$ which vanish on all deeper stages
  \begin{equation}
    \label{TheJointKernel}
    V_H
    :=
    \left\{\!\!\!
    \def\arraystretch{1.4}
    \begin{array}{lcl}
      \underset{ [K \supsetneq H] }{\bigcap}
      \mathrm{ker}
      \Big(\!\!
      \xymatrix@C=1.5em{
        V(G/H)
          \ar[rrr]^{
            V(
              G /(H \hookrightarrow K)
            )
          }
          &&&
        V(G/K)
      }
       \!\!\!\Big)
      &\vert&
      H \neq G
      \\
      V(G/G)
      &\vert&
      H = G
    \end{array}
    \right.
  \end{equation}

  \noindent
  receives an injection

  \vspace{-5mm}
  \begin{equation}
    \label{CanonicalInjectiveEnvelope}
    \xymatrix{
      V
      \;
      \ar@{^{(}->}[r]
      &
      \;
      \mathrm{Inj}(V)
    }
  \end{equation}

 \noindent
  that extends the canonical inclusion of the $V_H$,
  and which is an injective envelope (e.g. \cite[\S I.9]{HiltonStammbach71})
  of $V$ in $\EquivariantDualVectorSpaces$.
  In particular:

\noindent   {\bf (i)} the summands
  $\mathrm{Inj}_H(V)$ (Example \ref{RightExtensionOfLinearRepresentationsToEquivariantVectorSpaces})
  are injective objects (Def. \ref{InjectiveObjects});

\noindent   {\bf (ii)}
  $V$ is injective
  (Def. \ref{InjectiveObjects}) precisely if
  \eqref{CanonicalInjectiveEnvelope} is an isomorphism.
\end{prop}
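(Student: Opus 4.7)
First I would verify that each summand $\mathrm{Inj}_H(V_H)$ in \eqref{InjectiveEnvelopeConstruction} is injective in $\EquivariantDualVectorSpaces$. The restriction functor $i_H^\ast$ from \eqref{RestrictionFunctorFromEquivariantVectorSpacesToRepresentations} is exact (in particular preserves monomorphisms), since it is evaluation of a functor at the object $G/H$; hence its right adjoint $\mathrm{Inj}_H$ from Lemma \ref{RightExtensionOfLinearRepresentationsToEquivariantVectorSpaces} carries injective objects to injective objects. Because we are working over the field $\mathbb{R}$ with a finite group $\WeylGroup(H)$, Maschke's theorem implies that the group algebra $\mathbb{R}[\WeylGroup(H)]$ is semisimple, so every $\WeylGroup(H)$-representation is injective. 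Therefore each $\mathrm{Inj}_H(V_H)$ is injective, and since there are only finitely many conjugacy classes $[H \subset G]$ (as $G$ is finite), the finite direct sum $\mathrm{Inj}(V)$ is injective as well.

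Next I would construct the monomorphism \eqref{CanonicalInjectiveEnvelope}. Using semisimplicity, choose for each $H$ a $\WeylGroup(H)$-equivariant splitting $\pi_H : V(G/H) \twoheadrightarrow V_H$ of the canonical inclusion $V_H \hookrightarrow V(G/H)$ from \eqref{TheJointKernel}. Under the adjunction of Lemma \ref{RightExtensionOfLinearRepresentationsToEquivariantVectorSpaces}, the $\WeylGroup(H)$-equivariant map $\pi_H \circ i_H^\ast(\mathrm{id}_V) : i_H^\ast V \to V_H$ corresponds to a natural transformation $\eta_H : V \to \mathrm{Inj}_H(V_H)$, which on stage $G/K$ is given by formula \eqref{InjectiveExtensionAsDisjointUnion} as
\[
  \eta_H(v)_g \;=\; \pi_H\!\big( V(G/(\mathrm{Ad}_g \circ \iota))(v) \big)
  \qquad \text{for } g^{-1} K g \subset H.
\]
Setting $\eta := \bigoplus_{[H]} \eta_H$ yields the candidate monomorphism $V \to \mathrm{Inj}(V)$; naturality amounts to the Weyl-group equivariance of the $\pi_H$ combined with the orbit-category combinatorics of Example \ref{HomSetsInOrbitCategoryViaWeylGroups}.

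Then I would show $\eta$ is a monomorphism and essential, by induction along the poset of conjugacy classes of subgroups, ordered by inclusion in reverse. The base case is $K = G$, where $\eta(v)_{H=G,\,g=e} = \pi_G(v) = v$ is the identity on $V(G/G) = V_G$. For the inductive step, let $v \in V(G/K)$ lie in the kernel of $\eta$ at stage $G/K$. The component at $(H = K,\, g = e)$ gives $\pi_K(v) = 0$, hence $v \in V(G/K)$ lies in the joint kernel of the maps $V(G/K) \to V(G/K')$ for all $K' \supsetneq K$ (the complement of $V_K$ after splitting). But the components at deeper subgroups $K'$ reproduce the images $V(G/K)(v) \in V(G/K')$ composed with the $\pi_{K'}$; by the inductive hypothesis applied at $G/K'$, these vanish only when $v$ maps to zero at every $V(G/K')$, which combined with $\pi_K(v)=0$ gives $v = 0$. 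Essentialness of the extension follows by the same stratification: any nonzero sub-object $W \subset \mathrm{Inj}(V)$ must have a minimal $K$ with $W(G/K) \ne 0$; a nonzero section at $G/K$ necessarily has a nontrivial component in the $(H=K,\,g=e)$ summand $V_K$, which lies in the image of $V$.

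Finally, claim (i) is an immediate consequence of the first paragraph, while claim (ii) follows from the envelope property: if $V$ is injective, then the mono $\eta : V \hookrightarrow \mathrm{Inj}(V)$ splits, and an essential split mono is an isomorphism; conversely, if $\eta$ is an isomorphism, then $V$ is a finite direct sum of injectives and hence injective. The main obstacle I anticipate is the bookkeeping in the second paragraph --- verifying naturality of $\eta$ and carrying out the inductive argument for both injectivity and essentialness in the third paragraph --- since it requires unpacking the explicit description \eqref{HomSetsInOrbitCategoryInTermsOfWeylGroups} of hom-sets in the orbit category and controlling the interaction of the chosen splittings $\pi_H$ across the stratification by subgroups. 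Everything else is routine once semisimplicity of $\mathbb{R}[\WeylGroup(H)]$ is invoked.
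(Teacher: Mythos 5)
The paper does not actually prove this proposition — it cites \cite[p.~2]{Tri82}, \cite[Prop.~7.34]{Scull02}, \cite[Lem.~2.4, Prop.~2.5]{Scull08} — so there is no internal argument to compare against; I can only judge your sketch on its own. Your route is the standard one from that literature and the key steps are sound: (a) since $i_H^\ast$ is evaluation at $G/H$ it is exact, hence its right adjoint $\mathrm{Inj}_H$ preserves injectives, and by Maschke over $\mathbb{R}$ every $\WeylGroup(H)$-module is injective, so each $\mathrm{Inj}_H(V_H)$ and thus the finite sum $\mathrm{Inj}(V)$ is injective; (b) the monomorphism is built by adjoint-transposing chosen $\WeylGroup(H)$-equivariant retractions $\pi_H\colon V(G/H)\twoheadrightarrow V_H$ and is verified injective by descending induction through the subgroup lattice (base case $G/G$, inductive step using naturality to push to deeper stages $G/K'$, $K'\supsetneq K$, and then $\pi_K$ to kill what remains). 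Both of these are correct, and (ii) follows formally once the essential-mono property is in hand.

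The one place I would press you is the essentiality argument. You phrase it via a ``minimal $K$'' in the reversed order, which is the maximal $K$ (in the usual subgroup order) with $W(G/K)\neq 0$; with that reading the direction is right, but the pivotal claim — that any nonzero $w\in W(G/K)$ at such a maximal $K$ necessarily has nontrivial component in the $H=K$, $g=e$ summand $V_K$ — is not self-evident and is where the real work lies. What you actually need is that the collection of structure maps $\mathrm{Inj}_{H'}(V_{H'})(G/K)\to\mathrm{Inj}_{H'}(V_{H'})(G/H')$, ranging over all orbit-category morphisms $\phi\colon G/K\to G/H'$ with $H'\supsetneq K$ up to conjugacy, jointly detects every component of the right Kan extension $\mathrm{Inj}_{H'}(V_{H'})(G/K)=\bigoplus_{g}V_{H'}$, so that $W(G/H')=0$ forces $w_{H'}=0$ for all $H'\neq K$ and hence $w\in V_K\subset\eta(V)(G/K)$. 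This is true, but it hinges on unwinding formula \eqref{InjectiveAtomByRightKanExtension} and matching the indexing set of summands against the indexing set of morphisms $G/K\to G/H'$ — exactly the ``orbit-category combinatorics'' you flag as the anticipated obstacle. As a plan it is fine; a full proof would need to make that detection statement explicit rather than assert the concentration in $V_K$ outright.
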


\begin{example}[Ground field is injective as equivariant dual vector space]
  \label{GroundFieldIsInjectiveAsEquivariantDualVectorSpace}
  The equivariant dual vector space (Def. \ref{EquivariantVectorSpaces})
  which is constant on the ground field
  $$
    \underline{\mathbb{R}}
    \;:=\;
    \mathrm{const}_{G \mathrm{Orb}}(\mathbb{R})
    \;:\;
    G/H
    \;\longmapsto\;
    \mathbb{R}
  $$

\noindent
  is isomorphic to the right extension
  (Lemma \ref{RightExtensionOfLinearRepresentationsToEquivariantVectorSpaces})
  $
    \underline{\mathbb{R}}
    \;\simeq\;
    \mathrm{Inj}_G(\mathbf{1})
  $
  of $\mathbb{R} \simeq \mathbf{1} \in 1 \Representations$,
  and hence is injective, by Prop. \ref{InjectiveEnvelopeOfEquivariantDualVectorSpaces}.
\end{example}

\begin{example}[Injective $\ZTwo$-equivariant dual vector spaces, cf. {\cite[Prop. 4.1]{SanthanamThandar23}}]
  \label{InjectiveZ2EquivariantDualVectorSpaces}
  For $G = \ZTwo$ (Example \ref{OrbitCategoryOfZ2})
  the
  irreducible representations
  $$
    \mathbf{1},\, \mathbf{1}_{\mathrm{sgn}}
    \;\;
    \in
    \ZTwo \Representations
    \,,
    \;\;\;\;\;\;\;
    \mathbf{1}
    \;\;
    \in
    1 \Representations
    \;\simeq\;
    \VectorSpaces
  $$
  of the respective Weyl groups
  (Example \ref{WeylGroup},
  Example \ref{Z2Irreps})
  induce by right extension (Def. \ref{RightExtensionOfLinearRepresentationsToEquivariantVectorSpaces})
  the following three
  $\ZTwo$-equivariant vector spaces
  (Example \ref{Z2EquivariantVectorSpaces}),
  which, by Prop. \ref{InjectiveEnvelopeOfEquivariantDualVectorSpaces},
  are the direct summand building blocks of all injective
  $\ZTwo$-equivariant dual vector spaces:
  \begin{equation}
    \label{TheInjectiveZ2EquivariantDualVectorSpaceInBulk}
    \mathrm{Inj}_1(\mathbf{1})
    \;\;
    :
    \raisebox{23pt}{
    \xymatrix@C=4pt@R=1.5em{
      \ZTwo/1
      \ar@(ul,ur)|-{\; \ZTwo }
      \ar[d]
      &\longmapsto&
      \mathbf{1}
      \ar[d]^-{0}
      \\
      \ZTwo/\ZTwo
      &\longmapsto&
      0  \mathrlap{\,,}
    }
    }
    \phantom{AAA}
    \mathrm{Inj}_1(\mathbf{1}_{\mathrm{sgn}})
    \;\;
    :
    \raisebox{23pt}{
    \xymatrix@C=4pt@R=1.5em{
      \ZTwo/1
      \ar@(ul,ur)|-{\; \ZTwo }
      \ar[d]
      &\longmapsto&
      \mathbf{1}_{\mathrm{sgn}}
      \ar[d]^-{0}
      \\
      \ZTwo/\ZTwo
      &\longmapsto&
      0  \mathrlap{\,,}
    }
    }
      \end{equation}

\noindent
and
  \begin{equation}
    \label{TheInjectiveZ2EquivariantDualVectorSpaceConcentratedOnFixedLocus}
    \mathrm{Inj}_{\ZTwo}(\mathbf{1})
    \;\;
    \;:\;
    \raisebox{23pt}{
    \xymatrix@C=4pt@R=1.5em{
      \ZTwo/1
      \ar@(ul,ur)|-{\; \ZTwo }
      \ar[d]
      &\longmapsto&
      \mathbf{1}
      \ar[d]^-{\mathrm{id}}
      \\
      \ZTwo/\ZTwo
      &\longmapsto&
      \mathbf{1}
      \mathrlap{\,.}
    }
    }
  \end{equation}
  To see this, use \eqref{HomSetsInOrbitCategoryOfZ2} in
  \eqref{InjectiveAtomByRightKanExtension} to get, for two cases,
  $$
    \mathrm{Inj}_1(\mathbf{1})
    \;\;
    \;:\;
    \raisebox{24pt}{
    \xymatrix@R=10pt@C=4pt{
      \ZTwo/1
      \ar@(ul,ur)|-{\; \ZTwo }
      \ar[d]
      &\longmapsto&
      \ZTwo \Representations
      \Big(
        \underset{
          \simeq
          \,
          \mathbf{1}
          \oplus
          \mathbf{1}_{\mathrm{sgn}}
        }{
        \underbrace{
        \mathbb{R}
        \big[
          \ZTwo \mathrm{Orb}
          (
            \ZTwo/1
            \,,\,
            \ZTwo/1
          )
        \big]
        }
        }
        \,,\,
        \mathbf{1}
      \big)
      \ar@{}[r]|-{\simeq}
      &
      \mathbf{1}
      \ar[d]^-{0}
      \\
      \ZTwo/\ZTwo
      &\longmapsto&
      \ZTwo \Representations
      \Big(
        \underset{
          \simeq
          \,
          0
        }{
        \underbrace{
        \mathbb{R}
        \big[
          \ZTwo \mathrm{Orb}
          (
            \ZTwo/\ZTwo
            \,,\,
            \ZTwo/1
          )
        \big]
        }
        }
        \,,\,
        \mathbf{1}
      \big)
      \ar@{}[r]|-{\simeq}
      &
      0
    }
    }
  $$

\noindent
  and
  $$
    \mathrm{Inj}_{\ZTwo}(\mathbf{1})
    \;\;
    \;:\;
    \raisebox{24pt}{
    \xymatrix@R=10pt@C=4pt{
      \ZTwo/1
      \ar@(ul,ur)|-{\; \ZTwo }
      \ar[d]
      &\longmapsto&
      1 \Representations
      \Big(
        \underset{
          \simeq
          \,
          \mathbf{1}
        }{
        \underbrace{
        \mathbb{R}
        \big[
          \ZTwo \mathrm{Orb}
          (
            \ZTwo/1
            \,,\,
            \ZTwo/\ZTwo
          )
        \big]
        }
        }
        \,,\,
        \mathbf{1}
      \big)
      \ar@{}[r]|-{\simeq}
      &
      \mathbf{1}
      \ar[d]^-{\mathrm{id}}
      \\
      \ZTwo/\ZTwo
      &\longmapsto&
      1 \Representations
      \Big(
        \underset{
          \simeq
          \,
          \mathbf{1}
        }{
        \underbrace{
        \mathbb{R}
        \big[
          \ZTwo \mathrm{Orb}
          (
            \ZTwo/\ZTwo
            \,,\,
            \ZTwo/\ZTwo
          )
        \big]
        }
        }
        \,,\,
        \mathbf{1}
      \big)
      \ar@{}[r]|-{\simeq}
      &
      \mathbf{1}   \mathrlap{\,.}
    }
    }
  $$
 % etc.
\end{example}

\begin{lemma}[Tensor product preserves injectivity of finite-dim dual vector $G$-spaces
 {\cite[Lem. 3.6, Rem 1.2]{Golasinski97b} \cite[Prop. 7.36]{Scull02}}]
  \label{TensorProductPreservesInjectivityOfFiniteDimensionalDualVectorGSpaces}
  Let $V,W \in \EquivariantDualVectorSpacesFin$
  (Def. \ref{EquivariantVectorSpaces}).
  If $V$ and $W$ are both injective (Def. \ref{InjectiveObjects}),
  then so is their tensor product
  $V \otimes W : G/H \;\longmapsto\; V(G/H) \otimes W(G/H)$.
\end{lemma}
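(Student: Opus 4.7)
The strategy is to reduce the assertion by linear duality to the dual statement about projective equivariant vector spaces, and then verify it using the explicit combinatorics of the orbit category.

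First, I will use the linear-dual equivalence $(-)^\vee \colon \EquivariantVectorSpacesFin \xrightarrow{\;\simeq\;} (\EquivariantDualVectorSpacesFin)^{\mathrm{op}}$ from the paragraph following Definition \ref{EquivariantVectorSpaces}. As an equivalence of categories, it interchanges injective objects with projective objects; and on finite-dimensional vector spaces, linear dualization commutes stagewise with tensor products, since $(U \otimes V)^\vee \simeq U^\vee \otimes V^\vee$. Hence it suffices to prove the dual assertion: stagewise tensor products of projective objects in $\EquivariantVectorSpacesFin$ are projective.

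Second, I identify projectives in $\EquivariantVectorSpacesFin$ with retracts of finite direct sums of representable functors $R_H := \mathbb{R}[G\mathrm{Orbits}(-, G/H)]$, as $H$ ranges over subgroups of $G$. Indeed each $R_H$ corepresents evaluation at $G/H$, which is an exact functor, so $R_H$ is projective; and the $\{R_H\}$ form a generating set, so every object admits a surjection from a direct sum of representables, whence in the finite-dimensional case every projective is a retract of such a finite direct sum. Since stagewise tensor product distributes over direct sums and preserves retracts, the problem reduces to showing that $R_H \otimes R_K$ is again a direct sum of representables for any pair of subgroups $H, K \subset G$.

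Third, I will execute this last assertion stagewise. Using $\mathbb{R}[X] \otimes \mathbb{R}[Y] \simeq \mathbb{R}[X \times Y]$ and the fact that the hom-functor $G\mathrm{Orbits}(G/L, -) \simeq (-)^L$ preserves products of $G$-sets, one obtains
\[
(R_H \otimes R_K)(G/L)
\;\simeq\;
\mathbb{R}\bigl[ G\mathrm{Orbits}(G/L,\, G/H \times G/K) \bigr].
\]
Invoking the classical Mackey decomposition of the product $G$-set $G/H \times G/K \;\simeq\; \coprod_i G/L_i$ into transitive orbits, with $L_i = H \cap g_i K g_i^{-1}$ indexed by double cosets $H g_i K \in H \backslash G / K$, yields the natural identification $R_H \otimes R_K \;\simeq\; \bigoplus_i R_{L_i}$, completing the reduction.

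The main obstacle is essentially bookkeeping: carefully tracking how stagewise tensor products, monomorphisms versus epimorphisms, and retracts are transported across the contravariant linear-dual equivalence, and making sure that the direct-sum decomposition produced in the third step matches the categorical structure used in the second. The underlying $G$-set combinatorics is standard, and each individual step is routine once the dualization is set up correctly.
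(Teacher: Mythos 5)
The paper does not supply its own proof of this lemma: it is quoted from Golasi{\'n}ski and Scull, so there is no internal argument to compare against. Those sources verify the claim directly at the level of the injective atoms $\mathrm{Inj}_H(V)$ from Prop.~\ref{InjectiveEnvelopeOfEquivariantDualVectorSpaces}, via the explicit sum formula \eqref{InjectiveExtensionAsDisjointUnion} and a double-coset count. Your dualize-then-Mackey route packages the same double-coset combinatorics in a cleaner, more standard form: the Yoneda computation $(R_H \otimes R_K)(G/L) \simeq \mathbb{R}\big[\mathrm{Hom}_{G\mathrm{Set}}(G/L, G/H \times G/K)\big]$ and the Mackey decomposition $G/H \times G/K \simeq \coprod_i G/L_i$ with $L_i = H \cap g_i K g_i^{-1}$ are both correct, and the retract/generation argument for finite-dimensional projectives is sound.

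The one seam you should close is at the start. Definition~\ref{InjectiveObjects} defines injectivity against the full ambient category $\EquivariantDualVectorSpaces$, and your representable/retract description of projectives is set up in the ambient $\EquivariantVectorSpaces$; but the contravariant equivalence $(-)^\vee$ only relates the \emph{finite-dimensional} subcategories, so your line ``as an equivalence of categories, it interchanges injective objects with projective objects'' establishes the exchange a priori only relative to those subcategories. You need to say why, for finite-dimensional objects, subcategory injectivity (resp.\ projectivity) agrees with ambient injectivity (resp.\ projectivity). This is easy but not nothing: for injectivity, Prop.~\ref{InjectiveEnvelopeOfEquivariantDualVectorSpaces} shows that $\mathrm{Inj}(I)$ is again finite-dimensional (since $G\mathrm{Orbits}$ is a finite category), so subcategory injectivity of $I$ against $I \hookrightarrow \mathrm{Inj}(I)$ already produces a retraction exhibiting $I$ as a summand of the ambient-injective $\mathrm{Inj}(I)$; for projectivity, a finite-dimensional $P$ admits a surjection from a \emph{finite} direct sum of representables, which subcategory projectivity splits, making $P$ ambient-projective. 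With that bridge made explicit, the argument is complete.
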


\medskip

\noindent {\bf Equivariant smooth differential forms.}
In preparation of discussing equivariant de Rham cohomology,
consider:

\begin{example}[Equivariant smooth differential forms]
  \label{EquivariantSmoothDifferentialForms}
  Let
  $G \acts \; X \,\in\,$ 
 \newline  
  $\GActionsOnSmoothManifolds$
  (Def. \ref{ProperGActionsOnSmoothManifolds})
  and $n \in \mathbb{N}$.
  Then there is the equivariant dual vector space
  (Def. \ref{EquivariantdgcAlgebras})
  $$
    \Omega^n_{\mathrm{dR}}
    \big(
      \orbisingular
      \big(
        X \!\sslash\! G
      \big)
    \big)
    \;\;
    \in
    \;\;
    \EquivariantDualVectorSpaces
  $$

  \noindent
  given by the system of vector spaces of
  smooth differential $n$-forms (e.g. \cite{BottTu82})
  of the fixed submanifolds \eqref{SystemOfSmoothSubmanifoldsofProperSmoothGAction},
  with pullback of differential forms
  along residual actions and along inclusions of fixed loci:
  $$
  \hspace{-2.5cm}
    \overset{
      \mathclap{
      \raisebox{3pt}{
        \tiny
        \color{darkblue}
        \bf
        \def\arraystretch{1}
        \begin{tabular}{c}
          Equivariant dual vector
          \\
          space of equivariant smooth
          \\
          differential $n$-forms
        \end{tabular}
      }
      }
    }{
    \Omega^n_{\mathrm{dR}}
    \big(\!
      \orbisingular
      \big(
        X \!\sslash\! G
      \big)
    \big)
    }
    \;\;
      :
    \;
    \raisebox{26pt}{
    \xymatrix@C=1.5em{
     G/H_1
     \ar@(ul,ur)^-{ g_1 \in \WeylGroup(H_1) }
     \ar[d]_-{ p }
     &\longmapsto&
     \Omega^n_{\mathrm{dR}}
     \big(
       X^{H_1}
     \big)
     \mathrlap{\hspace{-4mm} 
       \mbox{
         \tiny
         \color{darkblue}
         \bf
         \def\arraystretch{1}
         \begin{tabular}{c}
           ordinary differential forms
           \\
           on fixed submanifold
         \end{tabular}
       }
     }
     \ar[d]^-{
       X^{p^\ast}
       \mathrlap{
         \mbox{
           \tiny
           \color{greenii}
           \bf
           \def\arraystretch{1}
           \begin{tabular}{c}
             pullback along inclu-
             \\
             sion of fixed loci
           \end{tabular}
         }
       }
     }
     \ar@(ul,ur)^-{ \;X^{g_1^\ast}\; }
     \\
     G/H_2
     \ar@(dl,dr)_-{ g_2 \in \WeylGroup(H_2)  }
     &\longmapsto&
     \Omega^n_{\mathrm{dR}}
     \big(
       X^{H_2}
     \big)
     \ar@(dl,dr)_-{ \;X^{g_2^\ast}\; }
    }
    }
  $$
 \end{example}

\begin{remark}[Equivariant smooth differential forms are injective]
  \label{EquivariantSmoothDifferentialFormsAreInjective}
  The following Lemmas
  \ref{ZpEquivariantSmoothDifferentialFormsFormInjectiveDualVectorSpace},
  \ref{Z4EquivariantSmoothDifferentialFormsAreInjective},
  \ref{Z2TimesZ2EquivariantDifferentialFormsAreInjective} show
  that the equivariant dual vector spaces of smooth differential
  $n$-forms (Def. \ref{EquivariantSmoothDifferentialForms})
  are injective objects (Def. \ref{InjectiveObjects}), at least
  if the equivariance group is of order 4 or cyclic of prime order (in which case cf. \cite[Prop. 4.1]{SanthanamThandar23}):
  $$
    G
      \;\in\;
      \big\{\!
        \left. \mathbb{Z}_{p} \right\vert \mbox{$p$ prime}
      \big\}
      \;\cup\;
      \big\{
        \mathbb{Z}_4, \, \ZTwo \times \ZTwo
      \big\}
      \,.
  $$

\noindent  From the proofs of these lemmas, given below,
  it is fairly clear how to approach
  the proof of the general case. But since this is
  heavy on notation if done properly, and since we
  do not need further generality for our application here,
  we will not go into that.
\end{remark}

\begin{notation}[Extension of smooth differential forms away from fixed loci]
  \label{ExtensionOfSmoothDifferentialFormsAwayFromFixedLoci}
  $\,$

  \noindent
  For $G \acts \; X \,\in\, \GActionsOnSmoothManifolds$
  (Def. \ref{ProperGActionsOnSmoothManifolds})
  and $H \,\subset\, G$,
  choose a tubular neighborhood (e.g. \cite[\S 1.2]{Kochman96})
  $
    \mathcal{N}_X
    \big(
      X^{H}
    \big)
    \;\subset\;
    X
  $
  of the fixed locus
  (which exists by Lemma \ref{FixedLociOfProperSmoothActionsAreSmoothManifolds}).
  Then multiplication of smooth $n$-forms on $X^{H}$
  with a choice of bump function in the neighborhood coordinates
  induces a linear section, which we denote $\mathrm{ext}_{H}$,
  of the operation of restricting differential forms to the
  fixed locus:
  $$
    \xymatrix@C=30pt{
      \Omega^n_{\mathrm{dR}}
      \big(
        X^{H}
      \big)
      \ar@{-->}[rr]^-{
        \mathrm{ext}_{H}
      }
      \ar@/_1.6pc/[rrrrr]|-{
        \;
        \mathrm{id}
        \;
      }
      &
      {\phantom{AAA}}
      &
      \Omega^n_{\mathrm{dR}}
      (
        X
      )
      \ar[rrr]^-{
        \;
        (-)\vert_{ X^{H}}
        \;
      }
      &&&
      \Omega^n_{\mathrm{dR}}
      \big(
        X^{H}
      \big)
      \,.
    }
  $$
  
\end{notation}

\begin{lemma}[$\mathbb{Z}_p$-Equivariant smooth differential forms are injective]
  \label{ZpEquivariantSmoothDifferentialFormsFormInjectiveDualVectorSpace}
  Let the equivariance group
  $G = \mathbb{Z}_p$ be a cyclic group of prime order.
  Then, for
  $\mathbb{Z}_p \acts \; X
    \;\in\;
  \mathbb{Z}_p\mathrm{Actions}\big(\mathrm{SmoothManifolds}\big)$
  (Def. \ref{ProperGActionsOnSmoothManifolds}),
  the equivariant dual vector space of
  $\mathbb{Z}_p$-equivariant smooth differential $n$-forms
  (Def. \ref{EquivariantSmoothdeRhamComplex}) is
  injective (Def. \ref{InjectiveObjects}):
  \begin{equation}
    \label{ZpEquivariantSmoothnFormsInjectiveDualVectorSpace}
    \Omega^n_{\mathrm{dR}}\big( \orbisingular(X \!\sslash\! \mathbb{Z}_p) \big)
    \;\;
    \in
    \;\;
    \EquivariantDualVectorSpacesInjective\;.
  \end{equation}

\end{lemma}
\begin{proof}
  By extension of differential
  forms away from the fixed locus (Notation \ref{ExtensionOfSmoothDifferentialFormsAwayFromFixedLoci}),
  we obtain the following isomorphism
  of equivariant dual vector spaces
  to a direct sum of injective extensions (Lemma \ref{RightExtensionOfLinearRepresentationsToEquivariantVectorSpaces})
  $$
    \hspace{-1mm}
    \adjustbox{scale=0.9}{$
    \xymatrix@R=-3pt@C=3pt{
      &&
      \overset{
        \mathclap{
        \raisebox{6pt}{
          \tiny
          \color{darkblue}
          \bf
          \def\arraystretch{1}
          \begin{tabular}{c}
            equivariant smooth
            \\
            differential $n$-forms
          \end{tabular}
        }
        }
      }{
      \Omega^n_{\mathrm{dR}}
      \big(
        \orbisingular(X \!\sslash\! \mathbb{Z}_p)
      \big)
      }
      \ar[rr]^-{ \simeq }
      &&
      \mathrm{Inj}_{\mathbb{Z}_p}
      \Big(
        \overset{
          \mathclap{
          \raisebox{7pt}{
            \tiny
            \color{darkblue}
            \bf
            \begin{tabular}{c}
              differential $n$-forms
              \\[-3pt]
              on fixed locus
            \end{tabular}
          }
          }
        }{
        \Omega^n_{\mathrm{dR}}
        \big(
          X^{\mathbb{Z}_p}
        \big)
        }
      \Big)
      \ar@{}[r]|-{\mbox{$\oplus$}}
      &
      \mathrm{Inj}_{1}
      \Big(
        \overset{
          \mathclap{
          \raisebox{4pt}{
            \tiny
            \color{darkblue}
            \bf
            \begin{tabular}{c}
               differential $n$-forms whose
               \\[-3pt]
               restriction to the fixed locus vanishes
            \end{tabular}
          }
          }
        }{
        \Big\{
          \!
          \left.
          \omega
          \,\in\,
          \Omega^n_{\mathrm{dR}}
          \big(
            X
          \big)
          \right\vert
          \omega \vert_{ X^{\mathbb{Z}_p}}
          \,=\,
          0
        \Big\}
        }
      \! \Big)
      \\
      \mathbb{Z}_p/1
      \ar@(ul,ur)|-{
        \;\mathbb{Z}_p\;
      }
      \ar[dd]
      &&
      \alpha
      \ar@{|->}[rr]
      \ar@{|->}[dd]
      &&
      \Big(
      \alpha \vert_{ X^{\mathbb{Z}_p}}
      \ar@{}[r]|-{,}
      \ar@{|->}[dd]
      &
      \alpha
        \,-\,
      \mathrm{ext}_{\mathbb{Z}_p}
      \big(
        \alpha \vert_{ X^{\mathbb{Z}_p}}
      \big)
      \Big)
      \ar@{|->}[dd]
      \\
      {\phantom{A} \atop {\phantom{A}}}
      \\
      \mathbb{Z}_p/\mathbb{Z}_p
      &&
      \alpha \vert_{ X^{\mathbb{Z}_p}}
      \ar@{|->}[rr]
      &&
      \Big(
      \alpha \vert_{ X^{\mathbb{Z}_p}}
      \ar@{}[r]|-{,}
      &
      \;\;\;\;\;\;\;\;\;\;\;\;\;\;
      0
      \;\;\;\;\;\;\;\;\;\;\;\;\;\;
      \mathrlap{\Big)\,,}
    }
    $}
  $$

  \noindent
  where we used, since $p$ is assumed to be prime,
  that the only subgroups of $G$
  are $1$ and  $\mathbb{Z}_p$ itself (Example \ref{MoreExamplesOfOrbitCategories}).
  By Prop. \ref{InjectiveEnvelopeOfEquivariantDualVectorSpaces},
  this implies the claim \eqref{ZpEquivariantSmoothnFormsInjectiveDualVectorSpace}.
\end{proof}

\begin{lemma}[$\mathbb{Z}_4$-Equivariant smooth differential forms are injective]
  \label{Z4EquivariantSmoothDifferentialFormsAreInjective}
  Let the equivariance group
  $G = \mathbb{Z}_4$ be the cyclic group of order 4.
  Then, for
  $\mathbb{Z}_4 \acts \; X
    \;\in\;
  \mathbb{Z}_4\mathrm{Actions}\big(\mathrm{SmoothManifolds}\big)$
  (Def. \ref{ProperGActionsOnSmoothManifolds}),
  the equivariant dual vector space of
  $\mathbb{Z}_4$-equivariant smooth differential $n$-forms
  (Def. \ref{EquivariantSmoothdeRhamComplex}) is
  injective (Def. \ref{InjectiveObjects}):
  \vspace{0mm}
  \begin{equation}
    \label{Z4EquivariantSmoothnFormsInjectiveDualVectorSpace}
    \Omega^n_{\mathrm{dR}}\big( \orbisingular(X \!\sslash\! \mathbb{Z}_4) \big)
    \;\;
    \in
    \;\;
    \EquivariantDualVectorSpacesInjective\;.
  \end{equation}
  \vspace{-.6cm}

\end{lemma}
\begin{proof}
  Since the subgroups of $\mathbb{Z}_4$ are linearly ordered
  $1 \,\subset\, \ZTwo \,\subset\, \mathbb{Z}_4$
  (Example \ref{MoreExamplesOfOrbitCategories}),
  the proof of Lemma \ref{ZpEquivariantSmoothDifferentialFormsFormInjectiveDualVectorSpace}
  generalizes immediately.
    Using extensions of differential $n$-forms
  (Notation \ref{ExtensionOfSmoothDifferentialFormsAwayFromFixedLoci}),
  both from $X^{\mathbb{Z}_4}$ as well as from $X^{\ZTwo}$,
  we obtain the following isomorphism
  of equivariant dual vector spaces
  to a direct sum of injective extensions (Lemma \ref{RightExtensionOfLinearRepresentationsToEquivariantVectorSpaces})
  \vspace{-1mm}
  $$
    \hspace{-1mm}
    \scalebox{.65}{
    \hspace{-.3cm}
    \xymatrix@R=-3.5pt@C=2pt{
      &
      \overset{
        \mathclap{
        \raisebox{6pt}{
          \tiny
          \color{darkblue}
          \bf
          \def\arraystretch{1}
          \begin{tabular}{c}
            Equivariant smooth
            \\
            differential $n$-forms
          \end{tabular}
        }
        }
      }{
      \Omega^n_{\mathrm{dR}}
      \big(
        \orbisingular(X \!\sslash\! \mathbb{Z}_4)
      \big)
      }
      \ar[rr]^-{ \simeq }
      &&
      \mathrm{Inj}_{\mathbb{Z}_4}
      \!
      \Big(
        \!
        \overset{
          \mathclap{
          \raisebox{7pt}{
            \tiny
            \color{darkblue}
            \bf
            \def\arraystretch{1}
            \begin{tabular}{c}
              Differential $n$-forms
              \\
              on deep fixed locus
            \end{tabular}
          }
          }
        }{
        \Omega^n_{\mathrm{dR}}
        \big(
          X^{\mathbb{Z}_4}
        \big)
        }
        \!
      \Big)
      \ar@{}[r]|-{\mbox{$\oplus$}}
      &
      \mathrm{Inj}_{\ZTwo}
      \!
      \Big(
        \!
        \overset{
          \mathclap{
          \raisebox{4pt}{
            \tiny
            \color{darkblue}
            \bf
            \def\arraystretch{1}
            \begin{tabular}{c}
               Differential $n$-forms on shallow fixed locus whose
               \\
               restriction to the deep fixed locus vanishes
            \end{tabular}
          }
          }
        }{
        \Big\{
          \!
          \left.
          \omega
          \in
          \Omega^n_{\mathrm{dR}}
          \big(
            X^{\ZTwo}
          \big)
          \right\vert
          \omega \vert_{ X^{\mathbb{Z}_4}}
          =
          0
        \Big\}
        }
      \!
      \Big)
      \ar@{}[r]|-{\mbox{$\oplus$}}
      &
      \mathrm{Inj}_{1}
      \!
      \Big(
        \!
        \overset{
          \mathclap{
          \raisebox{4pt}{
            \tiny
            \color{darkblue}
            \bf
            \def\arraystretch{1}
            \begin{tabular}{c}
               Differential $n$-forms whose
               restriction 
               \\
               to the shallow fixed locus vanishes
            \end{tabular}
          }
          }
        }{
        \Big\{
          \!
          \left.
          \omega
          \in
          \Omega^n_{\mathrm{dR}}
          \big(
            X
          \big)
          \right\vert
          \omega \vert_{ X^{\ZTwo}}
          \,=\,
          0
        \Big\}
        }
        \!
      \Big)
      \\
      \mathbb{Z}_4/1
      \ar@(ul,ur)|-{
        \;\mathbb{Z}_4\;
      }
      \ar[dd]
      &
      \alpha
      \ar@{|->}[rr]
      \ar@{|->}[dd]
      &&
      \bigg(
      \alpha \vert_{ X^{\mathbb{Z}_4}}
      \ar@{}[r]|-{,}
      \ar@{|->}[dd]
      &
      \Big(
        \alpha
          \,-\,
        \mathrm{ext}_{\mathbb{Z}_4}
        \big(
          \alpha \vert_{ X^{\mathbb{Z}_4}}
        \big)
      \Big) \vert_{ X^{\ZTwo}}
      \ar@{|->}[dd]
      \ar@{}[r]|-{,}
      &
        \alpha
          \,-\,
        \mathrm{ext}_{\ZTwo}
        \big(
          \alpha \vert_{ X^{\ZTwo}}
        \big)
        \bigg)
      \ar@{|->}[dd]
      \\
      {\phantom{A} \atop {\phantom{A}}}
      \\
      \mathbb{Z}_4/\ZTwo
      \ar[dd]
      &
      \alpha \vert_{ X^{\ZTwo}}
      \ar@{|->}[rr]
      \ar@{|->}[dd]
      &&
      \bigg(
      \alpha \vert_{ X^{\mathbb{Z}_4}}
      \ar@{|->}[dd]
      \ar@{}[r]|-{,}
      &
        \alpha \vert_{ X^{\ZTwo}}
          \,-\,
      \Big(
        \mathrm{ext}_{\mathbb{Z}_4}
        \big(
          \alpha \vert_{ X^{\mathbb{Z}_4}}
        \big)
      \Big) \vert_{ X^{\ZTwo}}
      \ar@{|->}[dd]
      \ar@{}[r]|-{,}
      &
      \;\;\;\;\;\;\;\;\;\;\;\;\;
        0
      \;\;\;\;\;\;\;\;\;\;\;\;\;
      \mathrlap{\bigg)}
      \ar@{|->}[dd]
      \\
      {\phantom{A} \atop {\phantom{A}}}
      \\
      \mathbb{Z}_4/\mathbb{Z}_4
      &
      \alpha \vert_{ X^{\mathbb{Z}_4}}
      \ar@{|->}[rr]
      &&
      \bigg(
      \alpha \vert_{ X^{\mathbb{Z}_4}}
      \ar@{}[r]|-{,}
      &
      \;\;\;\;\;\;\;\;\;\;\;\;\;\;\;
      0
      \mathclap{\phantom{\bigg)}}
      \;\;\;\;\;\;\;\;\;\;\;\;\;\;\;
      \ar@{}[r]|-{,}
      &
      \;\;\;\;\;\;\;\;\;\;\;\;\;
        0
      \;\;\;\;\;\;\;\;\;\;\;\;\;
      \mathrlap{\bigg)}
    }
    }
  $$

\vspace{1mm} 
  \noindent
  By Prop. \ref{InjectiveEnvelopeOfEquivariantDualVectorSpaces},
  this implies the claim \eqref{Z4EquivariantSmoothnFormsInjectiveDualVectorSpace}.
\end{proof}

\begin{lemma}[$\ZTwo \times \ZTwo$-Equivariant smooth differential forms are injective]
  \label{Z2TimesZ2EquivariantDifferentialFormsAreInjective}
  Let the equivariance group
  $G = \mathbb{Z}^L_2 \times \mathbb{Z}^R_2$ be the Klein 4-group.
  Then, for
  $\mathbb{Z}^L_2 \times \ZTwo^R \acts \; X
    \;\in\;
  \mathbb{Z}^L_2 \times \mathbb{Z}^R_2\mathrm{Actions}\big(\mathrm{SmoothManifolds}\big)$
  (Def. \ref{ProperGActionsOnSmoothManifolds}),
  the equivariant dual vector space of
  %$\ZTwo \times \ZTwo$-
  equivariant smooth differential $n$-forms
  (Def. \ref{EquivariantSmoothdeRhamComplex}) is
  injective (Def. \ref{InjectiveObjects}):  
  \begin{equation}
    \label{Z2TimesZ2EquivariantSmoothnFormsInjectiveDualVectorSpace}
    \Omega^n_{\mathrm{dR}}\big(
      \orbisingular \big(X \!\sslash\! \mathbb{Z}^L_2 \times \mathbb{Z}^R_2\,\big)
    \big)
    \;\;
    \in
    \;\;
    \EquivariantDualVectorSpacesInjective \;.
  \end{equation}
  \vspace{-.6cm}

\end{lemma}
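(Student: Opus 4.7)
The strategy parallels the proofs of Lemmas \ref{ZpEquivariantSmoothDifferentialFormsFormInjectiveDualVectorSpace} and \ref{Z4EquivariantSmoothDifferentialFormsAreInjective}: exhibit an explicit isomorphism of $\Omega^n_{\mathrm{dR}}\big(\orbisingular(X\sslash \mathbb{Z}_2^L\times\mathbb{Z}_2^R)\big)$ with a direct sum of right-extensions $\mathrm{Inj}_H(-)$ (Lemma \ref{RightExtensionOfLinearRepresentationsToEquivariantVectorSpaces}), indexed by the four subgroup-classes of the Klein 4-group (Example \ref{MoreExamplesOfOrbitCategories}); then Prop. \ref{InjectiveEnvelopeOfEquivariantDualVectorSpaces} concludes.

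The plan is to assemble the decomposition by iterated application of Notation \ref{ExtensionOfSmoothDifferentialFormsAwayFromFixedLoci}, stratifying from the deepest fixed locus outward. Concretely, for $\alpha\in\Omega^n_{\mathrm{dR}}(X)$:
\begin{enumerate}
\item[(i)] record $\alpha_{{\scalebox{.5}{$LR$}}}:=\alpha\vert_{X^{\mathbb{Z}_2^L\times\mathbb{Z}_2^R}}$ as the $\mathrm{Inj}_{\mathbb{Z}_2^L\times\mathbb{Z}_2^R}$-component;
\item[(ii)] set $\beta:=\alpha-\mathrm{ext}_{\mathbb{Z}_2^L\times\mathbb{Z}_2^R}(\alpha_{{\scalebox{.5}{$LR$}}})$, so that $\beta\vert_{X^{\mathbb{Z}_2^L\times\mathbb{Z}_2^R}}=0$, and record $\beta\vert_{X^{\mathbb{Z}_2^L\times 1}}$ resp. $\beta\vert_{X^{1\times\mathbb{Z}_2^R}}$ as the $\mathrm{Inj}_{\mathbb{Z}_2^L\times 1}$- resp. $\mathrm{Inj}_{1\times\mathbb{Z}_2^R}$-components (which indeed vanish on $X^{\mathbb{Z}_2^L\times\mathbb{Z}_2^R}$);
\item[(iii)] set $\gamma:=\beta-\mathrm{ext}_{\mathbb{Z}_2^L\times 1}(\beta\vert_{X^{\mathbb{Z}_2^L\times 1}})-\mathrm{ext}_{1\times\mathbb{Z}_2^R}(\beta\vert_{X^{1\times\mathbb{Z}_2^R}})$, which then restricts to zero on both $X^{\mathbb{Z}_2^L\times 1}$ and $X^{1\times\mathbb{Z}_2^R}$ (provided the two tubular neighborhoods are chosen disjoint outside $X^{\mathbb{Z}_2^L\times\mathbb{Z}_2^R}$, see below), and record $\gamma$ as the $\mathrm{Inj}_{1}$-component.
\end{enumerate}
The induced assignment, followed by the canonical inclusions of subspaces-of-forms-vanishing-on-the-next-stratum back into the appropriate $\mathrm{Inj}_H(-)$-summand, will define a natural transformation between the functors on $(\mathbb{Z}_2^L\times\mathbb{Z}_2^R)\mathrm{Orbits}$ by construction, since each step only uses restriction to a fixed locus followed by a section of that restriction.

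The main obstacle, and the one genuinely new technical point relative to the $\mathbb{Z}_4$ case, is that the subgroup lattice of $\mathbb{Z}_2^L\times\mathbb{Z}_2^R$ is no longer linearly ordered: the two mid-strata $X^{\mathbb{Z}_2^L\times 1}$ and $X^{1\times\mathbb{Z}_2^R}$ are incomparable and meet precisely in $X^{\mathbb{Z}_2^L\times\mathbb{Z}_2^R}$. To ensure that step (iii) really kills both restrictions simultaneously, one must arrange the tubular neighborhoods $\mathcal{N}_X(X^{\mathbb{Z}_2^L\times 1})$ and $\mathcal{N}_X(X^{1\times\mathbb{Z}_2^R})$ (hence the corresponding bump functions defining the $\mathrm{ext}$-sections) to be disjoint outside an arbitrarily small neighborhood of $X^{\mathbb{Z}_2^L\times\mathbb{Z}_2^R}$; on that small overlap, the forms to be extended have already been arranged to vanish after step (ii), so the cross-contamination is zero. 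Such a choice of neighborhoods exists by the equivariant tubular neighborhood theorem applied simultaneously to the two mid-strata, using that finite-group smooth actions are proper (Lemma \ref{FixedLociOfProperSmoothActionsAreSmoothManifolds}). Once this choice is fixed, verification that the total assignment is a natural isomorphism at each orbit $G/H$, and compatible with the morphisms of the orbit category (Example \ref{HomSetsInOrbitCategoryViaWeylGroups}), is a routine bookkeeping exercise analogous to the diagram in the proof of Lemma \ref{Z4EquivariantSmoothDifferentialFormsAreInjective}, and the target being a sum of $\mathrm{Inj}_H(-)$'s then yields \eqref{Z2TimesZ2EquivariantSmoothnFormsInjectiveDualVectorSpace} by Prop. \ref{InjectiveEnvelopeOfEquivariantDualVectorSpaces}.
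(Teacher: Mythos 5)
Your overall strategy is the same as the paper's: stratify from the deepest fixed locus outward, subtract extensions $\mathrm{ext}_H$ of restrictions stage by stage, and assemble an isomorphism with a direct sum of $\mathrm{Inj}_H(-)$'s, concluding via Prop.\ \ref{InjectiveEnvelopeOfEquivariantDualVectorSpaces}. The identification of the two incomparable mid-strata as the genuinely new difficulty is also right. However, the mechanism you propose for step (iii) does not work.

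The problem is with the claim that "on that small overlap, the forms to be extended have already been arranged to vanish after step (ii)." After step (ii), $\beta\vert_{X^L}$ vanishes only on the subset $X^{LR}$, \emph{not} on any open neighborhood of $X^{LR}$ in $X^L$. Making $\mathcal N(X^L)\cap\mathcal N(X^R)$ small (disjoint outside a small $U\supset X^{LR}$) only localizes the support of $\mathrm{ext}_L(\beta\vert_L)\big\vert_{X^R}$ to $U\cap X^R$; it does not make it zero there. Explicitly, for $p\in X^R\cap \mathcal N(X^L)$, the value of $\mathrm{ext}_L(\beta\vert_L)$ at $p$ involves $\beta\vert_{X^L}$ at the tubular projection $\pi_L(p)\in X^L$, and nothing in your construction forces $\pi_L(p)\in X^{LR}$. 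Since $\beta\vert_{X^L}$ is generically nonzero arbitrarily close to $X^{LR}$, the restriction $\mathrm{ext}_L(\beta\vert_L)\big\vert_{X^R}$ will generically be nonzero, and the decomposition fails to be well-defined (the $\mathrm{Inj}_1$-component $\gamma$ does not restrict to zero on $X^L$ or $X^R$). Shrinking $U$ does not help, because exact vanishing — not approximate vanishing — is what makes the target $\mathrm{Inj}_1\big(\{\omega:\omega\vert_{X^L}=\omega\vert_{X^R}=0\}\big)$.

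What is actually needed is a condition on the tubular \emph{projections}, not on the size of the neighborhoods: one must have $\pi_L\big(X^R\cap\mathcal N(X^L)\big)\subset X^{LR}$, and symmetrically with $L\leftrightarrow R$. This is exactly what the paper arranges, by choosing a $(\mathbb{Z}_2^L\times\mathbb{Z}_2^R)$-equivariant tubular neighborhood of the intersection $X^{LR}$ (Bredon/Kankaanrinta) and using it to make the normal directions of $X^L$ and $X^R$ orthogonal near $X^{LR}$. Equivariance of $\pi_L$ under $\mathbb{Z}_2^R$ then forces it to carry $\mathbb{Z}_2^R$-fixed points to $\mathbb{Z}_2^R$-fixed points, which is precisely the containment above. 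So you should replace the disjointness condition on the tubular neighborhoods (which is neither sufficient nor the right invariant) with this equivariance/orthogonality condition on the projections; the rest of your scheme then goes through.
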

\begin{proof}
  We obtain an isomorphism to
  a direct sum of injective extensions
  (Lemma \ref{RightExtensionOfLinearRepresentationsToEquivariantVectorSpaces}),
  much as in the proofs of Lemmas
  \ref{ZpEquivariantSmoothDifferentialFormsFormInjectiveDualVectorSpace} and
  \ref{Z4EquivariantSmoothDifferentialFormsAreInjective},
  \vspace{-2mm}
  $$
%    \hspace{-1mm}
    \scalebox{.6}{
    \hspace{-3mm}
    \xymatrix@R=-3.5pt@C=3pt{
      &&
      \overset{
        \mathclap{
        \raisebox{6pt}{
          \tiny
          \color{darkblue}
          \bf
          \def\arraystretch{1}
          \begin{tabular}{c}
            equivariant smooth
            \\
            differential $n$-forms
          \end{tabular}
        }
        }
      }{
      \Omega^n_{\mathrm{dR}}
      \big(
        \orbisingular(X \!\sslash\! \mathbb{Z}_4)
      \big)
      }
      \ar[rr]^-{ \simeq }
      &&
      \mathrm{Inj}_{\mathbb{Z}_4}
      \!
      \Big(
        \!
        \overset{
          \mathclap{
          \raisebox{6pt}{
            \tiny
            \color{darkblue}
            \bf
            \def\arraystretch{1}
            \begin{tabular}{c}
              differential $n$-forms
              \\
              on deep fixed locus
            \end{tabular}
          }
          }
        }{
        \Omega^n_{\mathrm{dR}}
        \big(
          X^{\mathbb{Z}_4}
        \big)
        }
        \!
      \Big)
      \ar@{}[r]|-{\mbox{$\oplus$}}
      &
      \raisebox{10pt}{$
      {\begin{array}{c}
      \mathrm{Inj}_{\mathbb{Z}^L_2}
      \!
      \bigg(
        \!\!
        \overset{
          \mathclap{
          \raisebox{3pt}{
            \tiny
            \color{darkblue}
            \bf
            \def\arraystretch{1}
            \begin{tabular}{c}
               differential $n$-forms on shallow fixed loci whose
               \\
               restriction to the deep fixed locus vanishes
            \end{tabular}
          }
          }
        }{
        \Big\{
          \!\!
          \left.
          \omega
          \in
          \Omega^n_{\mathrm{dR}}
          \big(
            X^{\mathbb{Z}^L_2}
          \big)
          \right\vert
          \omega \vert_{ X^{\mathbb{Z}_4}}
          =
          0
        \Big\}
        }
      \!\!
      \bigg)
      \\
      \oplus
      \\
      \mathrm{Inj}_{\mathbb{Z}^R_2}
      \!
      \bigg(
        \!\!
        \Big\{
          \!\!
          \left.
          \omega
          \in
          \Omega^n_{\mathrm{dR}}
          \big(
            X^{\mathbb{Z}^R_2}
          \big)
          \right\vert
          \omega \vert_{ X^{\mathbb{Z}_4}}
          =
          0
        \Big\}
      \!\!
      \bigg)
      \end{array}}
      $}
      \hspace{-4mm} 
      \ar@{}[r]|-{\mbox{$\oplus$}}
      &
      \mathrm{Inj}_{1}
      \!
             \overset{
          \mathclap{
          \raisebox{3pt}{
            \tiny
            \color{darkblue}
            \bf
            \def\arraystretch{1}
            \begin{tabular}{c}
               differential $n$-forms whose
               restriction 
               \\
               to the shallow fixed loci vanishes
            \end{tabular}
          }
          }
        }{
      \left(
        \!   \left\{
          \omega
          \in
          \Omega^n_{\mathrm{dR}}
          \big(
            X
          \big)
          \left\vert
            \def\arraystretch{.1}
            \begin{array}{l}
              \omega \vert_{ X^{\mathbb{Z}^L_2}}
              \,=\,
              0
              \\[2pt]
              \omega \vert_{ X^{\mathbb{Z}^K_2}}
              \,=\,
              0
            \end{array}
          \right.
           \!\!
        \right\}\!
          \right)
        } 
      \\
      G/1
      \ar@(ul,ur)|-{
        \;\mathbb{Z}_4\;
      }
      \ar[dr]
      \ar[dd]
      &&
      \alpha
      \ar@{|->}[rr]
      \ar@{|->}[dd]
      &&
      \Bigg(
      \alpha \vert_{ X^{\mathbb{Z}_4}}
      \ar@{}[r]|-{,}
      \ar@{|->}[dd]
      &
      \Big(
        \overset{
          =: \beta
        }{
        \overbrace{
        \alpha
          \,-\,
        \mathrm{ext}_{\ZTwo^L \times \ZTwo^R}
        \big(
          \alpha \vert_{ X^{\ZTwo^L \times \ZTwo^R}}
        \big)
        }
        }
      \Big) \vert_{ X^{\mathbb{Z}^L_2} + \vert X^{\mathbb{Z}^R_2}}
      \ar@{|->}[dd]
      \ar@{}[r]|-{,}
      &
        {
        \arraycolsep=1.4pt\def\arraystretch{.1}
        \begin{array}{rcl}
          \beta
          &
          -
          \mathrm{ext}_{\mathbb{Z}^R_2}
          \big(
            \beta \vert_{ X^{\mathbb{Z}^R_2}}
          \big)
          \\
          &
          -
          \mathrm{ext}_{\mathbb{Z}^L_2}
          \big(
            \beta \vert_{ X^{\mathbb{Z}^L_2}}
          \big)
        \end{array}}
      \Bigg)
      \ar@{|->}[dd]
      \\
      &
      \scalebox{.9}{$G/\mathbb{Z}^R_2$}
      \!\!\!\!\!\!\!\!\!\!\!\!\!\!\!\!
      \ar[dddl]
      \\
      G/\mathbb{Z}^L_2
      \ar[dd]
      &&
      \alpha \vert_{ X^{\ZTwo}}
      \ar@{|->}[rr]
      \ar@{|->}[dd]
      &&
      \bigg(
      \alpha \vert_{ X^{\mathbb{Z}_4}}
      \ar@{|->}[dd]
      \ar@{}[r]|-{,}
      &
      \Big(
        \alpha
          \,-\,
        \mathrm{ext}_{\ZTwo^L \times \ZTwo^R}
        \big(
          \alpha \vert_{ X^{\ZTwo^L \times \ZTwo^R}}
        \big)
      \Big) \vert_{ X^{\mathbb{Z}^L_2} + \vert X^{\mathbb{Z}^R_2}}
      \ar@{|->}[dd]
      \ar@{}[r]|-{,}
      &
      \;\;\;\;\;\;\;\;\;\;\;\;\;
        0
      \;\;\;\;\;\;\;\;\;\;\;\;\;
      \mathrlap{\bigg)}
      \ar@{|->}[dd]
      \\
      {\phantom{A} \atop {\phantom{A}}}
      \\
      G/\mathbb{Z}^L_2 \times \mathbb{Z}^R_2
      &&
      \alpha \vert_{ X^{\mathbb{Z}_4}}
      \ar@{|->}[rr]
      &&
      \bigg(
      \alpha \vert_{ X^{\mathbb{Z}_4}}
      \ar@{}[r]|-{,}
      &
      \;\;\;\;\;\;\;\;\;\;\;\;\;\;\;
      0
      \mathclap{\phantom{\bigg)}}
      \;\;\;\;\;\;\;\;\;\;\;\;\;\;\;
      \ar@{}[r]|-{,}
      &
      \;\;\;\;\;\;\;\;\;\;\;\;\;
        0
      \;\;\;\;\;\;\;\;\;\;\;\;\;
      \mathrlap{\bigg)}
    }
    }
  $$

  \noindent
  and hence conclude the result, again by Prop. \ref{InjectiveEnvelopeOfEquivariantDualVectorSpaces}.
  The only further subtlety to take care of here
  is that the two extensions $\mathrm{ext}_{\ZTwo^L}$
  and $\mathrm{ext}_{\ZTwo^R}$
  (Notation \ref{ExtensionOfSmoothDifferentialFormsAwayFromFixedLoci})
  need to be chosen compatibly, such as to ensure that
  each preserves the property of a form to vanish on the corresponding
  other fixed locus:
  $$
    \Big(
    \mathrm{ext}_{\ZTwo^L}
    \Big(
      \beta \vert_{ X^{\ZTwo^L}}
    \Big)
    \Big) \Big\vert_{ \ZTwo^R}
    \;=\;
    0
    \,,
    \phantom{AAAA}
    \Big(
    \mathrm{ext}_{\ZTwo^R}
    \Big(
      \beta \vert_{ X^{\ZTwo^R}}
    \Big)
    \Big) \Big\vert_{ \ZTwo^L}
    \;=\;
    0
    \,.
  $$

 \noindent  This is achieved by choosing an {\it equivariant}
  tubular neighborhood
  (by \cite[\S VI, Thm. 2.2]{Bredon72}\cite[Thm. 4.4]{Kankaanrinta07})
  around the intersection
  $X^{\ZTwo^R} \cap X^{\ZTwo^L}$
  and using this to choose the extension away from
  $X^{\ZTwo^L}$ to be orthogonal to that away from
  $X^{\ZTwo^R}$.
\end{proof}

\smallskip

\noindent {\bf Equivariant graded vector spaces.}

\vspace{-1mm} 
\begin{defn}[Equivariant graded vector spaces]
  \label{EquivariantGradedVectorSpaces}
  We write
  $$
    \EquivariantGradedVectorSpaces
    \;:=\;
    \EquivariantVectorSpaces^{\mathbb{N}}
    \;\simeq\;
    \mathrm{Functors}
    \big(
      G\mathrm{Orb}^{\mathrm{op}}
      \,,\,
      \GradedVectorSpaces
    \big)
  $$

\noindent
  for the category of $\mathbb{N}$-graded
  objects in equivariant vector spaces
  (Def. \ref{EquivariantVectorSpaces}).
\end{defn}

\begin{defn}[Equivariant rational homotopy groups]
  \label{EquivariantRationalHomotopyGroups}
  For $\mathscr{X} \,\in\, \EquivariantHomotopyTypesConnected$
  (Def. \ref{EquivariantConnectedHomotopyTypes})
  and $n \in \mathbb{N}$,
  the rationalized $n$th equivariant homotopy group
  (Def. \ref{EquivariantHomotopyGroups})
  hence the stage-wise rationalized simplicial homotopy group
  (Def. \ref{EquivariantHomotopyGroups})
%  \vspace{-2mm}
  $$
    \underline{\pi}_{\, n}\big( \mathscr{X} \big)
    \otimes_{\scalebox{.5}{$\underline{\mathbb{Z}}$}}
    \underline{\mathbb{R}}
    \;\;
    :
    \;\;
    G/H
    \;\;
    \longmapsto
    \;\;
    \pi_n
    \big(
      \mathscr{X}(G/H)
    \big)
    \otimes_{\scalebox{.5}{$\mathbb{Z}$}}
    \mathbb{R}\;,
  $$

  \noindent
  form an equivariant graded vector space (Def. \ref{EquivariantGradedVectorSpaces}):
  $$
    \underline{\pi}_{\, \bullet+1}\big( \mathscr{X} \big)
    \otimes_{\scalebox{.5}{$\mathbb{Z}$}}
    \mathbb{R}
    \;\;
    \in
    \;\;
    \EquivariantVectorSpaces
    \,.
  $$

\end{defn}

\begin{example}[$\Grefl$-Equivariant rational homotopy groups of twistor space]
  \label{GhetEquivariantRationalHomotopyGroupsOfTwistorSpace}
  The $\ZTwo$-equivariant rational homotopy groups
  (Def. \ref{EquivariantRationalHomotopyGroups})
  of $\Grefl$-equivariant twistor space
  (Example \ref{GhetEquivariantTwistorSpace})
  are, by \eqref{Z2EquivariantTwistorSpaceAsPresheafOnOrbitCategory},
  given by the rational homotopy groups
  of $\mathbb{C}P^3$ and, on the fixed locus, of $S^2$.
  Hence these look as follows
  (using, e.g., \cite[Lemma 2.13]{FSS20c} with \cite[Prop. 3.65]{FSS23-Char}):
  \begin{equation}
    \label{Z2EquivariantHomotopyGroupsOfTwistorSpace}
    \hspace{-4mm}
    \begin{array}{l}
      \underline{\pi}^{\mathbb{Z}/2}_{\, \bullet}
      \big(
        \mathbb{C}P^3
      \big)
      \otimes_{\scalebox{.5}{$\mathbb{Z}$}}
      \mathbb{R}
      \simeq
      \\
  \scalebox{.76}{
  \def\arraystretch{1}
  \def\tabcolsep{3pt}
  \begin{tabular}{|c||c||c|c|c|c|c|c|c|c|c|}
    \hline
    $\ZTwo/H$
    &
    $
    \mathclap{\phantom{\vert^{\vert^{\vert^{\vert}}}}}
    \big(
      \mathbb{C}P^3
    \big)^H
    \mathclap{\phantom{\vert_{\vert_{\vert_{\vert}}}}}
    $
    &
    $\pi_2 \otimes \mathbb{R}$
    &
    $\pi_3 \otimes \mathbb{R}$
    &
    $\pi_4 \otimes \mathbb{R}$
    &
    $\pi_5 \otimes \mathbb{R}$
    &
    $\pi_6 \otimes \mathbb{R}$
    &
    $\pi_7 \otimes \mathbb{R}$
    &
    $\pi_8 \otimes \mathbb{R}$
    &
    $\pi_9 \otimes \mathbb{R}$
    &
    $\cdots$
    \\
    \hline
    \hline
    $\ZTwo/1$
    &
    $
      \mathclap{\phantom{\vert^{\vert^{\vert}}}}
      \mathbb{C}P^3
    $
    &
    $\mathbf{1}$
    &
    $0$
    &
    $0$
    &
    $0$
    &
    $0$
    &
    $\mathbf{1}$
    &
    $0$
    &
    $0$
    &
    $\cdots$
    \\
    \hline
    $\ZTwo/\ZTwo$
    &
    $
      \mathclap{\phantom{\vert^{\vert^{\vert}}}}
      S^2
    $
    &
    $\mathbf{1}$
    &
    $\mathbf{1}$
    &
    $0$
    &
    $0$
    &
    $0$
    &
    $0$
    &
    $0$
    &
    $0$
    &
    $\cdots$
    \\
    \hline
  \end{tabular}
  }
  \end{array}
  \end{equation}
\end{example}

\medskip

\noindent {\bf Equivariant cochain complexes.}

\vspace{-2mm} 
\begin{defn}[Equivariant cochain complexes]
  \label{EquivariantCochainComplexes}
  We write
  $$
    \EquivariantCochainComplexes
    \;;=\;
    \mathrm{Functors}
    \big(
      G \mathrm{Orb}
      \,,\,
      \CochainComplexes
    \big)
  $$

  \noindent
  for the category of functors from the $G$-orbit category
  (Def. \ref{OrdinaryOrbitCategory}) to the
  category of connective cochain complexes
  (i.e., in non-negative degrees with differential of degree +1)
  over the real numbers.
\end{defn}

\begin{defn}[Delooping of equivariant cochain complexes]
  \label{DeloopingOfEquivariantCochainComplex}
  For
  $V \in \EquivariantCochainComplexes$ (Def. \ref{EquivariantCochainComplexes}),
  we  denote its delooping as
  \vspace{-2.5mm} 
  $$
    \mathfrak{b}V
    :\;
    G/H
    \longmapsto 
    \Big(\!\!
      \xymatrix@C=24pt{
        0
        \ar[r]
        &
        V^0(G/H)
        \ar[r]^-{ d^0_V }
        &
        V^1(G/H)
        \ar[r]^-{ d^1_V }
        &
        V^2(G/H)
        \ar[r]
        &
        \cdots
      }
   \!\! \Big).
  $$
\end{defn}

As an instance of the general notion of mapping cones
(e.g. \cite[Def. 3.2.2]{Schapira11}), we get:
\begin{example}[Cone on an equivariant cochain complex]
  \label{ConeOnEquivariantCochainComplex}
  For
  $V \in \EquivariantCochainComplexes$ (Def. \ref{EquivariantCochainComplexes}),
  we say that the \emph{cone} on
  its delooping $\mathfrak{b}V$ (Def. \ref{DeloopingOfEquivariantCochainComplex})
  is the equivariant cochain complex
  $
    \mathfrak{e}V
    \;\in\;
    \EquivariantCochainComplexes
  $
  given by
  \vspace{-2mm}
  $$
    \begin{array}{l}
    \mathfrak{e}V
    \;:=\;
    \mathrm{Cone}(\mathfrak{b}V)
    \;:\;
    G/H
    \;\longmapsto\;
    \\
    \left(\!\!\!\!
    \raisebox{18pt}{
    \xymatrix@R=15pt@C=30pt{
      V^{0}(G/H)
      \ar[r]^-{ -d_V^{0} }
      \ar@{}[d]|-{\oplus}
      \ar[dr]|-{ \;\mathrm{id}\; }
      &
      V^{1}(G/H)
      \ar[r]^-{ -d_V^{1} }
      \ar@{}[d]|-{\oplus}
      \ar[dr]|-{ \;\mathrm{id}\; }
      &
      V^{2}(G/H)
      \ar[r]^-{ -d_V^{2} }
      \ar@{}[d]|-{\oplus}
      \ar[dr]|-{ \;\mathrm{id}\; }
      &
      V^{3}(G/H)
      \ar[r]^-{ -d_V^{3} }
      \ar@{}[d]|-{\oplus}
      \ar[dr]|-{ \;\mathrm{id}\; }
      &
      \cdots
      \\
      0
      \ar[r]_-{ \;0\; }
      &
      V^{0}(G/H)
      \ar[r]_-{ d_V^{0} }
      &
      V^{1}(G/H)
      \ar[r]_-{ d_V^{1} }
      &
      V^{2}(G/H)
      \ar[r]_-{ d_V^{2} }
      &
      \cdots
    }
    }
    \right)
    \,.
    \end{array}
  $$

  \noindent
  This sits in the evident cofiber sequence:
  \vspace{-1mm} 
  \begin{equation}
    \label{ConeFiberSequence}
    \raisebox{16pt}{
    \xymatrix@R=1.2em{
      V
      \ar@{<-}[rr]^-{ \mathrm{cofib}(i_{\mathfrak{b}V}) }
      &&
      \mathfrak{e}V
      \ar@{<-}[d]^-{ i_{\mathfrak{b}V} }
      \\
      &&
      \mathfrak{b}V
    }
    }
    \;\;\;\;\;\;\;\;\;
    \in
    \;\;
    \EquivariantCochainComplexes \;.
  \end{equation}
  
\end{example}

As an instance of the general notion of injective resolutions
(e.g. \cite[\S 4.5]{Schapira11}), we have:

\begin{example}[Injective resolution of equivariant dual vector spaces] 
  \label{InjectiveResolutionOfEquivariantDualVectorSpaces}
    Let $V \in \EquivariantDualVectorSpaces$ (Def. \ref{EquivariantVectorSpaces}).
  Then, by Prop. \ref{InjectiveEnvelopeOfEquivariantDualVectorSpaces},
  we obtain an \emph{injective resolution} (e.g. \cite[p. 129]{HiltonStammbach71})
  of $V$ given by the equivariant cochain complex (Def. \ref{EquivariantCochainComplexes})
  which in degree 0 is the injective envelope \eqref{InjectiveEnvelopeConstruction}
  of $V$, and whose differentials are, recursively,
  the injective envelope inclusions \eqref{CanonicalInjectiveEnvelope}
  of the quotients by the
  image of the previous degree:
 \vspace{-3mm}
  $$
  \raisebox{54pt}{
   \xymatrix@R=14pt{
     \vdots
      \ar@{<-}[d]
      &
      \vdots
      \ar@{<-}[d]
      \\
      0
      \ar@{<-}[d]
      \ar[r]
      &
      \mathrm{Inj}
      \big(
        \mathrm{coker}(d^1)
      \big)
      \ar@{<-}[d]^-{ d^2 }
      \\
      0
      \ar@{<-}[d]
      \ar[r]
      &
      \mathrm{Inj}
      \big(
        \mathrm{coker}(d^0)
      \big)
      \ar@{<-}[d]^-{ d^1 }
      \\
      0
      \ar[r]
      \ar@{<-}[d]
      &
      \mathrm{Inj}
      \big(
        \mathrm{Inj}(V)/V
      \big)
      \ar@{<-}[d]^-{ d^0 }
      \\
      V
      \;
      \ar@{^{(}->}[r]
      &
      \;
      \mathrm{Inj}(V)
    }
      }
    \;\;\;\;\;\;\;
    =:
    \mathrm{Inj}^\bullet(V)
    \;
    \in
    \;
    \EquivariantCochainComplexes
    \mathrlap{\,.}
  $$

  This is such that for any
  $A^\bullet \in \EquivariantCochainComplexes$
  which is degreewise injective
  (Def. \ref{InjectiveObjects})
  and any morphism of equivariant dual vector spaces
  \vspace{-3mm}
  $$
  \big\{\!  \xymatrix{
      V
      \ar[r]^-{ \phi }
      &
      A^n_{\mathrm{clsd}}
    }
    \!\big\}
    \;\;\;\;
    \in
    \;
    \EquivariantDualVectorSpaces
  $$
  from $V$ to the subspace of closed elements
  (cocycles) in $A^n$, there exists an extension to
  a morphism
  \begin{equation}
    \label{InjectiveExtension}
    \hspace{-3mm}
    \big\{\! \xymatrix{
      \mathfrak{b}^n
      \mathrm{Inj}^\bullet(V)
      \ar[r]^-{ \phi^\bullet }
      &
      A^\bullet
    }
    \!\big\}
    \;\;\;\;
    \in
    \;
    \EquivariantCochainComplexes
  \end{equation}
  of equivariant cochain complexes
  \eqref{InjectiveResolution}
  given recursively
  by using injectivity of $A^{n+i+1}$
  to obtain dashed extensions \eqref{ExtensionOfMapIntoInjectiveObject}
  $$
    \xymatrix@R=12pt@C=3em{
      \mathrm{Inj}^{i+1}(V)
      \ar@{-->}[rr]^{ \phi^{n + i + 1} }
      &&
      A^{n+i+1}\;.
      \\
{}^{\phantom{A}^{\phantom{A}}}\mathrm{Inj}^{i}(V)/\mathrm{im}(d^{i-1})
      \ar@{^{(}->}[u]
      \ar[urr]_-{ d_A \circ \phi^i }
    }
  $$
  
  \begin{equation}
    \label{InjectiveResolution}
    \raisebox{85pt}{
    \xymatrix@C=70pt@R=18pt{
      \vdots
      \ar@{<-}[d]
      &
      \vdots
      \\
      \mathrm{Inj}
      \big(
        \mathrm{coker}(d^1)
      \big)
      \ar@{<-}[d]^-{ d^2 }
      \ar@{-->}[r]^-{ \;\phi^{n+3}\; }
      &
      A^{n+3}
      \ar[u]_-{ d_A^{n+3} }
      \\
      \mathrm{Inj}
      \big(
        \mathrm{coker}(d^0)
      \big)
      \ar@{<-}[d]^-{ d^1 }
      \ar@{-->}[r]^-{ \;\phi^{n+2}\; }
      &
      A^{n+2}
      \ar[u]_-{ d_A^{n+2} }
      \\
      \mathrm{Inj}
      \big(
        \mathrm{Inv}(V)/V
      \big)
      \ar@{<-}[d]^-{ d^0 }
      \ar@{-->}[r]^-{ \;\phi^{n+1}\; }
      &
      A^{n+1}
      \ar[u]_-{ d_A^{n+1} }
      \\
      \mathrm{Inj}(V)
      \ar@{-->}[r]^{ \;\phi^n\; }
      &
      A^{n}
      \ar[u]_-{ d_A^n }
      \\
      \mathclap{\phantom{\vert^\vert}}
      V
      \ar@{^{(}->}[u]
      \ar[r]^{ \;\phi \;=:\; \phi^{n}_{\vert V} \; }
      &
      \mathclap{\phantom{\vert^\vert}}
      A^n_{\mathrm{clsd}}
      \ar@{^{(}->}[u]
    }
    }
  \end{equation}
\end{example}

\begin{example}[Injective resolution of $\ZTwo$-equivariant dual vector spaces]
  \label{InjectiveResolutionOfZ2equiavriantVectorSpaces}
  Consider the $\ZTwo$-equivariant dual vector space
  (Example \ref{Z2EquivariantVectorSpaces}) given by
  \begin{equation}
    \label{Z2equivariantVectorSpaceConcentratedOnFixedLocus}
    \left(\!\!   \raisebox{10pt}{
    \xymatrix@C=4pt@R=1em{
      \ZTwo/1
      \ar@(ul,ur)|-{\; \ZTwo\, }
      \ar[d]
      &\longmapsto&
      0
      \ar[d]^-{0}
      \\
      \ZTwo/\ZTwo
      &\longmapsto&
      \mathbf{1}
    }
    }
    \right)
    \;\;\;
    \in
    \;\;
    \ZTwo \EquivariantDualVectorSpaces\;.
  \end{equation}
  Recalling the three injective atoms of
  $\ZTwo$-equivariant dual vector spaces from Example
  \ref{InjectiveZ2EquivariantDualVectorSpaces},
  we find that the injective resolution
  (Example \ref{InjectiveResolutionOfEquivariantDualVectorSpaces})
  of \eqref{Z2equivariantVectorSpaceConcentratedOnFixedLocus} is
  the $\ZTwo$-equivariant cochain complex:.
  $$
    \xymatrix@C=5pt@R=-1.5pt{
      & &
      &&&&
      & & &
      \,\,\,\rotatebox[origin=c]{-27}{$\vdots$}
      \\
      \\
      \\
      & &
      &&&&
      & &
      0
      \ar[dddd]
      \ar[uuur]
      \\
      & &
      &&&&
      & & &
      \,\,\,\rotatebox[origin=c]{-27}{$\vdots$}
      \\
      \\
      & &
      &&&&
      &
      \mathbf{1}
      \ar[dddd]
      \ar[uuur]
      \\
      & &
      &&&&
      & &
      0
      \ar[uuur]
      \\
      \\
      \ZTwo/1
      \ar@(ul,ur)|-{\; \ZTwo\, }
      \ar[dddd]
      &\mapsto&
      0
      \;
      \ar[dddd]
      \ar@{^{(}->}[rrrr]
      &&&&
      \mathbf{1}
      \ar[dddd]|-{
        \phantom{\vert}
        \mathrm{id}
        \phantom{\vert}
      }
      \ar[uuur]^<<<<<<{
        \phantom{\vert}
        \mathrm{id}
        \phantom{\vert}
      }
      \\
      & &
      &&&&
      &
      0
      \ar[uuur]
      \\
      \\
      \\
      \ZTwo/\ZTwo
      &\mapsto&
      \mathbf{1}
      \;
      \ar@{^{(}->}[rrrr]
      &&&&
      \mathbf{1}
      \ar[uuur]
    }
  $$

  In terms of generators-and-relations
  (Notation \ref{GeneratorsAndRelationsPresentationOfCochainComplexes}), this says:
  \begin{equation}
    \label{InjectiveResolutionOfZ2EquivariantVectorSpace}
    \def\arraystretch{1}
    \begin{array}{l}
    \mathrm{Inj}^\bullet
    \left(\!\!
      \raisebox{19pt}{
      \xymatrix@C=8pt{
        \ZTwo/1
        \ar@(ul,ur)|-{\; \ZTwo}
        \ar[d]
        &\longmapsto&
        0
        \ar[d]
        \\
        \ZTwo/\ZTwo
        &\longmapsto&
        \mathbb{R}
        \langle
          c_0
        \rangle
        \big/
        (
          d\, c_0 \, = 0
        )
      }
      }
    \! \right)
    \\
    \\
    =
    \;\;\;\;
    \left(\!\!
      \raisebox{20pt}{
      \xymatrix@C=8pt@R=1.5em{
        \ZTwo/1
        \ar@(ul,ur)|-{\; \ZTwo}
        \ar[d]
           &\longmapsto&
        \mathbb{R}
        \left\langle
          \!\!\!
          {\begin{array}{l}
            c'_0, c_1
            \\
            c_0
          \end{array}}
          \!\!\!
        \right\rangle
        \big/\!\!
        \left(
          {\begin{aligned}
            d\, c'_0 & = c_1
            \\[-5pt]
            d \, c_1 & = 0
            \\[-5pt]
            d\, c_0 &  = 0
          \end{aligned}}
        \right)
        \ar@<-22pt>@{->>}[d]
        \\
        \ZTwo/\ZTwo
        &\longmapsto&
        \mathbb{R}
        \langle
          c_0
        \rangle
        \big/
        \big(
          d\, c_0 \, = 0
        \big)
      }
      }
    \!\!\!\!\! \right)
    \,.
    \end{array}
  \end{equation}
\end{example}

\medskip

\noindent {\bf Equivariant dgc-algebras.}

\begin{defn}[Equivariant dgc-Algebras]
  \label{EquivariantdgcAlgebras}
  We write
  $$
    \EquivariantdgcAlgebras
    \; :=\;
    \mathrm{Functors}
    \big(
      G \mathrm{Orb}
      \,,\,
      \dgcAlgebras
    \big)
  $$

  \noindent
  for the category of functors from the $G$-orbit category
  (Def. \ref{OrdinaryOrbitCategory}) to the
  category of connective dgc-algebras over the real numbers.
\end{defn}

\begin{defn}[Equivariant cochain cohomology groups]
  \label{EquivariantCochainCohomologyGroups}
  For $A \in \EquivariantdgcAlgebras$ (Def. \ref{EquivariantdgcAlgebras})
  and $n \in \mathbb{N}$, we write
    $$
    \underline{H}^n(A)
    \;\;
    \in
    \;
    \EquivariantDualVectorSpaces
  $$

  \noindent
  for the equivariant dual vector space (Def. \ref{EquivariantVectorSpaces})
  of cochain cohomology groups
  $$
    \underline{H}^n(A)
    \;\;
    :
    \;\;
    G/H
    \;\;
    \longmapsto
    \;\;
    H^n
    \big(
      A(G/H)
    \big)
    \,.
  $$
\end{defn}

\begin{example}[Equivariant base dgc-algebra]
  \label{EquivariantdgcBaseAlgebra}
  We write
  $
    \underline{\mathbb{R}}
    \,
    \in
    \,
       \EquivariantdgcAlgebras
  $
  for the equivariant dgc-algebra (Def. \ref{EquivariantdgcAlgebras})
  which is constant on the ground field $\mathbb{R}$:
  $$
    \underline{\mathbb{R}}
    \;\;
    :
    \;\;
    G/H
    \;\;
    \longmapsto
    \;\;
    \mathbb{R}
    \,.
  $$
  \vspace{-.6cm}

  For the case $G = \ZTwo$ (Example \ref{OrbitCategoryOfZ2}):
  $$
    \raisebox{12pt}{
    \xymatrix@R=12pt@C=7pt{
      \ZTwo/1
      \ar@(ul,ur)|-{\; \ZTwo}
      \ar[d]
      &\longmapsto&
      \mathbb{R}
      \ar[d]^-{ \mathrm{id} }
      \\
      \ZTwo/\ZTwo
      &\longmapsto&
      \mathbb{R}
    }
    }
  $$
\end{example}

\begin{example}[Equivariant smooth de Rham complex]
  \label{EquivariantSmoothdeRhamComplex}
  $\,$

  \noindent
  For
  $G \acts \; X \,\in\, \GActionsOnSmoothManifolds$
  (Def. \ref{ProperGActionsOnSmoothManifolds}).
  there is the equivariant dgc-algebra
  (Def. \ref{EquivariantdgcAlgebras})  
  $$
    \Omega^\bullet_{\mathrm{dR}}
    \big(
      \orbisingular( X \!\sslash\! G )
    \big)
    \;\;
    \in
    \;\;
    \EquivariantdgcAlgebras
  $$

  \noindent
  of equivariant smooth differential forms
  (Example \ref{EquivariantSmoothDifferentialForms})
  equipped with the wedge product and de Rham
  differential formed stage-wise, as in the
  ordinary smooth de Rham complex (e.g. \cite{BottTu82})
  of the fixed loci.
\end{example}

\begin{example}[Free equivariant dgc-algebra on equivariant cochain complex]
  \label{FreeEquivariantdgcAlgebraOnEquivariantCochainComplex}
  $\,$
  
\noindent   For $V^\bullet \,\in\, \EquivariantCochainComplexes$
  (Def. \ref{EquivariantCochainComplexes}):

  \noindent {\bf (i)} We obtain
  the free equivariant dgc-algebra (Def. \ref{EquivariantdgcAlgebras})
  $$
    \mathrm{Sym}(V^\bullet)
    \;\;
      \in
    \;\;
    \EquivariantdgcAlgebras \;,
  $$

\noindent   given over each $G/H \,\in\, G \mathrm{Orb}$,
  by the free dgc-algebra on the cochain complex at that stage:
  $$
    \mathrm{Sym}(V^\bullet)
    \;
      :
    \;
    G/H
    \;\mapsto\;
    \mathrm{Sym}
    \big(
      V^\bullet(G/H)
    \big),
  $$

  \noindent
  with all structure maps induced by the functoriality of
  the non-equivariant $\mathrm{Sym}$-construction.

 \noindent {\bf (iii)}  This extends to a functor
 \vspace{-2mm}
  \begin{equation}
    \label{AdjunctionBetweenEquivariantdgcAlgebrasAndEquivariantCochainComplexes}
    \xymatrix{
      \EquivariantdgcAlgebras \;
      \ar@{<-}@<+6pt>[rrr]^-{ \mathrm{Sym} }
      \ar@<-6pt>[rrr]_-{ \mathrm{CchnCmplx}  }^-{ \bot }
      &&&
  \;    \EquivariantCochainComplexes
    },
  \end{equation}

  \noindent
  which is left adjoint to the evident assignment of
  underlying equivariant cochain complexes.
\end{example}

In terms of generators and relations
(Notation \ref{GeneratorsAndRelationsPresentationOfCochainComplexes},
\ref{GeneratorsAndRelationsPresentationOfdgcAlgebras}),
passing to free dgc-algebras means to replace angular brackets by
square brackets:

\begin{example}[Free $\ZTwo$-equivariant dgc-algebra on injective resolution]
  \label{FreeZ2EquivariantdgcAlgebra}
  In the case $G = \ZTwo$ (Example \ref{OrbitCategoryOfZ2}),
  the free $\ZTwo$-equivariant dgc-algebra
  (Example \ref{FreeEquivariantdgcAlgebraOnEquivariantCochainComplex})
  on the $n$-fold delooping (Def. \ref{DeloopingOfEquivariantCochainComplex})
  of the injective resolution \eqref{InjectiveResolutionOfZ2EquivariantVectorSpace}
  from Example \ref{InjectiveResolutionOfZ2equiavriantVectorSpaces}
  is:
  \begin{equation}
    \label{ExampleOfInjectiveResolutionOfZ2EquivariantVectorSpace}
    \hspace{-5mm} 
    \def\arraystretch{2}
    \begin{array}{ll}
    \mathrm{Sym}
    \circ
    \mathfrak{b}^n
    \circ
    \mathrm{Inj}^\bullet
    \!
    &
    \hspace{-3mm} 
    \left(
      \!\!\!\!\!
      \raisebox{20pt}{
      \xymatrix@C=2pt{
        \ZTwo/1
        \ar@(ul,ur)|-{\; \ZTwo}
        \ar[d]
        &\longmapsto&
        0
        \ar[d]
        \\
        \ZTwo/\ZTwo
        &\longmapsto&
        \mathbb{R}
        \langle
          c_0
        \rangle
        \big/
        (
          d\, c_0 \, = 0
        )
      }
      }
      \!\!\!
    \right)
    \\
      &
      =
    \left(
      \!\!\!\!
      \raisebox{20pt}{
      \xymatrix@C=2pt@R=1.5em{
        \ZTwo/1
        \ar@(ul,ur)|-{\; \ZTwo}
        \ar[d]
        &\longmapsto&
        \mathbb{R}
        \!
        \left[
          \!\!\!
          {\begin{array}{l}
            c'_n, c_{n+1}
            \\
            c_n
          \end{array}}
          \!\!\!
        \right]
        \!\big/\!
        \left(
          {\begin{aligned}
            d\, c'_n & = c_{n+1}
            \\[-5pt]
            d \, c_{n+1} & = 0
            \\[-5pt]
            d\, c_n &  = 0
          \end{aligned}}
        \right)
        \ar@<-22pt>@{->>}[d]
        \\
        \ZTwo/\ZTwo
        &\longmapsto&
        \mathbb{R}
        [
          c_n
        ]
        \big/
        (
          d\, c_n \, = 0
        )
      }
      }
      \!\!\!\!
    \right).
    \end{array}
  \end{equation}
\end{example}

In equivariant generalization of \cite[Def. 3.25]{FSS23-Char}, we have:

\begin{defn}[Equivariant $L_\infty$-algebras]
\label{EquivariantLInfinityAlgebras}
We write
\begin{equation}
  \label{DefiningEmbeddingOfEquivariantLInfinityAlgebrasInEquivariantdgcAlgebras}
  \xymatrix@R=-5pt{
    \EquivariantLInfinityAlgebras
    \;
    \ar@{^{(}->}[rr]^-{ \mathrm{CE} }
    &&
    \EquivariantdgcAlgebrasOp
    \\
    \underline{\mathfrak{g}}
    \ar@{}[rr]|<<<<<<<<<<<<<<<<<<<<<<{ \longmapsto }
    &&
    \mathrm{CE}
    \big(\,
      \underline{\mathfrak{g}}
    \, \big)
  }
\end{equation}

\noindent
for the opposite of the full subcategory of
equivariant dgc-algebras (Def. \ref{EquivariantdgcAlgebras})
on those that are stage-wise Chevalley-Eilenberg algebras
of $L_\infty$-algebras
(connective and finite-type  over the real numbers, as in \cite[Def. 3.25]{FSS23-Char}).

\end{defn}

In generalization of Example \ref{EquivariantSmoothdeRhamComplex},
we have:

\begin{example}[Proper $G$-equivariant and Borel-Weil-Cartan $T$-equivariant smooth de Rham complex]
\label{ProperGEquivariantAndBorelTEquivariantSmoothDeRhamComplex}
$\,$

\noindent
Let $\big( T \!\times G \big) \acts \; X \,\in\, \TGActionsOnSmoothManifolds$
(Def \ref{ProperGActionsOnSmoothManifolds}),
where $T \,\in\, \mathrm{CompactLieGroups}$ is finite-dimensional
with Lie algebra denoted (as in Notation \ref{GeneratorsAndRelationsPresentationOfCochainComplexes})
\begin{equation}
  \label{LieAlgebraOfCompactEquivarianceGroup}
  \mathfrak{t}
  \;\simeq\;
  \Big\{
    \langle t_a \rangle_{a = 1}^{\mathrm{dim}(T)}
    \,,\,
    [-,-]
  \Big\}
  \;\in\;
  \LieAlgebras
  \,.
\end{equation}

\noindent
  Consider the equivariant dgc-algebra
  (Def. \ref{EquivariantdgcAlgebras})
  $$
    \Omega^\bullet_{\mathrm{dR}}
    \Big(\!
      \big(
        \orbisingular( X \!\sslash\! G )
      \big) \!\sslash\! T
    \Big)
    \;\;
    \in
    \;\;
    \EquivariantdgcAlgebras
  $$

  \noindent
  of $T$-invariants
  in the tensor product of
  proper $G$-equivariant smooth differential forms
  (Example \ref{EquivariantSmoothDifferentialForms})
  with
  the free symmetric graded algebra
  on
  $$
    \mathfrak{b}^2\mathfrak{t}^\vee
    \;\simeq\;
    \langle r^{\, a}_2\rangle_{a = 1}^{\mathrm{dim}(T)}
    \,,
  $$

  \noindent
  (the linear dual space
  of
  \eqref{LieAlgebraOfCompactEquivarianceGroup} in degree 2)
  and equipped  with the sum of the
  de Rham differential
  $$
    d_{\mathrm{dR}}
    \;:\;
    \omega \wedge r_2^{a_1} \wedge \cdots \wedge r_2^{a_p}
    \;\longmapsto\;
    \big(
      d_{\mathrm{dR}}
      \omega
    \big)
    \wedge r_2^{a_1} \wedge \cdots \wedge r_2^{a_p}
  $$
  \vspace{-.55cm}

  \noindent
  and the operator
  \vspace{0mm}
  $$
    r_2^a \wedge \iota_{\, t_a}
    \;:\;
    \omega \wedge r_2^{a_1} \wedge \cdots \wedge r_2^{a}
    \;\longmapsto\;
    \big(
      \iota_{\,t_a} \omega
    \big)
    \wedge
    r^{\, a}_2
    \,
    \wedge r_2^{a_1} \wedge \cdots \wedge r_2^{a}
    \,,
  $$
  \vspace{-.6cm}

  \noindent
  where
  \begin{itemize}[
    itemsep=2pt
  ]
  
\item  $\omega \in \Omega^\bullet_{\mathrm{dR}}(-)$,
  
\item  $\iota_{\,t_a}$ denotes the contraction of
  differential forms with the vector field that
  is the derivative of the action $T \times X \to X$
  along $t_a$,
    
  \item summation over the
  index $a \in \{1, \cdots, \mathrm{dim}(T)\}$ is understood,
  and
    
 \item  the $T$-action
  on
  $\mathfrak{t}^\vee$ is the coadjoint action
  and on that differential forms
  is by pullback along the
  given action on $X$:
  \end{itemize}

  \begin{equation}
    \label{CartanModelOnFixedLocus}
    \hspace{-18mm} 
    \def\arraystretch{2}
    \begin{array}{l}
    \overset{
      \mathclap{
      \raisebox{3pt}{
        \tiny
        \color{darkblue}
        \bf
        \def\arraystretch{1}
        \begin{tabular}{c}
          proper $G$-equivariant 
          \\
          \&
          Borel $T$-equivariant
          \\
          smooth de Rham complex
        \end{tabular}
      }
      }
    }{
    \Omega^\bullet_{\mathrm{dR}}
    \Big(\!
      \big(
        \orbisingular( X \!\sslash\! G )
      \big) \!\sslash\! T
    \Big)
    }
    \;\;
    :
    \\
    \;\;
    G/H
    \;\;
    \longmapsto
    \;\;
    \overset{
      \mathclap{
      \raisebox{+0pt}{
        \tiny
        \color{darkblue}
        \bf
        \begin{tabular}{c}
          Cartan model for
          $T$-equivariant Borel cohomology
          of $H$-fixed locus $X^H$
        \end{tabular}
      }
      }
    }{
    \Big(
    \Omega^\bullet_{\mathrm{dR}}
    \big(
      X^H
    \big)
    \Big[
      \{r^a_2\}_{a = 1}^{\mathrm{dim}(T)}
    \Big]
    \,,\,
      d_{\mathrm{dR}}
        +
      r_2^a \wedge  \iota_{t_a}
    \Big)^T.
    }
    \end{array}
  \end{equation}

  \noindent 
  This is, stage-wise over $G/H \in G\mathrm{Orb}$
  (Def. \ref{OrdinaryOrbitCategory}),
  the {\it Cartan model} dgc-algebra for Borel $T$-equivariant
  de Rham cohomology
  (\cite{AtiyahBott84}\cite[\S 5]{MathaiQuillen86}\cite{Kalkman93}\cite{GuilleminSternberg99},
  review in \cite{Meinrenken06} \cite{KubelThom15}\cite{Pestun17}),
  here formed for the fixed submanifolds
  (Lemma \ref{FixedLociOfProperSmoothActionsAreSmoothManifolds})
  of  all the subgroups of the $G$-action.

\end{example}

\noindent {\bf Homotopy theory of equivariant dgc-algebras.}

\begin{prop}[Projective model structure on connective equivariant dgc-algebras {\cite[Theorem 3.2]{Scull02}}]
  \label{ProjectiveModelStructureOnConnectiveEquivariantdgcAlgebras}
  There is the structure of a model category
  on $\EquivariantdgcAlgebras$ (Def. \ref{EquivariantdgcAlgebras})
  whose
  \begin{itemize}[
  leftmargin=.85cm,
  itemsep=1pt,
  topsep=2pt
] 
   \item[{$\mathrm{W}$}] -
    {\it weak equivalences} are the quasi-isomorphisms
    over each $G/H \in G\ \mathrm{Orb}$;

   \item[{$\mathrm{Fib}$}] -
    {\it fibrations} are the degreewise surjections
    whose degreewise kernels
    are injective (Def. \ref{InjectiveObjects}).
  \end{itemize}

\noindent  We denote this model category by
  $$
    \EquivariantdgcAlgebrasProj
    \;\in\;
    \mathrm{ModCat} \;.
  $$

\end{prop}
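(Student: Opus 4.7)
\medskip

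\noindent
\textbf{Proof proposal.}
The plan is to obtain the model structure by right-transfer along the free/forgetful adjunction
\vspace{-1mm}
$$
  \xymatrix{
    \EquivariantdgcAlgebras
    \ar@{<-}@<+6pt>[rr]^-{ \mathrm{Sym} }
    \ar@<-6pt>[rr]_-{ \mathrm{CchnCmplx} }^-{\bot}
    &&
    \EquivariantCochainComplexes
  }
$$
\vspace{-3mm}

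\noindent from Example \ref{FreeEquivariantdgcAlgebraOnEquivariantCochainComplex}, starting from a model structure on $\EquivariantCochainComplexes$ in which weak equivalences are stage-wise quasi-isomorphisms and fibrations are degreewise surjections with stage-wise injective kernels (in the sense of Def. \ref{InjectiveObjects}). The existence of this cochain-complex model structure is the first technical input; with this in hand, the statement of Prop. \ref{ProjectiveModelStructureOnConnectiveEquivariantdgcAlgebras} follows by the standard right-transfer criterion (Kan's lemma, e.g.\ \cite[Thm. 11.3.2]{Hirschhorn02}) provided one checks that transferred acyclic cofibrations remain weak equivalences after applying $\mathrm{Sym}$.

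For the underlying model structure on $\EquivariantCochainComplexes$, I would proceed as in the construction of an injective model structure on chain complexes in an abelian category with enough injectives. The existence of enough injectives here is Prop.~\ref{InjectiveEnvelopeOfEquivariantDualVectorSpaces}: every $V \in \EquivariantDualVectorSpaces$ embeds into an injective envelope $\mathrm{Inj}(V)$ built from the atoms $\mathrm{Inj}_H(-)$ of Lemma \ref{RightExtensionOfLinearRepresentationsToEquivariantVectorSpaces}. Factorizations are produced by the small object argument using generating (trivial) cofibrations of the evident form, built from the mapping cones $\mathfrak{e} V$ of Example \ref{ConeOnEquivariantCochainComplex} applied to the injective atoms $\mathrm{Inj}_H(\mathbf{1}_\rho)$ for all subgroups $H \subset G$ and all irreducible $\WeylGroup(H)$-representations $\rho$. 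The key point is that these atoms are small (as they are finitely generated) and that the degreewise-injective-kernel surjections are precisely those with the right lifting property against the resulting generating trivial cofibrations; this uses the defining extension property \eqref{ExtensionOfMapIntoInjectiveObject} together with the existence of the injective resolutions $\mathrm{Inj}^\bullet(V)$ (Example \ref{InjectiveResolutionOfEquivariantDualVectorSpaces}).

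To transfer this structure along $\mathrm{Sym}$, Kan's theorem requires: (i) the right adjoint $\mathrm{CchnCmplx}$ preserves filtered colimits — this is immediate since both categories are computed stage-wise over $G\mathrm{Orbits}$ and the forgetful functor $\dgcAlgebras \to \CochainComplexes$ preserves filtered colimits; (ii) every relative cell complex built from $\mathrm{Sym}$ applied to generating acyclic cofibrations is a stage-wise quasi-isomorphism. For (ii), one argues stage-wise: if $V \hookrightarrow W$ is an acyclic cofibration of equivariant cochain complexes whose cofiber is contractible, then $\mathrm{Sym}(V) \to \mathrm{Sym}(W)$ factors stage-wise as a sequence of pushouts along $\mathrm{Sym}$ of the generating acyclic maps, and stage-wise these reduce to free graded-commutative extensions by acyclic cochain complexes, hence are quasi-isomorphisms by the classical argument (cf.\ \cite[\S 3.1]{FSS20d}). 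The stage-wise reduction is legitimate because $\mathrm{Sym}$ is computed pointwise on $G\mathrm{Orbits}$.

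The main obstacle I anticipate is not formal but conceptual: verifying that the fibrations in the transferred structure are \emph{exactly} the degreewise surjections with stage-wise injective kernels, as opposed to merely containing them. For one direction, any morphism with this property has the right lifting property against the image under $\mathrm{Sym}$ of the cochain-complex generating trivial cofibrations, by checking degree by degree using injectivity \eqref{ExtensionOfMapIntoInjectiveObject} together with the recursive extension procedure \eqref{InjectiveExtension} of Example \ref{InjectiveResolutionOfEquivariantDualVectorSpaces}. For the converse, one shows that every morphism with this lifting property decomposes, via the small object argument applied to \emph{all} generating trivial cofibrations, as a retract of a transfinite composition of pushouts of $\mathrm{Sym}$ of generating acyclic maps, each of which is visibly a surjection with injective kernel. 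The delicate point is that injectivity of kernels in $\EquivariantDualVectorSpaces$ behaves well under tensor products by Lemma \ref{TensorProductPreservesInjectivityOfFiniteDimensionalDualVectorGSpaces}, which is what allows injectivity to be preserved as one passes from linear generators to their symmetric algebra envelopes. Once this characterization is secured, all remaining axioms are routine.
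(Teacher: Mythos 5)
The paper does not prove this proposition itself — it cites Scull, whose construction is a Bousfield--Gugenheim-style direct verification of the model category axioms, with factorizations produced by the small object argument applied to injective resolutions. Your right-transfer along $\mathrm{Sym} \dashv \mathrm{CchnCmplx}$ repackages the same underlying mechanics; the two ingredients you isolate — enough injectives in $\EquivariantDualVectorSpaces$ (Prop.~\ref{InjectiveEnvelopeOfEquivariantDualVectorSpaces}), and the characteristic-zero fact that $\mathrm{Sym}$ sends monic quasi-isomorphisms to quasi-isomorphisms so that relative $\mathrm{Sym}(J)$-cell complexes are weak equivalences, checked stage-wise because colimits and $\mathrm{Sym}$ are both computed objectwise over $G\mathrm{Orbits}$ — are indeed the crux, so the approach is sound in outline.

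Two cautions. First, the existence of the source model structure on $\EquivariantCochainComplexes$ with fibrations precisely the degreewise epimorphisms having degreewise injective kernels is itself a non-immediate intermediate step: in the connective case this is plausible by duality with the projective model structure on bounded chain complexes in a non-semisimple abelian category, but your sketch of the generating acyclic cofibrations (built from the cone construction on the injective atoms $\mathrm{Inj}_H(-)$ of Lemma~\ref{RightExtensionOfLinearRepresentationsToEquivariantVectorSpaces}) and the verification that right lifting against them recovers exactly that class of maps is where most of the real work lives, and this is not spelled out. Second, the concern voiced in your final paragraph is not a genuine obstacle once the transfer is set up as a Kan transfer: there, the transferred fibrations are \emph{defined} to be $\mathrm{CchnCmplx}^{-1}(\mathrm{Fib})$, and the content of Kan's theorem is that this class, together with $\mathrm{CchnCmplx}^{-1}(\mathrm{W})$ and the induced cofibrations, forms a model structure — so the ``exactly vs.\ merely containing'' question does not arise as a separate step. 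Relatedly, Lemma~\ref{TensorProductPreservesInjectivityOfFiniteDimensionalDualVectorGSpaces} plays no role in establishing the model structure; its role in the paper is downstream, guaranteeing that elementary extensions of degreewise-injective algebras remain degreewise injective (cf.\ the parenthetical in Def.~\ref{MinimalEquivariantdgcAlgebras}), which is what makes minimal models cofibrant (Lemma~\ref{MinimalEquivariantdgcAlgebrasAreCofibrant}).
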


A key technical subtlety of the model structure on
equivariant dgc-algebras (Prop. \ref{ProjectiveModelStructureOnConnectiveEquivariantdgcAlgebras}),
compared to its non-equivariant version
(\cite[\S 4.3]{BousfieldGugenheim76}\cite[\S V.3.4]{GelfandManin96} \cite[Prop. 3.36]{FSS23-Char}),
is that not all objects are fibrant anymore,
since equivariantly the injectivity condition
(Def. \ref{InjectiveObjects})
is non-trivial (Prop. \ref{InjectiveEnvelopeOfEquivariantDualVectorSpaces}).
However, we have the following class of examples of fibrant objects:

\begin{prop}[Equivariant smooth de Rham complex is projectively fibrant]
  \label{EquivariantSmoothDeRhamComplexIsProjectivelyFibrant}
 For
  $G \acts \; X \,\in\, \GActionsOnSmoothManifolds$
  (Def. \ref{ProperGActionsOnSmoothManifolds}),
  the equivariant smooth de Rham complex (Example \ref{EquivariantSmoothdeRhamComplex})
  is a fibrant object in the projective model structure
  (Prop. \ref{ProjectiveModelStructureOnConnectiveEquivariantdgcAlgebras})
  $$
    \xymatrix{
      \Omega^\bullet_{\mathrm{dR}}
      \big(\!
        \orbisingular( X \!\sslash\! G )
      \big)
      \ar[rr]^-{ \in\; \mathrm{Fib} }
      &&
      0
    }
    \;\;\;\;
      \in
    \;\;\;
    \EquivariantdgcAlgebrasProj\;,
  $$

  \noindent at least if $G$ is of order 4 or cyclic of prime order.
\end{prop}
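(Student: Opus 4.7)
The plan is to verify the two defining conditions for a fibration in the projective model structure on equivariant dgc-algebras (Prop. \ref{ProjectiveModelStructureOnConnectiveEquivariantdgcAlgebras}) applied to the terminal morphism
\vspace{-1mm}
$$
  \Omega^\bullet_{\mathrm{dR}}\big( \orbisingular(X \!\sslash\! G) \big)
  \;\longrightarrow\;
  0
  \,.
$$
First, degreewise surjectivity is automatic, since the codomain is the zero equivariant dgc-algebra. Hence the entire content of the statement reduces to verifying that the degreewise kernels -- which, for this terminal morphism, coincide with the underlying equivariant dual vector spaces $\Omega^n_{\mathrm{dR}}(\orbisingular(X \!\sslash\! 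G))$ themselves in each degree $n \in \mathbb{N}$ -- are injective objects (Def. \ref{InjectiveObjects}) in $\EquivariantDualVectorSpaces$.

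Second, this injectivity is precisely the content of the case analysis already carried out in Lemmas \ref{ZpEquivariantSmoothDifferentialFormsFormInjectiveDualVectorSpace}, \ref{Z4EquivariantSmoothDifferentialFormsAreInjective}, and \ref{Z2TimesZ2EquivariantDifferentialFormsAreInjective}, covering respectively the cases $G = \mathbb{Z}_p$ for $p$ prime, $G = \mathbb{Z}_4$, and $G = \mathbb{Z}_2 \times \mathbb{Z}_2$. Together these lemmas exhaust every finite group of prime order or of order $4$, by the classification of groups of those orders (a group of order $4$ is either $\mathbb{Z}_4$ or the Klein four-group $\mathbb{Z}_2 \times \mathbb{Z}_2$). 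So the proof consists simply of invoking the appropriate one of these three lemmas according to the isomorphism type of $G$, applied degreewise in $n$.

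The only subtle point -- not an obstacle but worth noting -- is that no further hypothesis on the smooth action is needed: Lemma \ref{FixedLociOfProperSmoothActionsAreSmoothManifolds} guarantees that the fixed loci $X^H$ are smooth submanifolds, and Notation \ref{ExtensionOfSmoothDifferentialFormsAwayFromFixedLoci} then supplies the linear sections $\mathrm{ext}_H$ used in those lemmas to split the restriction maps and exhibit the direct-sum decompositions of $\Omega^n_{\mathrm{dR}}(\orbisingular(X \!\sslash\! G))$ into injective atoms $\mathrm{Inj}_H(-)$, whose injectivity is established by Prop. \ref{InjectiveEnvelopeOfEquivariantDualVectorSpaces}. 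Thus no genuinely new computation is required beyond the assembly of the three lemmas, and the main ``obstacle'' is really only notational: the proof is a straightforward application of already-established results.
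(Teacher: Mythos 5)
Your proof is correct and follows essentially the same approach as the paper: reduce fibrancy to degreewise injectivity of the equivariant dual vector spaces of smooth forms via Prop.~\ref{ProjectiveModelStructureOnConnectiveEquivariantdgcAlgebras}, then invoke Lemmas~\ref{ZpEquivariantSmoothDifferentialFormsFormInjectiveDualVectorSpace}, \ref{Z4EquivariantSmoothDifferentialFormsAreInjective}, and \ref{Z2TimesZ2EquivariantDifferentialFormsAreInjective}. Your observations about surjectivity being automatic and about the classification of groups of order four make the argument slightly more explicit but do not change the substance.
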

\begin{proof}
  By Prop. \ref{ProjectiveModelStructureOnConnectiveEquivariantdgcAlgebras},
  the statement is equivalent to the claim that the
  equivariant dual vector spaces of equivariant smooth differential
  $n$-forms are injective. This is indeed the case, by
  Lemmas
  \ref{ZpEquivariantSmoothDifferentialFormsFormInjectiveDualVectorSpace},
  \ref{Z4EquivariantSmoothDifferentialFormsAreInjective},
  \ref{Z2TimesZ2EquivariantDifferentialFormsAreInjective}
  (Remark \ref{EquivariantSmoothDifferentialFormsAreInjective}).
\end{proof}

Next we turn to discussion of
fibrant and cofibrant equivariant dgc-algebras.

\medskip

\noindent {\bf Minimal equivariant dgc-algebras.}

\vspace{-1mm} 
\begin{defn}[Minimal equivariant dgc-algebras {\cite[Construction 5.10]{Tri82}\cite[\S 11]{Scull02}\cite[\S 4]{Scull08}}]
  \label{MinimalEquivariantdgcAlgebras}

  $\,$

  \noindent
  Let $A \in \EquivariantdgcAlgebras$ (Def. \ref{EquivariantdgcAlgebras})
  be such that, for all $k \in \mathbb{N}$, the underlying
  $
    \mathrm{ChnCmplx}(A)^k
    \in
    \EquivariantCochainComplexes
  $
  is injective (Def. \ref{InjectiveObjects}).

  \noindent
  {\bf (i)} For $n \in \mathbb{N}$, an \emph{elementary extension}
  $
    A
    \longhookrightarrow
      A
      [
        \mathfrak{b}^n V
      ]_\phi
  $
  of $A$ in degree $n$ is a pushout of
  the image under $\mathrm{Sym}$ (Example \ref{FreeEquivariantdgcAlgebraOnEquivariantCochainComplex})
  of the cone inclusion
  (Example \ref{ConeOnEquivariantCochainComplex})
  of the $(n+1)$-fold delooping (Def. \ref{DeloopingOfEquivariantCochainComplex})
  of the injective resolution $\mathrm{Inj}^\bullet(V)$
  (Example \ref{InjectiveResolutionOfEquivariantDualVectorSpaces})
  \begin{equation}
    \label{ElementaryExtensionPushout}
    \hspace{-7mm} 
    \raisebox{28pt}{
    \xymatrix@R=1.2em@C=1em{
      A
      \big[
        \mathfrak{b}^n V_n
      \big]_{\phi_n}
      \ar@{<-^{)}}[dd]
      \ar@{<-}[rr]
      \ar@{}[ddrr]|-{ \mbox{\tiny(po)} }
      &&
      \mathrm{Sym}
      \big(
    \mathfrak{e}\mathfrak{b}^n \mathrm{Inj}^\bullet(V_n)
      \big)
      \ar@{<-^{)}}[dd]^-{
        \mathrm{Sym}
        \big(
\raisebox{1pt}{$i_{\mathfrak{b}^{n+1} \mathrm{Inj}^\bullet(V_n)}$}
        \big)
      }
      \\
      \\
      \mathclap{\phantom{\vert^{\vert^\vert}}}
      A
      \ar@{<-}[rr]^-{ \widetilde{ \phi_n^\bullet } }
      &&
      \mathclap{\phantom{\vert^{\vert^\vert}}}
      \mathrm{Sym}
      \big(
        \mathfrak{b}^{n+1}\mathrm{Inj}^\bullet(V_n)
      \big)
    }
    }
    \;
    \in
    \EquivariantdgcAlgebras
  \end{equation} 
  along the adjunct $\widetilde \phi^\bullet$
  \eqref{AdjunctionBetweenEquivariantdgcAlgebrasAndEquivariantCochainComplexes}
  of an injective extension \eqref{InjectiveExtension}

  \vspace{0mm}
\noindent
  \begin{equation}
    \label{ExtensionToInjectiveResolutionOfAttachingMap}
  \xymatrix{
      A^\bullet
      \ar@{<-}[r]^-{ \phi_n^\bullet }
      &
      \mathfrak{b}^{n+1}
      \mathrm{Inj}^\bullet(V_n)
      }
      \in
      \;
      \EquivariantCochainComplexes
      \end{equation}

\noindent
  of a given {\it attaching map} out of a given
  equivariant dual vector space
  $V_n$ (Def. \ref{EquivariantVectorSpaces}):
  \begin{equation}
    \label{AttachingMap}
    \xymatrix{
      A^{n+1}_{\mathrm{clsd}}
      \ar@{<-}[r]^-{ \phi_n }
      &
      V_n
      }
      \;
      \in
      \;
      \EquivariantDualVectorSpaces
      \,.
      \end{equation}

\vspace{-4mm}
  \noindent
  {\bf (ii)}
  An inclusion
  \begin{equation}
    \label{RelativeMinimalEquivariantdgcAlgebra}
    \xymatrix{
      B^\bullet
      \;
      \ar@{^{(}->}[r]^-{ \mathrm{min} }
      &
      A^\bullet \; \in \EquivariantdgcAlgebras
    }
  \end{equation}

  \noindent
  of degreewise injective (Def. \ref{InjectiveObjects})
  equivariant dgc-algebras (Def. \ref{EquivariantdgcAlgebras})
  which are equivariantly 1-connected
  $$
    B^0 \;\simeq\; \underline{\mathbb{R}}
    \,,\;\;\;\;
    B^1 \;\simeq\; \underline{\mathbb{R}}
  $$

  \noindent
  is called \emph{relative minimal} if it is isomorphic
  under $B^\bullet$
  to the result of a sequence of elementary extensions
  \eqref{ElementaryExtensionPushout}
  in strictly increasing degrees
  (noticing with Lemma \ref{TensorProductPreservesInjectivityOfFiniteDimensionalDualVectorGSpaces},
  that the result of an elementary extension
  \eqref{ElementaryExtensionPushout} is again degreewise injective).

  \noindent
  {\bf (iii)}  An equivariant dgc-algebra $A^\bullet$,
  such that the unique inclusion of the equivariant ground field
  $\underline{R}$
  (which is clearly 1-connected and injective, by
  Example \ref{GroundFieldIsInjectiveAsEquivariantDualVectorSpace})
  is a relative minimal dgc-algebra \eqref{RelativeMinimalEquivariantdgcAlgebra}
  \begin{equation}
    \label{GroundFieldInclusionIntoEquivariantMinimaldgcAlgebra}
    \xymatrix{
      \underline{\mathbb{R}}
      \;
      \ar@{^{(}->}[r]^-{ \mathrm{min} }
      &
      A^\bullet \in \EquivariantdgcAlgebras \;,
    }
  \end{equation}

\noindent
  is called a \emph{minimal equivariant dgc-algebra}.
\end{defn}

\begin{defn}[Minimal equivariant $L_\infty$-algebra]
  \label{MinimalEquivariantLInfinityAlgebra}
  Any minimal equivariant dgc-algebra $A$ (Def. \ref{MinimalEquivariantdgcAlgebras})
  is the equivariant Chevalley-Eilenberg algebra \eqref{DefiningEmbeddingOfEquivariantLInfinityAlgebrasInEquivariantdgcAlgebras}
  \vspace{-2mm}
  $$
    A
      \;\simeq\;
    \mathrm{CE}
    \big(\,
      \underline{\mathfrak{g}}^A
    \big)
  $$

  \vspace{0mm}
  \noindent
  of an equivariant $L_\infty$-algebra
  $\underline{\mathfrak{g}}^A \;\in\; \EquivariantLInfinityAlgebras$
  (Def. \ref{EquivariantLInfinityAlgebras}), defined uniquely
  up to isomorphism.
  We say that the
  \emph{underlying} graded equivariant vector space
  (Def. \ref{EquivariantGradedVectorSpaces})  
  $$
    \underline{\mathfrak{g}}^A_{\, \bullet}
    \;\;
    \in
    \;\;
    \EquivariantGradedVectorSpaces
  $$

\noindent
  of this equivariant $L_\infty$-algebra is the
  linear dual of the spaces of generators
  $V^A_n \;\in\; \EquivariantDualVectorSpaces $ \eqref{AttachingMap}
  of the elementary extensions \eqref{ElementaryExtensionPushout}
  that exhibit the minimality of $A$:
  $$
    \underline{\mathfrak{g}}^A_{\, n}
    \;\;
    :=
    \big(
      V^A_n
    \big)^\vee
    \;\;\;\;
    \in
    \EquivariantVectorSpaces
    \,.
  $$
\end{defn}

\begin{example}[A minimal $\ZTwo$-equivariant dgc-algebra]
  \label{CheckingTwistorSpaceModZ2MinimalModel}
  We spell out the construction of an equivariant minimal dgc-algebra
  (Def. \ref{MinimalEquivariantdgcAlgebras}),
  for $G = \ZTwo$  (Example \ref{OrbitCategoryOfZ2}),
  which involves three basic cases of the
  elementary extensions \eqref{ElementaryExtensionPushout}:

  \noindent
  {\bf (i)} In the first stage, begin with the
  equivariant base algebra $\underline{\mathbb{R}}$
  (Example \ref{EquivariantdgcBaseAlgebra})
  and consider the attaching map \eqref{AttachingMap}
  in degree 2 given by
  \begin{equation}
    \label{AttachingMapInDegree2ForZ2EquivariantTwistorSpace}
    \phi_2
    \;\;
    :
    \;\;
    \raisebox{20pt}{
    \xymatrix@R=1.5em@C=3em{
      \ZTwo/1
      \ar@(ul,ur)|-{\; \ZTwo}
      \ar[d]
      &\longmapsto&
      \mathbb{R}
      \ar[d]^-{
        \mathrm{id}
      }
      \ar@{<-}[rr]^-{
        0
        \;
        \mapsfrom
        \;
        c_3
      }
      &&
      \mathbb{R}
      \langle
        c_{3}
      \rangle
      \ar[d]^-{ \mathrm{id} }
      \\
      \ZTwo/\ZTwo
      &\longmapsto&
      \mathbb{R}
      \ar@{<-}[rr]^-{
        0
        \;
        \mapsfrom
        \;
        c_3
      }
      &&
      \mathbb{R}
      \langle
        c_{3}
      \rangle
    }
    }
  \end{equation}

\noindent  By Example \ref{InjectiveZ2EquivariantDualVectorSpaces},
  the equivariant dual vector space
  on the right
  is already injective \eqref{TheInjectiveZ2EquivariantDualVectorSpaceConcentratedOnFixedLocus},
  so that
  we may extend this attaching map immediately to an
  equivariant cochain map \eqref{ExtensionToInjectiveResolutionOfAttachingMap}
  $$
    \phi_2^\bullet
    \;\;
    :
    \;\;
    \raisebox{24pt}{
    \xymatrix@R=1.5em@C=3em{
      \ZTwo/1
      \ar@(ul,ur)|-{\; \ZTwo}
      \ar[d]
      &\longmapsto&
      \mathbb{R}
      \ar[d]^-{
        \mathrm{id}
        }
      \ar@{<-}[rr]^-{
        0
        \;
        \mapsfrom
        \;
        c_3
      }
      &&
      \mathbb{R}
      \langle
        c_{3}
      \rangle
      \big/
      (
        d\, c_3 \,=\, 0
      )
      \ar[d]^-{ \mathrm{id} }
      \\
      \ZTwo/\ZTwo
      &\longmapsto&
      \mathbb{R}
      \ar@{<-}[rr]^-{
        0
        \;
        \mapsfrom
        \;
        c_3
      }
      &&
      \mathbb{R}
      \langle
        c_{3}
      \rangle
      \big/
      (
        d\, c_3 \,=\, 0
      )
      \mathrlap{\,,}
    }
    }
  $$

\noindent  where on the right we are using the generators-and-relations
  Notation \ref{GeneratorsAndRelationsPresentationOfCochainComplexes}.
  By Example
  \ref{FreeZ2EquivariantdgcAlgebra},
  its adjunct morphism
  of equivariant dgc-algebras is
  $$
    \widetilde \phi_2^\bullet
    \;\;
    :
    \;\;
    \raisebox{24pt}{
    \xymatrix@R=1.5em@C=3em{
      \ZTwo/1
      \ar@(ul,ur)|-{\; \ZTwo}
      \ar[d]
      &\longmapsto&
      \mathbb{R}
      \ar[d]^-{
        \mathrm{id}
      }
      \ar@{<-}[rr]^-{
        0
        \;
        \mapsfrom
        \;
        c_3
      }
      &&
      \mathbb{R}
      [
        c_{3}
      ]
      \big/
      (
        d\, c_3 \,=\, 0
      )
      \ar[d]^-{ \mathrm{id} }
      \\
      \ZTwo/\ZTwo
      &\longmapsto&
      \mathbb{R}
      \ar@{<-}[rr]^-{
        0
        \;
        \mapsfrom
        \;
        c_3
      }
      &&
      \mathbb{R}
      [
        c_{3}
      ]
      \big/
      (
        d\, c_3 \,=\, 0
      )
      \mathrlap{\,.}
    }
    }
  $$

 \noindent Since all these diagrams so far are constant on the
  orbit category, the resulting
  pushout \eqref{ElementaryExtensionPushout}
  is computed over both objects
  $\ZTwo/H \in \ZTwo \mathrm{Orb}$ as in
  non-equivariant dgc-theory, and thus yields
  this minimal equivariant dgc-algebra:
  \begin{equation}
    \label{FirstStageInMinimnalModelForCP3ModZ2}
    \raisebox{24pt}{
    \xymatrix@R=1.5em@C=3em{
      \ZTwo/1
      \ar@(ul,ur)|-{\; \ZTwo}
      \ar[d]
      &\longmapsto&
      \mathbb{R}
      [
        f_2
      ]
      \big/\!
      (
        d\, f_2 \,=\, 0
      )
      \ar@<-22pt>[d]_-{
        \mathrm{id}
      }
      \\
      \ZTwo/\ZTwo
      &\longmapsto&
      \mathbb{R}
      [
        f_2
      ]
      \!\big/\!
      (
        d\, f_2 \,=\, 0
      ) \,.
    }
    }
  \end{equation}

  \noindent
  {\bf (ii)}
  Consider next the following attaching map
  \eqref{AttachingMap}
  in degree 3
  to the equivariant dgc-algebra \eqref{FirstStageInMinimnalModelForCP3ModZ2}:
  \begin{equation}
    \label{AttachingMapInDegree3ForZ2EquivariantTwistorSpace}
    \hspace{-4mm} 
    \phi_3
    \;\;
    :
    \;\;
    \raisebox{24pt}{
    \xymatrix@R=1.5em@C=2.3em{
      \ZTwo/1
      \ar@(ul,ur)|-{\; \ZTwo}
      \ar[d]
      &\longmapsto&
      \mathbb{R}
      [
        f_2
      ]
      \big/\!
      (
        d\, f_2 \,=\, 0
      )
      \ar@<-22pt>[d]_-{
          \mathrm{id}
      }
      \ar@{<-}[rr]
      &&
      0
      \ar[d]
      \\
      \ZTwo/\ZTwo
      &\longmapsto&
      \mathbb{R}
      [
        f_2
      ]
      \!\big/\!
      (
        d\, f_2 \,=\, 0
      )
      \ar@{<-}[rr]^-{
        f_2 \wedge f_2
        \;
        \mapsfrom
        \;
        c_4
      }
      &&
      \mathbb{R}
      \langle
        c_{4}
      \rangle \,.
    }
    }
  \end{equation}

\noindent
  Here the equivariant dual vector space on the right
  is \emph{not} injective: Its injective envelope is
  given in Example \ref{InjectiveResolutionOfZ2equiavriantVectorSpaces},
  and the free dgc-algebra on this is given in
  Example \ref{FreeZ2EquivariantdgcAlgebra}, which says that
  the required extension
  \eqref{ExtensionToInjectiveResolutionOfAttachingMap}
  of the attaching map $\phi$ is hence of this form:
  $$
    \widetilde \phi_3^\bullet
    \;\;
    :
    \;\;
    \raisebox{20pt}{
    \xymatrix@R=1.6em@C=17pt{
      \ZTwo/1
      \ar@(ul,ur)|-{\; \ZTwo}
      \ar[d]
      &\longmapsto&
      \mathbb{R}
      [
        f_2
      ]
      \big/\!
      (
        d\, f_2 \,=\, 0
      )
      \ar@<-22pt>[d]_-{
        \mathrm{id}
      }
      \ar@{<-}[rr]^-{
        \scalebox{.7}{$
          \arraycolsep=1.4pt
          \def\arraystretch{1}
          {\begin{array}{ccc}
            0 &\mapsfrom& c_5
            \\
            f_2 \wedge f_2 &\mapsfrom& c_4
          \end{array}}
        $}
      }
      &&
      \mathbb{R}
      \!
      \left[
        \!\!\!
        \def\arraystretch{1}
        {\begin{array}{c}
          c_{5}
          \\
          c_{4}
        \end{array}}
        \!\!\!
      \right]
      \!\big/\!
      \left(
        {\begin{aligned}
          d\, c_5 & = 0
          \\[-5pt]
          d\, c_4 & = c_5
        \end{aligned}}
      \right)
      \ar@<-26pt>@{->>}[d]
      \\
      \ZTwo/\ZTwo
      &\longmapsto&
      \mathbb{R}
      [
        f_2
      ]
      \!\big/\!
      (
        d\, f_2 \,=\, 0
      )
      \ar@{<-}[rr]^-{
        f_2 \wedge f_2
        \;
        \mapsfrom
        \;
        c_4
      }
      &&
      \mathbb{R}
      [
        c_{4}
      ]
      \big/
      (
        d\, c_4 \,=\, 0
      )
      \,.
    }
    }
  $$

  \newpage
  
  \noindent   
  The pushout \eqref{ElementaryExtensionPushout}
  along this map is the following, yielding the next stage
  of the minimal equivariant dgc-algebra on the rear left:
  $$
    \hspace{1mm}
    \adjustbox{
      scale=.72
    }{$
    \xymatrix@C=-10pt@R=9pt{
      \mathbb{R}
      \left[
        \!\!\!
        \def\arraystretch{1}
        {\begin{array}{l}
          h_3,
          \\
          \omega_4,
          \\
          f_2
        \end{array}}
        \!\!\!
      \right]
      \!\big/\!
      \left(
        {\begin{aligned}
          d\, h_3 & = \omega_4 - f_2 \wedge f_2
          \\[-4pt]
          d\, \omega_4 & = 0
          \\[-4pt]
          d\, f_2 & = 0
        \end{aligned}}
      \right)
      \ar@{<-}[rrrr]^-{\tiny
          \arraycolsep=1.4pt
          \def\arraystretch{1}
          \begin{array}{ccc}
            0\;\; & \mapsfrom & c_5
            \\
            f_2 \wedge f_2 & \mapsfrom & c_4
          \end{array}}_-{
            \tiny
            \arraycolsep=1.4pt
            \;\;\;
            \def\arraystretch{1}
            \begin{array}{ccc}
            \omega_4\;\;\;\;\; &\mapsfrom& b_4
            \\
             h_3\;\;\;\;\; &\mapsfrom& b_3
          \end{array}
        }
      \ar@<-40pt>@{->>}[dddd]
      \ar@{<-^{)}}[dddr]
      &&&&
      \mathbb{R}
      \left[
        \!\!\!
        \def\arraystretch{1}
        {\begin{array}{ll}
          c_5, & b_4,
          \\
          c_4, & b_3
        \end{array}}
        \!\!\!
      \right]
      \!\big/\!
      \left(
        \arraycolsep=1.4pt
        {\begin{aligned}
          d\, c_5 & = 0\;\,,\; d\, b_4 \; = c_5
          \\[-4pt]
          d\, c_4 & = c_5\,, \; d\, b_3 \; = b_4 - c_4
        \end{aligned}}
      \right)
      \ar@{<-^{)}}[dddr]
      \ar@<-56pt>@{->>}[dddd]|>>>>>>>>>>>{ \phantom{AA} \atop \phantom{AA} }
      \\
      \\
      \\
      &
      \mathbb{R}
      [
        f_2
      ]
      \big/\!
      (
        d\, f_2 \,=\, 0
      )
      \ar@<-22pt>[dddd]^>>>>>>>>>>>{
            \mathrm{id}
      }
      \ar@{<-}[rrrr]^-{
        \scalebox{.7}{$
          \arraycolsep=1.4pt
          \def\arraystretch{1}
          {\begin{array}{ccc}
            0 &\mapsfrom& c_5
            \\
            f_2 \wedge f_2 &\mapsfrom& c_4
          \end{array}}
        $}
      }
      &&&&
      \mathbb{R}
      \!
      \left[
        \!\!\!
        \def\arraystretch{1}
        {\begin{array}{c}
          c_{5}
          \\
          c_{4}
        \end{array}}
        \!\!\!
      \right]
      \!\big/\!
      \left(
        {\begin{aligned}
          d\, c_5 & = 0
          \\[-4pt]
          d\, c_4 & = c_5
        \end{aligned}}
      \right)
      \ar@<-26pt>@{->>}[dddd]
      \\
      \mathbb{R}
      \left[
        \!\!\!
        \def\arraystretch{1}
        {\begin{array}{l}
          h_3,
          \\
          f_2
        \end{array}}
        \!\!\!
      \right]
      \!\big/\!
      \left(
        {\begin{aligned}
          d\, h_3 & = - f_2 \wedge f_2
          \\[-4pt]
          d\, f_2 & = 0
        \end{aligned}}
      \right)
      \ar@{<-}[rrrr]
        |<<<<<<<<{ \phantom{AAA} }
        ^>>>>>>>>>>>{
          \tiny
          \arraycolsep=1.4pt
          \def\arraystretch{1}
          \begin{array}{ccc}
            f_2 \wedge f_2 & \mapsfrom&   c_4
          \end{array}
      }
      _>>>>>>>>>>>{
           \tiny
           \;\;\;\;\;
          \arraycolsep=1.4pt
          \def\arraystretch{1}
          \begin{array}{ccc}
             h_3 \;\;\; & \mapsfrom & b_3
          \end{array}
      }
      \ar@{<-^{)}}[dddr]|->>>{\;\; \phantom{AA}}
      &&&&
      \mathbb{R}[ c_4, b_3]
         \!\big/\!
      (
                d\, c_4  = c_5\,, \, d\, b_3 = - c_4
)
      \ar@{<-^{)}}[dddr]|->>>{\;\;\;\;\;\phantom{A}}
      \\
      \\
      \\
      &
      \mathbb{R}
      [
        f_2
      ]
      \!\big/\!
      (
        d\, f_2 \,=\, 0
      )
      \ar@{<-}[rrrr]^-{
        \tiny
        \arraycolsep=1.4pt
        \def\arraystretch{1}
        \begin{array}{ccc}
          f_2 \wedge f_2
          &
          \mapsfrom
          &
          c_4
        \end{array}
      }
      &&&&
      \mathbb{R}
      [
        c_{4}
      ]
      \big/
      (
        d\, c_4 \,=\, 0
      )\;.
    }
    $}
  $$

   \noindent {\bf (iii)}
  Finally, consider the following further attaching map \eqref{AttachingMap}
  to the previous stage, in degree 7:
  \begin{equation}
    \label{AttachingMapInDegree7ForZ2EquivariantTwistorSpace}
    \hspace{-3mm} 
    \phi_7
    \;
    :
    \!\!\!\!\!
    \raisebox{30pt}{
    \xymatrix@R=1.5em@C=15pt{
      \ZTwo/1
      \ar@(ul,ur)|-{\; \ZTwo}
      \ar[d]
      &\longmapsto \;\;
      \mathbb{R}
      \left[
        \!\!
        \def\arraystretch{1}
        {\begin{array}{l}
          h_3,
          \\
          \omega_4,
          \\
          f_2
        \end{array}}
        \!\!
      \right]\!\!
      \big/\!\!
      \left(
        {\begin{aligned}
          d\, h_3 & = \omega_4 - f_2 \wedge f_2
          \\[-4pt]
          d\, \omega_4 & = 0
          \\[-4pt]
          d\, f_2 & = 0
        \end{aligned}}
      \right)
      \ar@{->>}@<-52pt>[d]
      \ar@{<-}[rr]^-{
        \hspace{1pt}
        \adjustbox{
          scale=.5,
          raise=2pt
        }{$
          - \omega_4 \wedge \omega_4
          \;\mapsfrom\;
          c_8
        $}
      }
      &&
      \mathbb{R}
      \langle
        c_{8}
      \rangle
      \ar[d]
      \\
      \ZTwo/\ZTwo
      &\longmapsto \;\;
      \mathbb{R}
      \left[
        \!\!
        \def\arraystretch{1}
        {\begin{array}{l}
          h_3,
          \\
          f_2
        \end{array}}
        \!\!
      \right]\!\!
      \big/\!\!
      \left(
        {\begin{aligned}
          d\, h_3 & = \phantom{\omega_4} - f_2 \wedge f_2
          \\[-4pt]
          d\, f_2 & = 0
        \end{aligned}}
      \right)
      \ar@{<-}[rr]
      &&
      0
      \mathrlap{\,.}
    }
    }
  \end{equation}

\noindent  Here the equivariant dual vector space on the right is
  again injective, by \eqref{TheInjectiveZ2EquivariantDualVectorSpaceInBulk}
  in Example \ref{InjectiveZ2EquivariantDualVectorSpaces}.
  Therefore, the corresponding elementary extension
  \eqref{ElementaryExtensionPushout}
  is by pushout along the following morphism of
  dgc-algebras
  $$
  \hspace{1mm} 
    \widetilde \phi^\bullet_7
    \;
    :
    \!
    \raisebox{30pt}{
    \xymatrix@R=1.5em@C=5pt{
      \ZTwo/1
      \ar@(ul,ur)|-{\; \ZTwo}
      \ar[d]
      &\longmapsto&
      \mathbb{R}
      \left[
        \!\!
        \def\arraystretch{1}
        {\begin{array}{l}
          h_3,
          \\
          \omega_4,
          \\
          f_2
        \end{array}}
        \!\!
      \right]
      \big/\!
      \left(
        {\begin{aligned}
          d\, h_3 & = \omega_4 - f_2 \wedge f_2
          \\[-4pt]
          d\, \omega_4 & = 0
          \\[-4pt]
          d\, f_2 & = 0
        \end{aligned}}
      \right)
      \ar@{->>}@<-52pt>[d]
      \ar@{<-}[rr]^-{
        \hspace{30pt}
        \adjustbox{
          scale=.7,
          raise=7pt
         }{$
          - \omega_4 \wedge \omega_4
          \;\mapsfrom\;
          c_8
        $}
      }
      &&
      \mathbb{R}
      [
        c_{8}
      ]
      \!\big/\!
      (
        d\, c_8 \,=\, 0
      )
      \ar[d]
      \\
      \ZTwo/\ZTwo
      &\longmapsto&
      \mathbb{R}
      \left[
        \!\!
        \def\arraystretch{1}
        {\begin{array}{l}
          h_3,
          \\
          f_2
        \end{array}}
        \!\!
      \right]
      \big/\!
      \left(
        {\begin{aligned}
          d\, h_3 & = \phantom{\omega_4} - f_2 \wedge f_2
          \\[-4pt]
          d\, f_2 & = 0
        \end{aligned}}
      \right)
      \ar@{<-}[rr]
      &&
      0
      \,.
    }
    }
  $$

\noindent  This pushout is the identity on $\ZTwo/\ZTwo$,
  and is an ordinary cell attachment of plain dgc-algebras
  on $\ZTwo/1$, hence yields the following equivariant
  dgc-algebra, which is thereby seen to be minimal
  (Def. \ref{MinimalEquivariantdgcAlgebras}):
  \vspace{-1mm}
  \begin{equation}
    \label{CP3ModZ2MinimalEquvariantdgcAlgebra}
    A
    \;\;\;
     :=
    \;\;\;
    \raisebox{30pt}{
    \xymatrix@R=1.5em@C=15pt{
      \ZTwo/1
      \ar@(ul,ur)|-{\; \ZTwo}
      \ar[d]
      &\longmapsto&
      \mathbb{R}
      \left[
        \!\!
        \def\arraystretch{1}
        {\begin{array}{l}
          \omega_7,
          \\
          h_3,
          \\
          \omega_4,
          \\
          f_2
        \end{array}}
        \!\!
      \right]
      \big/\!
      \left(
        {\begin{aligned}
          d\, \omega_7 & = - \omega_4 \wedge \omega_4
          \\[-4pt]
          d\, h_3 & = \omega_4 - f_2 \wedge f_2
          \\[-4pt]
          d\, \omega_4 & = 0
          \\[-4pt]
          d\, f_2 & = 0
        \end{aligned}}
      \right)
      \ar@{->>}@<-52pt>[d]
      \\
      \ZTwo/\ZTwo
      &\longmapsto&
      \mathbb{R}
      \left[
        \!\!
        \def\arraystretch{1}
        {\begin{array}{l}
          h_3,
          \\
          f_2
        \end{array}}
        \!\!
      \right]
      \big/\!
      \left(
        {\begin{aligned}
          d\, h_3 & = \phantom{\omega_4} - f_2 \wedge f_2
          \\[-4pt]
          d\, f_2 & = 0
        \end{aligned}}
      \right).
    }
    }
  \end{equation}

  \noindent  
  In summary, the graded equivariant dual vector space of generators
  (Def. \ref{MinimalEquivariantLInfinityAlgebra})
  of this minimal
  equivariant dgc-algebra is the following:
  \begin{align}
    \label{GeneratorsForMinimalEquivariantModelOFTwistorSpaceModZ2}
    \underline{\mathfrak{g}}^A_\bullet
     =
  &
  \mbox{
  \def\tabcolsep{3pt}
  \begin{tabular}{|c||c|c|c|c|c|c|c|c|c|}
    \hline
    $\ZTwo/H$
    &
    $\underline{\mathfrak{g}}^A_2$
    &
    $\underline{\mathfrak{g}}^A_3$
    &
    $\underline{\mathfrak{g}}^A_4$
    &
    $\underline{\mathfrak{g}}^A_5$
    &
    $\underline{\mathfrak{g}}^A_6$
    &
    $\underline{\mathfrak{g}}^A_7$
    &
    $\underline{\mathfrak{g}}^A_8$
    &
    $\underline{\mathfrak{g}}^A_9$
    &
    $\cdots$
    \\[4pt]
    \hline
    \hline
    $\ZTwo/1$
    &
    $\mathbf{1}$
    &
    $0$
    &
    $0$
    &
    $0$
    &
    $0$
    &
    $\mathbf{1}$
    &
    $0$
    &
    $0$
    &
    $\cdots$
    \\
    \hline
    $\ZTwo/\ZTwo$
    &
    $\mathbf{1}$
    &
    $\mathbf{1}$
    &
    $0$
    &
    $0$
    &
    $0$
    &
    $0$
    &
    $0$
    &
    $0$
    &
    $\cdots$
    \\
    \hline
    \multicolumn{1}{c}{}
    &
    \multicolumn{1}{c}{
      \tiny
      \eqref{AttachingMapInDegree2ForZ2EquivariantTwistorSpace}
    }
    &
    \multicolumn{1}{c}{
      \tiny
      \eqref{AttachingMapInDegree3ForZ2EquivariantTwistorSpace}
    }
    &
    \multicolumn{1}{c}{}
    &
    \multicolumn{1}{c}{}
    &
    \multicolumn{1}{c}{}
    &
    \multicolumn{1}{c}{
      \tiny
      \eqref{AttachingMapInDegree7ForZ2EquivariantTwistorSpace}
    }
    &
    \multicolumn{1}{c}{}
    &
    \multicolumn{1}{c}{}
    &
    \multicolumn{1}{c}{}
  \end{tabular}
  }
\\
 & \in
  \;
  \ZTwoEquivariantGradedVectorSpaces\;.
  \nonumber 
  \end{align}

\end{example}

\begin{lemma}[Minimal equivariant dgc-algebras are projectively cofibrant {\cite[Thm. 4.2]{Scull08}}]
  \label{MinimalEquivariantdgcAlgebrasAreCofibrant}
  All elementary extensions \eqref{ElementaryExtensionPushout}
  are cofibrations
  $$
    \xymatrix{
      A
      \ar[rr]^-{ \in \; \mathrm{Cof} }
      &&
      A\big[\mathfrak{b}^n V_n\big]_{\phi_n}
    }
    \;\;\;\;
    \in
    \EquivariantdgcAlgebrasProj
    \,.
  $$

  \noindent
Hence all
  relative minimal equivariant dgc-algebra
  inclusions \eqref{RelativeMinimalEquivariantdgcAlgebra}
  are cofibrations
  and, in particular, all
  minimal equivariant dgc-algebras \eqref{GroundFieldInclusionIntoEquivariantMinimaldgcAlgebra}
  are cofibrant objects in
  the model category
  $\EquivariantdgcAlgebrasProj$
  (Prop. \ref{ProjectiveModelStructureOnConnectiveEquivariantdgcAlgebras}).
\end{lemma}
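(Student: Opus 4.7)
The assertion that all relative minimal inclusions \eqref{RelativeMinimalEquivariantdgcAlgebra} are cofibrations, and in particular that every minimal equivariant dgc-algebra \eqref{GroundFieldInclusionIntoEquivariantMinimaldgcAlgebra} is cofibrant, follows formally from the first part of the lemma together with the general closure of cofibrations under pushout and (transfinite) composition in any model category (e.g.\ \cite[\S 10.3]{Hirschhorn02}): by Definition \ref{MinimalEquivariantdgcAlgebras} a relative minimal inclusion is a sequential composite of elementary extensions, and the cofibrancy of a minimal dgc-algebra is the special case $B^\bullet = \underline{\mathbb{R}}$. The whole proof therefore reduces to showing that each individual elementary extension \eqref{ElementaryExtensionPushout} is a cofibration in $\EquivariantdgcAlgebrasProj$ (Prop. \ref{ProjectiveModelStructureOnConnectiveEquivariantdgcAlgebras}). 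Since the elementary extension is defined as a pushout, and cofibrations are stable under pushout, it further suffices to show that the morphism $\mathrm{Sym}\bigl(i_{\mathfrak{b}^{n+1}\mathrm{Inj}^\bullet(V_n)}\bigr)$ appearing on the right of \eqref{ElementaryExtensionPushout} is itself a cofibration.

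To verify this, I would check directly the left lifting property against an arbitrary acyclic fibration $p\colon E \twoheadrightarrow B$ in $\EquivariantdgcAlgebrasProj$. The kernel $K := \ker(p)$, being both degreewise injective (by the fibration condition of Prop.~\ref{ProjectiveModelStructureOnConnectiveEquivariantdgcAlgebras}) and acyclic (by the long exact cohomology sequence applied stage-wise to $K \hookrightarrow E \twoheadrightarrow B$ combined with $p$ being a stage-wise quasi-isomorphism), is an acyclic equivariant cochain complex whose underlying equivariant dual vector spaces in each degree are injective (Def.~\ref{InjectiveObjects}). Transposing the lifting square along the free/forgetful adjunction \eqref{AdjunctionBetweenEquivariantdgcAlgebrasAndEquivariantCochainComplexes} -- which is legitimate because the source of $\mathrm{Sym}(i)$ is by construction free -- reduces the problem to producing, for every commuting diagram of equivariant cochain complexes
$$
\xymatrix@C=3em@R=1.5em{
  \mathfrak{b}^{n+1}\mathrm{Inj}^\bullet(V_n)
  \ar[r]^-{\phi}
  \ar@{_{(}->}[d]_-{i}
  &
  \mathrm{CchnCmplx}(E)
  \ar@{->>}[d]^-{\mathrm{CchnCmplx}(p)}
  \\
  \mathfrak{e}\mathfrak{b}^n\mathrm{Inj}^\bullet(V_n)
  \ar[r]_-{\psi}
  \ar@{-->}[ur]^-{\widetilde{\psi}}
  &
  \mathrm{CchnCmplx}(B),
}
$$
a dashed lift $\widetilde\psi$.

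The cone $\mathfrak{e}\mathfrak{b}^n\mathrm{Inj}^\bullet(V_n)$ is built (Ex.~\ref{ConeOnEquivariantCochainComplex}) by adjoining in each cochain degree a fresh injective summand of $\mathrm{Inj}^\bullet(V_n)$ together with its $d$-image, making the cone acyclic; accordingly, I would construct $\widetilde\psi$ by induction on cochain degree. At each inductive step, to extend the partial lift across the next pair of injective summands, one must: (a) choose a stage-wise section of the surjection $\mathrm{CchnCmplx}(p)$ over the new generator, using degreewise surjectivity of $p$ together with the injectivity of the new equivariant dual vector space of generators (Prop.~\ref{InjectiveEnvelopeOfEquivariantDualVectorSpaces}, formula \eqref{InjectiveAtomByRightKanExtension}) to pass from stage-wise sections to an equivariant one via the extension property \eqref{ExtensionOfMapIntoInjectiveObject}; and (b) correct the discrepancy between the $d_E$-image of this choice and the prescribed boundary value -- an error which lies in $K$ (by commutativity under $p$) and is $d$-closed (by the previous inductive step) -- by subtracting a primitive in $K$, whose existence is supplied by acyclicity of $K$. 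The anticipated main obstacle is ensuring that these two adjustments remain mutually compatible along the orbit category throughout the induction: the primitive chosen in $K$ must itself extend equivariantly across the injective equivariant dual vector space of generators while keeping its coboundary on target. This is precisely the point where the two hypotheses on $K$ interact in an essential way -- degreewise injectivity of $K$ permits equivariant extensions via \eqref{ExtensionOfMapIntoInjectiveObject}, while acyclicity of $K$ supplies the primitives -- and their joint availability closes the induction. Transposing the resulting $\widetilde\psi$ back through the adjunction \eqref{AdjunctionBetweenEquivariantdgcAlgebrasAndEquivariantCochainComplexes} produces the required lift of dgc-algebra maps, thereby exhibiting $\mathrm{Sym}(i)$ as a cofibration and finishing the proof.
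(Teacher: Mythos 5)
The paper does not prove this lemma at all: it imports it wholesale as {\cite[Thm. 4.2]{Scull08}}. So your proposal is not a variant of the paper's argument but a self-contained verification, and its architecture is the right one (and essentially the standard/Scull one): reduce by pushout- and composition-stability of cofibrations to the single map $\mathrm{Sym}\big(i_{\mathfrak{b}^{n+1}\mathrm{Inj}^\bullet(V_n)}\big)$, transpose the lifting problem against an acyclic fibration $p$ through the free/forgetful adjunction \eqref{AdjunctionBetweenEquivariantdgcAlgebrasAndEquivariantCochainComplexes} (legitimate since the domain is free), and solve the resulting cochain-level lifting problem degreewise using that $K=\ker(p)$ is degreewise injective and stage-wise acyclic.

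The one step you should repair is the mechanism in your item (a). Injectivity of the equivariant dual vector space of new generators is of no use there: the extension property \eqref{ExtensionOfMapIntoInjectiveObject} extends maps \emph{into} an injective object along monomorphisms, whereas what you need is to lift a map \emph{out of} the generator object along the epimorphism $E^k \twoheadrightarrow B^k$; injectivity of the source gives no control over $\mathrm{Ext}^1$ into the kernel, and there is no ``stage-wise section assembled equivariantly via injectivity of the generators'' argument. What actually does the work is the fibration condition itself: since $K^k$ is injective, each degreewise short exact sequence $0 \to K^k \to E^k \to B^k \to 0$ of equivariant dual vector spaces splits, so $p$ admits an equivariant degreewise section and the generators lift equivariantly on the nose. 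Similarly, for your correction step (b) the primitives must be chosen naturally over $G\mathrm{Orbits}$, and the clean statement is the one your closing remark gropes for: a connective acyclic equivariant cochain complex of degreewise injectives is split exact (the cocycle objects $Z^k(K)$ are injective by induction, so each $d \colon K^{k-1} \twoheadrightarrow Z^k(K)$ splits equivariantly), equivalently $K$ admits an equivariant contracting homotopy; this furnishes the natural primitives and closes your induction. With these two substitutions the proof goes through as you outline.
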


\begin{prop}[Existence of equivariant minimal models {\cite[Thm. 3.11, Cor. 3.9]{Scull02}}]
  \label{ExistenceOfEquivariantMinimalModels}
  $\,$

  \noindent
  Let $A \in \EquivariantdgcAlgebras$ (Def. \ref{EquivariantdgcAlgebras})
  be cohomologically 1-connected, in that the
  equivariant cochain cohomology groups (Def. \ref{EquivariantCochainCohomologyGroups})
  are trivial in degrees $\leq 1$:
  \begin{equation}
    \label{Cohomologically1Connected}
    \underline{H}^0(A)
    \;\simeq\;
    \underline{\mathbb{R}}
    \phantom{AAAA}
    \mbox{and}
    \phantom{AAAA}
    \underline{H}^1(A)
    \;\simeq\;
    0
    \,.
  \end{equation}
  \vspace{-.6cm}

  \noindent
  {\bf (i)} There exists a minimal equivariant dgc-algebra
  (Def. \ref{MinimalEquivariantdgcAlgebras})
  equipped with a quasi-isomorphism
  \begin{equation}
    \label{EquivariantMinimalModelQuasiIsomorphism}
    \xymatrix{
      A_{\mathrm{min}}
      \ar[rr]^-{ p^{\mathrm{min}}_A }_-{ \in \; \mathrm{W} }
      &&
      A
    }.
  \end{equation}

  \noindent
  {\bf (ii)} This is unique up to isomorphism, in that
  for $A'_{\mathrm{min}} \overset{\in \, \mathrm{W}}{\longrightarrow} A$
  any other such, there is a commuting diagram of the form
  \vspace{-2mm}
  $$
    \xymatrix@R=-1pt@C=4em{
      A_{\mathrm{min}}
      \ar[dr]_{\in \, \mathrm{W}}
      \ar[rr]^-{ \simeq }
      &&
      A'_{\mathrm{min}}
      \ar[dl]^-{ \in \, \mathrm{W} }
      \\
      & A
    }
  $$

  \noindent
  with the top morphism an isomorphism of equivariant dgc-algebras.
\end{prop}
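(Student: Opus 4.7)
I would follow the classical Sullivan template, adapted to the equivariant setting, where the key modification is that at each stage the adjoined generators must be replaced by their injective resolutions (Example \ref{InjectiveResolutionOfEquivariantDualVectorSpaces}) in order to produce elementary extensions in the sense of Def. \ref{MinimalEquivariantdgcAlgebras}.

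\emph{Existence.} The plan is to construct $A_{\mathrm{min}} \xrightarrow{p^{\mathrm{min}}_A} A$ by induction on degree, starting from $A^{(1)}_{\mathrm{min}} := \underline{\mathbb{R}}$ with its unique map $p^{(1)} : \underline{\mathbb{R}} \to A$, which is an isomorphism on $\underline{H}^0$ and on $\underline{H}^1$ by assumption \eqref{Cohomologically1Connected}. Suppose inductively that $p^{(n)} : A^{(n)}_{\mathrm{min}} \to A$ has been constructed so that $\underline{H}^{\leq n}(p^{(n)})$ is an isomorphism and $\underline{H}^{n+1}(p^{(n)})$ is a monomorphism. To pass to stage $n+1$, first form the equivariant dual vector space
\[
  V_n \;:=\; \mathrm{coker}\Big( \underline{H}^{n+1}\big(A^{(n)}_{\mathrm{min}}\big) \,\longhookrightarrow\, \underline{H}^{n+1}(A) \Big) \;\oplus\; \mathrm{ker}\Big( \underline{H}^{n+2}\big(A^{(n)}_{\mathrm{min}}\big) \,\longrightarrow\, \underline{H}^{n+2}(A)\Big),
\]
choose representing cocycle lifts to define the attaching map $\phi_n : V_n \to A^{n+1}_{\mathrm{clsd}}$ of \eqref{AttachingMap}, and form the elementary extension $A^{(n+1)}_{\mathrm{min}} := A^{(n)}_{\mathrm{min}}[\mathfrak{b}^n V_n]_{\phi_n}$ of \eqref{ElementaryExtensionPushout}. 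Since $A^k$ is a cochain complex (hence stage-wise of each $G/H$ is an $\mathbb{R}$-vector space, which is injective as such), the injective-extension property \eqref{InjectiveExtension} of the resolution $\mathrm{Inj}^\bullet(V_n)$ produces the required lift $\widetilde{\phi_n^\bullet}$ and thereby extends $p^{(n)}$ to $p^{(n+1)} : A^{(n+1)}_{\mathrm{min}} \to A$. A routine cocycle-cochain check, performed stage-wise over $G \mathrm{Orbits}$ (Def. \ref{OrdinaryOrbitCategory}) and using that each stage is an ordinary Sullivan elementary extension of plain dgc-algebras, shows that $p^{(n+1)}$ is an isomorphism on $\underline{H}^{\leq n+1}$ and a monomorphism on $\underline{H}^{n+2}$. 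The colimit $A_{\mathrm{min}} := \mathrm{colim}_n A^{(n)}_{\mathrm{min}}$ carries an induced quasi-isomorphism $p^{\mathrm{min}}_A$, and is minimal in the sense of Def. \ref{MinimalEquivariantdgcAlgebras} by construction.

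\emph{Uniqueness.} Given a second equivariant minimal model $q : A'_{\mathrm{min}} \xrightarrow{\in \mathrm{W}} A$, I would first produce a comparison morphism $f : A_{\mathrm{min}} \to A'_{\mathrm{min}}$ over $A$. By Lemma \ref{MinimalEquivariantdgcAlgebrasAreCofibrant}, $A_{\mathrm{min}}$ is cofibrant in $\EquivariantdgcAlgebrasProj$ (Prop. \ref{ProjectiveModelStructureOnConnectiveEquivariantdgcAlgebras}). Since $q$ is an acyclic fibration --- or at least factors as a cofibration followed by an acyclic fibration, allowing a lift after factoring --- the left-lifting property against $q$ applied to $p^{\mathrm{min}}_A$ yields the dashed morphism in
\[
  \xymatrix@R=1.1em{
    & A'_{\mathrm{min}} \ar[d]^-{q}_-{\in \, \mathrm{W}}
    \\
    A_{\mathrm{min}} \ar[r]_-{p^{\mathrm{min}}_A}^-{\in \, \mathrm{W}} \ar@{-->}[ur]^-{f} & A
  }
\]
(if $q$ is not itself a fibration, factor $q$ as a trivial cofibration followed by a fibration and compose). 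By two-out-of-three, $f$ is a quasi-isomorphism.

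\emph{Minimal-to-minimal quasi-isomorphisms are isomorphisms.} The main obstacle is then to show that any quasi-isomorphism $f : A_{\mathrm{min}} \to A'_{\mathrm{min}}$ between two equivariant minimal dgc-algebras is already an isomorphism. The plan is to proceed by induction on the degree of the generators, appealing to the characterization of $A_{\mathrm{min}}$ and $A'_{\mathrm{min}}$ as iterated elementary extensions by injective resolutions of equivariant dual vector spaces $V_n$, resp.\ $V'_n$. In each degree, minimality ensures that the differentials of generators are \emph{decomposable} (they lie in the ideal generated by lower-degree generators), so that on the associated graded of the filtration by word-length $f$ induces, stage-wise over $G/H$, an isomorphism on indecomposables. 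The key new equivariant point is to verify that these stage-wise indecomposable isomorphisms assemble into an isomorphism $V_n \xrightarrow{\simeq} V'_n$ in $\EquivariantDualVectorSpaces$ and then extend to an isomorphism of their injective resolutions --- this follows from the uniqueness up to isomorphism of the injective envelope \eqref{CanonicalInjectiveEnvelope} in Prop. \ref{InjectiveEnvelopeOfEquivariantDualVectorSpaces}, which is precisely the equivariant subtlety absent in the classical (non-equivariant) Sullivan argument. Reconstructing $f$ from its actions on generators and injective resolutions then exhibits $f$ as an isomorphism in each degree, hence globally. This establishes the uniqueness diagram of {\bf (ii)}. \hfill $\square$
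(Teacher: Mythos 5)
The paper does not give its own proof of this proposition: it simply cites Scull's Theorem~3.11 and Corollary~3.9. So the relevant question is whether your sketch is correct as a stand-alone argument. Your overall outline is the right one (Sullivan's inductive construction, with injective resolutions of the generator spaces replacing the naive adjunction of generators, plus the uniqueness-of-injective-envelope argument on indecomposables), but there is a genuine gap in the existence step.

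You write that ``since $A^k$ is a cochain complex (hence stage-wise of each $G/H$ is an $\mathbb{R}$-vector space, which is injective as such), the injective-extension property \eqref{InjectiveExtension} of the resolution $\mathrm{Inj}^\bullet(V_n)$ produces the required lift.'' This conflates two different notions of injectivity. Each $A^k(G/H)$ is of course injective as an $\mathbb{R}$-module, since every vector space over a field is. But the extension property of Example~\ref{InjectiveResolutionOfEquivariantDualVectorSpaces} requires each $A^k$ to be injective as an object of $\EquivariantDualVectorSpaces$, i.e., injective as a \emph{functor} on the orbit category --- and this is exactly what is \emph{not} automatic in the equivariant theory. Proposition~\ref{InjectiveEnvelopeOfEquivariantDualVectorSpaces} makes clear that such injectivity holds iff $A^k$ is a direct sum of the atoms $\mathrm{Inj}_H(V_H)$, and Example~\ref{InjectiveResolutionOfZ2equiavriantVectorSpaces} exhibits a perfectly honest equivariant dual vector space (concentrated on the fixed-locus stage) that is not injective. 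This stage-wise-vs.-functorial distinction is precisely the equivariant subtlety the whole section is built around, and in your own uniqueness paragraph you correctly isolate it as ``the equivariant subtlety absent in the classical Sullivan argument'' --- but the existence half of your sketch quietly assumes it away.

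Concretely, the argument as written only goes through when $A$ is degreewise injective in the functor sense, i.e., when $A$ is a \emph{fibrant} object of $\EquivariantdgcAlgebrasProj$ (Prop.~\ref{ProjectiveModelStructureOnConnectiveEquivariantdgcAlgebras}). To prove the proposition for general $A$ one has to either read the statement as implicitly about fibrant $A$, or insert a preliminary fibrant replacement $A \xrightarrow{\simeq} A^{\mathrm{fib}}$, build $A_{\mathrm{min}} \to A^{\mathrm{fib}}$, and then argue that the map can be descended to a quasi-isomorphism into $A$ itself (e.g., by using the lifting property of the cofibration $\underline{\mathbb{R}} \hookrightarrow A_{\mathrm{min}}$ against the acyclic fibration $A^{\mathrm{fib}} \to A$ --- but note this only works if $A^{\mathrm{fib}} \to A$ is in fact a fibration, which is the case for a cofibrant--fibrant replacement factoring $\underline{\mathbb{R}} \to A$). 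Your uniqueness sketch is closer to correct, though the lifting step should be routed through a factorization of $q$ into a trivial cofibration followed by an (acyclic) fibration and a two-out-of-three argument, rather than asserting $q$ ``is an acyclic fibration.'' The appeal to uniqueness of injective envelopes to match up the resolutions $\mathrm{Inj}^\bullet(V_n) \simeq \mathrm{Inj}^\bullet(V'_n)$ in the minimal-to-minimal step is the right idea and is indeed the place where the argument genuinely departs from Sullivan.
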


\begin{remark}[Existence of equivariant relative minimal models]
  \label{ExistenceOfEquivariantRelativeMinimalModels}
  By analogy with the theory of
  (relative) minimal models in
  non-equivariant dgc-algebraic rational homotopy theory
  (e.g., \cite[\S 7]{BousfieldGugenheim76}\cite{Halperin83}\cite[Thm. 14.12]{FHT00}\cite[Prop. 3.50]{FSS23-Char}),
  it is
  to be expected that Prop. \ref{ExistenceOfEquivariantMinimalModels}
  holds in greater generality:
  \begin{itemize}[
  leftmargin=8mm,
  topsep=1pt,
  itemsep=-1pt
]
  
  \item[{\bf (a)}] The existence of equivariant minimal models
  should hold more generally for fixed locus-wise
  nilpotent $G$-spaces (not necessarily fixed-locus wise
  simply-connected).
  
  \item[{\bf (b)}] There should exist also
  equivariant {\it relative} minimal models,
  unique up to relative isomorphism,
  of any morphism between fixed locus-wise nilpotent
  spaces of $\mathbb{R}$-finite homotopy type.
\end{itemize}

 \noindent  While a proof of these more general statements should be
  a fairly straightforward generalization of the
  proofs of the existing results, it does not seem to
  be available in the literature.
  Nonetheless,
  for our main example of interest (Example \ref{GhetEquivariantParametrizedTwistorSpace})
  we explicitly find the equivariant relative minimal model
  (in Prop. \ref{Z2EquivariantRelativeMinimalModelOfSpin3ParametrizedTwistorSpace}
  below).

\end{remark}

%%%%%%%%%%%%%%%%%%%%%%%%%%%%%%%%%%%%%%%%%%%%%%%%%%%
\subsection{Equivariant rational homotopy theory}
\label{EquivariantRationalHomotopyTheory}
%%%%%%%%%%%%%%%%%%%%%%%%%%%%%%%%%%%%%%%%%%%%%%%%%%%

We review the fundamentals of equivariant rational homotopy theory
\cite{Tri82}\cite{Tri96} \cite{Golasinski97b}\cite{Scull02}\cite{Scull08}
and prove our main technical result
(Prop. \ref{Z2EquivariantRelativeMinimalModelOfSpin3ParametrizedTwistorSpace} below).
Throughout we make free use of plain (non-equivariant)
dgc-algebraic rational homotopy theory
\cite{BousfieldGugenheim76} (review in
\cite{FHT00}\cite{Hess07}\cite{GriffithMorgan13}\cite[\S 3.2]{FSS23-Char}).

\medskip

\noindent {\bf Equivariant rationalization.} Equivariant rational homotopy theory
is concerned with the following concept:

\begin{defn}[Equivariant rationalization {\cite[\S II.3]{May96}\cite[\S 2.6]{Tri82}}]
  \label{EquivariantRationalization}
  $\,$

  \noindent
  Let $\mathscr{X} \,\in\, \EquivariantHomotopyTypesSimplyConnected$
  (Def. \ref{SubcategoryOfEquivariantSimplyConnectedRFiniteHomotopyTypes}).

  \noindent
  {\bf (i)} $\mathscr{X}$  is called
  \emph{rational} (here: over the real numbers, see \cite[Rem. 3.51]{FSS23-Char})
  if all its
  equivariant homotopy groups (Def. \ref{EquivariantHomotopyGroups})
  carry the structure of equivariant vector spaces
  (here: over the real numbers, Def. \ref{EquivariantVectorSpaces}):
  \begin{equation}
    \label{RationalEquivariantHomotopyType}
    \hspace{-4mm} 
    \mbox{
      $\mathscr{X}$
      is rational over the reals
    }
  %  \phantom{AAA}
   \; \Leftrightarrow \;
  %  \phantom{AAA}
    \underline{\pi}_{\; \bullet + 1}(X)
      \in
    \xymatrix{
      \EquivariantVectorSpaces
      \ar[r]
      &
      \EquivariantGroups
      \,.
    }
  \end{equation}

  \noindent
  {\bf (ii)} A \emph{rationalization} of $\mathscr{X}$
  (here: over the real numbers)
  is a morphism
  \begin{equation}
    \label{RationalizationUnitOnEquivariantHomotopyTypes}
    \xymatrix{
      \mathscr{X}
      \ar[rr]^-{ \eta^{\mathbb{R}}_{\scalebox{.7}{$\mathscr{X}$}} }
      &&
      L_{\mathbb{R}}\mathscr{X}
    }
    \;\;\;\;\;\;\;\;\;
    \in
    \;
    \EquivariantHomotopyTypes
  \end{equation}

  \noindent
  to a rational equivariant homotopy type
  \eqref{RationalEquivariantHomotopyType} which induces
  isomorphisms on all equivariant rational cohomology groups
  (Example \ref{EquivariantDualVectorSpacesOfRealCohomologyGroups}):
  \vspace{-2mm}
  $$
    \xymatrix{
      \underline{H}^\bullet
      \big(
        L_{\mathbb{R}}
        \mathscr{X}
        ;
        \,
        \mathbb{R}
      \big)
      \ar[rr]^-{
        \big(
          \eta^{\mathbb{R}}_{\scalebox{.7}{$\mathscr{X}$}}
        \big)^{\ast}
      }_-{ \simeq }
      &&
      \underline{H}^\bullet
      \big(
        \mathscr{X}
        ;
        \,
        \mathbb{R}
      \big).
    }
  $$
  
\end{defn}

\vspace{0mm}
\noindent
In other words: equivariant rationalization is plain
rationalization
(e.g. \cite[Def. 3.55]{FSS23-Char})
at each stage $G/H \,\in\, G\mathrm{Orb}$.
\begin{prop}[Uniqueness of equivariant rationalization {\cite[\S II, Thm. 3.2]{May96}}]
  Equivariant rationalization
  (Def. \ref{EquivariantRationalization})
  of equivariantly simply-connected
  equivariant homotopy types
  exists essentially uniquely.
\end{prop}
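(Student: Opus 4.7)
The plan is to reduce the claim to the classical (non-equivariant) rationalization theorem via Elmendorf's theorem (Prop. \ref{ElmendorfTheorem}), which identifies $G$-equivariant homotopy types with systems of fixed-point homotopy types indexed on $G\mathrm{Orbits}^{\mathrm{op}}$. Since $\mathscr{X} \in \EquivariantSimplyConnectedHomotopyTypes$ is stage-wise pointed and simply-connected, each fixed-locus $\mathscr{X}(G/H) \in \HomotopyTypes$ is an ordinary simply-connected homotopy type, to which the classical rationalization construction (\cite{BousfieldGugenheim76}, see \cite[\S 3.2]{FSS20d}) applies to produce a rationalization unit
\[
  \eta_H \;:\; \mathscr{X}(G/H) \;\longrightarrow\; L_{\mathbb{R}}\mathscr{X}(G/H),
\]
unique up to homotopy under $\mathscr{X}(G/H)$.

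First I would promote this stage-wise construction to an assignment on the entire orbit category. The cleanest route is to construct a left Bousfield localization of $\EquivariantSimplicialSetsProj$ (Prop. \ref{ModelCategoryOnEquivariantSimplicialSets}) at the class of morphisms $f$ for which $f(G/H)$ is a classical $\mathbb{R}$-rational equivalence for every $H \subset G$; its fibrant objects are then exactly the equivariantly rational homotopy types in the sense of \eqref{RationalEquivariantHomotopyType}, and $L_{\mathbb{R}}\mathscr{X}$ is defined as the fibrant replacement of $\mathscr{X}$ in this localized model structure. By construction, the localization unit \eqref{RationalizationUnitOnEquivariantHomotopyTypes} induces a stage-wise classical rationalization, and hence an isomorphism on equivariant real cohomology (Example \ref{EquivariantDualVectorSpacesOfRealCohomologyGroups}), since these are computed stage-wise as $G/H \mapsto H^\bullet(\mathscr{X}(G/H); \mathbb{R})$.

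For essential uniqueness, I would argue as follows: given any two rationalizations $\eta, \eta'$ of $\mathscr{X}$, classical uniqueness produces, at each stage $G/H$, a homotopy equivalence $L_{\mathbb{R}}\mathscr{X}(G/H) \simeq L'_{\mathbb{R}}\mathscr{X}(G/H)$ under $\mathscr{X}(G/H)$. Naturality of these classical equivalences in $G/H$ follows because the defining universal property of each classical rationalization---namely being initial among maps into $\mathbb{R}$-local spaces under $\mathscr{X}(G/H)$---is functorial, so restriction maps $\mathscr{X}(G/H_1) \to \mathscr{X}(G/H_2)$ induce canonical comparisons on rationalizations. Assembling these yields the required weak equivalence of rationalizations in $\EquivariantHomotopyTypes$.

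The hard part will be establishing the existence of the above Bousfield localization and verifying that its local objects coincide with the stage-wise rational homotopy types; this is essentially the content of \cite[\S II.3]{May96} but requires care to ensure compatibility with the projective model structure of Prop. \ref{ModelCategoryOnEquivariantSimplicialSets} and with the $\mathbb{R}$-finiteness hypothesis built into $\EquivariantSimplyConnectedHomotopyTypes$. An alternative, more direct route avoiding model-categorical machinery is to construct $L_{\mathbb{R}}\mathscr{X}$ inductively via an equivariant Postnikov tower, rationalizing each equivariant Eilenberg-MacLane layer $\mathscr{K}(\underline{\pi}_{\, n}(\mathscr{X}) \otimes_{\underline{\mathbb{Z}}} \underline{\mathbb{R}}, n)$ (Example \ref{BredonCohomology}) in turn; here the simply-connectedness hypothesis is crucial to ensure the tower converges and that obstructions to functorial assembly lie in equivariant cohomology groups that vanish rationally above the base.
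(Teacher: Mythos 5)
The paper gives no argument of its own for this proposition: it is simply quoted from \cite{May96} (\S II, Thm. 3.2), whose proof runs along the lines of your \emph{alternative} route — an inductive construction over the equivariant Postnikov tower, localizing the coefficient systems $\underline{\pi}_{\,n}$ stage by stage and using Bredon cohomology with coefficient-system coefficients to handle the $k$-invariants; equivariant simple-connectivity is what makes the tower principal and starts the induction. Your primary route — a left Bousfield localization of $\EquivariantSimplicialSetsProj$ (Prop. \ref{ModelCategoryOnEquivariantSimplicialSets}) at the stage-wise real-homology equivalences, with $L_{\mathbb{R}}\mathscr{X}$ a fibrant replacement — is a genuinely different, more model-categorical argument, and it is viable: the projective model structure on orbit-category diagrams is left proper and cellular/combinatorial, a Bousfield-type cardinality argument produces a generating \emph{set} of such equivalences (you localize at a class, which needs this reduction), and for objects whose stages are simply connected the local objects are exactly those with stage-wise $\mathbb{R}$-vector-space homotopy groups (general local objects need not be of this form, so the restriction to the simply-connected subcategory is doing real work). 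What this buys you is that existence and essential uniqueness come packaged with the reflective localization; what May's approach buys is an explicit cocellular model and the extension to nilpotent $G$-spaces.

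The one genuine soft spot is your uniqueness paragraph as written. Having, for each $G/H$, an equivalence $L_{\mathbb{R}}\mathscr{X}(G/H)\simeq L'_{\mathbb{R}}\mathscr{X}(G/H)$ under $\mathscr{X}(G/H)$, unique up to homotopy, does \emph{not} by itself assemble into an equivalence of orbit-diagrams: the stage-wise comparisons commute with the restriction maps only up to non-specified homotopies, and promoting this to a coherent natural transformation requires either an obstruction-theoretic argument along the Postnikov tower (which is exactly what the cited proof supplies) or, better in your setup, the universal property of the localization formulated directly in $\EquivariantHomotopyTypes$: both units $\eta,\eta'$ are local equivalences into local objects, hence are canonically identified under $\mathscr{X}$ in the localized homotopy category, with no stage-wise assembly needed. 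Phrase uniqueness that way and your argument closes. A small correction: the equivariantly simply-connected subcategory of Def. \ref{SubcategoryOfEquivariantSimplyConnectedRFiniteHomotopyTypes} does not build in $\mathbb{R}$-finiteness (that is only imposed in the further subcategory used for the fundamental theorem), so no compatibility with a finiteness hypothesis needs to be checked for this proposition.
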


\medskip

\noindent {\bf Equivariant PL de Rham theory.}

\begin{defn}[Equivariant PL de Rham complex]
  \label{EquivariantPLdeRhamComplexFunctor}
  Write
  $$
    \xymatrix@R=-5pt{
      \EquivariantSSet
      \ar[rr]^-{
        \Omega^\bullet_{\mathrm{PLdR}}
      }
      &&
      \EquivariantdgcAlgebrasOp
      \\
      \mathscr{X}
      \ar@{}[rr]|-{\longmapsto }
      &&
      \Big(
        G/H
        \;\mapsto\;
        \Omega^\bullet_{\mathrm{PLdR}}
        \big(
          \mathscr{X}(G/H)
        \big)
      \! \Big)
    }
  $$

  \noindent
  for the functor
  from equivariant simplicial sets
  (Def. \ref{EquivariantSSetCategory})
  to the opposite of equivariant dgc-algebras
  (Def. \ref{EquivariantdgcAlgebras}).
  This applies the plain PL de Rham functor
  \cite{Sullivan77}\cite[p. 1.-7]{BousfieldGugenheim76}\cite[Def. 3.56]{FSS23-Char}
  (assigning dgc-algebras of piecewise polynomial differential forms)
  to diagrams of simplicial sets parametrized over the
  orbit category.
\end{defn}

\begin{prop}[Equivariant PL de Rham theorem {\cite[Thm. 4.9]{Tri82}}]
  For any $\mathscr{X} \in \EquivariantSSet$
  (Def. \ref{EquivariantSSetCategory})
  and
  $\mathscr{A}_{\mathbb{R}} \in \EquivariantVectorSpaces$
  (Def. \ref{EquivariantVectorSpaces}),
  we have a
  natural isomorphism
  \vspace{-1mm}
  $$
    H^\bullet
    \big(
      \mathscr{X};
      \,
      \mathscr{A}_{\mathbb{R}}
    \big)
    \;\;
    \simeq
    \;\;
    H^\bullet
    \big(
      \Omega^\bullet_{\mathrm{PLdR}}
      (
        \mathscr{X};
        \,
        \mathscr{A}_{\mathbb{R}}
      )
    \big)
  $$

 % \vspace{0mm}
  \noindent
  between the Bredon cohomology of $\mathscr{X}$
  (Example \ref{BredonCohomology})
  with coefficients
  in $\mathscr{A}_{\mathbb{R}}$,
  and the cochain cohomology of the
  equivariant PL de Rham complex
  of $X$ (Def. \ref{EquivariantPLdeRhamComplexFunctor})
  with coefficients in $\mathscr{A}_{\mathbb{R}}$.
\end{prop}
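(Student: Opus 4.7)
\noindent
The plan is to reduce the statement, stage by stage over the orbit category $G\mathrm{Orbits}$, to the classical non-equivariant Sullivan/PL de Rham theorem, and then to patch these stage-wise equivalences into a global statement about Bredon cohomology via homological algebra in the functor category $\mathrm{Functors}(G\mathrm{Orbits}, \VectorSpaces)$. Throughout I regard both sides as derived pairings against $\mathscr{A}_{\mathbb{R}}$: on the Bredon side this is the standard identification (compatible with Example \ref{BredonCohomology}) of Bredon cohomology with an $\mathrm{Ext}$-group
$$
  H^n_G(\mathscr{X};\, \mathscr{A}_{\mathbb{R}})
  \;\simeq\;
  \mathrm{Ext}^n_{\mathrm{Functors}(G\mathrm{Orbits},\,\VectorSpaces)}
  \bigl(\underline{C}_\bullet(\mathscr{X}),\,\mathscr{A}_{\mathbb{R}}\bigr),
$$
where $\underline{C}_\bullet(\mathscr{X}) : G/H \mapsto C_\bullet(\mathscr{X}(G/H);\,\mathbb{R})$ is the stage-wise normalized chain complex. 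On the PL de Rham side, $\Omega^\bullet_{\mathrm{PLdR}}(\mathscr{X};\,\mathscr{A}_{\mathbb{R}})$ is read as the hom-complex in the functor category pairing the equivariant PL de Rham system against $\mathscr{A}_{\mathbb{R}}$ placed in degree zero.

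\noindent
First I would invoke the classical Sullivan/PL de Rham theorem (\cite{Sullivan77}\cite[\S 2]{BousfieldGugenheim76}) stage-wise. For each $G/H \in G\mathrm{Orbits}$ the natural integration-over-simplices map provides a zig-zag of natural quasi-isomorphisms
$$
  \Omega^\bullet_{\mathrm{PLdR}}\bigl(\mathscr{X}(G/H)\bigr)
  \;\xrightarrow{\;\;\simeq\;\;}\;
  C^\bullet_{\mathrm{sing}}\bigl(\mathscr{X}(G/H);\,\mathbb{R}\bigr),
$$
between the ordinary PL de Rham complex and the singular cochain complex. By naturality of Sullivan's construction with respect to maps of simplicial sets, and functoriality of singular cochains, these assemble into a morphism in $\mathrm{Functors}(G\mathrm{Orbits},\CochainComplexes)$ that is a stage-wise quasi-isomorphism, hence a weak equivalence in the injective/projective model structure whose existence underlies Prop. \ref{ProjectiveModelStructureOnConnectiveEquivariantdgcAlgebras}.

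\noindent
Second I would derive-pair both sides against the coefficient system $\mathscr{A}_{\mathbb{R}}$. The point is that a stage-wise quasi-isomorphism of equivariant cochain complexes need \emph{not} survive naive pairing with $\mathscr{A}_{\mathbb{R}}$, because hom into an equivariant dual vector space in $\mathrm{Functors}(G\mathrm{Orbits},\VectorSpaces)$ is not exact. However, using the injective envelope machinery of Prop. \ref{InjectiveEnvelopeOfEquivariantDualVectorSpaces} and Example \ref{InjectiveResolutionOfEquivariantDualVectorSpaces}, every $\mathscr{A}_{\mathbb{R}}$ admits a functorial injective resolution $\mathscr{A}_{\mathbb{R}} \hookrightarrow \mathrm{Inj}^\bullet(\mathscr{A}_{\mathbb{R}})$ in $\EquivariantDualVectorSpaces$. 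Pairing with this injective resolution turns stage-wise quasi-isomorphisms between projectively cofibrant objects into stage-wise quasi-isomorphisms, and passing to total complexes yields the required hyper-$\mathrm{Ext}$ computation on both sides. Identifying the two resulting hyper-cohomology groups then reduces to Step 1.

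\noindent
The main obstacle is verifying cofibrancy (or an acyclic-models-style projectivity) of the relevant cochain complexes so that $\mathrm{Hom}_{\mathrm{Fun}}\bigl(-,\,\mathrm{Inj}^\bullet\mathscr{A}_{\mathbb{R}}\bigr)$ correctly computes the derived pairing. This is the equivariant analogue of the acyclic-models argument in the classical proof, and the cleanest route is to produce, for each equivariant simplicial set $\mathscr{X}$, a small model of both $\Omega^\bullet_{\mathrm{PLdR}}(\mathscr{X})^\vee$ and of $\underline{C}_\bullet(\mathscr{X})$ by projectively cofibrant equivariant complexes whose stage-wise components are free over the simplices of $\mathscr{X}(G/H)$; the stage-wise Eilenberg-Zilber/Poincar\'e lemma then guarantees compatibility under pairing with $\mathrm{Inj}^\bullet\mathscr{A}_{\mathbb{R}}$. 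Once this compatibility is in place the natural isomorphism follows formally, matching the chain of identifications on both sides and proving the claim.
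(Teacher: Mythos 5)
The paper does not prove this proposition: it states it as a citation of \cite[Thm.\ 4.9]{Tri82} (Triantafillou's equivariant minimal-model paper), so there is no in-paper proof to compare against. What I can assess is whether your reconstruction of Triantafillou's argument is sound.

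Your overall strategy (classical PL de Rham theorem stage-wise over $G\mathrm{Orbits}$, then homological algebra in the functor category to pass to Bredon cohomology) is the right skeleton, and the identification of Bredon cohomology as an $\mathrm{Ext}$-group against the projective chain system $\underline{C}_\bullet(\mathscr{X})$ is correct. But the plan for bridging the two sides has a genuine gap, and it is precisely at the point you flag as the ``main obstacle.'' You propose to injectively resolve $\mathscr{A}_{\mathbb{R}}$ \emph{and} cofibrantly model $\Omega^\bullet_{\mathrm{PLdR}}(\mathscr{X})^\vee$. First, this is redundant: in a derived Hom one resolves one side or the other, not both, and once you have an injective resolution $\mathrm{Inj}^\bullet\mathscr{A}_{\mathbb{R}}$ in hand no cofibrancy of the source is needed since $\mathrm{Hom}(-,\,I)$ with $I$ injective already takes stage-wise quasi-isomorphisms to quasi-isomorphisms. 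Second, and more seriously, the sketch for producing a projectively cofibrant model of $\Omega^\bullet_{\mathrm{PLdR}}(\mathscr{X})^\vee$ ``free over the simplices'' does not work: the PL de Rham complex of a simplicial set is an infinite-dimensional dgc-algebra, its functor-category dual is not visibly a direct sum of representables, and nothing like the Eilenberg--Zilber lemma produces such a model. The actual key technical input in Triantafillou's proof runs in the opposite direction from what you are attempting: one shows that the system of PL de Rham $n$-forms $G/H \mapsto \Omega^n_{\mathrm{PLdR}}(\mathscr{X}(G/H))$ is itself a degree-wise \emph{injective} object in the covariant functor category (the exact analogue, in the PL setting, of what this paper proves for smooth forms in Lemmas \ref{ZpEquivariantSmoothDifferentialFormsFormInjectiveDualVectorSpace}, \ref{Z4EquivariantSmoothDifferentialFormsAreInjective}, \ref{Z2TimesZ2EquivariantDifferentialFormsAreInjective}, via extension of forms off fixed loci as in Notation \ref{ExtensionOfSmoothDifferentialFormsAwayFromFixedLoci}). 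That injectivity is what makes the naive pairing against $\mathscr{A}_{\mathbb{R}}$ already compute the derived functor on the PL side, so that no resolution of the coefficient system is required and the stage-wise Sullivan quasi-isomorphism passes straight through. Without that lemma, or something equivalent, your argument does not close; and establishing it is nontrivial (it is where the finiteness of $G$ and the simplicial/PL structure actually enter).
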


\begin{prop}[Quillen adjunction between equivariant simplicial sets
and equivariant dgc-algebras {\cite[Prop. 5.1]{Scull08}}]
  \label{QuillenAdjunctionBetweenEquivariantSSetAndEquivariantdgcAlgebras}
  The equivariant PL de Rham complex construction
  (Def. \ref{EquivariantPLdeRhamComplexFunctor})
  is the left adjoint in a Quillen adjunction
  \vspace{0mm}
  $$
    \xymatrix{
      \EquivariantdgcAlgebrasProjOp \;\;
      \ar@{<-}@<+7pt>[rr]^-{ \Omega^\bullet_{\mathrm{PLdR}} }
      \ar@<-7pt>[rr]_-{ \exp }^-{ \bot_{ \mathrlap{\mathrm{Qu}} } }
      &&
   \;\;   G \mathrm{SSet}_{\mathrm{proj}}
    }
  $$

\noindent
  between the projective model structure on
  equivariant simplicial sets
  (Prop. \ref{ModelCategoryOnEquivariantSSet})
  and the opposite of the projective model structure on
  connective equivariant dgc-algebras (Prop. \ref{ProjectiveModelStructureOnConnectiveEquivariantdgcAlgebras}).
\end{prop}

\newpage

\noindent {\bf The fundamental theorem of dgc-algebraic equivariant rational homotopy theory.}

\begin{prop}[Fundamental theorem of dgc-algebraic equivariant rational homotopy theory {\cite[Thm. 5.6]{Scull08}}]
  \label{FundamentalTheoremOfdgcAlgebraicEquivariantRationalHomotopyTheory}
  On equivariant 1-connected $\mathbb{R}$-finite homotopy
  types
  (Def. \ref{SubcategoryOfEquivariantSimplyConnectedRFiniteHomotopyTypes}):

  \noindent
  {\bf (i)}
  The derived PL de Rham adjunction
  (Prop. \ref{QuillenAdjunctionBetweenEquivariantSSetAndEquivariantdgcAlgebras})
  restricts to an equivalence of homotopy categories
  $$
    \xymatrix{
      \big(
      \EquivariantSimplyConnectedRFiniteHomotopyTypes
      \big)^{\mathbb{R}}
      \;
      \ar@{<-}@<+6pt>[rrr]^-{
        \mathbb{L} \Omega^\bullet_{\mathrm{PLdR}}
      }
      \ar@<-6pt>[rrr]_-{ \mathbb{R} \exp }^-{ \simeq }
      &&&
   \;   \mathrm{Ho}
      \Big(
        \EquivariantdgcAlgebrasProjOp
      \Big)^{\geq 2}_{\mathrm{fin}}
    }
  $$

  \noindent
  between those simply-connected $\mathbb{R}$-finite
  equivariant homotopy types
  (Def. \ref{SubcategoryOfEquivariantSimplyConnectedRFiniteHomotopyTypes})
  which are rational (Def. \ref{EquivariantRationalization})
  over the real numbers and
  formal duals of cohomologically connected 1-connected
  \eqref{Cohomologically1Connected} equivariant dgc-algebras.

  \noindent
  {\bf (ii)} The derived adjunction unit
  is equivariant rationalization (Def. \ref{EquivariantRationalization}):
  \begin{equation}
    \label{DerivedPLdRUnitEquivalentToRationalization}
    \mathscr{X}
    \,\in\,
    \EquivariantSimplyConnectedRFiniteHomotopyTypes
    \;\;
    \Rightarrow
    \;\;
    \raisebox{20pt}{
    \xymatrix@R=18pt@C=3em{
      \mathscr{X}
      \ar[rr]^-{
        \mathbb{D}
        \eta^{\mathrm{PLdR}}_{\scalebox{0.7}{$\mathscr{X}$}}
      }
      \ar@{=}[d]
      &&
      \mathbb{R}\exp
      \,\circ\,
      \mathbb{L} \Omega^\bullet_{\mathrm{PLdR}}
      \big(
        \mathscr{X}
      \big)\;.
      \ar[d]^-{ \simeq }
      \\
      \mathscr{X}
      \ar[rr]^-{
        \eta^{\mathbb{R}}_{\scalebox{0.7}{$\mathscr{X}$}}
      }
      &&
      L_{\mathbb{R}} \mathscr{X}
    }
    }
  \end{equation}
\end{prop}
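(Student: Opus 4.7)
The plan is to upgrade the classical Bousfield--Gugenheim theorem, which the paper recalls as the fundamental theorem of plain dgc-algebraic rational homotopy theory, to the equivariant setting by working stage-wise over the orbit category $G\mathrm{Orbits}$ (Def.\ \ref{OrdinaryOrbitCategory}) while keeping track of the extra injectivity data that distinguishes equivariant dgc-homotopy theory. Concretely, I would first note that the projective model structure on $\EquivariantdgcAlgebrasProj$ (Prop.\ \ref{ProjectiveModelStructureOnConnectiveEquivariantdgcAlgebras}) and the projective model structure on $\EquivariantSimplicialSetsProj$ (Prop.\ \ref{ModelCategoryOnEquivariantSimplicialSets}) have weak equivalences and fibrations detected stage-wise over $G/H \in G\mathrm{Orbits}$, so that the Quillen adjunction from Prop.\ \ref{QuillenAdjunctionBetweenEquivariantSimplicialSetsAndEquivariantdgcAlgebras} restricts on each stage to the classical PL-de Rham Quillen adjunction. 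This stage-wise reduction is the backbone for transporting Bousfield--Gugenheim equivariantly.

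\medskip

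\noindent
For part (i), the strategy is to establish essential surjectivity and full faithfulness of the derived adjunction on the indicated subcategories. For essential surjectivity on the dgc-algebra side, given a cohomologically 1-connected equivariant dgc-algebra $A$ of finite type, I would pass to its equivariant minimal model $A_{\mathrm{min}} \xrightarrow{\;\sim\;} A$ provided by Prop.\ \ref{ExistenceOfEquivariantMinimalModels}; this is cofibrant by Lemma \ref{MinimalEquivariantdgcAlgebrasAreCofibrant} and hence $\mathbb{R}\exp(A)\simeq \exp(A_{\mathrm{min}})$ is computed by the underived right adjoint on a cofibrant replacement. Stage-wise, $\exp(A_{\mathrm{min}})(G/H)$ is then a 1-connected $\mathbb{R}$-finite rational homotopy type by Bousfield--Gugenheim, so the result lies in the claimed image. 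For fully faithfulness, the hom-computation reduces, via cofibrant/fibrant replacement by minimal models on the left and by $\Omega^\bullet_{\mathrm{PLdR}}$ of Kan replacements on the right, to stage-wise hom-sets, where classical rational homotopy theory applies. To close up the equivalence, I would exhibit for each rational equivariant homotopy type $\mathscr{X}$ a quasi-isomorphism from a minimal model into $\Omega^\bullet_{\mathrm{PLdR}}(\mathscr{X})$ built by inductive attachment of cells \eqref{ElementaryExtensionPushout} indexed by the equivariant rational homotopy groups $\underline{\pi}_{\,\bullet+1}(\mathscr{X})\otimes\mathbb{R}$ (Def.\ \ref{EquivariantRationalHomotopyGroups}), using the injective envelopes (Prop.\ \ref{InjectiveEnvelopeOfEquivariantDualVectorSpaces}) to produce compatible attaching maps across all stages.

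\medskip

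\noindent
For part (ii), the identification \eqref{DerivedPLdRUnitEquivalentToRationalization} of the derived adjunction unit with equivariant rationalization will be extracted from the same minimal-model presentation. Since both $\mathbb{D}\eta^{\mathrm{PLdR}}_{\mathscr{X}}$ and the rationalization unit $\eta^{\mathbb{R}}_{\mathscr{X}}$ are natural morphisms out of $\mathscr{X}$ and are characterized (up to equivalence under $\mathscr{X}$) by inducing isomorphisms on equivariant real cohomology, it suffices to check the cohomological condition for $\mathbb{D}\eta^{\mathrm{PLdR}}_{\mathscr{X}}$; this reduces stage-wise via the equivariant PL-de Rham theorem to the classical Bousfield--Gugenheim statement, and uniqueness of rationalization on each stage then assembles into a single morphism in $\EquivariantHomotopyTypes$ by naturality over $G\mathrm{Orbits}$.

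\medskip

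\noindent
The main obstacle is the construction and invariance of equivariant minimal models across the orbit category: unlike in the non-equivariant case, the underlying equivariant cochain complex of a minimal dgc-algebra must be degree-wise \emph{injective} (Def.\ \ref{InjectiveObjects}), so the attaching cells are not simple generators but their injective resolutions $\mathrm{Inj}^\bullet(V_n)$ (Example \ref{InjectiveResolutionOfEquivariantDualVectorSpaces}). The inductive construction of the minimal model and the lifting/extension arguments inside the proof of essential surjectivity must therefore be carried out against these injective resolutions, using crucially that tensor products preserve injectivity of finite-dimensional dual $G$-vector spaces (Lemma \ref{TensorProductPreservesInjectivityOfFiniteDimensionalDualVectorGSpaces}) so that the iterated elementary extensions \eqref{ElementaryExtensionPushout} remain degree-wise injective. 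Once this book-keeping is in place, the classical Bousfield--Gugenheim proof goes through stage-wise and the naturality over $G\mathrm{Orbits}$ assembles the pieces into the claimed equivalence.
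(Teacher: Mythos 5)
First, for calibration: the paper does not prove this proposition at all — part (i) is quoted from \cite{Scull08} (Thm.\ 5.6), and part (ii) is justified only by the short remark following the statement, namely that the PL de Rham adjunction, its derived unit, and rationalization are all stage-wise over $G\mathrm{Orbits}$ the corresponding non-equivariant notions. Your treatment of (ii) is essentially that remark and is fine, except that you should also record that the target $\mathbb{R}\exp\circ\mathbb{L}\Omega^\bullet_{\mathrm{PLdR}}(\mathscr{X})$ is itself \emph{rational} (Def.\ \ref{EquivariantRationalization}); inducing an isomorphism on real cohomology alone does not characterize the rationalization unit.

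For part (i), however, your sketch has a genuine gap. Your declared ``backbone'' — that both projective model structures have weak equivalences \emph{and} fibrations detected stage-wise, so that the Quillen adjunction ``restricts on each stage'' to the classical one — is false on the algebraic side: by Prop.\ \ref{ProjectiveModelStructureOnConnectiveEquivariantdgcAlgebras} the fibrations of equivariant dgc-algebras are the degreewise surjections whose kernels are injective objects of $\EquivariantDualVectorSpaces$, and injectivity is a condition on the whole orbit-category diagram (Prop.\ \ref{InjectiveEnvelopeOfEquivariantDualVectorSpaces}), not a stage-wise one; likewise the right adjoint $\exp$ is not computed objectwise. Consequently your full-faithfulness step — ``the hom-computation reduces \ldots to stage-wise hom-sets, where classical rational homotopy theory applies'' — does not work: morphisms in either homotopy category are homotopy classes of natural transformations over $G\mathrm{Orbits}$, and these are not determined by their stages. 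Controlling such diagram-level mapping sets is exactly what the injective resolutions and equivariant minimal models are for, and it is where the obstruction-theoretic lifting arguments of \cite{Tri82} and \cite{Scull08} (lifting against degreewise-injective fibrant objects) carry the real weight. Your closing assertion that ``the classical Bousfield--Gugenheim proof goes through stage-wise and the naturality over $G\mathrm{Orbits}$ assembles the pieces'' hides precisely this content: stage-wise equivalences plus naturality do not by themselves yield an equivalence of homotopy categories of diagram categories. Either carry out the equivariant obstruction theory in full (as in \cite{Scull08}), or simply cite \cite{Scull08} for part (i), as the paper does.
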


\begin{remark}
  That the equivariant derived PLdR-unit \eqref{DerivedPLdRUnitEquivalentToRationalization}
  models equivariant rationalization is not made explicit in \cite{Scull08},
  but it follows immediately from the fact that:

\noindent   {\bf (a)} by definition, the equivariant PLdR adjunction is stage-wise over
  $G/H \,\in\, G \mathrm{Orb}$ the plain PLdR adjunction;

\noindent   {\bf (b)} the derived unit of the plain PLdR-adjunction
  models plain rationalization
  by the non-equivariant fundamental theorem
  (e.g. \cite[Prop. 3.60]{FSS23-Char});
  and

\noindent   {\bf (c)}
  that equivariant rationalization
  (Def. \ref{EquivariantRationalization}) is stage-wise plain rationalization.
\end{remark}

\medskip

\noindent {\bf Equivariant rational Whitehead $L_\infty$-algebras}

\begin{defn}[Equivariant Whitehead $L_\infty$-algebra]
  \label{EquivariantWhiteheadLInfinityAlgebra}
  For
  $\raisebox{1pt}{\textesh}\orbisingular \big(X\!\sslash\!G\big) \,\in\, \EquivariantSimplyConnectedRFiniteHomotopyTypes$
  (Def. \ref{SubcategoryOfEquivariantSimplyConnectedRFiniteHomotopyTypes}),
  we say that its
  \emph{equivariant Whitehead $L_\infty$-algebra}
  $$
    \mathfrak{l}
    \orbisingular \big(X\!\sslash\!G\big)
    \;\in\;
    \EquivariantLInfinityAlgebras
  $$

\noindent
  is the equivariant $L_\infty$-algebra (Def. \ref{EquivariantLInfinityAlgebras})
  whose equivariant Chevalley-Eilenberg algebra
  \eqref{DefiningEmbeddingOfEquivariantLInfinityAlgebrasInEquivariantdgcAlgebras}
  is the minimal model
  (well-defined by Prop. \ref{ExistenceOfEquivariantMinimalModels})
  of the equivariant PL de Rham complex (Def. \ref{EquivariantPLdeRhamComplexFunctor})
  of $\raisebox{1pt}{\textesh}\orbisingular \big(X\!\sslash\!G\big)$:
\vspace{-1mm}
  \begin{equation}
    \label{WhiteheadLInfinityAlgebraMinimalModelQuasiIso}
    \hspace{-4mm} 
    \mathrm{CE}
    \big(
      \mathfrak{l}
      \orbisingular \big(X\!\sslash\!G\big)
   \! \big)
   \! :=\!\!
    \xymatrix@C=1em{
      \Omega^\bullet_{\mathrm{PLdR}}(X)_{\mathrm{min}}
      \ar[rr]^-{ p^{\mathrm{min}} }_-{ \in \; \mathrm{W} }
      &&
      \Omega^\bullet_{\mathrm{PLdR}}(X)
    }
    \in
    \EquivariantdgcAlgebras
    .
  \end{equation}

\end{defn}

\begin{prop}[Equivariant rational homotopy groups in the equivariant
Whitehead $L_\infty$-algeba {\cite[Thm. 6.2 (2)]{Tri82}}]
  \label{EquivariantRationalHomotopyGroupsInEquivariantWhiteheadLInfinityAlgebra}
  For
  $\raisebox{1pt}{\rm\textesh}\orbisingular \big(X\!\sslash\!G\big) \,\in\, \EquivariantSimplyConnectedRFiniteHomotopyTypes$
  (Def. \ref{SubcategoryOfEquivariantSimplyConnectedRFiniteHomotopyTypes}),
  the equivariant rational homotopy groups of $\Omega X$
  (Example \ref{EquivariantRationalHomotopyGroups})
  are equivalent to the
  underlying equivariant graded vector space (Def. \ref{MinimalEquivariantLInfinityAlgebra})
  of the equivariant Whitehead $L_\infty$-algebra
  (Def. \ref{EquivariantWhiteheadLInfinityAlgebra})
  of $\orbisingular \big(X\!\sslash\!G\big)$:
  \begin{equation}
    \overset{
      \mathclap{
      \raisebox{3pt}{
        \tiny
        \color{darkblue}
        \bf
        \begin{tabular}{c}
          equivariant
          \\[-3pt]
          Whitehead $L_\infty$-algebra
        \end{tabular}
      }
      }
    }{
    \big(
      \mathfrak{l}
      \orbisingular
      \big(
        X \!\sslash\! G
      \big)
    \big)_\bullet
    }
    \;\;\;\;
      \simeq
    \;\;\;\;
    \overset{
      \mathclap{
      \raisebox{3pt}{
        \tiny
        \color{darkblue}
        \bf
        \begin{tabular}{c}
          equivariant rational
          \\[-3pt]
          homotopy groups of
          \\[-3pt]
          equivariant loop space
        \end{tabular}
      }
      }
    }{
      \underline{\pi}_{\, \bullet}
      (
        \Omega X
      )
      \otimes_{\scalebox{.5}{$\mathbb{Z}$}}
      \mathbb{R}
    }
    \,.
  \end{equation}

\end{prop}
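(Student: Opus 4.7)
The strategy is to reduce the equivariant statement to the non-equivariant one stage-wise over the orbit category, exploiting the fact that all three ingredients --- the PL de Rham functor, the minimal model construction, and the taking of homotopy groups --- are compatible with restriction to a single $G/H \in G\mathrm{Orbits}$.

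First, recall that by Def.~\ref{EquivariantPLdeRhamComplexFunctor}, the equivariant PL de Rham complex is defined pointwise:
$\Omega^\bullet_{\mathrm{PLdR}}\bigl(\raisebox{1pt}{\textesh}\orbisingular(X\!\sslash\!G)\bigr)(G/H) \;\simeq\; \Omega^\bullet_{\mathrm{PLdR}}(X^H)$.
Combined with Def.~\ref{EquivariantHomotopyGroups}, which says that $\underline{\pi}_\bullet(\Omega X)$ evaluated at $G/H$ is simply $\pi_\bullet\bigl(\Omega X^H\bigr) \simeq \pi_{\bullet+1}(X^H)$, the claim to be proven translates, at each stage $G/H$, into the statement that $\pi_{\bullet+1}(X^H)\otimes_{\mathbb{Z}}\mathbb{R}$ equals the graded vector space $(V^A_\bullet)^\vee(G/H)$ of stagewise evaluations of the primitive generators $V^A_\bullet$ controlling the elementary extensions in the equivariant minimal model (Def.~\ref{MinimalEquivariantdgcAlgebras}, Def.~\ref{MinimalEquivariantLInfinityAlgebra}).

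Next, I would invoke the fundamental theorem of dgc-algebraic equivariant rational homotopy theory (Prop.~\ref{FundamentalTheoremOfdgcAlgebraicEquivariantRationalHomotopyTheory}) together with the identification \eqref{DerivedPLdRUnitEquivalentToRationalization} of the derived PLdR-unit with equivariant rationalization, to conclude that the equivariant minimal model quasi-isomorphism \eqref{WhiteheadLInfinityAlgebraMinimalModelQuasiIso} models $L_{\mathbb{R}}\bigl(\raisebox{1pt}{\textesh}\orbisingular(X\!\sslash\!G)\bigr)$. Since equivariant rationalization preserves equivariant homotopy groups up to tensoring with $\mathbb{R}$ (Def.~\ref{EquivariantRationalization}), this reduces the problem to extracting $\underline{\pi}_\bullet(\Omega(-))\otimes\mathbb{R}$ from the equivariant Chevalley--Eilenberg algebra $\mathrm{CE}(\mathfrak{l}\orbisingular(X\!\sslash\!G))$. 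By stagewise rationalization and Prop.~\ref{ExistenceOfEquivariantMinimalModels}, at each $G/H$ the evaluation of $\mathrm{CE}(\mathfrak{l}\orbisingular(X\!\sslash\!G))$ is a (not-necessarily-minimal) Sullivan model of $X^H$, and a further (plain) minimal model extraction recovers $\pi_{\bullet+1}(X^H)\otimes\mathbb{R}$ by the classical dgc-algebraic rational homotopy theorem (e.g.~\cite[Prop.~3.65]{FSS20d}).

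The main obstacle is the third step: showing that despite the equivariant minimal model being built from \emph{injective envelopes} $\mathrm{Inj}^\bullet(V^A_n)$ rather than from the $V^A_n$ themselves (so that its value at $G/H$ typically carries more generators than the \emph{plain} minimal model of $X^H$), the extra generators introduced by the acyclic tail of the injective resolution do not contribute to the primitive (indecomposable) cohomology controlling rational homotopy groups, and hence the evaluation $V^A_n(G/H)$ of the \emph{primitive} attaching spaces already matches $\pi_{n+1}(X^H)\otimes\mathbb{R}$. Concretely, for each elementary extension \eqref{ElementaryExtensionPushout}, the cone $\mathfrak{e}\mathfrak{b}^n\mathrm{Inj}^\bullet(V_n)$ is acyclic, so after passing to cohomology (or equivalently, to the space of indecomposable generators modulo the differential's image from lower degrees) one is left with the contribution of $V_n(G/H)$ alone. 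This observation --- that the resolution-based construction of equivariant elementary extensions has the same effect on stagewise rational homotopy as a plain elementary cell attachment in degree $n+1$ --- is precisely what was established in the analysis of \cite[Thm.~6.2]{Tri82}, and I would invoke it to close the argument. Finally, assembling the stagewise identifications into a natural isomorphism of equivariant graded vector spaces (Def.~\ref{EquivariantGradedVectorSpaces}) is automatic from the functoriality of all constructions over $G\mathrm{Orbits}$.
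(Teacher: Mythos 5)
The paper does not give a proof of this proposition; it simply cites it to Triantafillou, \cite[Thm.~6.2\,(2)]{Tri82}. So your proposal is a reconstruction, and it captures the right structure: the reduction to stagewise comparison over $G\mathrm{Orbits}$ via the equivariant fundamental theorem, and the correct identification of the one genuine subtlety, namely that $\mathrm{CE}\bigl(\mathfrak{l}\orbisingular(X\sslash G)\bigr)(G/H)$ is generically \emph{not} a minimal Sullivan algebra for $X^H$ because the injective resolutions add generators at stages where $V^A_n$ itself vanishes (as seen concretely for $\omega_4, h_3$ at stage $\mathbb{Z}_2/1$ in Example~\ref{CheckingTwistorSpaceModZ2MinimalModel}).

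One remark on the justification of the crucial step, which as written is slightly misdirected though not fatally so. Appealing to acyclicity of the cone $\mathfrak{e}\mathfrak{b}^n\mathrm{Inj}^\bullet(V_n)$ is not quite the right observation: that contractibility is a generic feature of cones and is what makes the cell attachment \eqref{ElementaryExtensionPushout} a cofibration, not what controls the rational homotopy. What is actually relevant is the acyclicity in positive degrees of the resolution $\mathrm{Inj}^\bullet(V_n)$ itself. The new generators adjoined to $A$ at stage $G/H$ form a copy of $\mathrm{Inj}^\bullet(V_n)(G/H)$ shifted into degrees $\geq n$, and the \emph{linear} part of the resulting differential on them is exactly the resolution differential, the attaching-map contribution $\widetilde\phi_n^\bullet$ being decomposable by minimality. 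Since $\mathrm{Inj}^\bullet(V_n)$ resolves $V_n$, the linearized complex of new generators at stage $G/H$ has cohomology $V_n(G/H)$ concentrated in degree $n$, so passing to a plain minimal model of $A(G/H)$ contracts away precisely the acyclic tail $\mathrm{Inj}^{\geq 1}(V_n)(G/H)$ and leaves $V_n(G/H)$; this gives $(V^A_n)^\vee(G/H)\simeq \pi_{n+1}(X^H)\otimes\mathbb{R}$. With that adjustment your outline is correct and reflects the argument of \cite[Thm.~6.2]{Tri82}.
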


\noindent {\bf Examples of equivariant Whitehead $L_\infty$-algebras.}

\begin{prop}[$\ZTwo$-Equivariant minimal model of twistor space]
  \label{Z2EquivariantMinimalModelOfTwistorSpace}
  The equivariant minimal model
  (Def. \ref{MinimalEquivariantdgcAlgebras})
  of
  the $\Grefl$-equivariant twistor space
  (Example \ref{GhetEquivariantTwistorSpace})
  is the following $\ZTwo$-equivariant dgc-algebra
  (Def. \ref{EquivariantdgcAlgebras}):

  \vspace{-5mm}
  \begin{equation}
    \label{MinimalModelOfTwistorSpaceModZ2}
    \mathrm{CE}
    \big(
      \mathfrak{l}
      \orbisingular
      \big(
        \mathbb{C}P^3 \!\sslash\! \ZTwo
      \big)
    \big)
    \;
    :
    \!\!\!
  \raisebox{40pt}{
  \xymatrix@C=1pt{
    \ZTwo/\ZTwo
    \ar@{<-}[d]
    &
    \mapsto
    &
    \mathbb{R}
    \!
    \left[
      \!\!
      \def\arraystretch{1}
      {\begin{array}{c}
        h_3,
        \\
        f_2
      \end{array}}
      \!\!
    \right]
    \!\big/\!
    \left(
      {\begin{aligned}
        d\, h_3  & = \phantom{\omega_4}\; - f_2 \wedge f_2
        \\[-4pt]
        d\, f_2 & = 0
      \end{aligned}}
    \right)
    \ar@<-50pt>@{<<-}[d]
    \\
    \ZTwo/1
    \ar@(dl,dr)|-{\;\ZTwo}
    &\mapsto&
    \mathbb{R}
    \!
    \left[
      \!\!
      \def\arraystretch{1}
      {\begin{array}{c}
        h_3,
        \\
        f_2
        \\
        \omega_7,
        \\
        \omega_4
      \end{array}}
      \!\!
    \right]
    \!\big/\!
    \left(
      {\begin{aligned}
        d\, h_3  & = \omega_4 - f_2 \wedge f_2
        \\[-4pt]
        d\, f_2 & = 0
        \\[-4pt]
        d\, \omega_7 & = - \omega_4 \wedge \omega_4
        \\[-4pt]
        d\, \omega_4 & = 0
      \end{aligned}}
    \right)
  }
  }
  %\in \ZTwoEquivariantdgcAlgebras
\end{equation}
\end{prop}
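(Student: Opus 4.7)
\medskip

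\noindent \textbf{Proof proposal.} The plan is to verify that the candidate equivariant dgc-algebra $A$ in \eqref{MinimalModelOfTwistorSpaceModZ2} satisfies the two characterizing properties of the equivariant Whitehead $L_\infty$-algebra (Def.~\ref{EquivariantWhiteheadLInfinityAlgebra}):
(a) it is equivariantly minimal in the sense of Def.~\ref{MinimalEquivariantdgcAlgebras}, and
(b) it is equivariantly quasi-isomorphic to the equivariant PL de Rham complex $\Omega^\bullet_{\mathrm{PLdR}}\big(\raisebox{1pt}{\textesh}\orbisingular(\mathbb{C}P^3 \!\sslash\! \Grefl)\big)$ (Def.~\ref{EquivariantPLdeRhamComplexFunctor}).
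The uniqueness part of Prop.~\ref{ExistenceOfEquivariantMinimalModels} will then conclude.

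For (a), I will invoke Example~\ref{CheckingTwistorSpaceModZ2MinimalModel}, which already exhibits $A$ as the result of three explicit elementary extensions \eqref{ElementaryExtensionPushout}: attaching $f_2$ in degree $2$ with attaching map \eqref{AttachingMapInDegree2ForZ2EquivariantTwistorSpace}; attaching $h_3$ in degree $3$ with attaching map \eqref{AttachingMapInDegree3ForZ2EquivariantTwistorSpace}, whose injective resolution \eqref{InjectiveResolutionOfZ2EquivariantVectorSpace} forces the auxiliary generator $\omega_4$ to appear in the bulk stage; and attaching $\omega_7$ in degree $7$ with attaching map \eqref{AttachingMapInDegree7ForZ2EquivariantTwistorSpace}.

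For (b), I will compare stagewise. At $\Grefl/\Grefl$ the algebra is $\mathbb{R}[h_3,f_2]/(dh_3 = -f_2\wedge f_2,\, df_2=0)$, which is the standard plain minimal Sullivan model of $S^2$; hence, by the non-equivariant fundamental theorem (e.g.\ \cite[Prop. 3.60]{FSS20d}) applied to the fixed locus \eqref{Z2FixedPointsInTwistorSpaceFormS2Fiber}, it is quasi-isomorphic to $\Omega^\bullet_{\mathrm{PLdR}}(S^2)$. At $\Grefl/1$, the change of variables $\widetilde{\omega}_4 := \omega_4 - f_2\wedge f_2$ exhibits $(h_3,\widetilde{\omega}_4)$ as an acyclic Koszul pair with $dh_3 = \widetilde{\omega}_4$, and quotienting out yields $\mathbb{R}[f_2,\omega_7]$ with $d\omega_7 = -f_2^{\wedge 4}$, the standard minimal model of $\mathbb{C}P^3$; hence the bulk stage is quasi-isomorphic to $\Omega^\bullet_{\mathrm{PLdR}}(\mathbb{C}P^3)$. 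Using the compatible rational homotopy groups in \eqref{Z2EquivariantHomotopyGroupsOfTwistorSpace}, these stagewise quasi-isomorphisms will be assembled into an equivariant quasi-isomorphism.

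The main obstacle will be Step (b)'s naturality: showing that the surjective restriction morphism of $A$ (sending $f_2 \mapsto f_2$, $h_3\mapsto h_3$, $\omega_4\mapsto 0$, $\omega_7\mapsto 0$) in fact models the fiber-inclusion map $\mathrm{fib}(t_{\mathbb{H}}): S^2 \hookrightarrow \mathbb{C}P^3$ of the twistor fibration under the above identifications. I will handle this by checking on degree-$2$ cohomology: the generator $[f_2]$ downstairs must pull back from the hyperplane class $[f_2]$ upstairs, which is correct because the fiber $\mathbb{C}P^1 \subset \mathbb{C}P^3$ of the twistor fibration is a projective line representing a generator of $H_2(\mathbb{C}P^3;\mathbb{Z})$; then, since both stages are intrinsically formal as plain dgc-algebras and $H^\ast(S^2)$ is generated in degree $2$, any dgc-morphism inducing the correct map in degree $2$ models the geometric fiber inclusion up to homotopy, and hence, by the plain relative Sullivan lifting lemma applied stagewise, the equivariant morphism built into $A$ is weakly equivalent to $\Omega^\bullet_{\mathrm{PLdR}}(\mathrm{fib}(t_{\mathbb{H}}))$.
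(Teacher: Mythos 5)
Your step (a) coincides with the paper's: both reduce minimality to Example \ref{CheckingTwistorSpaceModZ2MinimalModel}. For step (b) you take a genuinely different route. The paper observes, via \eqref{Z2EquivariantTwistorSpaceAsPresheafOnOrbitCategory}, that everything reduces to showing that the displayed restriction morphism is a dgc-model of the fiber inclusion $\mathrm{fib}(t_{\mathbb{H}}) : S^2 \hookrightarrow \mathbb{C}P^3$, and gets this on the nose by identifying that morphism with the cofiber of the relative minimal model of the twistor fibration from \cite[Lem. 2.13]{FSS20c}, invoking \cite[Lem. 3.71]{FSS20d} (model of the fiber $=$ cofiber of the relative model). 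You instead identify each stage separately with the standard minimal models of $S^2$ and of $\mathbb{C}P^3$ (splitting off the contractible pair $(h_3,\widetilde\omega_4)$) and then pin down the connecting morphism by its effect on $H^2$. That last step is correct in substance, but not for the reason you give: intrinsic formality of the two stages is not what detects maps on cohomology. What does the work is that real homotopy classes of maps $S^2 \to \mathbb{C}P^3_{\mathbb{R}}$ are exactly $\pi_2(\mathbb{C}P^3)\otimes\mathbb{R}$, or, dually, a direct check that any dgc-map from the minimal model of $\mathbb{C}P^3$ to that of $S^2$ is forced to be $f_2 \mapsto \lambda f_2$, $\omega_7 \mapsto \lambda^4\, h_3 \wedge f_2\wedge f_2$, hence determined by $H^2$.

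The genuine gap is the assembly step. Knowing that each stage is quasi-isomorphic to the corresponding PL de Rham complex and that the connecting morphism has the right homotopy class does not yet give an isomorphism in $\mathrm{Ho}\big(\ZTwoEquivariantdgcAlgebras_{\mathrm{proj}}\big)$: weak equivalence there requires a zig-zag of morphisms of orbit-category diagrams, i.e.\ strictly commuting squares over $\mathbb{Z}_2/1 \to \mathbb{Z}_2/\mathbb{Z}_2$, compatible also with the residual Weyl $\mathbb{Z}_2$-action at the bulk stage (which your candidate declares trivial on generators -- this itself deserves a word, cf.\ the appeal to \cite[Lemma 5.5]{BMSS19} in the proof of Prop.\ \ref{Z2EquivariantRelativeMinimalModelOfSpin3ParametrizedTwistorSpace}). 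Your ``relative Sullivan lifting lemma applied stagewise'' produces stagewise lifts with no guarantee of naturality, so a homotopy-commuting square must still be rectified. The gap is repairable: after your change of variables the composite of the minimal-model inclusion of $\mathbb{C}P^3$ into the bulk stage with the displayed surjection sends $f_2 \mapsto f_2$ and $\omega_7 \mapsto h_3\wedge f_2 \wedge f_2$ \emph{strictly}, which is exactly the cofiber of the relative model of $t_{\mathbb{H}}$ -- at which point you have re-derived the paper's argument via \cite[Lem. 2.13]{FSS20c} and \cite[Lem. 3.71]{FSS20d}. As written, however, the equivariant comparison is asserted rather than constructed.
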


\begin{proof}
  {\bf (i)}
  Checking that \eqref{MinimalModelOfTwistorSpaceModZ2}
  is indeed a minimal equivariant dgc-algebra is the content of
  Example \ref{CheckingTwistorSpaceModZ2MinimalModel},
  where this minimal algebra is obtained in
  \eqref{CP3ModZ2MinimalEquvariantdgcAlgebra}.

  \noindent
  {\bf (ii)}
  It remains to see that
  \eqref{MinimalModelOfTwistorSpaceModZ2}
  has indeed
  the algebraic homotopy type
  of the rationalized equivariant twistor space,
  under the fundamental theorem (Prop. \ref{FundamentalTheoremOfdgcAlgebraicEquivariantRationalHomotopyTheory}).
  By \eqref{Z2EquivariantTwistorSpaceAsPresheafOnOrbitCategory},
  this amounts to showing that the right vertical morphism
  of ordinary dgc-algebras in \eqref{MinimalModelOfTwistorSpaceModZ2}
  is a dgc-algebraic model
  (under the non-equivariant fundamental theorem of rational homotopy theory,
  \cite[\S 8]{BousfieldGugenheim76} reviewed as \cite[Prop. 3.59]{FSS23-Char})
  of the inclusion of the fiber of the twistor fibration \eqref{TwistorFibration}.
  But, by \cite[Lem. 3.71]{FSS23-Char}),
  the dgc-algebra model for this fiber
  is the cofiber of the minimal relative model of the
  twistor fibration. The latter is given in \cite[Lem. 2.13]{FSS20c},
  and its cofiber manifestly coincides with
  \eqref{MinimalModelOfTwistorSpaceModZ2}.

  \noindent
  {\bf (iii)} As a consistency check, notice that the equivariant rational homotopy groups
  of twistor space \eqref{Z2EquivariantHomotopyGroupsOfTwistorSpace}
  do match the generators \eqref{GeneratorsForMinimalEquivariantModelOFTwistorSpaceModZ2}
  of this minimal model;
  as it must be, by Prop. \ref{EquivariantRationalHomotopyGroupsInEquivariantWhiteheadLInfinityAlgebra}.
\end{proof}

\begin{prop}[$\ZTwo$-Equivariant relative minimal model of $\SpLR$-parametrized twistor space]
  \label{Z2EquivariantRelativeMinimalModelOfSpin3ParametrizedTwistorSpace}
  The equivariant relative minimal model
  (Def. \ref{MinimalEquivariantdgcAlgebras})
  of
  the $\Grefl$-equivariant $\SpLR$-parametrized twistor space
  (Ex. \ref{GhetEquivariantParametrizedTwistorSpace})
  is the following $\ZTwo$-equivariant dgc-algebra
  (Def. \ref{EquivariantdgcAlgebras}) under
  $
    \mathrm{CE}
    \big(
      \mathfrak{l} B \SpLR
    \big)
    \!=\!
    \mathbb{R}
      \big[
        \!
        \scalebox{.76}{$\tfrac{1}{4}$}p_1
        \!
      \big]
    \!\big/\!
    \big(
      d\, \scalebox{.76}{$\tfrac{1}{4}$}p_1
      = 0
    \big)
  $:
  \vspace{0mm}
  \begin{equation}
    \label{MinimalModelOfSpinParametrizedTwistorSpaceModZ2}
    \hspace{-3mm}
    \scalebox{0.69}{$
    \mathrm{CE}
    \bigg(\!\!
      \Big(
      \mathfrak{l}_{\scalebox{.6}{$B \SpLR$}}
      \big(
        \orbisingular
        (
          \overset{
            \mathclap{
            \;\;\;\;\;\;\;
            \raisebox{4pt}{
            \rotatebox[origin=l]{43}{
            $
            \mathclap{
            \raisebox{3pt}{
              \tiny
              \color{darkblue}
              \bf
              \def\arraystretch{.7}
              \begin{tabular}{c}
                twistor
                \\
                space
              \end{tabular}
            }
            }
            $
            }
            }
            }
          }{
            \mathbb{C}P^3
          }
          \overset{
            \mathclap{
            \;\;\;\;\;\;\;\;\;\;\;
            \raisebox{4pt}{
            \rotatebox[origin=l]{40}{
            $
            \mathclap{
            \raisebox{3pt}{
              \tiny
              \color{darkblue}
              \bf
              \def\arraystretch{.7}
              \begin{tabular}{c}
                orbifolded
                \\
                wrt $\Grefl$
              \end{tabular}
            }
            }
            $
            }
            }
            }
          }{
            \!\sslash
            \Grefl
          }
        )
      \big)
      \overset{
            \mathclap{
            \;\;\;\;\;\;\;\;\;
            \raisebox{6pt}{
            \rotatebox[origin=l]{40}{
            $
            \mathclap{
            \raisebox{3pt}{
              \tiny
              \color{darkblue}
              \bf
              \def\arraystretch{.7}
              \begin{tabular}{c}
                parametrized
                \\
                wrt $\SpLR$
              \end{tabular}
            }
            }
            $
            }
            }
            }
      }{
        \!\sslash
        \SpLR
      }
    \!\!  \Big)
   \!\! \bigg)
    \;
    :
    \!\!
  \raisebox{40pt}{
  \xymatrix@C=-2pt@R=1.5em{
    \ZTwo/1
    \ar@{->}[d]
    \ar@(ul,ur)|-{\;\ZTwo}
    &
    \longmapsto
    &
    \mathrm{CE}
    \big(
      \mathfrak{l} B \SpLR
    \big)
    \!\!
    \left[
      \!\!
      \def\arraystretch{1}
      {\begin{array}{c}
        h_3,
        \\
        f_2
        \\
        \omega_7,
        \\
        \widetilde \omega_4
      \end{array}}
      \!\!
    \right]
    \!\big/\!
    \left(
      \def\arraystretch{1}
      {\begin{aligned}
        d\, h_3  & = \widetilde \omega_4 - \tfrac{1}{2}p_1 -  f_2 \wedge f_2
        \\[-4pt]
        d\, f_2 & = 0
        \\[-4pt]
        d\, \omega_7
          & =
          -
          \widetilde \omega_4 \wedge
          \big(
            \widetilde \omega_4 - \tfrac{1}{2}p_1
          \big)
        \\[-4pt]
        d\, \widetilde \omega_4 & = 0
      \end{aligned}}
    \right)
    \ar@<-42pt>@{->>}[d]
    \\
    \ZTwo/\ZTwo
    &\longmapsto&
    \mathrm{CE}
    \big(
      \mathfrak{l} B \SpLR
    \big)
    \!\!
    \left[
      \!\!
      \def\arraystretch{1}
      {\begin{array}{c}
        h_3,
        \\
        f_2
      \end{array}}
      \!\!
    \right]
    \!\big/\!
    \left(
      \def\arraystretch{1}
      {\begin{aligned}
        d\, h_3  & = \phantom{\omega_4}\; - \tfrac{1}{2}p_1 - f_2 \wedge f_2
        \\[-4pt]
        d\, f_2 & = 0
      \end{aligned}}
    \right)
    \,,
  }
  }
$}
\end{equation}
\vspace{-.5cm}

\noindent
where

\noindent
{\bf (a)}  all closed generators are normalized such as
to be rational images of integral and integrally in-divisible classes;

\noindent
{\bf (b)} $\omega := \widetilde \omega - \tfrac{1}{4}p_1$
is fiberwise the pullback along
$\mathbb{C}P^3 \overset{t_{\mathbb{H}}}{\longrightarrow} S^4$
\eqref{TwistorFibration}
of the volume element on $S^4$;
\\
{\bf (c)} $f_2$ is fiberwise the volume element on
$S^2 \xrightarrow{\mathrm{fib}(t_{\mathbb{H}}) } \mathbb{C}P^3$.
\end{prop}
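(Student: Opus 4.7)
The plan is to mimic the non-parametrized computation of Proposition \ref{Z2EquivariantMinimalModelOfTwistorSpace} (itself carried out in Example \ref{CheckingTwistorSpaceModZ2MinimalModel}), but now working in the slice under the base dgc-algebra $\mathrm{CE}(\mathfrak{l}B\SpLR) = \mathbb{R}[\tfrac{1}{4}p_1]/(d\tfrac{1}{4}p_1 = 0)$ (cf.\ \eqref{AbstractChernWeilHomomorphism}). By the fundamental theorem of equivariant rational homotopy theory (Prop.~\ref{FundamentalTheoremOfdgcAlgebraicEquivariantRationalHomotopyTheory}), together with the description \eqref{GHetEquivariantSpLRParametrizedTwistorSpace} of the equivariant parametrized twistor space and the identification \eqref{Z2FixedPointsInTwistorSpaceFormS2Fiber} of the $\Grefl$-fixed locus as the $S^2$-fiber of $t_{\mathbb{H}}$, it suffices to establish two things: \textbf{(a)} that the algebra in \eqref{MinimalModelOfSpinParametrizedTwistorSpaceModZ2} is a relative minimal object under $\mathrm{CE}(\mathfrak{l}B\SpLR)$ in the sense of Def.~\ref{MinimalEquivariantdgcAlgebras}; and \textbf{(b)} that each stage, together with the surjection between them, realizes a non-equivariant relative minimal model of the corresponding $\SpLR$-parametrized component.

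For part \textbf{(b)}, the stage at $\mathbb{Z}_2/1$ is precisely the relative minimal Sullivan model of the $\SpLR$-parametrized twistor space over $B\SpLR$ that has already been computed in \cite[Lem.~2.13]{FSS20c}; the generators $f_2, h_3, \widetilde\omega_4, \omega_7$ match the rational homotopy groups of $\mathbb{C}P^3$ (cf.\ \eqref{Z2EquivariantHomotopyGroupsOfTwistorSpace}), and the twisted Bianchi identities arise from the $\SpLR$-equivariance of the $S^4$-volume form and its Hopf-WZ descendant. At the stage $\mathbb{Z}_2/\mathbb{Z}_2$, a direct computation of the relative minimal model of $S^2 \sslash \SpLR \to B\SpLR$ -- consistent with the normalization that $f_2$ is the fiberwise $S^2$-volume and that the pullback of $\widetilde\omega_4$ to the fiber reproduces the expected half-integral shift $\tfrac{1}{2}p_1$ -- yields exactly the quotient algebra $\mathrm{CE}(\mathfrak{l}B\SpLR)[h_3,f_2]/(dh_3 = -\tfrac{1}{2}p_1 - f_2\wedge f_2,\, df_2 = 0)$. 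The surjection that kills $\widetilde\omega_4$ and $\omega_7$ is then the evident restriction map to the fiber and is manifestly compatible with the differentials on either side.

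For part \textbf{(a)}, I would exhibit \eqref{MinimalModelOfSpinParametrizedTwistorSpaceModZ2} as a finite sequence of elementary extensions \eqref{ElementaryExtensionPushout} of $\mathrm{CE}(\mathfrak{l}B\SpLR)$, by direct analogy with the three-step construction of Example~\ref{CheckingTwistorSpaceModZ2MinimalModel}. Namely: first adjoin $f_2$, whose attaching datum is valued in the injective atom $\mathrm{Inj}_{\mathbb{Z}_2}(\mathbf{1})$ of \eqref{TheInjectiveZ2EquivariantDualVectorSpaceConcentratedOnFixedLocus} and needs no resolution; next adjoin the generator dual to $h_3$ along the attaching map that at $\mathbb{Z}_2/1$ hits the class $[-\tfrac{1}{2}p_1 - f_2\wedge f_2] + [\widetilde\omega_4]$ and at $\mathbb{Z}_2/\mathbb{Z}_2$ hits $[-\tfrac{1}{2}p_1 - f_2\wedge f_2]$. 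This attaching map is \emph{not} injective as a morphism of equivariant dual vector spaces: its injective resolution (in the pattern of \eqref{InjectiveResolutionOfZ2EquivariantVectorSpace}) is precisely what forces the appearance of a bulk-only generator $\widetilde\omega_4$ of type $\mathrm{Inj}_1(\mathbf{1})$ \eqref{TheInjectiveZ2EquivariantDualVectorSpaceInBulk}, and this is the mechanism by which $\widetilde\omega_4$ enters the model on $\mathbb{Z}_2/1$ while being absent on the fixed locus. Finally, adjoin $\omega_7$ with $d\omega_7 = -\widetilde\omega_4 \wedge (\widetilde\omega_4 - \tfrac{1}{2}p_1)$; as this generator and its attaching target both live entirely in the bulk (type $\mathrm{Inj}_1$), no further resolution is required.

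The main obstacle is the bookkeeping of auxiliary generators introduced by the injective-resolution step in the middle extension: these are the analogues of $c_5$ and $b_4$ in Example~\ref{CheckingTwistorSpaceModZ2MinimalModel}, and must be shown to be killed, up to isomorphism of equivariant dgc-algebras, by the bulk-only generator $\widetilde\omega_4$ together with an appropriate redefinition. The consistency check is a degree-by-degree comparison of the underlying equivariant graded vector space of generators with the $\SpLR$-shifted equivariant rational homotopy groups of the parametrized twistor space (via Prop.~\ref{EquivariantRationalHomotopyGroupsInEquivariantWhiteheadLInfinityAlgebra} together with the long exact sequence of the $B\SpLR$-fibration), exactly parallel to step (iii) of the proof of Prop.~\ref{Z2EquivariantMinimalModelOfTwistorSpace}; the resulting count \eqref{GeneratorsForMinimalEquivariantModelOFTwistorSpaceModZ2}, extended by the single class $\tfrac{1}{4}p_1$ in degree $4$ on both stages, matches \eqref{MinimalModelOfSpinParametrizedTwistorSpaceModZ2}.
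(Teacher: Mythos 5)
Your part (a) is essentially the paper's own minimality argument: the paper likewise observes that \eqref{MinimalModelOfSpinParametrizedTwistorSpaceModZ2} is obtained from the constant equivariant algebra on $\mathrm{CE}(\mathfrak{l}B\SpLR)$ by the same three elementary extensions as in Example \ref{CheckingTwistorSpaceModZ2MinimalModel}, with the attaching maps $\phi_3$, $\phi_7$ modified by $f_2\wedge f_2\mapsto f_2\wedge f_2+\tfrac{1}{2}p_1$ and $\omega_4\wedge\omega_4\mapsto\widetilde\omega_4\wedge(\widetilde\omega_4-\tfrac{1}{2}p_1)$. Two small corrections there: the degree-3 attaching datum is concentrated on the fixed-point stage, so it cannot ``hit $[\widetilde\omega_4]$'' in the bulk -- $\widetilde\omega_4$ only comes into existence through the injective resolution of that datum -- and no auxiliary generators need to be ``killed'' afterwards: the pushout \eqref{ElementaryExtensionPushout} identifies the resolution generators with $h_3$, $\widetilde\omega_4$ and with decomposable elements, exactly as happens to $b_3,b_4,c_4,c_5$ in the Example.

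The genuine gap is in your part (b). Having a quasi-isomorphism from the $\mathbb{Z}_2/1$-stage to $\Omega^\bullet_{\mathrm{PLdR}}\big(\mathbb{C}P^3\!\sslash\!\SpLR\big)$ (from \cite[Thm.\ 2.14]{FSS20c}, after the change of generators $\omega_4\leftrightarrow\widetilde\omega_4-\tfrac{1}{4}p_1$) and, separately, a quasi-isomorphism from the $\mathbb{Z}_2/\mathbb{Z}_2$-stage to $\Omega^\bullet_{\mathrm{PLdR}}\big(S^2\!\sslash\!\SpLR\big)$ does not yet prove the Proposition: by Prop.\ \ref{FundamentalTheoremOfdgcAlgebraicEquivariantRationalHomotopyTheory} what is needed is an equivalence of \emph{equivariant} dgc-algebras, i.e.\ a (zig-zag of) morphism(s) of functors on the orbit category, so the two stage-wise equivalences must be chosen compatibly with the surjection on the algebra side and with restriction to the fixed locus on the de Rham side -- strict commutativity would, for instance, force the images of $\widetilde\omega_4,\omega_7$ to restrict to exactly zero, not merely to exact forms, on the fixed locus. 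Your proposal only checks that the algebraic quotient map is a dgc-morphism and gives no mechanism producing this compatibility. The paper's proof is built precisely around this point: it does not compute the fixed-point stage independently, but induces its equivalence from the bulk one by passing to cofibers relative to $B\SpLR$, via the fiber lemma (\cite[Prop.\ 15.5]{FHT00}, \cite[Thm.\ 5.1]{FHT15}) applied to the square of weak equivalences for $t_{\mathbb{H}}\!\sslash\!\SpLR$, together with the Gysin-sequence computation $H^\bullet\big(S^2\!\sslash\!\SpLR;\mathbb{R}\big)\simeq H^\bullet\big(B\SpLR;\mathbb{R}\big)\otimes H^\bullet\big(S^2;\mathbb{R}\big)$ (the Euler class vanishes since $H^3\big(B\SpLR;\mathbb{R}\big)=0$), which justifies that the relative cofiber of the model computes $S^2\!\sslash\!\SpLR$ and yields the commuting square \eqref{FiberLemmaForSpParametrizedTwistorSpaceRelative}, identified with the equivariant object via Lemma \ref{HomotopyFibersOfHomotopyQuotientedMorphisms}. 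To repair your route you would need either this cofiber argument or an explicit lifting argument producing compatible stage-wise maps; and the asserted ``direct computation'' of the fixed stage (e.g.\ via $S^2\!\sslash\!\SpLR\simeq B\mathrm{U}(1)$) would also have to be carried out, though that part is routine.
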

\begin{proof}
  {\bf (i)}
  To see that \eqref{MinimalModelOfSpinParametrizedTwistorSpaceModZ2}
  is relative minimal,
  observe that it is
  obtained from the equivariant base dgc-algebra
  \vspace{1mm} 
  $$
    \xymatrix@R=1em{
      \ZTwo/1
      \ar@(ul,ur)|-{\; \ZTwo \,}
      \ar[d]
      &
      \longmapsto
      &
      \mathrm{CE}
      \big(
        \mathfrak{l}
        B \SpLR
      \big)
      \ar[d]_-{ \mathrm{id} }
      \ar@{=}[r]
      &
      \mathbb{R}
      \big[
        \tfrac{1}{4}p_1
      \big]
      \!\big/\!
      \big(
        d\, \tfrac{1}{4}p_1
        \,=\,
        0
      \big)
      \\
      \ZTwo/\ZTwo
      &\longmapsto&
      \mathrm{CE}
      \big(
        \mathfrak{l}
        B \SpLR
      \big)
    }
  $$
  \vspace{-2mm}

  \noindent
  by the same three cell attachments
  as in the construction of the absolute minimal model of Example, \ref{CheckingTwistorSpaceModZ2MinimalModel}
  for the plain equivariant twistor space
  (Prop. \ref{Z2EquivariantMinimalModelOfTwistorSpace}),
  subject only to these replacements:
  $$
    \begin{aligned}
      f_2 \wedge f_2 & \;\longmapsto\; f_2 \wedge f_2 + \tfrac{1}{2}p_1
      \\[-3pt]
      \omega_4 \wedge \omega_4
      & \;\longmapsto\;
      \widetilde \omega_4
      \wedge
      \big(
        \widetilde \omega_4
        -
        \tfrac{1}{2} p_1
      \big)
    \end{aligned}
  $$

\noindent
  in the attaching maps $\phi_3$ \eqref{AttachingMapInDegree3ForZ2EquivariantTwistorSpace}
  and $\phi_7$ \eqref{AttachingMapInDegree7ForZ2EquivariantTwistorSpace}, respectively.

\vspace{1mm} 
  \noindent
  {\bf (ii)}
  By the fundamental theorem (Prop. \ref{FundamentalTheoremOfdgcAlgebraicEquivariantRationalHomotopyTheory}),
  it remains to see that \eqref{MinimalModelOfSpinParametrizedTwistorSpaceModZ2}
  is weakly equivalent to the
  relative equivariant PL de Rham complex
  of equivariant parametrized twistor space:

\vspace{1mm} 
  \noindent
  {\bf (ii.1)}
    First observe that the relative minimal model
    $
      \mathrm{CE}
      \big(
        \mathfrak{l}
        \big(
          t_{\mathbb{H}} \sslash \SpLR
        \big)
      \big)
    $
    for the
    {\it non-}equivariant
    $\SpLR$-parametrized
    twistor fibration $t_{\mathbb{H}}$,
    relative to the minimal model
    of $S^4 \!\sslash\! \SpLR$
    relative to $B \SpLR$, is as follows,
    with generators normalized
    as stated in the claim above:
  \begin{equation}
    \label{RelativeMinimalModelForParametrizedTwistorFibration}
    \hspace{-.8cm}
    \adjustbox{scale=0.83, raise=83pt}{
    \xymatrix@C=-10pt@R=1.8em{
      &
      S^2 \!\sslash\! \SpLR
      \ar[dd]^-{
        \scalebox{.7}{$
          \def\arraystretch{1.2}
          \begin{array}{c}
            \mathrm{hofib}_{\scalebox{.7}{$B \SpLR$}}
            (
              t_{\mathbb{H}}
              \sslash
              \SpLR
            )
            \\
            \simeq
            \;
            \mathrm{hofib}(t_{\mathbb{H}})
            \sslash
            \SpLR
            \\
            \mbox{
              (by Lemma \ref{HomotopyFibersOfHomotopyQuotientedMorphisms})
            }
          \end{array}
        $}
      }
      \ar[dddl]_-{
              \rho_{S^2}
      }
      &&&
      \mathbb{R}
      \big[
        \!
        \scalebox{.76}{$\tfrac{1}{4}$}p_1
        \!
      \big]
      \!\!
      \left[
        \!\!\!
        \def\arraystretch{1}
        {\begin{array}{l}
          h_3,
          \\
          f_2,
        \end{array}}
        \!\!\!
      \right]
      \!\big/\!
      \left(
       {\begin{aligned}
        d\, h_3 & =
          {\phantom{\widetilde \omega_4}}\;\;\;
          - \tfrac{1}{2}p_1 - f_2 \wedge f_2
        \\[-4pt]
        d\, f_2 & = 0
      \end{aligned}}
      \!
      \right)
      \ar@<-56pt>@{<<-}[dd]^-{
        \;
        \scalebox{.7}{$
        \mathrm{cof}_{\scalebox{.7}{$B \SpLR$}}
        \Big(
          \mathrm{CE}
          \big(
            \mathfrak{l}
            (t_{\mathbb{H}} \sslash \SpLR)
          \big)
        \Big)
        $}
      }
      \\
      {\phantom{A}}
      \\
      &
      \mathbb{C}P^3 \!\sslash\! \SpLR
      \ar[dd]^-{
        \mathclap{\phantom{\vert^{\vert}}}
        t_{\mathbb{H}} \sslash \SpLR
        \mathclap{\phantom{\vert_{\vert}}}
      }
      \ar[dl]|-{
        \;
        \mathclap{\phantom{\vert^{\vert}}}
        \rho_{\mathbb{C}P^3}
        \mathclap{\phantom{\vert_{\vert}}}
        \;
      }
      &
      {\phantom{.}}
      &&
      \mathbb{R}
      \big[
        \!
        \scalebox{.76}{$\tfrac{1}{4}$}p_1
        \!
      \big]
      \!\!
      \left[
        \!\!\!
        \def\arraystretch{1}
        {\begin{array}{l}
          h_3,
          \\
          f_2,
          \\
          \omega_7,
          \\
          \widetilde \omega_4
        \end{array}}
        \!\!\!
      \right]
      \!\big/\!
      \left(
       {\begin{aligned}
        d\, h_3 & = \widetilde \omega_4 - \tfrac{1}{2}p_1 - f_2 \wedge f_2
        \\[-4pt]
        d\, f_2 & = 0
        \\[-4pt]
        d\, \omega_7
          & =
          -
          \widetilde \omega_4 \wedge
          \big(
            \widetilde \omega_4
            -
            \tfrac{1}{2}p_1
          \big)
        \\[-4pt]
        d\, \widetilde \omega_4 & = 0
      \end{aligned}}
      \!
      \right)
      \ar@<-42pt>@{<-^{)}}[dd]^-{
        \;
        \scalebox{.7}{$
        \mathrm{CE}
        \big(
          \mathfrak{l}
          (t_{\mathbb{H}} \sslash \SpLR)
        \big)
        $}
        \mathrlap{
          \mbox{
            \tiny
            \def\arraystretch{1}
            {\begin{tabular}{l}
              \color{greenii}
              \bf
              relative minimal model
              \\
              {\color{greenii}
              \bf
              for $t_{\mathbb{H}} \sslash \SpLR$}
              \\
              (by \cite[Thm. 2.14]{FSS20c})
            \end{tabular}}
          }
        }
      }
      \\
      B \SpLR
      &
      {\phantom{AAAAAAAAAAAAAAA}}
      &&
      \mathbb{R}
      \!
      \big[
        \!
        \scalebox{.76}{$\tfrac{1}{4}$}p_1
        \!
      \big]
     % \phantom{A}
      \ar@{_{(}->}[ur]
      \ar@{^{(}->}[dr]
      \\
      {\phantom{AAAAAAAA}}
      &
      S^4 \!\sslash\! \SpLR
      \ar[ul]^-{
              \rho_{S^4}
      }
      & &&
      \mathbb{R}
      \!
      \big[
        \!
        \scalebox{.76}{$\tfrac{1}{4}$}p_1
        \!
      \big]
      \!\!
      \overset{
      }{
      \left[
        \!\!\!
        \def\arraystretch{1}
        {\begin{array}{l}
          \omega_7,
          \\
          \widetilde \omega_4
        \end{array}}
        \!\!\!
      \right]
      }
      \!\big/\!
      \left(
        \!
      {\begin{aligned}
        d\, \omega_7
          & =
          -
          \widetilde \omega_4 \wedge
          \big(
            \widetilde \omega_4
            -
            \tfrac{1}{2}p_1
          \big)
        \\[-4pt]
        d\, \widetilde \omega_4 & = 0
      \end{aligned}}
      \!
      \right)
    }
    }
  \end{equation}

  \noindent
  This is the statement of
  \cite[Thm. 2.14]{FSS20c},
  using the following notational simplifications in the present case:

  \noindent
  {\bf (a)}
    the Euler 8-class
    $\rchi_8$ appearing in \cite[(39)]{FSS20c}
    vanishes here under restriction along $B \SpLR \to B \mathrm{Sp}(2)$;

  \noindent
  {\bf (b)} we have applied to \cite[(49)]{FSS20c} the dgc-algebra isomorphism given by
  \begin{equation}
    \label{IsomorphismForTildeomega4}
       h_3 \;\leftrightarrow \; h_3\,,
      \qquad
       f_2 \; \leftrightarrow \; f_2\,,
      \qquad
       \omega_7 \; \leftrightarrow \; \omega_7\,,
      \qquad
       \omega_4 \; \leftrightarrow \; \widetilde \omega_4 - \tfrac{1}{4}p_1\;.
  \end{equation}

  \vspace{1mm} 
  \noindent
  {\bf (ii.2)}
  This being a non-equivariant relative minimal model,
  it comes with horizontal weak equivalences
  of non-equivariant dgc-algebras
  as shown
  in the bottom square of the following commuting diagram
  (by, e.g., \cite[Thm. 14.12]{FHT00}),
  which induces
  (by the {\it fiber lemma} \cite[\S II]{BousfieldKan72}
  in the form \cite[Prop. 15.5]{FHT00}\cite[Thm. 5.1]{FHT15})
  a weak equivalence on
  plain cofibers (which is forms on $S^2$, by Lemma \ref{HomotopyFibersOfHomotopyQuotientedMorphisms}),
  as shown in the following top square:
  \newpage
  \begin{equation}
    \label{FiberLemmaForSpParametrizedTwistorSpace}
    \hspace{-.5cm} 
    \xymatrix@R=15pt@C=7pt{
      \Omega^\bullet_{\mathrm{PLdR}}
      \big(
        S^2 {\phantom{\sslash \SpLR}}
      \big)
      \ar@{<-}[dd]^-{
       \scalebox{.7}{$   \Omega^\bullet_{\mathrm{PLdR}}
        \big(
          \mathrm{fib}
          \big(
            t_{\mathbb{H}}
            \sslash
            \SpLR
          \big)
        \big)
        $}
      }^-{
      }
      \ar@{<--}[rr]^-{ \in \; \mathrm{W} }
      &&
      {\phantom{
      \big[
        \!
        \scalebox{.76}{$\tfrac{1}{4}$}p_1
        \!
      \big]
      }}
      \mathbb{R}
      \!\!
      \left[
        \!\!\!
        \def\arraystretch{1}
        {\begin{array}{l}
          h_3,
          \\
          f_2,
        \end{array}}
        \!\!\!
      \right]
      \!\big/\!
      \left(
       {\begin{aligned}
        d\, h_3 & =
          {\phantom{\widetilde \omega_4}}\;\;\;
          {\phantom{- \tfrac{1}{2}p_1}}
          - f_2 \wedge f_2
        \\[-4pt]
        d\, f_2 & = 0
      \end{aligned}}
      \!
      \right)
      \ar@<-56pt>@{<<-}[dd]^-{
        \;
        \scalebox{.7}{$
        \mathrm{cof}
        \Big(
          \mathrm{CE}
          \big(
            \mathfrak{l}
            (t_{\mathbb{H}} \sslash \SpLR)
          \big)
        \Big)
        $}
      }
      \\
      \\
      \Omega^\bullet_{\mathrm{PLdR}}
      \big(
        \mathbb{C}P^3 \!\sslash\! \SpLR
      \big)
      \ar@{<-}[dd]^-{
  \scalebox{.7}{$        \Omega^\bullet_{\mathrm{PLdR}}
        \big(
          t_{\mathbb{H}} \sslash \SpLR
        \big)
        $}
      }
      \ar@{<-}[rr]^-{ \in \; \mathrm{W} }
      &&
      \mathbb{R}
      \big[
        \!
        \scalebox{.76}{$\tfrac{1}{4}$}p_1
        \!
      \big]
      \!\!
      \left[
        \!\!\!
        \def\arraystretch{1}
        {\begin{array}{l}
          h_3,
          \\
          f_2,
          \\
          \omega_7,
          \\
          \widetilde \omega_4
        \end{array}}
        \!\!\!
      \right]
      \!\big/\!
      \left(
       {\begin{aligned}
        d\, h_3 & = \widetilde \omega_4 - \tfrac{1}{2}p_1 - f_2 \wedge f_2
        \\[-4pt]
        d\, f_2 & = 0
        \\[-4pt]
        d\, \omega_7
          & =
          -
          \widetilde \omega_4 \wedge
          \big(
            \widetilde \omega_4
            -
            \tfrac{1}{2}p_1
          \big)
        \\[-4pt]
        d\, \widetilde \omega_4 & = 0
      \end{aligned}}
      \!
      \right)
      \ar@<-56pt>@{<-^{)}}[dd]^-{
        \;
        \scalebox{.7}{$
        \mathrm{CE}
        \big(
          \mathfrak{l}
          (t_{\mathbb{H}} \sslash \SpLR)
        \big)
        $}
      }
      \\
      \\
      \Omega^\bullet_{\mathrm{PLdR}}
      \big(
        S^4 \!\sslash\! \SpLR
      \big)
      \ar@{<-}[rr]^-{ \in \; \mathrm{W} }
      &&
      \mathbb{R}
      \!
      \big[
        \!
        \scalebox{.76}{$\tfrac{1}{4}$}p_1
        \!
      \big]
      \!\!
      \overset{
      }{
      \left[
        \!\!\!
        \def\arraystretch{1}
        {\begin{array}{l}
          \omega_7,
          \\
          \widetilde \omega_4
        \end{array}}
        \!\!\!
      \right]
      }
      \!\big/\!
      \left(
        \!
      {\begin{aligned}
        d\, \omega_7
          & =
          -
          \widetilde \omega_4 \wedge
          \big(
            \widetilde \omega_4
            -
            \tfrac{1}{2}p_1
          \big)
        \\[-4pt]
        d\, \widetilde \omega_4 & = 0
      \end{aligned}}
      \!
      \right)
    }
  \end{equation}

  \noindent
  (Here we are using that with $t_{\mathbb{H}}$ also
  $t_{\mathbb{H}}\sslash \SpLR
    \;:=\;
  \frac{t_{\mathbb{H}}\times W \SpLR}{\SpLR} $ is a fibration,
  by the right Quillen functor \eqref{SimplicialBorelConstructionEquivalence}
  in Prop. \ref{InfinityActionsEquivalentToFibrationsOverClassifyingSpace},
  and that all spaces involved are simply-connected, so that
  all the technical assumptions in \cite[(5.1)]{FHT15} are indeed met.)

\vspace{1mm}
  \noindent
  {\bf (ii.3)}
  Then observe that
  \begin{align}
    \label{RealCohomologyOfS2HomotopyQuotientedBySpLR}
    H^\bullet
    \big(
      S^2 \!\sslash\! \SpLR
      ;
      \mathbb{R}
    \big)
   & \simeq
    \mathbb{R}
    \big[
      \omega_2,\, \tfrac{1}{4}p_1
    \big]/
    \big(
      (\omega_2)^2
    \big)
    \nonumber 
\\
&    
    \simeq
    H^\bullet
    \big(
      B \SpLR
      ;
      \,
      \mathbb{R}
    \big)
    \,\otimes_{\scalebox{.5}{$\mathbb{R}$}}\,
    H^\bullet
    \big(
      S^2
      ;
      \,
      \mathbb{R}
    \big).
  \end{align}

\noindent   This follows readily from the Gysin exact sequence
  (e.g. \cite[\S 15.30]{Switzer75})
  \begin{equation}
    \label{GysinSequenceForSpLRActionOnS2}
    \hspace{-.9cm}
\adjustbox{scale=0.9}{$
    \begin{tikzcd}[column sep=small]
 \cdots \rar & H^\bullet
      \big(
        B \SpLR
        ;
        \,
        \mathbb{R}
      \big)
       \ar[r, "{\rho^\ast_{S^2}}" ]
      &  H^\bullet
      \big(
        S^2 \!\sslash\! \SpLR
        ;
        \,
        \mathbb{R}
      \big)
        \ar[r, "{\int_{S^2} }"]
             \ar[draw=none]{d}[name=X, anchor=center]{}
    &  H^{\bullet - 2}
      \big(
        B \SpLR
        ;
        \,
        \mathbb{R}
      \big)
      \ar[rounded corners,
            to path={ -- ([xshift=2ex]\tikztostart.east)
                      |- (X.center) \tikztonodes
                      -| ([xshift=-2ex]\tikztotarget.west)
                      -- (\tikztotarget)}]{dll}[at end]{{ c \cup (-) = 0 }} \\      
   & H^{\bullet + 1}
      \big(
        B \SpLR
        ;
        \,
        \mathbb{R}
      \big) \ar[r] & \cdots 
\end{tikzcd}
$}
    % \xymatrix@C=11pt{
    %   \cdots
    %   \ar[r]
    %   &
    %   H^\bullet
    %   \big(
    %     B \SpLR
    %     ;
    %     \,
    %     \mathbb{R}
    %   \big)
    %   \ar[rr]^{ \rho^\ast_{S^2} }
    %   &&
    %   H^\bullet
    %   \big(
    %     S^2 \!\sslash\! \SpLR
    %     ;
    %     \,
    %     \mathbb{R}
    %   \big)
    %   \ar[rr]^{ \int_{S^2} }
    %   &&
    %   H^{\bullet - 2}
    %   \big(
    %     B \SpLR
    %     ;
    %     \,
    %     \mathbb{R}
    %   \big)
    %   \ar[rr]^-{ c \cup (-) }_-{ = 0 }
    %   &&
    %   H^{\bullet + 1}
    %   \big(
    %     B \SpLR
    %     ;
    %     \,
    %     \mathbb{R}
    %   \big)
    %   \ar[r]
    %   &
    %   \cdots
    % }
  \end{equation}

  \noindent
  for the $S^2$-fiber sequence
  $
         S^2 \xrightarrow{ \mathrm{hofib}(\rho_{S^2}) }
       S^2 \sslash \SpLR
      \xrightarrow{ \rho_{S^2} }
       B \SpLR
      $
    that corresponds to the $\SpLR$-action on $S^2$,
  by Prop. \ref{InfinityActionsEquivalentToFibrationsOverClassifyingSpace};
  and using that
  $
    H^\bullet
    \big(
      B \SpLR;
      \,
      \mathbb{R}
    \big)
    $
    $
    \simeq
       \mathbb{R}\big[ \scalebox{.8}{$\tfrac{1}{4}$}p_1 \big]
  $
  (e.g. \cite[Lemma 4.24]{FSS23-Char})
  is concentrated in degrees divisible by 4
  (so that, in particular, the Euler class
  $c \,\in\, H^3\big( B \SpLR;\, \mathbb{R} \big) \,\simeq\, 0$
  in \eqref{GysinSequenceForSpLRActionOnS2} vanishes).

  \noindent
  But using \eqref{RealCohomologyOfS2HomotopyQuotientedBySpLR}
  in \eqref{FiberLemmaForSpParametrizedTwistorSpace}
  implies that also the induced map on {\it relative} fibers
  \eqref{SquareForBGParametrizedHomotopyFiber}
  over $B \SpLR$ is a weak equivalence:

  \newpage
  
  \begin{equation}
    \label{FiberLemmaForSpParametrizedTwistorSpaceRelative}
    \hspace{-4mm}
    \adjustbox{scale=0.9}{$
    \xymatrix@R=12pt@C=7pt{
      \ZTwo/\ZTwo
      \ar@{<-}[dd]
      &
      %\phantom{--}&
      \Omega^\bullet_{\mathrm{PLdR}}
      \big(
        S^2 \sslash \SpLR
      \big)
      \ar@{<-}[dd]^-{\adjustbox{raise=1.5cm, scale=0.7}{$
        \Omega^\bullet_{\mathrm{PLdR}}
        \big(
          \mathrm{fib}_{B \SpLR}
          \big(
            t_{\mathbb{H}}
            \sslash
            \SpLR
          \big)
        \big)
        $}
      }^-{
        \;\simeq
        \;\;
        \scalebox{0.7}{$
        \Omega^\bullet_{\mathrm{PLdR}}
        \big(
          \mathrm{fib}
          (
            t_{\mathbb{H}}
          )
          \sslash
          \SpLR
        \big)
        $}
      }
      \ar@{<--}[rr]^-{ \in \; \mathrm{W} }
      &&
      \mathbb{R}
      \big[
        \!
        \scalebox{.76}{$\tfrac{1}{4}$}p_1
        \!
      \big]
      \!\!
      \left[
        \!\!\!
        \def\arraystretch{1}
        {\begin{array}{l}
          h_3,
          \\
          f_2,
        \end{array}}
        \!\!\!
      \right]
      \!\big/\!
      \left(
       {\begin{aligned}
        d\, h_3 & =
          {\phantom{\widetilde \omega_4}}\;\;\;
          - \tfrac{1}{2}p_1 - f_2 \wedge f_2
        \\[-4pt]
        d\, f_2 & = 0
      \end{aligned}}
      \!
      \right)
      \ar@<-56pt>@{<<-}[dd]^-{
        \;
        \scalebox{.7}{$
        \mathrm{cof}_{\scalebox{.7}{$B \SpLR$}}
        \Big(
          \mathrm{CE}
          \big(
            \mathfrak{l}
            (t_{\mathbb{H}} \sslash \SpLR)
          \big)
        \Big)
        $}
      }
      \\
      \\
      \ZTwo/1
      \ar@(ld,dr)|-{ \;\ZTwo\; }
      &
      %&
      \Omega^\bullet_{\mathrm{PLdR}}
      \big(
        \mathbb{C}P^3 \!\sslash\! \SpLR
      \big)
      \ar@{<-}[rr]^-{ \in \; \mathrm{W} }
      &&
      \mathbb{R}
      \big[
        \!
        \scalebox{.76}{$\tfrac{1}{4}$}p_1
        \!
      \big]
      \!\!
      \left[
        \!\!\!
        \def\arraystretch{1}
        {\begin{array}{l}
          h_3,
          \\
          f_2,
          \\
          \omega_7,
          \\
          \widetilde \omega_4
        \end{array}}
        \!\!\!
      \right]
      \!\big/\!
      \left(
       {\begin{aligned}
        d\, h_3 & = \widetilde \omega_4 - \tfrac{1}{2}p_1 - f_2 \wedge f_2
        \\[-4pt]
        d\, f_2 & = 0
        \\[-4pt]
        d\, \omega_7
          & =
          -
          \widetilde \omega_4 \wedge
          \big(
            \widetilde \omega_4
            -
            \tfrac{1}{2}p_1
          \big)
        \\[-4pt]
        d\, \widetilde \omega_4 & = 0
      \end{aligned}}
      \!
      \right).
    }
    $}
  \end{equation}

\vspace{1mm}
\noindent
{\bf (ii.4)}
By Lemma \ref{HomotopyFibersOfHomotopyQuotientedMorphisms}
applied to \eqref{GHetEquivariantSpLRParametrizedTwistorSpace},
we see that the left morphism
in \eqref{FiberLemmaForSpParametrizedTwistorSpaceRelative}
is equivalently the inclusion of the fixed-locus
in the $\Grefl$-equivariant $\SpLR$-parametrized twistor space
(Example \ref{GhetEquivariantParametrizedTwistorSpace}).
Thus, by the stage-wise definition of the equivariant
PL de Rham complex (Def. \ref{EquivariantPLdeRhamComplexFunctor}),
it follows that the left morphism in
\eqref{FiberLemmaForSpParametrizedTwistorSpaceRelative}
is the PL de Rham complex of
$\Grefl$-equivariant $\SpLR$-parametrized twistor space
(as indicated by alignment with the $\Grefl$-orbit category on the
far left of \eqref{FiberLemmaForSpParametrizedTwistorSpace}).
Finally this means, by the fundamental theorem
(Prop. \ref{FundamentalTheoremOfdgcAlgebraicEquivariantRationalHomotopyTheory}),
that the commuting square in \eqref{FiberLemmaForSpParametrizedTwistorSpace}
exhibits the claimed equivariant dgc-algebra
\eqref{MinimalModelOfSpinParametrizedTwistorSpaceModZ2InIntroduction}
as indeed modeling the equivariant rational homotopy type of the
$\Grefl$-equivariant $\SpLR$-parametrized twistor space.
(The images on the left of the generators on the right of
\eqref{FiberLemmaForSpParametrizedTwistorSpace}
are indeed all invariant under the $\Grefl \subset \mathrm{Sp}(2)$-action,
by \cite[Lemma 5.5]{BMSS19}).
\end{proof}

%%%%%%%%%%%%%%%%%%%%%%%%%%%%%%%%%%%%%%%%%%%%%%%%%%%%%%%
\subsection{Equivariant non-abelian de Rham theorem}
\label{EquivariantNonAbelianDeRhamTheorem}
%%%%%%%%%%%%%%%%%%%%%%%%%%%%%%%%%%%%%%%%%%%%%%%%%%%%%%

We introduce properly equivariant non-abelian de Rham cohomology
with coefficients in equivariant $L_\infty$-algebras,
in direct generalization of the non-equivariant
discussion in \cite[\S 3.3]{FSS23-Char}.
Our key example here
is the non-abelian cohomology of equivariant twistorial
differential forms (Example \ref{FlatEquivariantTwistorialDifferentialForms} below).
The main result is the proper equivariant non-abelian de Rham
theorem (Prop. \ref{EquivariantNonabelianDeRhamTheorem})
and its twisted version (Prop. \ref{EquivariantTwistedNonabelianDeRhamTheorem}).
The specialization to
traditional Borel-equivariant abelian de Rham cohomology is
the content of Prop. \ref{ReproducingTraditionalBorelEquivariantdeRhamCohomology}
below.

\medskip

\noindent {\bf Flat equivariant $L_\infty$-algebra valued differential forms.}

\vspace{1mm}
\noindent In equivariant generalization of \cite[Def. 3.77]{FSS23-Char}, we set:

\begin{defn}[Flat equivariant $L_\infty$-algebra valued differential forms]
  \label{FlatEquivariantLInfinityAlgebraValuedForms}
  Let
  $
    \underline{\mathfrak{g}}
    \,\in\,
    \EquivariantLInfinityAlgebras
  $
  (Def. \ref{EquivariantLInfinityAlgebras})
  and $G \acts \, X \,\in\, \GActionsOnSmoothManifolds$ (Def. \ref{ProperGActionsOnSmoothManifolds}).
  Then the set of {\it flat equivariant $\underline{\mathfrak{g}}$-valued differential forms} on $X$
  is the hom-set \eqref{HomSets}
  $$
    \Omega_{\mathrm{dR}}
    \big(\!
      \orbisingular
      \big(
        X \!\sslash\! G
      \big)
      ;
      \,
      \underline{\mathfrak{g}}
    \, \big)_{\mathrm{flat}}
    \;\;
    :=
    \;\;
    \EquivariantdgcAlgebras
    \Big(
      \mathrm{CE}
      \big(
        \underline{\mathfrak{g}}
      \big)
      \,,\,
      \Omega^\bullet_{\mathrm{dR}}
      \big(
        \orbisingular
        \big(
          X \!\sslash\! G
        \big)
      \big)
    \! \Big)
  $$

  \noindent
  of equivariant dgc-algebras
  (Def. \ref{EquivariantdgcAlgebras})
  from the equivariant Chevalley-Eilenberg algebra
  \eqref{DefiningEmbeddingOfEquivariantLInfinityAlgebrasInEquivariantdgcAlgebras}
  of $\underline{\mathfrak{g}}$
  to the equivariant smooth de Rham complex
  (Def. \ref{EquivariantSmoothdeRhamComplex})
  of $X$.
\end{defn}

In equivariant generalization of \cite[Def. 3.92]{FSS23-Char}, we set:

\begin{defn}[Flat twisted equivariant $L_\infty$-algebra valued differential forms on $G$-orbifold]
 \label{FlatTwistedEquivariantLInfinityAlgebraValuedForms}
Consider an {\it equivariant $L_\infty$-algebraic local coefficient
bundle} in the form of a fibration of equivariant $L_\infty$-algebras
(Def. \ref{EquivariantLInfinityAlgebras})
whose equivariant Chevalley-Eilenberg algebras \eqref{DefiningEmbeddingOfEquivariantLInfinityAlgebrasInEquivariantdgcAlgebras},
are relative minimal (Def. \ref{MinimalEquivariantdgcAlgebras})
\begin{equation}
  \label{EquivariantLInfinityAlgebraicCoefficientBundle}
  \raisebox{18pt}{
  \xymatrix@R=12pt@C=4em{
    \underline{\mathfrak{g}}
    \ar[rr]^-{
      \mathrm{fib}
      (\, \underline{\mathfrak{p}} \,)
    }
    &
    \ar@{}[d]|-{
    \mathclap{
      \mbox{
        \tiny
        \color{darkblue}
        \bf
        \def\arraystretch{1}
        \begin{tabular}{c}
          equivariant
          $L_\infty$-algebraic
          \\
          local coefficient bundle
        \end{tabular}
      }
    }
    }
    &
    \underline{\widehat {\mathfrak{b}}}
    \ar@{->>}[d]^-{
      \underline{\mathfrak{p}}
    }
    \\
    &&
    \underline{\mathfrak{b}}
  }
  }
  \phantom{AAAAA}
  \in
  \;
  \EquivariantLInfinityAlgebras
  \,.
\end{equation}

\noindent
Then, for $G \acts \; X \,\in\, \GActionsOnSmoothManifolds$
(Def. \ref{ProperGActionsOnSmoothManifolds})
equipped with an \emph{equivariant non-abelian de Rham twist}
\begin{equation}
  \label{TwistForFlatTwistedEquivariantNonabelianForms}
  \tau_{\mathrm{dR}}
  \;\in\;
  \Omega_{\mathrm{dR}}
  \big(\!
    \orbisingular
    \big(
      X \!\sslash\! G
    \big)
    ;
    \,
    \underline{\mathfrak{b}}
  \big)_{\mathrm{flat}}
\end{equation}

\noindent
given by a flat equivariant $\underline{\mathfrak{b}}$-valued differential form
(Def. \ref{FlatEquivariantLInfinityAlgebraValuedForms}) on $X$,
the set of
{\it flat $\tau_{\mathrm{dR}}$-twisted equivariant
$\underline{\mathfrak{g}}$-valued}
differential forms on $X$ is the hom-set \eqref{HomSets} in the
co-slice category of
$\EquivariantdgcAlgebras$ (Def. \ref{EquivariantdgcAlgebras})
under $\mathrm{CE}(\underline{\mathfrak{g}})$ from
$\mathrm{CE}(\underline{\mathfrak{p}})$ to $\tau_{\mathrm{dR}}$:
\vspace{-1mm} 
\begin{equation}
  \label{SetOfFlatTwistedEquivariantLIninfityAlgebraValuedDifferentialForms}
  \hspace{-3mm}
  \adjustbox{scale=0.9}{$
  \begin{aligned}
  \Omega^{\tau_{\mathrm{dR}}}_{\mathrm{dR}}
  \big(\!
    \orbisingular
    \big(
      X \!\sslash\! G
    \big)
    ,
    \underline{\mathfrak{g}}
  \, \big)_{\mathrm{flat}}
  &
    :=
  \big(
    \EquivariantdgcAlgebras
  \big)^{\mathrm{CE}(\underline{\mathfrak{b}})/}
  \big(
    \mathrm{CE}
    \big(
      \underline{\mathfrak{p}}
    \,\big)
    ,
    \tau_{\mathrm{dR}}
  \big)
  \\[-2pt]
  &
  \; =
  \;\;
  \left\{\!\!\!\!
  \raisebox{14pt}{
  \xymatrix@C=28pt@R=.3em{
    \Omega^\bullet_{\mathrm{dR}}
    \big(
      \orbisingular
      \big(
        X \!\sslash\! G
      \big) \!
    \big)
    \ar@{<-}[dr]_-{
      \mathllap{
        \mbox{
          \tiny
          \color{greenii}
          \bf
          twist
        }
      }
      \tau_{\mathrm{dR}}
    }
    \ar@{<--}[rr]^-{
      \mbox{
        \tiny
        \color{greenii}
        \bf
        \def\arraystretch{1}
        \begin{tabular}{c}
      flat twisted equivariant
          \\
          $\mathfrak{g}$-valued differential form
        \end{tabular}
      }
    }
    &&
    \mathrm{CE}
    \big(\,
      \underline{\widehat{\mathfrak{b}}}
    \,\big)
    \ar@{<-^{)}}[dl]^-{
       \underset{
        \mbox{
          \tiny
          \color{greenii}
          \bf
          \hspace{-8pt}
          \def\arraystretch{1}
          \begin{tabular}{c}
            local 
            \\
            coefficients
          \end{tabular}
        }       
       }{
       \mathrm{CE}(\underline{\mathfrak{p}})
       }
    }
    \\
    &
   \mathrm{CE}
    \big(
      \underline{\mathfrak{b}}
    \big)^{\phantom{A}}
  }
  }
\!\!\! \right\}.
  \end{aligned}
  $}
\end{equation}

\end{defn}

\smallskip

\noindent {\bf Equivariant non-abelian de Rham cohomology.}

\vspace{-1mm} 
\begin{notation}[Cylinder orbifold]
  \label{CylinderOrbifold}
For $G \acts \; X \;\in\; \GActionsOnSmoothManifolds$
(Def. \ref{ProperGActionsOnSmoothManifolds}),
let the product manifold $X \times \mathbb{R}$
be equipped with the $G$-action given by

  \vspace{-6mm}
   \begin{align*}
      G \times ( X \times \mathbb{R})
&\longrightarrow
      X \times \mathbb{R}
      \\[-3pt]
      (
        g, (x, \;\;t)
      )
& \longmapsto
      (
        g \cdot x
        \,,\,
        t
      )\;.
\end{align*}
  \vspace{-.3cm}

\noindent We say that the resulting $G$-orbifold (Def. \ref{GOrbifolds})
$
  \orbisingular
  \big(
    (X \times \mathbb{R})
    \sslash G
  \big)
   \in
   \GOrbifolds
$
is the {\it cylinder orbifold} of
$\orbisingular \big( X \!\sslash\! G\big)$, and we write

  \vspace{-.3cm}
  \begin{equation}
    \label{CanonicalInclusionIntoCylinderOrbifold}
    \xymatrix@C=7pt@R=-3pt{
      \orbisingular
      \big(
        (
          X \times \{0\}
        )
        \!\sslash\!
        G
      \big)
      \ar@{^{(}->}[rr]^-{ i_0 }
      &&
      \orbisingular
      \big(
        (
          X \times \mathbb{R}
        )
        \!\sslash\!
        G
      \big)
      \ar@{<-^{)}}[rr]^-{ i_1 }
      &&
      \orbisingular
      \big(
        (
          X \times \{1\}
        )
        \!\sslash\!
        G
      \big)
      \\
      \rotatebox[origin=c]{-90}{$\simeq$}
      &&
      &&
      \rotatebox[origin=c]{-90}{$\simeq$}
      \\
      \orbisingular
      \big(
        X
        \!\sslash\!
        G
      \big)
      &&
      &&
      \orbisingular
      \big(
        X
        \!\sslash\!
        G
      \big)
    }
  \end{equation}

\noindent
for the canonical inclusion maps and
\vspace{-2mm} 
\begin{equation}
  \label{CylinderOrbifoldProjectionMap}
  \xymatrix{
    \orbisingular
    \big(
      (X \times \mathbb{R})
      \!\sslash\!
      G
    \big)
    \ar[r]^-{ p_X }
    &
    \orbisingular
    \big(
      X
      \!\sslash\!
      G
    \big)
  }
\end{equation}

\vspace{-1mm} 
\noindent
for the canonical projection map.

\end{notation}

\noindent In equivariant generalization of \cite[Def. 3.83]{FSS23-Char}, we set:

\begin{defn}[Coboundaries between flat equivariant $L_\infty$-algebra valued differential forms]
  \label{CoboundariesBetweenFlatEquivariantLInfinityAlgebraValuedForms}
 Let
  $
    \underline{\mathfrak{g}}
    \,\in\, \EquivariantLInfinityAlgebras$
  (Def. \ref{DefiningEmbeddingOfEquivariantLInfinityAlgebrasInEquivariantdgcAlgebras}) and
  $G \acts \; X \,\in\, \GActionsOnSmoothManifolds$ (Def. \ref{ProperGActionsOnSmoothManifolds}).

 \noindent {\bf (i)}  Then, given flat differential forms
  $
    A_0, A_1
    \;\in\;
    \Omega_{\mathrm{dR}}
    \big(\!
      \orbisingular
      (
        X \!\sslash\! G
      )
      ;
      \,
      \underline{\mathfrak{g}}
    \, \big)_{\mathrm{flat}}
  $
  (Def. \ref{FlatEquivariantLInfinityAlgebraValuedForms}),
  a {\it coboundary} between them
  $
    \xymatrix{
      A_0
      \ar@{=>}[r]^{ \widetilde A }
      &
      A_1
    }
  $
  is a flat equivariant $\underline{\mathfrak{g}}$-valued differential form
  (Def. \ref{FlatEquivariantLInfinityAlgebraValuedForms})
  on the cylinder orbifold (Notation \ref{CylinderOrbifold})
  \begin{equation}
    \label{FlatFormOnCylinderOrbifold}
    \widetilde A
    \;\;
    \in
    \;
    \Omega_{\mathrm{dR}}
    \Big( \!
      \overset{
        \mathclap{
        \raisebox{3pt}{
          \tiny
          \color{darkblue}
          \bf
          cylinder orbifold
        }
        }
      }{
      \orbisingular
      \big(
        (
          X \times \mathbb{R}
        )
        \!\sslash\! G
      \big)
      }
      ;
      \,
      \underline{\mathfrak{g}}
    \Big)_{\mathrm{flat}}
  \end{equation}

  \noindent
  such that this restricts to the given pair of forms
  \begin{equation}
    \label{RestrictionOfCoboundary}
    i_0^\ast
    \big(
      \,\widetilde A \,
    \big)
    =
    A_0
    \phantom{AA}
    \mbox{and}
    \phantom{AA}
    i_1^\ast
    \big(\,
      \widetilde A \,
    \big)
    =
    A_1
  \end{equation}

  \noindent
  along the canonical inclusions \eqref{CanonicalInclusionIntoCylinderOrbifold}.

  \noindent
 {\bf (ii)} We denote the relation given by existence of a coboundary
  by
  $\;A_1 \;\sim\; A_2$.
\end{defn}

\begin{lemma}[Equivalence of equivariant smooth and PL de Rham complex of smooth orbifold]
  \label{EquivalenceOfEquivariantSmoothAndPLdeRhamComplexOfSmoothOrbifold}
  Let
  $ G \acts \; X \,\in\, \GActionsOnSmoothManifolds$
  (Def. \ref{ProperGActionsOnSmoothManifolds}).
  Then the
  corresponding equivariant PL de Rham complex
  (Def. \ref{EquivariantPLdeRhamComplexFunctor})
  is isomorphic to the
  equivariant smooth de Rham complex (Example \ref{EquivariantSmoothdeRhamComplex})
  in the homotopy category of equivariant dgc-algebras
  (Prop. \ref{ProjectiveModelStructureOnConnectiveEquivariantdgcAlgebras}):
  \begin{equation}
    \label{IsomorphismBetweenEquivariantPLAndSmoothdeRhamComplexes}
    \hspace{-3mm} 
    \Omega^\bullet_{\mathrm{dR}}
    \big(\!
      \orbisingular
      (
        X \!\sslash\! G
      )
    \big)
    \simeq
    \Omega^\bullet_{\mathrm{PLdR}}
    \big(
      \orbisingular
      (
        X \!\sslash\! G
      )
    \big)
    \;
    \in
    \;
    \mathrm{Ho}
    \Big(\!\!
      \EquivariantdgcAlgebrasProj
    \Big).
  \end{equation}

\end{lemma}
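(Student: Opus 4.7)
\smallskip

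\textbf{Plan of proof.} The strategy is to lift the classical comparison between smooth and PL de Rham complexes of a manifold (due to Sullivan, see \cite[\S 2]{Sullivan77} and \cite[\S 2]{GriffithMorgan13}) to a statement natural in the orbit category. Since both equivariant complexes are defined stage-wise over $G\mathrm{Orbits}$ (Example~\ref{EquivariantSmoothdeRhamComplex} and Def.~\ref{EquivariantPLdeRhamComplexFunctor}), and since weak equivalences in $\EquivariantdgcAlgebrasProj$ are detected stage-wise (Prop.~\ref{ProjectiveModelStructureOnConnectiveEquivariantdgcAlgebras}), it suffices to produce a zig-zag of quasi-isomorphisms between the two complexes which is natural in $G/H \in G\mathrm{Orbits}$.

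\smallskip

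\textbf{Stage-wise zig-zag.} First I would recall the Sullivan zig-zag: for any smooth manifold $M$ one has the sub-dgc-algebra $\Omega^\bullet_{\mathrm{sm\text{-}PL}}(M) \subset \Omega^\bullet_{\mathrm{PLdR}}(\mathrm{Sing}^\infty M)$ of smooth PL forms on smooth simplices (those which arise from smooth singular simplices $\Delta^n_{\mathrm{sm}} \to M$ and restrict to polynomial forms on each smooth simplex), fitting into a diagram
\begin{equation}
\label{StagewiseSmoothPLZigZag}
\xymatrix@C=3em{
\Omega^\bullet_{\mathrm{dR}}(M)
\ar[r]^-{\int_\Delta}_-{\in \mathrm{W}}
&
\Omega^\bullet_{\mathrm{sm\text{-}PL}}(M)
&
\Omega^\bullet_{\mathrm{PLdR}}(M)
\ar[l]_-{\rho^\ast}^-{\in \mathrm{W}}
}
\end{equation}
in which both maps are quasi-isomorphisms by Sullivan's theorem (for the right map this is \cite[\S 7.1]{GriffithMorgan13}; for the left it is the combination of the piecewise de Rham theorem with the classical smooth de Rham theorem, as in \cite[\S 2.4]{Sullivan77}). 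Applying this construction pointwise to the system of fixed submanifolds $X^H$ from Lemma~\ref{FixedLociOfProperSmoothActionsAreSmoothManifolds} yields the required objects over each $G/H$.

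\smallskip

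\textbf{Naturality over $G\mathrm{Orbits}$.} The key step is to verify that the two comparison maps in \eqref{StagewiseSmoothPLZigZag} are natural under \emph{smooth} maps of manifolds, and then apply this naturality to the two types of morphisms generating the orbit category (Example~\ref{HomSetsInOrbitCategoryViaWeylGroups}): the smooth inclusions of deeper fixed loci $X^K \hookrightarrow X^H$ for $H \subset K$, and the residual smooth Weyl-group actions $\WeylGroup(H) \acts X^H$ \eqref{SystemOfFixedLoci}. Both integration-of-PL-forms and pullback-of-PL-forms along smooth singular simplices commute with smooth maps in the argument, so \eqref{StagewiseSmoothPLZigZag} upgrades to a commuting diagram in $\EquivariantdgcAlgebras$ (Def.~\ref{EquivariantdgcAlgebras}) of the form
\begin{equation}
\label{EquivariantSmoothPLZigZag}
\xymatrix@C=3em{
\Omega^\bullet_{\mathrm{dR}}\bigl(\orbisingular(X\!\sslash\!G)\bigr)
\ar[r]^-{\in \mathrm{W}}
&
\Omega^\bullet_{\mathrm{sm\text{-}PL}}\bigl(\orbisingular(X\!\sslash\!G)\bigr)
&
\Omega^\bullet_{\mathrm{PLdR}}\bigl(\orbisingular(X\!\sslash\!G)\bigr)
\ar[l]_-{\in \mathrm{W}}
}
\end{equation}
in which both arrows are stage-wise quasi-isomorphisms, hence weak equivalences in $\EquivariantdgcAlgebrasProj$, establishing the isomorphism \eqref{IsomorphismBetweenEquivariantPLAndSmoothdeRhamComplexes} in the homotopy category.

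\smallskip

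\textbf{Main obstacle.} The conceptually routine but technically delicate step is ensuring that the Sullivan zig-zag is functorial in smooth maps (rather than just giving abstract quasi-isomorphisms stage-wise); this is needed to realize \eqref{EquivariantSmoothPLZigZag} as an actual morphism in $\EquivariantdgcAlgebras$ rather than just a family of cochain maps indexed by objects of $G\mathrm{Orbits}$. The cleanest way to secure this is to present $\Omega^\bullet_{\mathrm{sm\text{-}PL}}(-)$ as an end over the smooth simplex category $\Delta_{\mathrm{sm}}$, which makes both comparison maps manifestly natural in smooth maps and then also natural over $G\mathrm{Orbits}$ by post-composition with the system-of-fixed-loci functor \eqref{SystemOfFixedLoci}. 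Given this, strict functoriality of \eqref{EquivariantSmoothPLZigZag} is automatic, and no further compatibility (e.g., with respect to the Weyl group actions on the coefficient side) needs to be checked by hand.
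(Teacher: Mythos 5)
Your overall strategy is sound and genuinely different from the paper's. The paper reduces to the non-equivariant comparison of \cite[Lem. 3.90]{FSS20d} (which rests on \cite[Cor. 9.9]{GriffithMorgan13}), whose proof proceeds by natural constructions applied to a chosen smooth triangulation, and then makes that choice compatibly across all fixed loci by invoking Illman's equivariant triangulation theorem \cite{Illman78}\cite{Illman83}; since weak equivalences in $\EquivariantdgcAlgebrasProj$ are detected stage-wise (Prop. \ref{ProjectiveModelStructureOnConnectiveEquivariantdgcAlgebras}), this finishes the argument. You instead avoid triangulations altogether by exhibiting a zig-zag that is strictly functorial in smooth maps, so that naturality over $G\mathrm{Orbits}$ (smooth inclusions of fixed loci and smooth Weyl-group actions, via Lemma \ref{FixedLociOfProperSmoothActionsAreSmoothManifolds}) is automatic and Illman's theorem is never needed. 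That is a legitimate, arguably cleaner route; the price is that you must build the comparison through smooth singular simplices yourself rather than quoting the triangulation-based statement.

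However, the middle term of your displayed stage-wise zig-zag is misstated, and as written the left-hand map does not exist. A sub-dgc-algebra of $\Omega^\bullet_{\mathrm{PLdR}}$ consists of forms that are \emph{polynomial} on each simplex, and the pullback of a global smooth form along a smooth singular simplex is in general not polynomial; nor is there a natural dgc-algebra map ``$\int_\Delta$'' from smooth forms into PL forms (fiberwise integration lands in cochains and is not multiplicative). The correct intermediate object is the \emph{larger} complex of piecewise \emph{smooth} forms on the smooth singular simplicial set $\mathrm{Sing}^{\mathrm{sm}}(X^H)$: global smooth forms map into it by pullback along each smooth simplex, and PL forms map into it by restricting from all continuous to smooth singular simplices and then including polynomial forms into smooth ones. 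All of these maps are quasi-isomorphisms (the smooth de Rham theorem computed via smooth simplices; $\mathrm{Sing}^{\mathrm{sm}} \hookrightarrow \mathrm{Sing}$ a weak homotopy equivalence; Sullivan's polynomial-versus-smooth comparison \cite{Sullivan77}\cite{GriffithMorgan13}) and are natural in smooth maps, so they do assemble into morphisms in $\EquivariantdgcAlgebras$ as you intend. With this correction your argument goes through; without it, the displayed zig-zag is not a diagram of dgc-algebra maps.
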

\begin{proof}
  Observe that the analogous non-equivariant statement
  holds by
  \cite[Lem. 3.90]{FSS23-Char}, using \cite[Cor. 9.9]{GriffithMorgan13},
  and that its proof proceeds by analyzing natural constructions
  applied
  to a choice of smooth triangulation of the given smooth manifold $X$.

  Now, for a smooth manifold equipped with a smooth
  $G$-action $G \acts \; X$, we may
  choose a $G$-equivariant smooth triangulation,
  by the equivariant triangulation theorem
  \cite{Illman78}\cite{Illman83}.
  Given this, the remainder of the non-equivariant proof
  applies stage-wise over the orbit category. Since
  the weak equivalences of equivariant dgc-algebras
  are the stage-wise weak equivalences of non-equivariant
  dgc-algebras (Prop. \ref{ProjectiveModelStructureOnConnectiveEquivariantdgcAlgebras}),
  the claim follows.
\end{proof}

In an equivariant generalization of \cite[Def. 3.84]{FSS23-Char}, we set:

\begin{defn}[Equivariant non-abelian de Rham cohomology]
  \label{EquivariantNonabelianDeRhamCohomology}
  Let
  $ G \acts \; X \,\in\, \GActionsOnSmoothManifolds$
  (Def. \ref{ProperGActionsOnSmoothManifolds})
  and
  $
    \underline{\mathfrak{g}}
    \;\in\;
    \EquivariantLInfinityAlgebras
  $
  (Def. \ref{EquivariantLInfinityAlgebras}).
  The {\it equivariant non-abelian de Rham cohomology}
  of $G \acts \; X$ with coefficients in
  $ \underline{\mathfrak{g}} $
  is the quotient of the set of flat equivariant
  differential forms (Def. \ref{FlatEquivariantLInfinityAlgebraValuedForms})
  by the coboundary relation (Def. \ref{CoboundariesBetweenFlatEquivariantLInfinityAlgebraValuedForms}):
  $$
    H_{\mathrm{dR}}
    \big(\!
      \orbisingular
      \big(
        X \!\sslash\! G
      \big)
      ;
      \,
      \mathfrak{\mathfrak{g}}
     \big)
    \;\;
      :=
    \;\;
    \Big(
    \Omega_{\mathrm{dR}}
    \big(
      \orbisingular
      (
        X \!\sslash\! G
      )
      ;
      \,
      \mathfrak{\mathfrak{g}}
    \big)_{\mathrm{flat}}
    \Big)_{\!\!\big/_{\sim}}.
  $$
\end{defn}

In equivariant generalization of \cite[Thm. 3.87]{FSS23-Char}, we have:

\begin{prop}[Equivariant non-abelian de Rham theorem]
  \label{EquivariantNonabelianDeRhamTheorem}
  Let $\mathscr{A} \,\in\, \EquivariantSimplyConnectedRFiniteHomotopyTypes$
  (Def. \ref{SubcategoryOfEquivariantSimplyConnectedRFiniteHomotopyTypes})
  and $ G \acts \; X \,\in\, \GActionsOnSmoothManifolds$
  (Def. \ref{ProperGActionsOnSmoothManifolds}),
  such that its equivariant shape (Def. \ref{EquivariantShape})
  is also equivariantly simply-connected and of $\mathbb{R}$-finite type:
  $
    \raisebox{1pt}{\rm\textesh}
    \,\orbisingular\,
    \big(
      X \!\sslash\! G
    \big)
    \;\in\;
    \EquivariantSimplyConnectedRFiniteHomotopyTypes
    \,.
  $
    Then, at least if $G$ has order 4 or is cyclic of prime order
  (Remark \ref{EquivariantSmoothDifferentialFormsAreInjective}),
  there is an equivalence between:

  \noindent
  {\bf (a)}  real equivariant non-abelian cohomology
  (Def. \ref{EquivariantNonAbelianCohomology})
  with coefficients in the equivariant rationalization
  $L_{\mathbb{R}}\mathscr{A}$
  (Def. \ref{EquivariantRationalization})
  and

  \noindent {\bf (b)}
  equivariant non-abelian de Rham cohomology
  (Def. \ref{EquivariantNonabelianDeRhamCohomology})
  of the $G$-orbifold $\orbisingular \big( X \!\sslash\! G\big)$
  (Def. \ref{GOrbifolds})
  with coefficients in the equivariant
  Whitehead $L_\infty$-algebra
  $\mathfrak{l} \mathscr{A}$ (Def. \ref{EquivariantWhiteheadLInfinityAlgebra}):
  \begin{equation}
    \label{deRhamTheoremEquivalence}
    \xymatrix{
      \overset{
        \mathclap{
        \raisebox{3pt}{
          \tiny
          \color{darkblue}
          \bf
          \def\arraystretch{1}
          \begin{tabular}{c}
            equivariant non-abelian
            \\
            real cohomology
          \end{tabular}
        }
        }
      }{
      H
      \Big(
        \raisebox{1pt}{\rm\textesh}
        \,
        \orbisingular
        \big(
          X \!\sslash\! G
        \big)
        \,;\,
        L_{\mathbb{R}}
        \mathscr{A}
      \Big)
      }
      \;\;
        \simeq
      \;\;
      \overset{
        \mathclap{
        \raisebox{3pt}{
          \tiny
          \color{darkblue}
          \bf
          \def\arraystretch{1}
          \begin{tabular}{c}
            equivariant non-abelian
            \\
            de Rham cohomology
          \end{tabular}
        }
        }
      }{
      H_{\mathrm{dR}}
      \Big(
        \!
        \orbisingular
        \big(
          X \!\sslash\! G
        \big)
        \,;\,
        \mathfrak{l}\mathscr{A}
      \Big)
      }
      \,.
    }
  \end{equation}

\end{prop}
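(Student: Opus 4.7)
The plan is to follow the strategy of the non-equivariant de Rham theorem \cite[Thm.~3.87]{FSS20d}, now assembled from the equivariant building blocks developed in \cref{EquivariantDgcAlgebrasAndEquivariantLInfinityAlgebras}--\cref{EquivariantRationalHomotopyTheory}.

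\textbf{Step 1: Translate the left-hand side via the equivariant fundamental theorem.} First I would combine the definition of equivariant non-abelian cohomology (Def.~\ref{EquivariantNonAbelianCohomology}) with the fundamental theorem of equivariant dgc-algebraic rational homotopy theory (Prop.~\ref{FundamentalTheoremOfdgcAlgebraicEquivariantRationalHomotopyTheory}), using the identification \eqref{DerivedPLdRUnitEquivalentToRationalization} of the derived PL-de-Rham unit with equivariant rationalization, to obtain
$$
H\Big(\raisebox{1pt}{\textesh}\,\orbisingular\big(X\sslash G\big);\, L_{\mathbb{R}}\mathscr{A}\Big)
\;\simeq\;
\mathrm{Ho}\big(\EquivariantdgcAlgebrasProj\big)
\Big(\mathrm{CE}(\mathfrak{l}\mathscr{A}),\,\Omega^\bullet_{\mathrm{PLdR}}\!\big(\orbisingular(X\sslash G)\big)\Big).
$$
Lemma~\ref{EquivalenceOfEquivariantSmoothAndPLdeRhamComplexOfSmoothOrbifold} then replaces the PL-de-Rham complex on the right by the smooth one $\Omega^\bullet_{\mathrm{dR}}\big(\orbisingular(X\sslash G)\big)$ without changing the homotopy class.

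\textbf{Step 2: Compute the hom-set via explicit (co)fibrant replacement.} By Lemma~\ref{MinimalEquivariantdgcAlgebrasAreCofibrant}, the minimal Sullivan model $\mathrm{CE}(\mathfrak{l}\mathscr{A})$ is projectively cofibrant in $\EquivariantdgcAlgebrasProj$, and, under the order hypothesis on $G$, Prop.~\ref{EquivariantSmoothDeRhamComplexIsProjectivelyFibrant} shows that $\Omega^\bullet_{\mathrm{dR}}\big(\orbisingular(X\sslash G)\big)$ is projectively fibrant. Hence the hom-set in the homotopy category is computed strictly, as equivariant dgc-algebra morphisms modulo left homotopy; and by Def.~\ref{FlatEquivariantLInfinityAlgebraValuedForms} the underlying set of such morphisms is exactly $\Omega_{\mathrm{dR}}\big(\orbisingular(X\sslash G);\,\mathfrak{l}\mathscr{A}\big)_{\mathrm{flat}}$.

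\textbf{Step 3: Identify left homotopy with the coboundary relation.} The equivariant smooth de Rham complex of the cylinder orbifold (Notation~\ref{CylinderOrbifold}) carries two restriction maps $i_0^\ast,\, i_1^\ast$ and a projection-pullback $p_X^\ast$. Stage-wise over the orbit category, since the $G$-action on the cylinder direction is trivial, the $H$-fixed locus of $X\times\mathbb{R}$ is the cylinder $X^H\times\mathbb{R}$, so that the non-equivariant smooth Poincaré lemma makes $p_X^\ast$ a quasi-isomorphism on every stage. This exhibits $\Omega^\bullet_{\mathrm{dR}}\big(\orbisingular((X\times\mathbb{R})\sslash G)\big)$ as a (good) cylinder object for $\Omega^\bullet_{\mathrm{dR}}\big(\orbisingular(X\sslash G)\big)$ in $\EquivariantdgcAlgebrasProjOp$. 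Unwinding the definition of left homotopy between morphisms into this cylinder object recovers the coboundary relation of Def.~\ref{CoboundariesBetweenFlatEquivariantLInfinityAlgebraValuedForms} verbatim via \eqref{FlatFormOnCylinderOrbifold}--\eqref{RestrictionOfCoboundary}, yielding \eqref{deRhamTheoremEquivalence}.

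\textbf{Main obstacle.} The principal technical point is the cylinder-object verification in Step~3. While the stage-wise weak equivalence is immediate, one also needs to confirm the cofibration/fibration structure -- in particular, that the relevant kernel equivariant dual vector spaces appearing in
$$
\Omega^\bullet_{\mathrm{dR}}\big(\orbisingular((X\times\mathbb{R})\sslash G)\big)\;\twoheadrightarrow\;
\Omega^\bullet_{\mathrm{dR}}\big(\orbisingular(X\sslash G)\big)\oplus\Omega^\bullet_{\mathrm{dR}}\big(\orbisingular(X\sslash G)\big)
$$
are degreewise injective in $\EquivariantDualVectorSpaces$. This reduces to the injectivity statements of Lemmas~\ref{ZpEquivariantSmoothDifferentialFormsFormInjectiveDualVectorSpace}--\ref{Z2TimesZ2EquivariantDifferentialFormsAreInjective} applied to the cylinder $X\times\mathbb{R}$, which is precisely why the order restriction on $G$ is carried into the hypothesis. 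For the twisted version (Prop.~\ref{EquivariantTwistedNonabelianDeRhamTheorem}), the same three-step argument transports to the slice model structures over $\mathrm{CE}(\underline{\mathfrak{b}})$, using that relative minimal models \eqref{RelativeMinimalEquivariantdgcAlgebra} of the coefficient bundle \eqref{EquivariantLInfinityAlgebraicCoefficientBundle} are cofibrations (Lemma~\ref{MinimalEquivariantdgcAlgebrasAreCofibrant}) so that hom-sets under $\mathrm{CE}(\underline{\mathfrak{b}})$ in the homotopy category are again computed strictly.
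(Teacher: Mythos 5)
Your proposal is correct and follows essentially the same route as the paper's proof: equivariant fundamental theorem to translate the left-hand side into a hom-set of equivariant dgc-algebras, replace PL-de-Rham by smooth forms, invoke cofibrancy of the minimal model and fibrancy of the smooth de Rham complex (under the order restriction on $G$) to compute the hom-set strictly up to homotopy, and finally identify the homotopy relation with the coboundary relation via the cylinder orbifold's de Rham complex. The one cosmetic difference is terminological: the paper works directly in $\EquivariantdgcAlgebrasProj$ and uses a \emph{path} object with \emph{right} homotopies, while you phrase the same data as a cylinder object in $\EquivariantdgcAlgebrasProjOp$ and speak of left homotopies; since the domain is cofibrant and codomain fibrant these are the same equivalence relation, so the argument is unaffected, but for consistency with Prop.~\ref{ProjectiveModelStructureOnConnectiveEquivariantdgcAlgebras} it is cleaner to keep the path-object formulation in $\EquivariantdgcAlgebrasProj$.
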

\begin{proof}
  Consider the following sequence of bijections:
  $$
    \begin{array}{l}
      H
      \big(
        \!
        \orbisingular
        (
          X \!\sslash\! G
        )
        \,;\,
        L_{\mathbb{R}}
        \mathscr{A}
      \big)
      \\
      :=
      \;
      \EquivariantHomotopyTypes
      \big(
        \!
        \orbisingular
        (
          X \!\sslash\! G
        )
        \,,\,
        L_{\mathbb{R}}
        \mathscr{A}
      \big)
      \\
      \simeq
      \,
      \mathrm{Ho}
      \Big(
        \EquivariantdgcAlgebrasProj
      \Big)
      \Big(
        \Omega^\bullet_{\mathrm{PLdR}}
        (
          \mathscr{A}
        )
        \,,\,
        \Omega^\bullet_{\mathrm{PLdR}}
        \big(
          \orbisingular
          (
            X \!\sslash\! G
          )
        \big)
      \!\Big)
      \\
      \simeq
      \mathrm{Ho}
      \Big(
        \EquivariantdgcAlgebrasProj
      \Big)
      \Big(
        \mathrm{CE}
        \big(
          \mathfrak{l}\mathscr{A}
        \big)
        \,,\,
        \Omega^\bullet_{\mathrm{dR}}
        \big(
          \orbisingular
          (
            X \!\sslash\! G
          )
       \big)
      \!\Big)
      \\
      \simeq
      \EquivariantdgcAlgebrasProj
      \Big(
        \mathrm{CE}
        \big(
          \mathfrak{l}\mathscr{A}
        \big)
        \,,\,
        \Omega^\bullet_{\mathrm{dR}}
        \big(
          \orbisingular
          (
            X \!\sslash\! G
          )
        \big)
      \!\Big)_{\!\!\!\big/\sim_{\scalebox{.5}{right homotopy}}}
      \\
      \simeq
      \Big(
      \Omega_{\mathrm{dR}}
      \big(
        \!
        \orbisingular
        \big(
          X \!\sslash\! G
        \big)
        \,;\,
        \mathfrak{l}\mathscr{A}
      \big)_{\mathrm{flat}}
      \Big)_{\!\!\!\big/\sim}
      \\
      =:
      H_{\mathrm{dR}}
      \big(
        \!
        \orbisingular
        (
          X \!\sslash\! G
        )
        \,;\,
        \mathfrak{l}\mathscr{A}
      \big)
      \,.
    \end{array}
  $$

  \noindent
The first step is Def. \ref{EquivariantNonAbelianCohomology},
  while the second step is the fundamental theorem
  (Prop. \ref{FundamentalTheoremOfdgcAlgebraicEquivariantRationalHomotopyTheory}).
  In the third step we are:

  \noindent
  {\bf (a)} post-composing
  in the homotopy category
  with the isomorphism
  $
    \Omega^\bullet_{\mathrm{PLdR}}(-)$
    
    $
    \simeq
    \Omega^\bullet_{\mathrm{dR}}(-)
  $
  \eqref{IsomorphismBetweenEquivariantPLAndSmoothdeRhamComplexes};

  \noindent
  {\bf (b)} pre-composing with the isomorphism
  $
    \mathrm{CE}
    \big(
      \mathfrak{l} \mathscr{A}
    \big)
    \;\simeq\;
    \Omega^\bullet_{\mathrm{PLdR}}\big( \mathscr{A} \big)
  $
  exhibiting the minimal model \eqref{WhiteheadLInfinityAlgebraMinimalModelQuasiIso}.

 Now the domain object
  $\mathrm{CE}( \mathfrak{l}\mathscr{A})$ is cofibrant
  (by Lemma \ref{MinimalEquivariantdgcAlgebrasAreCofibrant})
  and the codomain object
  $
    \Omega^\bullet_{\mathrm{dR}}
    \big(\!
      \orbisingular
      (
        X \!\sslash\! G
      )
    \big)
  $
  is fibrant
  (by Prop. \ref{EquivariantSmoothDeRhamComplexIsProjectivelyFibrant}).
Consequently,  the hom-set in the homotopy category is equivalently given
  (\cite[\S I.1 Cor. 7]{Quillen67}, see \cite[Prop. A.16]{FSS23-Char})
  by
  right-homotopy classes of
  equivariant dgc-algebra homomorphisms between these
  objects, shown in the fourth step.

  To exhibit these right homotopies, we may choose as path-space object
  (\cite[Def. I.4]{Quillen67}, see \cite[A.11]{FSS23-Char})
  the equivariant de Rham complex on the cylinder orbifold
  (Notation \ref{CylinderOrbifold}):
  this qualifies as a path space object
  by stage-wise application of \cite[Lem. 3.88]{FSS23-Char}
  and using again the argument of
  Lemmas
  \ref{ZpEquivariantSmoothDifferentialFormsFormInjectiveDualVectorSpace},
  \ref{Z4EquivariantSmoothDifferentialFormsAreInjective},
  \ref{Z2TimesZ2EquivariantDifferentialFormsAreInjective}
  for equivariant fibrancy.
  But with this choice of path space object, the
  right homotopy relation manifestly coincides
  (by stage-wise application of \cite[Lem. 3.89]{FSS23-Char})
  with the
  coboundary relation
  on equivariant non-abelian forms (Def. \ref{CoboundariesBetweenFlatEquivariantLInfinityAlgebraValuedForms}).
  which is the fifth step above. With this, the last step is
  Def. \ref{EquivariantNonabelianDeRhamCohomology}.

  In conclusion, the composite of this chain of bijections
  gives the claimed bijection \eqref{deRhamTheoremEquivalence}.
\end{proof}

\smallskip

\noindent {\bf Twisted equivariant non-abelian de Rham cohomology.}

\noindent In equivariant generalization of \cite[Def. 3.97]{FSS23-Char}, we set:

\begin{defn}[Coboundaries between flat twisted equivariant $L_\infty$-algebra valued differential forms]
  \label{CoboundariesBetweenFlatTwistedEquivariantNonAbelianDifferentialForms}
  Given an equivariant $L_\infty$-algebraic
  local coefficient bundle \eqref{EquivariantLInfinityAlgebraicCoefficientBundle}
  \vspace{-1mm} 
\begin{equation}
  \raisebox{25pt}{
  \xymatrix@R=18pt@C=4em{
    \underline{\mathfrak{g}}
    \ar[rr]^-{
      \mathrm{fib}
      (
        \underline{\mathfrak{p}}
      )
    }
    &
    \ar@{}[d]|-{
    \mathclap{
      \mbox{
        \tiny
        \color{darkblue}
        \bf
        \def\arraystretch{1}
        \begin{tabular}{c}
          equivariant
          $L_\infty$-algebraic
          \\
          local coefficient bundle
        \end{tabular}
      }
    }
    }
    &
    \underline{\widehat {\mathfrak{b}}}
    \ar@{->>}[d]^-{
      \underline{\mathfrak{p}}
    }
    \\
    &&
    \underline{\mathfrak{b}}
  }
  }
  \phantom{AAAAA}
  \in
  \;
  \EquivariantLInfinityAlgebras
  \,,
\end{equation}

\vspace{-1mm} 
\noindent
and given $G \acts \; X \,\in\, \GActionsOnSmoothManifolds$
(Def. \ref{ProperGActionsOnSmoothManifolds})
equipped with an equivariant non-abelian de Rham twist
\eqref{TwistForFlatTwistedEquivariantNonabelianForms}
$$
  \tau_{\mathrm{dR}}
  \;\in\;
  \Omega_{\mathrm{dR}}
  \Big(\!
    \orbisingular
    \big(
      X \!\sslash\! G
    \big)
    ;\,
    \underline{\mathfrak{b}}
  \Big)
  \,,
$$

\noindent {\bf (i)} we say that a {\it coboundary} between
a pair
$$
  A_0,\, A_1
  \;\in\;
  \Omega^{\tau_{\mathrm{dR}}}_{\mathrm{dR}}
  \Big(\!
    \orbisingular
    \big(
      X \!\sslash\! G
    \big)
    ;\,
    \underline{\mathfrak{g}}
  \Big)
$$

\noindent
of flat equivariant $\tau_{\mathrm{dR}}$-twisted
$\underline{\mathfrak{g}}$-valued differential forms
(Def. \ref{FlatEquivariantLInfinityAlgebraValuedForms})
is such a form on the cylinder orbifold (Notation \ref{CylinderOrbifold})
\vspace{-1mm} 
$$
  \widetilde
  A
  \;\in\;
  \Omega^{ p_X^\ast(\tau_{\mathrm{dR}}}_{\mathrm{dR}) }
  \Big(\!
    \overset{
      \mathclap{
      \raisebox{3pt}{
        \tiny
        \color{darkblue}
        \bf
        \begin{tabular}{c}
          cylinder orbifold
        \end{tabular}
      }
      }
    }{
    \orbisingular
    \big(
      (X \times \mathbb{R}) \!\sslash\! G
    \big)
    }
    ;\,
    \underline{\mathfrak{g}}
  \Big)
$$

\noindent
twisted by the pullback of the given twist to the cylinder orbifold
(along the canonical projection \eqref{CylinderOrbifoldProjectionMap}),
such that  this restricts to the given pair of forms
\begin{equation}
  \label{TwistedCoboundaryRestrictions}
  i_0^\ast
  \big(\,
    \widetilde A \,
  \big)
  \;=\;
  A_0
  \phantom{AA}
  \mbox{and}
  \phantom{AA}
  i_1^\ast
  \big(\,
    \widetilde A \,
  \big)
  \;=\;
  A_1
\end{equation}

\noindent
along the canonical inclusions \eqref{CanonicalInclusionIntoCylinderOrbifold}.

\noindent {\bf (ii)} We denote the relation that there exists such a coboundary
by $\; A_0 \sim A_1$.
\end{defn}

In equivariant generalization of \cite[Def. 3.98]{FSS23-Char}, we set:

\begin{defn}[Twisted equivariant non-abelian de Rham cohomology]
 \label{TwistedEquivariantNonAbelianDeRhamCohomology}
 Let $G \acts \; X \in \GActionsOnSmoothManifolds$
 (Def. \ref{ProperGActionsOnSmoothManifolds})
 and let
 $
   \underline{\mathfrak{g}}
   \to
   \underline{\widehat{\mathfrak{b}}}
   \to
   \underline{\mathfrak{b}}
 $
 be an equivariant $L_\infty$-algebraic local coefficient bundle
 \eqref{EquivariantLInfinityAlgebraicCoefficientBundle},
 and let
 \begin{equation}
   \label{ClassOfEquivariantNonAbelianTwist}
   \big[
     \tau_{\mathrm{dR}}
   \big]
   \;\in\;
   H_{\mathrm{dR}}
   \big(
     \orbisingular
     (
       X \!\sslash\! G
     )
     ;\,
     \underline{\mathfrak{b}}
   \big)_{\mathrm{flat}}
 \end{equation}

 \noindent
 be the equivariant non-abelian de Rham cohomology class
 (Def. \ref{EquivariantNonabelianDeRhamCohomology})
 of an equivariant twist \eqref{TwistForFlatTwistedEquivariantNonabelianForms}.
  Then we say that the
 {\it equivariant $\tau_{\mathrm{dR}}$-twisted
 de Rham cohomology} of the $G$-orbifold
 $\orbisingular \big( X \!\sslash\! G \big)$
 (Def. \ref{GOrbifolds}) with coefficients in $\underline{\mathfrak{g}}$
 is the quotient of the set of
 {\it equivariant $\tau_{\mathrm{dR}}$-twisted
 $\underline{\mathfrak{g}}$-valued
 differential forms} (Def. \ref{FlatTwistedEquivariantLInfinityAlgebraValuedForms})
 by the coboundary relation from Def. \ref{CoboundariesBetweenFlatTwistedEquivariantNonAbelianDifferentialForms}:
 $$
   H_{\mathrm{dR}}^{\tau_{\mathrm{dR}}}
   \Big(\!
     \orbisingular
     \big(
       X \!\sslash\! G
     \big);
     \,
     \underline{\mathfrak{g}}
   \Big)
   \;\;
     :=
   \;\;
   \Omega_{\mathrm{dR}}^{\tau_{\mathrm{dR}}}
   \Big(\!
     \orbisingular
     \big(
       X \!\sslash\! G
     \big);
     \,
     \underline{\mathfrak{g}}
   \Big)_{\!\!\!\big/\sim}
   \,.
 $$
\end{defn}

\begin{notation}[Equivariant local coefficient bundle with relative minimal model]
  \label{EquivariantLocalCoefficientBundleWithRelativeMinimalModel}
  Given
  an equivariant local coefficient bundle
  \eqref{EquivariantLocalCoefficientBundle}

  \vspace{-.2cm}
  \begin{equation}
    \label{EquivariantLocalCoefficientBundleInSimplyConnectedRFiniteHomotopyTypes}
    \raisebox{20pt}{
    \xymatrix@R=1.5em{
      \mathscr{A}
      \ar[rr]^-{
        \scalebox{.7}{$
        \mathrm{hofib}
        (
          \rho_{\scalebox{.7}{$\mathscr{A}$}}
        )
        $}
      }
      &
      \ar@{}[d]|-{
        \mbox{
          \tiny
          \color{darkblue}
          \bf
          \begin{tabular}{c}
            equivariant
            \\[-2pt]
            local coefficient
            \\[-2pt]
            bundle
          \end{tabular}
        }
      }
      &
      \mathscr{A}\!\sslash\!\!\mathscr{G}
      \ar[d]^-{
        \scalebox{.7}{$
        \rho_{\scalebox{.7}{$\mathscr{A}$}}
        $}
      }
      \\
      &&
      B \mathscr{G}
    }
    }
    \;\;\;\;\;\;\;
    \in
    \EquivariantSimplyConnectedRFiniteHomotopyTypes
  \end{equation}

\noindent
all of whose objects
are equivariantly 1-connected and of $\mathbb{R}$-finite type
(Def. \ref{SubcategoryOfEquivariantSimplyConnectedRFiniteHomotopyTypes}),
assume (Remark \ref{ExistenceOfEquivariantRelativeMinimalModels}) that
$\rho_{\scalebox{.6}{$A$}}$
admits an equivariant relative minimal model
(Def. \ref{MinimalEquivariantdgcAlgebras}). This is to be denoted as follows:

\begin{equation}
  \label{EquivariantRelativeMinimalModelOfLocalCoefficientBundle}
  \hspace{-5mm}
  \adjustbox{scale=0.89}{$
  \raisebox{35pt}{
  \xymatrix@R=3pt@C=7pt{
    & & &
    \mathrm{CE}
    \big(
      \mathfrak{l} \mathscr{A}
    \big)
    \ar@{<-}[rrrr]^-{
      \scalebox{.7}{$
        \mathrm{cofib}
        \big(
          \mathrm{CE}
          (
            \mathfrak{l} \rho_{\scalebox{.6}{$\mathscr{A}$}}
          )
        \big)
      $}
    }
    &&&&
    \mathrm{CE}
    \big(
      \mathfrak{l}_{\scalebox{.6}{$B \mathscr{G}$}}
      (\mathscr{A} \!\sslash\! \mathscr{G} )
    \big)
    \ar@{<-}[dddd]^-{\hspace{-1.2cm}
      \scalebox{.7}{$
        \mathrm{CE}
        \big(
          \mathfrak{l} \rho_{\scalebox{.6}{$\mathscr{A}$}}
        \big)
      $}
      {
        \mbox{
          \tiny
          \color{greenii}
          \bf
          \def\arraystretch{1}
          \begin{tabular}{c}
            equivariant relative
            \\
            minimal model
          \end{tabular}
        }
      }
    }
    \\
    \Omega^\bullet_{\mathrm{PLdR}}
    \big(
      \mathscr{A}
    \big)
    \ar@{<-}[rrrr]|-{\;
      \scalebox{.7}{$
      \Omega^\bullet_{\mathrm{PLdR}}
      \big(
        \mathrm{hofib}(\rho_{\scalebox{.6}{$\mathscr{A}$}})
      \big)
      $}
   \; }
    &&&&
    \Omega^\bullet_{\mathrm{PLdR}}
    \big(
      \mathscr{A}
      \!\sslash\!
      \mathscr{G}
    \big)
    \ar@{<-}[dddd]_-{
      \mathllap{
        \mbox{
          \tiny
          \color{greenii}
          \bf
          \def\arraystretch{1}
          \begin{tabular}{c}
            equivariant dgc-algebra model
            \\
            of local coefficient bundle
          \end{tabular}
        }
      }
      \Omega^\bullet_{\mathrm{PLdR}}
      (
        \rho_{\scalebox{.6}{$\mathscr{A}$}}
      )
    }
    \ar@{<-}[urrr]_-{
       \scalebox{.7}{$
         p^{
           \mathrm{min}_{\scalebox{.4}{$B \mathscr{G}$}}
         }_{
           \scalebox{.7}{$
             \mathscr{A} \!\sslash\! \mathscr{G}
           $}
         }
       $}
      \, \in \, \mathrm{W}
    }
    \\
    \\
    \\
    &&&& &&&
    \mathrm{CE}
    \big(
      \mathfrak{l}
      B \mathscr{G}
    \big)
    \mathrlap{\hspace{-4mm}
      \mbox{
        \tiny
        \color{darkblue}
        \bf
        \def\arraystretch{1}
        \begin{tabular}{c}
          equivariant
          \\
          minimal model
        \end{tabular}
      }
    }
    \\
    &&&&
    \Omega^\bullet_{\mathrm{dR}}
    \big(
      B \mathscr{G}
    \big)
    \ar@{<-}[urrr]_-{
       \scalebox{.7}{$
         p^{
           \mathrm{min}
         }_{
           \scalebox{.7}{$
             B \mathscr{G}
           $}
         }
       $}
       \, \in \, \mathrm{W}
    }
  }
  }
  $}
\end{equation}

\noindent Notice that the corresponding fibration of equivariant
$L_\infty$-algebras (Def. \ref{EquivariantLInfinityAlgebras})
serves as an equivariant $L_\infty$-algebraic
local coefficient bundle \eqref{EquivariantLInfinityAlgebraicCoefficientBundle}.
\end{notation}

In equivariant generalization of \cite[Thm. 3.104]{FSS23-Char},
we have:

\vspace{-1mm} 
\begin{prop}[Twisted equivariant non-abelian de Rham theorem]
  \label{EquivariantTwistedNonabelianDeRhamTheorem}
Consider the following
 \begin{itemize}
 
\item Let $\rho_{\scalebox{.7}{$\mathscr{A}$}}$
be an equivariant local coefficient bundle
of equivariantly 1-connected $G$-spaces of
finite $\mathbb{R}$-homotopy type, which admits
an equivariant relative minimal model; all as in
Notation \ref{EquivariantLocalCoefficientBundleWithRelativeMinimalModel}.

\item Moreover, let $G \acts \; X \,\in\, \GActionsOnSmoothManifolds$
(Def. \ref{ProperGActionsOnSmoothManifolds}) be such that also
its equivariant shape (Def. \ref{EquivariantShape})
is equivariantly 1-connected and of $\mathbb{R}$-finite type,
$
  \raisebox{1pt}{\rm\textesh}\,\orbisingular\,
  \big(
    X \!\sslash\! G
  \big)
  \in
  \EquivariantSimplyConnectedRFiniteHomotopyTypes
$
and let this be eq- uipped with
an equivariant twist $\tau$ \eqref{ATwist}
with coefficients in the equivariant rationalization
(Def. \ref{EquivariantRationalization}) of $B \mathscr{G}$.

\item Write $\tau_{\mathrm{dR}}$ for a representative of
the image
under the equivariant non-abelian de Rham theorem
(Prop. \ref{EquivariantNonabelianDeRhamTheorem})
of the class of this twist in
equivariant $\mathfrak{l} B \mathscr{A}$-valued
de Rham cohomology (Def. \ref{EquivariantNonabelianDeRhamCohomology})
that the equivariant local coefficient bundle
\eqref{EquivariantLocalCoefficientBundleInSimplyConnectedRFiniteHomotopyTypes}
admits an equivariant relative minimal model (Def. \ref{MinimalEquivariantdgcAlgebras})
\vspace{-1mm} 
\begin{equation}
  \label{TwistsForTwistedEquivariantNonAbelianDeRhamTheorem}
  \xymatrix@R=-4pt{
    H
    \Big(
      \raisebox{1pt}{\rm\textesh}
      \,
      \orbisingular
      \,
      \big(
        X \!\sslash\! G
      \big)
      ;\,
      L_{\mathbb{R}} B \mathscr{G}
    \Big)
    &
    \simeq
    &
    H_{\mathrm{dR}}
    \Big(\!
      \orbisingular
      \big(
        X \!\sslash\! G
      \big);
      \,
      \mathfrak{l} B \mathscr{G}
    \Big).
    \\
    \overset{
      \mathclap{
      \raisebox{3pt}{
        \tiny
        \color{darkblue}
        \bf
        rational twist
      }
      }
    }{
      [\tau]
    }
    &
    \overset{
        \mathclap{
        \raisebox{10pt}{
          \tiny
          \color{greenii}
          \rm
          \begin{tabular}{c}
            equivariant non-abelian
            \\[-2pt]
             de Rham theorem
          \end{tabular}
        }
        }
    }{
      \longmapsto
    }
    &
    \overset{
      \mathclap{
      \raisebox{3pt}{
        \tiny
        \color{darkblue}
        \bf
        de Rham twist
      }
      }
    }{
      [\tau_{\mathrm{dR}}]
    }
  }
\end{equation}
\end{itemize}

\vspace{-1mm} 
\noindent
Then there is an equivalence between:

\noindent
{\bf (a)} the $\tau$-twisted equivariant real non-abelian
cohomology (Def. \ref{EquivariantTwistedNonAbelianCohomology})
with local coefficients in $\rho_{\scalebox{.7}{$\mathscr{A}$}}$, and

\noindent
{\bf (b)} the $\tau_{\mathrm{dR}}$-twisted equivariant
de Rham cohomology
(Def. \ref{TwistedEquivariantNonAbelianDeRhamCohomology})
with local coefficients
in $\mathfrak{l}_{B  \scalebox{.65}{$\mathscr{G}$}} \rho_{\scalebox{.7}{$\mathscr{A}$}}$
\eqref{EquivariantRelativeMinimalModelOfLocalCoefficientBundle}:
\vspace{-1mm} 
\begin{equation}
  \label{TwistedEquivariantNonabeliandeRhamEquivalence}
  \overset{
    \mathclap{
    \raisebox{3pt}{
      \tiny
      \color{darkblue}
      \bf
      \def\arraystretch{1}
      \begin{tabular}{c}
        twisted equivariant
        \\
        non-abelian real cohomology
      \end{tabular}
    }
    }
  }{
  H^\tau
  \Big(
    \raisebox{1pt}{\rm\textesh}
    \,\orbisingular\,
    \big(
      X \!\sslash\! G
    \big)
    ;\,
    L_{\mathbb{R}}\mathscr{A}
  \Big)
  }
  \;\;
    \simeq
  \;\;
  \overset{
    \mathclap{
    \raisebox{3pt}{
      \tiny
      \color{darkblue}
      \bf
      \def\arraystretch{1}
      \begin{tabular}{c}
        twisted equivariant
        \\
        non-abelian de Rham cohomology
      \end{tabular}
    }
    }
  }{
  H^{\tau_{\mathrm{dR}}}
  \Big(\!
    \orbisingular\,
    \big(
      X \!\sslash\! G
    \big)
    ;\,
    \mathfrak{l}\mathscr{A}
  \Big)
  }
  \,.
\end{equation}
\end{prop}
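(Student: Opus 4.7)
The plan is to generalize the proof of the untwisted equivariant de Rham theorem (Prop.~\ref{EquivariantNonabelianDeRhamTheorem}) to the slice setting, following the pattern used non-equivariantly in \cite[Thm.~3.104]{FSS20d}. The starting point is that the twisted equivariant non-abelian cohomology on the left of \eqref{TwistedEquivariantNonabeliandeRhamEquivalence} is, by Def.~\ref{EquivariantTwistedNonAbelianCohomology}, the hom-set in the homotopy category of the slice of $\EquivariantSimplicialSetsProj$ over a fibrant simplicial set model for $L_{\mathbb{R}}B\mathscr{G}$; dually, the right-hand side is a quotient of a hom-set in the coslice of $\EquivariantdgcAlgebras$ under a dgc-algebra model for $\mathfrak{l}B\mathscr{G}$. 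I would then construct an explicit chain of bijections between these two hom-sets.

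Concretely, I would first combine the equivariant fundamental theorem (Prop.~\ref{FundamentalTheoremOfdgcAlgebraicEquivariantRationalHomotopyTheory}) with the standard fact that Quillen equivalences lift to slice/coslice model structures, so as to translate the slice hom-set over $L_{\mathbb{R}}B\mathscr{G}$ in $\EquivariantSimplicialSets$ into a coslice hom-set under $\Omega^\bullet_{\mathrm{PLdR}}(B\mathscr{G})$ in $\EquivariantdgcAlgebrasOp$. Next, I would precompose with the equivariant relative minimal model $\mathrm{CE}(\mathfrak{l}\rho_{\scalebox{.6}{$\mathscr{A}$}})$ and the absolute equivariant minimal model $p^{\mathrm{min}}_{B\mathscr{G}}$ displayed in \eqref{EquivariantRelativeMinimalModelOfLocalCoefficientBundle}; both are weak equivalences, so this permits replacing the coefficient data $\Omega^\bullet_{\mathrm{PLdR}}(\mathscr{A}\!\sslash\!\mathscr{G}) \twoheadrightarrow \Omega^\bullet_{\mathrm{PLdR}}(B\mathscr{G})$ by $\mathrm{CE}(\mathfrak{l}_{B\mathscr{G}}(\mathscr{A}\!\sslash\!\mathscr{G})) \twoheadrightarrow \mathrm{CE}(\mathfrak{l}B\mathscr{G})$. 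Third, I would use Lem.~\ref{EquivalenceOfEquivariantSmoothAndPLdeRhamComplexOfSmoothOrbifold} to replace $\Omega^\bullet_{\mathrm{PLdR}}\big(\orbisingular(X\!\sslash\!G)\big)$ with $\Omega^\bullet_{\mathrm{dR}}\big(\orbisingular(X\!\sslash\!G)\big)$, which is fibrant by Prop.~\ref{EquivariantSmoothDeRhamComplexIsProjectivelyFibrant}, while the twist $\tau_{\mathrm{dR}}$ is, by construction \eqref{TwistsForTwistedEquivariantNonAbelianDeRhamTheorem}, a representative of the image of $\tau$ under this zig-zag. Finally, since the domain in the coslice is cofibrant (Lem.~\ref{MinimalEquivariantdgcAlgebrasAreCofibrant}) and the codomain is fibrant, the hom-set in the homotopy category is computed by honest coslice morphisms modulo right homotopy, and the latter unfolds by inspection to the twisted-form set \eqref{SetOfFlatTwistedEquivariantLIninfityAlgebraValuedDifferentialForms} modulo the relation of Def.~\ref{CoboundariesBetweenFlatTwistedEquivariantNonAbelianDifferentialForms}, i.e.\ to $H^{\tau_{\mathrm{dR}}}_{\mathrm{dR}}\big(\orbisingular(X\!\sslash\!G);\,\mathfrak{l}\mathscr{A}\big)$.

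The crucial technical input is that the equivariant smooth de Rham complex of the cylinder orbifold $\orbisingular\big((X\times\mathbb{R})\!\sslash\!G\big)$ (Notation~\ref{CylinderOrbifold}) serves as a good path space object in the coslice under $\mathrm{CE}(\mathfrak{l}B\mathscr{G})$. Fibrancy follows stage-wise from Lemmas~\ref{ZpEquivariantSmoothDifferentialFormsFormInjectiveDualVectorSpace}, \ref{Z4EquivariantSmoothDifferentialFormsAreInjective}, and \ref{Z2TimesZ2EquivariantDifferentialFormsAreInjective} together with the non-equivariant argument of \cite[Lem.~3.88]{FSS20d}; compatibility with the twist follows because the projection $p_X$ of \eqref{CylinderOrbifoldProjectionMap} is split by both boundary inclusions of \eqref{CanonicalInclusionIntoCylinderOrbifold}, so that pulling back $\tau_{\mathrm{dR}}$ along $p_X$ produces precisely the twist required to match the boundary conditions \eqref{TwistedCoboundaryRestrictions} defining coboundaries of twisted forms.

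The main obstacle I anticipate is the bookkeeping of slice/coslice model structures under the Quillen adjunction of Prop.~\ref{QuillenAdjunctionBetweenEquivariantSimplicialSetsAndEquivariantdgcAlgebras}: one needs to verify that a chosen fibrant simplicial representative of the rational twist $\tau$ can be paired with a cofibrant coslice representative of $\tau_{\mathrm{dR}}$ so that the two sides of \eqref{TwistsForTwistedEquivariantNonAbelianDeRhamTheorem} correspond under the derived unit. Equivalently, one must check that the relative minimal model $\mathrm{CE}(\mathfrak{l}\rho_{\scalebox{.6}{$\mathscr{A}$}})$ remains a cofibration in the coslice under $\mathrm{CE}(\mathfrak{l}B\mathscr{G})$ and that right homotopies out of it with values in $\Omega^\bullet_{\mathrm{dR}}(\orbisingular(X\!\sslash\!G))$ relative to $\tau_{\mathrm{dR}}$ are faithfully detected by the cylinder-orbifold path space object. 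Once this compatibility is set up, the chain of bijections proving Prop.~\ref{EquivariantNonabelianDeRhamTheorem} carries through verbatim in the coslice to yield the claimed equivalence \eqref{TwistedEquivariantNonabeliandeRhamEquivalence}.
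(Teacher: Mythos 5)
Your proposal is correct and follows essentially the same route as the paper: pass to the coslice of equivariant dgc-algebras under $\mathrm{CE}(\mathfrak{l}B\mathscr{G})$ via the fundamental theorem (Prop.~\ref{FundamentalTheoremOfdgcAlgebraicEquivariantRationalHomotopyTheory}), substitute the smooth de Rham complex (Lemma~\ref{EquivalenceOfEquivariantSmoothAndPLdeRhamComplexOfSmoothOrbifold}) and the relative minimal model \eqref{EquivariantRelativeMinimalModelOfLocalCoefficientBundle}, invoke cofibrancy (Lemma~\ref{MinimalEquivariantdgcAlgebrasAreCofibrant}) and fibrancy (Prop.~\ref{EquivariantSmoothDeRhamComplexIsProjectivelyFibrant}) to reduce to right-homotopy classes, and identify these with twisted-form coboundary classes via the cylinder-orbifold path-space object. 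The paper indeed presents this as the ``direct joint generalization'' of Prop.~\ref{EquivariantNonabelianDeRhamTheorem} and \cite[Thm.~3.104, Lem.~3.105]{FSS20d}, matching your plan.
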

\begin{proof}
  The proof proceeds in direct joint generalization
  of the proofs of Prop. \ref{EquivariantNonabelianDeRhamTheorem}
  (equivariant case) and \cite[Thm. 3.104]{FSS23-Char} (twisted case).

 \noindent First, by the fundamental theorem (Prop. \ref{FundamentalTheoremOfdgcAlgebraicEquivariantRationalHomotopyTheory}),
  the twisted real cohomology is given by morphisms in the
  homotopy category of the co-slice model category of this form:

  \vspace{-4mm}
  \begin{align}
    \label{CocycleInEquivariantTwistedRealCohomologyUnderPLdR}
    \raisebox{10pt}{
    \xymatrix@R=7pt@C=25pt{
      \Omega^\bullet_{\mathrm{PLdR}}
      \big(
        \raisebox{1pt}{\textesh}
        \,\orbisingular\,
        (
          X \!\sslash\! G
        )
      \big)
      \ar@{<-}[dr]_-{
        \Omega^\bullet_{\mathrm{PLdR}}(\tau)\;\;\;\;\;
      }
      \ar@{<--}[rr]
      &&
      \Omega^\bullet_{\mathrm{PLdR}}
      \big(
        \mathscr{A}
        \!\sslash\!
        \mathscr{G}
      \big)
      \ar@{<-}[dl]^-{
        \;\;\;\;\;\;
        \Omega^\bullet_{\mathrm{PLdR}}
        (
          \rho_{\scalebox{.7}{$\mathscr{A}$}}
        )
      }
      \\
      &
      \Omega^\bullet_{\mathrm{PLdR}}
      \big(
        B \mathscr{G}
      \big)
       }
    }
    \\
    \in
    \;
    \mathrm{Ho}
    \Big(\!
      \EquivariantdgcAlgebrasProj
    \Big).
    \nonumber 
  \end{align}

  \noindent
  Second, by

  {\bf (a)} post-composition with the isomorphism
  $
    \Omega^\bullet_{\mathrm{PLdR}}(-)
      \;\simeq\;
    \Omega^\bullet_{\mathrm{dR}}(-)
  $
  \eqref{IsomorphismBetweenEquivariantPLAndSmoothdeRhamComplexes},

  {\bf (b)} pre-composition with the equivalence from the
  equivariant relative minimal model
  \eqref{EquivariantRelativeMinimalModelOfLocalCoefficientBundle},

  \noindent
  this becomes equivalent to morphisms of this form:
  \begin{align}
    \label{MorphismInCoslideFromEquivariantMinimalModelTodeRham}
      \raisebox{10pt}{
    \xymatrix@R=5pt@C=30pt{
      \Omega^\bullet_{\mathrm{dR}}
      \big(
        \raisebox{1pt}{\textesh}
        \,\orbisingular\,
        (
          X \!\sslash\! G
        )
     \! \big)
      \ar@{<-}[dr]_-{
        \tau_{\mathrm{dR}} \;
      }
      \ar@{<--}[rr]
      &&
      \Omega^\bullet_{\mathrm{PLdR}}
      \big(
        \mathscr{A}
        \!\sslash\!
        \mathscr{G}
      \big)
      \ar@{<-}[dl]^-{
        \scalebox{.7}{\;\;\; $
        \mathrm{CE}
        \big(
          \mathfrak{l} \rho_{\scalebox{.7}{$\mathscr{A}$}}
        \big)
        $}
      }
      \\
      &
      \mathrm{CE}
      \big(
        \mathfrak{l}
        B \mathscr{G}
      \big)
    }
    }
    \\
    \in
    \;
    \mathrm{Ho}
    \Big(\!
      \EquivariantdgcAlgebrasProj
    \Big)
    \,.
    \nonumber
  \end{align}

  \noindent
  But, in this form,
  \begin{itemize}[
    leftmargin=.8cm
  ]
  \item[{\bf (a)}] the codomain $\tau_{\mathrm{dR}}$ is a fibrant
  object in the coslice model category, 
  
  since $\Omega^\bullet_{\mathrm{dR}}\big( - \big)$ is fibrant in
  the un-sliced model structure (Prop. \ref{EquivariantSmoothDeRhamComplexIsProjectivelyFibrant});

  \item[{\bf (b)}] the relative minimal model
  domain $\mathrm{CE}\big( \mathfrak{l} \rho_{\scalebox{.7}{$\mathscr{A}$}} \big)$
  is cofibrant, by Lemma \ref{MinimalEquivariantdgcAlgebrasAreCofibrant}.

  \end{itemize}

  \noindent
It follows
  (\cite[\S I.1 Cor. 7]{Quillen67}, see \cite[Prop. A.16]{FSS23-Char})
  that a morphism of the form
  \eqref{MorphismInCoslideFromEquivariantMinimalModelTodeRham} in the
  homotopy category is equivalently the right homotopy class
  of an actual homomorphism of equivariant dgc-algebras in the coslice,
  hence is equivalently
  the right homotopy class of
  a flat equivariant twisted
  $\mathfrak{l}\mathscr{A}$-valued differential form,
  by Def. \ref{FlatTwistedEquivariantLInfinityAlgebraValuedForms}.

  Finally, in joint generalization of the proof of
  Prop. \ref{EquivariantNonabelianDeRhamTheorem} (equivariant case)
  and \cite[Lem. 3.105]{FSS23-Char} (twisted case),
  we see that a path space object
  (\cite[Def. I.4]{Quillen67}, see \cite[A.11]{FSS23-Char})
  exhibiting these right
  homotopies in the coslice is given by pullback to the
  equivariant smooth de Rham complex of the cylinder orbifold
  \eqref{FlatFormOnCylinderOrbifold}. But with that choice,
  right homotopies are manifestly the same as
  coboundaries
  of flat equivariant twisted
  $\mathfrak{l}\mathscr{A}$-valued differential forms
  (Def. \ref{CoboundariesBetweenFlatTwistedEquivariantNonAbelianDifferentialForms}),
  and hence the claim follows.
\end{proof}

\smallskip

\noindent {\bf Twisted non-abelian Borel-Weil-Cartan equivariant de Rham cohomology.}
 Finally, we combine traditional
Borel(-Weil-Cartan) $T$-equivariant de Rham cohomology
(\cite{AtiyahBott84}\cite[\S 5]{MathaiQuillen86}\cite{Kalkman93}\cite{GuilleminSternberg99},
  review in \cite{Meinrenken06} \cite{KubelThom15}\cite{Pestun17}),
with proper $G$-equivariance and generalize it to
non-abelian $L_\infty$-algebra coefficients.

\medskip

By Prop. \ref{InfinityActionsEquivalentToFibrationsOverClassifyingSpace}
and Remark \ref{EquivariantInfinityAction}, any
Borel $T$-equivariantized $G$-orbifold carries a canonical
twist in equivariant non-abelian cohomology
$H^1(- , T) \,\simeq\, H(-, B T)$.
The following is the de Rham image of that twist:

\begin{defn}[Canonical de Rham twist on Borel $T$-equivariant $G$-orbifolds]
\label{ChernWeildeRhamTwistOnTEquivariantGOrbifolds}
Let $( T \!\times G) \acts \; X \in \TGActionsOnSmoothManifolds$
(Def \ref{ProperGActionsOnSmoothManifolds})
for $T \in \mathrm{CompactLieGroups}$ finite-dimensional
and simply-connected,
with Lie algebra $\mathfrak{t}$ \eqref{LieAlgebraOfCompactEquivarianceGroup},
regarded as a smooth $G$-equivariant $L_\infty$-algebra
(Def. \ref{EquivariantLInfinityAlgebras}).
 We say that the {\it canonical de Rham twist}
 on the corresponding $T$-parametrized $G$-orbifold
 is the canonical inclusion of equivariant dgc-algebras
 (Def. \ref{EquivariantdgcAlgebras}) from the
 minimal model for the classifying space of $T$
 (regarded as a smooth $G$-equivariant homotopy type, Example \ref{SmoothSingularHomotopyTypes})
 into the proper $G$-equivariant \& Borel $T$-equivariant
 smooth de Rham complex (Example \ref{ProperGEquivariantAndBorelTEquivariantSmoothDeRhamComplex}):
 $$
 \adjustbox{scale=0.92}{$
   \xymatrix@R=-1pt@C=-5pt{
     \Omega^\bullet_{\mathrm{dR}}
     \Big(\!
       \big(
         \orbisingular
         (
           X \!\sslash\! G
         )
       \big)
       \!\sslash\! T
     \Big)
     \ar@{<-}[dd]^-{ \tau^{\mathrm{can}}_{\mathrm{dR}} }
     &&
    \overset{
      \mathclap{
      \raisebox{-15pt}{
        \tiny
        \color{darkblue}
        \bf
        \begin{tabular}{c}
          Cartan model for
          $T$-equivariant
          \\[-2pt]
          Borel cohomology
          of $H$-fixed locus $X^H$
        \end{tabular}
      }
      }
    }{
    \bigg(\!
    \Omega^\bullet_{\mathrm{dR}}
    \big(
      X^H
    \big)
    \otimes
    \mathbb{R}
    \Big[
      \{r^{\, a}_2\}_{a = 1}^{\mathrm{dim}(T)}
    \Big]
    \,,\,
      d_{\mathrm{dR}}
        +
      r_2^{\, a} \wedge  \iota_{t_a}
   \!\! \bigg)^{\!\!T}
    }
    \ar@{<-^{)}}[dd]
     \\
     &
     \!\!\!\!\!
     :
     \;\;\;\;\;\;\;
     G/H
     \;\;\;
     \longmapsto
     \!\!
     &
     \\
     \mathrm{CE}
     \big(
       \mathfrak{l} B T
     \big)
     &&
     \left(
     \mathbb{R}
     \Big[
       \{
         r^{\, a}_2
       \}_{a = 1}^{\mathrm{dim}(T)}
     \Big]
     \right)^T
   }
   $}
 $$
where on the bottom we used the
abstract Chern-Weil isomorphism \eqref{AbstractChernWeilHomomorphism}
in the form discussed in \cite[\S 4.2]{FSS23-Char}.
\end{defn}

\begin{example}[Equivariant Cartan map]
  \label{EquvariantCartanMap}
  In the situation of Def. \ref{ChernWeildeRhamTwistOnTEquivariantGOrbifolds},
  consider the case when the $T$-action is free, hence that
  $X := P$ is the total space of
  a $G$-equivariant $T$-principal bundle $P \to B := P/T$
  (e.g. \cite[p .2]{KubelThom15}).
  Then, for any choice of $G$-invariant $N$-principal connection
  $
    \nabla
    \,\in\,
    N\mathrm{Connections}(P)^G
  $,
  we have the following weak equivalence
  (in the sense of Prop. \ref{ProjectiveModelStructureOnConnectiveEquivariantdgcAlgebras})
  of
  $G$-equivariant dgc-algebras (Def. \ref{EquivariantdgcAlgebras})
  in the co-slice under the minimal model dgc-algebra of the
  classifying space \eqref{AbstractChernWeilHomomorphism}:
  \vspace{-.4cm}
  $$
    \hspace{-3mm}
    \scalebox{0.79}{$
    \begin{array}{c}
    \xymatrix@C=-34pt@R=7pt{
     \Omega^\bullet_{\mathrm{dR}}
     \Big(\!\!
       \big(
         \orbisingular
         (
           X \!\sslash\! G
         )
       \big)
       \!\sslash\! T
     \!\Big)
     \ar[rr]^-{ \in \; \mathrm{W} }
     \ar@{<-}[ddr]_-{ \tau^{\mathrm{can}}_{\mathrm{dR}} }
     &
     \phantom{--------}
     &
     \Omega^\bullet_{\mathrm{dR}}
     \big(\!
         \orbisingular
         (
           B \!\sslash\! G
         )
    \! \big)
     \ar@{<-}[ddl]^-{ \mathrm{cw}_T }
     &
     {\phantom{AAAAAAAAAA}}
     &
    \Big(
    \Omega^\bullet_{\mathrm{dR}}
    \big(
      X^H
    \big)
    \Big[
      \{r^{\, a}_2\}_{a = 1}^{\mathrm{dim}(T)}
    \Big]
    \Big)^T
    \ar[rr]^-{
        \scalebox{0.8}{$
         \arraycolsep=1.4pt
         \def\arraystretch{1}
        {\begin{array}{lll}
          \omega & \mapsto & \omega_{\; \mathrm{hor}}
          \\
          \color{orangeii}
          r^{\, a}_2
          &
          \color{orangeii}
          \mapsto
          &
          \color{orangeii}
          F_\nabla^a
        \end{array}}
        $}
    }
    \ar@{<-^{)}}[ddr]
    &
    &
    \Omega^\bullet_{\mathrm{dR}}
    \big(
      B^H
    \big)
    \ar@{<-}[ddl]^<<<{
      \underset{
        \mathclap{
        \raisebox{4pt}{
          \tiny
          \color{greenii}
          \bf
          {\begin{tabular}{c}
            Chern-Weil hom.
          \end{tabular}}
        }
        }
      }{
        {
          \color{orangeii}
          c \, \mapsto c(F_\nabla)
        }
      }
    }
    \\
    & & \;\;\;
    \ar@{}[rr]|-{
      \qquad : \quad 
     \scalebox{1}{$G/H$} 
     \;\; \longmapsto \qquad \qquad \qquad 
    }
    & &
    \\
     &
     \mathrm{CE}
     \big(
       \mathfrak{l}
       B T
     \big)
    &&&&
    \Big(
    \mathbb{R}
    \Big[
      \{r^{\, a}_2\}_{a = 1}^{\mathrm{dim}(T)}
    \Big]
    \Big)^T
    \;\;\;\;\;\;
    }
    \end{array}
    $}
  $$

  \noindent
 This is from the proper $G$-equivariant Borel $T$-equivariant
  smooth de Rham complex
  of $X$ (Example \ref{ProperGEquivariantAndBorelTEquivariantSmoothDeRhamComplex})
  to the proper $G$-equivariant smooth de Rham complex
  over $X/T$ (Example \ref{EquivariantSmoothdeRhamComplex}),
  which is stage-wise over $G/H$
  the {\it Cartan map} quasi-isomorphism
  \cite[\S 5]{GuilleminSternberg99}
  (review in \cite[(20), (30)]{Meinrenken06})
  from the Cartan model of $X^H$
  \eqref{CartanModelOnFixedLocus}
  to the ordinary
  smooth de Rham complex of $B^H = (X/N)^H$.
  This sends the Cartan model generators $r^{\, a}_2$
  to the curvature form component $F^a_\nabla$ of
  the given connection,
  and hence
  restricts on universal real characteristic classes,
  represented by invariant polynomials $c$,
  to the
  Chern-Weil homomorphism
  assigning characteristic forms: $c \,\mapsto\, c(F_\nabla)$.
\end{example}

\begin{example}[Tangential de Rham twists on $G$-orbifolds with $T$-structure]
  \label{TangentialDeRhamTwistsOnGOrbifoldsWithTStructure}
In further specialization of Example \ref{EquvariantCartanMap},
let $X \acts \; B \,\in\, \GActionsOnSmoothManifolds$ (Def. \ref{ProperGActionsOnSmoothManifolds}) be equipped
with $G$-equivariant $T \subset \mathrm{GL}(\mathrm{dim}(X))$-structure
(see \cite[p. 9]{SS20b} for pointers),
namely with a
$G$-equivariant reduction of its $\mathrm{GL}(\mathrm{dim}(X))$-frame bundle
to a $T$-principal $T$-frame bundle $T\mathrm{Fr}(X)$:
\vspace{-1mm}
$$
\hspace{1cm}
  \xymatrix@R=-2pt@C=3.5em{
    \mathllap{
      \mbox{
        \tiny
        \color{darkblue}
        \bf
        $T$-frame bundle
      }
      \;
    }
    T\mathrm{Fr}(X)
    \ar[dr]
    \ar@(ul,ur)^-{ \;T \times G\; }
    \;
    \ar@{^{(}->}[rr]^-{
      \mbox{
        \tiny
        \color{greenii}
        \bf
        \begin{tabular}{c}
           $G$-equivariant
           \\[-2pt]
           $T$-structure
        \end{tabular}
      }
    }
    &&
    \;
    \mathrm{Fr}(X)
    \ar@(ul,ur)^-{ \; T \times G\; }
    \ar[dl]
    \mathrlap{
      \;
      \mbox{
        \tiny
        \color{darkblue}
        \bf
        frame bundle
      }
    }
    \\
    &
    X
    \ar@(dl,dr)|-{ \;G\; }
  }
$$

\noindent Then Example \ref{EquvariantCartanMap}
induces on the $G$-orbifold
$\orbisingular \big( X \!\sslash\! G \big)$
(Def. \ref{GOrbifolds})
an equivariant
non-abelian de Rham twist \eqref{ClassOfEquivariantNonAbelianTwist}
encoding all the real characteristic forms of the
given $G$-equivariant $T$-structure on $X$
(the {\it tangential twist}):
$$
  \xymatrix@R=4pt@C=45pt{
    \Omega^\bullet_{\mathrm{dR}}
    \Big(\!
      \big(
        \orbisingular
        (
          T\mathrm{Fr}(X)  \!\sslash\! G
        )
      \big)
      \!\sslash\! T
    \Big)
    \;\;
    \ar[rr]_-{ \in \; \mathrm{W} }^-{
      \mathclap{
      \mbox{
        \tiny
        \color{greenii}
        \bf
        \def\arraystretch{1}
        \begin{tabular}{c}
          Cartan map
          equivalence
        \end{tabular}
      }
      }
    }
    \ar@{<-}[dr]^>>>>>>>>>{ \tau^{\mathrm{can}}_{\mathrm{dR}} }_-{  \!\!\!\!\!\!\!\!\!\!
        \mbox{ \hspace{-5mm}
          \tiny
          \color{greenii}
          \bf
          \def\arraystretch{1}
          \begin{tabular}{c}
            canonical de Rham 
            \\
            twist 
            on
            orbifold's 
            \\
            $T$-frame bundle
          \end{tabular}
          \;\;
        }
     \;\;\;   
     }
    &&
    \Omega^\bullet_{\mathrm{dR}}
    \big(\!
      \orbisingular
      (
        X \!\sslash\! G
      )
    \big).
    \ar@{<-}[dl]_<<<<<<<<<{  \mathrm{cw}_T}^-{\hspace{4mm}
        \mbox{
          \tiny
          \color{greenii}
          \bf
          \def\arraystretch{1}
          \begin{tabular}{c}
            tangential 
            \\
            de Rham twist
            \\
            on $G$-orbifold
          \end{tabular}
        }
      }
%    }
    \\
    &
    \mathrm{CE}
    \big(
      \mathfrak{l}
      B T
    \big)
  }
$$

\end{example}

In further generalization of Def. \ref{TwistedEquivariantNonAbelianDeRhamCohomology}, we set:

\begin{defn}[Proper $G$-equivariant \& Borel $T$-equivariant
twisted non-abelian de Rham cohomology]
 \label{ProperGEquivariantAndBorelTEquivariantTwistedNonabelianDeRhamCohomology}

Let $(T \times G) \acts  \; X \,\in\, \TGActionsOnSmoothManifolds$
(Def. \ref{ProperGActionsOnSmoothManifolds})
for $T$ finite-dimensional, compact, and simply-connected, and let
\begin{equation}  \label{EquivariantLInfinityAlgebraicCoefficientBundleForProperAndBoreldeRham}
  \xymatrix@R=10pt{
   \underline{\mathfrak{g}}
   \ar[rr]^-{
     \mathrm{hofib}(\,\underline{\mathfrak{p}}\,)
   }
   &&
   \underline{\widehat{\mathfrak{b}}}
   \ar[d]^-{ \underline{\mathfrak{p}} }
   \\
   &&
   \mathfrak{l} B T
   }
 \end{equation}

 \noindent
 be an equivariant $L_\infty$-algebraic local coefficient bundle
 \eqref{EquivariantLInfinityAlgebraicCoefficientBundle}
 over the Whitehead $L_\infty$-algebra of $B T$
 (i.e., whose Chevalley-Eilenberg algebra is \eqref{AbstractChernWeilHomomorphism}).

 \noindent {\bf (i)}  We say that the set of
 {\it flat, canonically twisted, proper $G$-equivariant
 \&  Borel $T$-equivariant, $\underline{\mathfrak{g}}$-valued
 differential forms} on $X$ is
 the hom-set \eqref{HomSets} in the
 co-slice of
 $G$-equivariant dgc-algebras (Def. \ref{EquivariantdgcAlgebras})
 from $\mathrm{CE}\big( \underline{\mathfrak{p}}\big)$
 \eqref{DefiningEmbeddingOfEquivariantLInfinityAlgebrasInEquivariantdgcAlgebras}
 to the canonical de Rham twist (Def. \ref{ChernWeildeRhamTwistOnTEquivariantGOrbifolds})
 on the corresponding $T$-parametrized $G$-orbifold:
 \begin{equation}
   \label{SetOfCanonicallyTwistedForms}
   \hspace{-4mm} 
   \def\arraystretch{1.8}
   \begin{array}{l}
   \Omega^{\tau^{\mathrm{can}}_{\mathrm{dR}}}_{\mathrm{dR}}
   \Big(\!
     \big(
       \orbisingular
       (
         X \!\sslash\! G
       )
     \big)
     \!\sslash\!
     T
     ;
     \,
     \underline{\mathfrak{g}}
   \Big)
   \\
   \;
   :=
   \Big(
     \EquivariantdgcAlgebrasProj
   \Big)^{ \mathrm{CE}( \mathfrak{l} B T )/ }
   \Big(
     \mathrm{CE}\big(\, \underline{\mathfrak{p}} \, \big)
     \;
     ,
     \tau^{\mathrm{can}}_{\mathrm{dR}}
   \Big)
   \\[5pt]
   \; =
   \left\{ \!\!\!\!
   \raisebox{2pt}{
   \xymatrix@R=-4pt@C=49pt{
     \Omega^\bullet_{\mathrm{dR}}
     \Big(\!
       \big(
         \orbisingular
         (X \!\sslash\! G)
       \big)
       \!\sslash\! T
     \Big)
     \ar@{<--}[rr]^-{
       \mbox{
         \tiny
         \color{greenii}
         \bf
         \def\arraystretch{1}
         \begin{tabular}{c}
           flat canonically-twisted
           \\
           proper $G$-equivariant \& Borel $T$-equivariant
           \\
           $\underline{\mathfrak{g}}$-valued differential form
         \end{tabular}
       }
     }
     \ar@{<-}[dr]_-{ \tau^{\mathrm{can}}_{\mathrm{dR}} }
     &&
     \mathrm{CE}\big(\, \underline{\widehat{\mathfrak{b}}} \, \big)
     \ar@{<-}[dl]^-{
       \mathrm{CE}(\, \underline{\mathfrak{p}}\,)
     }
     \\
     &
     \mathrm{CE}
     \big(
       \mathfrak{l}
       B T
     \big)
   }
   }
   \!\!\! \right\}\!.
   \end{array}
 \end{equation}

 \noindent {\bf (ii)} A {\it coboundary} between two such elements is
 defined, as in Def. \ref{CoboundariesBetweenFlatEquivariantLInfinityAlgebraValuedForms},
 by a concordance form on the cylinder orbifold:
 \begin{equation}
   \label{CoboundaryBetweenCanonicallyTwistedForms}
   \widetilde A
   \;\;
     \in
   \;\;
   \Omega^{
     p_X^\ast(
       \tau^{\mathrm{can}}_{\mathrm{dR}}
     )
   }_{\mathrm{dR}}
   \Big(\!
     \big(
       \orbisingular
       \big(
         (X \times \mathbb{R}) \!\sslash\! G
       \big)
     \big)
     \!\sslash\!
     T
     ;
     \,
     \underline{\mathfrak{g}}
   \Big).
 \end{equation}

 \noindent
 The corresponding
 twisted equivariant non-abelian
 de Rham cohomology is defined, as in Def. \ref{TwistedEquivariantNonAbelianDeRhamCohomology},
 to be the set of coboundary-classes of
 the elements in the set \eqref{SetOfCanonicallyTwistedForms}:
 $$
   H^{\tau^{\mathrm{can}}_{\mathrm{dR}}}_{\mathrm{dR}}
   \Big(\!
     \big(
       \orbisingular
       (
         X \!\sslash\! G
       )
     \big)
     \!\sslash\!
     T
     ;
     \,
     \underline{\mathfrak{g}}
   \Big)
   \;\;
     :=
   \;\;
   \Omega^{\tau^{\mathrm{can}}_{\mathrm{dR}}}_{\mathrm{dR}}
   \Big(\!
     \big(
       \orbisingular
       (
         X \!\sslash\! G
       )
     \big)
     \!\sslash\!
     T
     ;
     \,
     \underline{\mathfrak{g}}
   \Big)_{\big/ \sim}.
 $$

\end{defn}

In Borel-equivariant generalization of \cite[Prop. 3.86]{FSS23-Char},
we have:

\begin{prop}[Reproducing traditional Borel-Weil-Cartan equivariant de Rham cohomology]
  \label{ReproducingTraditionalBorelEquivariantdeRhamCohomology}
  For the case of trivial proper equivariance, $G = 1$,
  consider
  $T \acts \; X  \,\in\, \TActionsOnSmoothManifolds$
  (Def. \ref{ProperGActionsOnSmoothManifolds})
  and let the equivariant $L_\infty$-algebraic
  coefficient bundle \eqref{EquivariantLInfinityAlgebraicCoefficientBundleForProperAndBoreldeRham}
  be the trivial bundle with fiber the
  line Lie $n$-algebra $\mathfrak{b}^{n+1}\mathbb{R}$ (\cite[Ex. 3.27]{FSS23-Char}).
    Then the
  canonically twisted proper $G$-equivariant \& Borel $T$-equivariant
  non-abelian de Rham cohomology of $X$
  (Def. \ref{ProperGEquivariantAndBorelTEquivariantTwistedNonabelianDeRhamCohomology})
  reduces to the traditional Borel-Weil-Cartan equivariant
  de Rham cohomology
  (the cochain cohomology of the Cartan model complex \eqref{CartanModelOnFixedLocus})
  in degree $n$:
  $$
    \xymatrix{
      \overset{
        \mathclap{
        \raisebox{3pt}{
          \tiny
          \color{darkblue}
          \bf
          \def\arraystretch{1}
          \begin{tabular}{c}
            Borel-Weil-Cartan equivariant
            \\
            de Rham cohomology
          \end{tabular}
        }
        }
      }{
      H_{\mathrm{dR},T}^n
      \big(
        X
      \big)
      }
      \;\;
        \simeq
      \;\;
      H_{\mathrm{dR}}
      \big(
        X \!\sslash\! T
        ;
        \,
        \mathfrak{b}^n \mathbb{R}
      \big).
    }
  $$

\end{prop}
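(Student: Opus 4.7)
\medskip

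\noindent
\textbf{Proof proposal.} The strategy is to unwind the definitions and reduce everything to an elementary computation in the Cartan model. First, since $G = 1$, the proper equivariance collapses: the orbit category is terminal and $G$-equivariant dgc-algebras coincide with ordinary dgc-algebras. In particular, the proper $G$-equivariant \& Borel $T$-equivariant smooth de Rham complex from Example \ref{ProperGEquivariantAndBorelTEquivariantSmoothDeRhamComplex} reduces at the single stage to the ordinary Cartan model dgc-algebra
\[
  \Omega^\bullet_{\mathrm{dR}}\big(X \!\sslash\! T\big)
  \;\simeq\;
  \Big(
    \Omega^\bullet_{\mathrm{dR}}(X)
    \otimes
    \mathbb{R}\big[\{r^{\,a}_2\}_{a=1}^{\mathrm{dim}(T)}\big]
    ,\;
    d_{\mathrm{dR}} + r_2^{\,a} \wedge \iota_{t_a}
  \Big)^T,
\]
and the canonical de Rham twist $\tau^{\mathrm{can}}_{\mathrm{dR}}$ from Def. \ref{ChernWeildeRhamTwistOnTEquivariantGOrbifolds} becomes the evident inclusion of $\mathrm{CE}(\mathfrak{l}BT) \simeq \mathbb{R}[\{r^{\,a}_2\}]^T$.

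Next I would identify the coefficient bundle. A trivial fibration $\underline{\mathfrak{g}} \to \underline{\widehat{\mathfrak{b}}} \to \mathfrak{l}BT$ with fiber the line Lie $n$-algebra amounts at the level of CE-algebras to the tensor product $\mathrm{CE}(\underline{\widehat{\mathfrak{b}}}) \simeq \mathrm{CE}(\mathfrak{l}BT) \otimes \mathbb{R}[c]/(dc=0)$ with a single closed generator $c$ in the appropriate degree. By the universal property of the tensor product with a free commutative algebra, a morphism in the coslice under $\mathrm{CE}(\mathfrak{l}BT)$
\[
  \xymatrix@R=6pt@C=2em{
    \mathrm{CE}(\mathfrak{l}BT) \otimes \mathbb{R}[c]/(dc=0)
    \ar[rr]
    \ar@{<-}[dr]
    &&
    \Omega^\bullet_{\mathrm{dR}}(X \!\sslash\! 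T)
    \ar@{<-}[dl]^-{\tau^{\mathrm{can}}_{\mathrm{dR}}}
    \\
    & \mathrm{CE}(\mathfrak{l}BT)
  }
\]
is precisely the choice of a closed element in the Cartan model in the relevant degree, i.e. a Cartan-model cocycle representing a class in $H^n_{\mathrm{dR},T}(X)$.

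Finally I would show that the coboundary relation from Def. \ref{ProperGEquivariantAndBorelTEquivariantTwistedNonabelianDeRhamCohomology} matches cochain cohomology in the Cartan model. For this one uses that the Cartan model of the cylinder $X \times \mathbb{R}$ (with trivial $T$-action on the $\mathbb{R}$-factor) is naturally the tensor product of the Cartan model of $X$ with $\Omega^\bullet_{\mathrm{dR}}(\mathbb{R})$; a concordance \eqref{CoboundaryBetweenCanonicallyTwistedForms} on this cylinder, restricting to prescribed cocycles at $t=0$ and $t=1$, produces by fiber-integration along $[0,1]$ an explicit primitive exhibiting the difference of the two cocycles as a Cartan-model coboundary, and conversely any Cartan-model primitive extends linearly in $t$ to a concordance. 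The main obstacle is this last step: one must verify carefully that the $T$-invariant extension and the $T$-equivariant contraction operators on the cylinder commute appropriately with restriction to the endpoints, so that the bijection between concordance classes and Cartan cohomology classes is well-defined and natural. Combining the three steps yields the asserted identification
\[
  H^{\tau^{\mathrm{can}}_{\mathrm{dR}}}_{\mathrm{dR}}\big(X \!\sslash\! T;\, \mathfrak{b}^n\mathbb{R}\big)
  \;\simeq\;
  H^n_{\mathrm{dR},T}(X).
\]
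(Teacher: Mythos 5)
Your proposal takes essentially the same route as the paper's proof: first reduce the coslice hom-set to closed degree-$n$ elements of the Cartan model, then match the concordance relation on the cylinder $X \times \mathbb{R}$ to Cartan-model coboundary via fiber integration along $[0,1]$ and the fiberwise Stokes theorem. The one point worth sharpening in your closing "main obstacle" remark is that the precise compatibility needed is simply that the Cartan correction term $r^{\,a}_2 \wedge \iota_{t_a}$ anti-commutes with $\int_{[0,1]}$, so the untwisted Stokes argument of \cite[Prop.\ 3.86]{FSS20d} carries over verbatim.
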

\begin{proof}
  From unravelling the definitions
  it is clear that, under the given assumptions, the
  defining set of cochains \eqref{SetOfCanonicallyTwistedForms}
  reduces to the set of closed degree $n$ elements in the
  Cartan model complex \eqref{CartanModelOnFixedLocus}
  on $X = X^1$.
    Hence, given any pair of such, it is sufficient to see
  that the coboundaries according to \eqref{CoboundaryBetweenCanonicallyTwistedForms}
  exist precisely if a coboundary with respect to the
  Cartan model differential $d_{\mathrm{dR}} + r^{\, a}_2 \wedge \iota_{\, t_a}$
  exists.

  In the case when the second summand $r^{\, a}_2 \wedge \iota_{\, t_a}$
  vanishes, this is shown by the proof in \cite[Prop. 3.86]{FSS23-Char},
  using the fiberwise Stokes theorem for fiber integration
  over $[0,1] \subset \mathbb{R}$.
  Inspection shows that
  this proof generalizes verbatim in the presence of the
  second summand in the Cartan differential, using that this second summand
  evidently anti-commutes with the
  fiber integration operation:
  $$
    r^{\, a} \wedge \iota_{\, t_a} \int_{[0,1]} \widetilde C
    \;=\;
    -
    \int_{[0,1]}
    r^{\, a} \wedge \iota_{\, t_a}
    \widetilde C
    \,.
  $$

\vspace*{-1.3\baselineskip}
\end{proof}

\begin{remark}[Localization in gauge theory]
Prop. \ref{ReproducingTraditionalBorelEquivariantdeRhamCohomology}
means that the equivariant de Rham cohomology considered
here subsumes the traditional Borel-equivariant de Rham
cohomology that is used, for instance, in localization
of gauge theories (see \cite{Pestun12}\cite{PZ+17}), and generalizes
it to finite proper equivariance groups and to non-abelian
coefficients.
\end{remark}

In equivariant generalization of \cite[Ex. 3.96]{FSS23-Char}, we have:

\begin{example}[Flat equivariant twistorial differential forms]
  \label{FlatEquivariantTwistorialDifferentialForms}
Consider the equivariant relative Whitehead $L_\infty$-algebra
\eqref{MinimalModelOfSpinParametrizedTwistorSpaceModZ2}
of
$\Grefl$-equivariant \& $\SpLR$-parametrized
twistor space \eqref{GHetEquivariantSpLRParametrizedTwistorSpace}
(from Thm. \ref{Z2EquivariantRelativeMinimalModelOfSpin3ParametrizedTwistorSpace})
as an equivariant $L_\infty$-algebraic local coefficient
bundle \eqref{EquivariantLInfinityAlgebraicCoefficientBundle}
\vspace{-1mm} 
\begin{equation}
  \label{WhiteheadLInfinityOfEquivariantParametrizedTwistorSpaceAsLocalCoefficients}
  \raisebox{30pt}{
  \xymatrix@R=1.6em{
    \mathfrak{l}
    \orbisingular
    \big(
      \mathbb{C}P^3
      \!\sslash\!
      \Grefl
    \big)
    \ar[r]
    &
    \mathfrak{l}_{
      B \SpLR
    }
    \big(
    \orbisingular
    \big(
      \mathbb{C}P^3
      \!\sslash\!
      \Grefl
    \big)
      \!\sslash\!
      \SpLR
    \big)
    \ar[d]^-{
      \rho_{\scalebox{0.6}{$
        \orbisingular
        \left(
          \mathbb{C}P^3
          \!\sslash\!
          \Grefl
        \right)
        $}
      }
    }
    \\
    &
    \mathfrak{l} B \SpLR
  }
  }
\end{equation}

\noindent
Let $X \in \ZTwoActionsOnSmoothManifolds$
(Def. \ref{ProperGActionsOnSmoothManifolds})
be a spin 8-manifold
with fixed locus \eqref{FixedLoci} denoted

\vspace{-1mm} 
\begin{equation}
  \label{TheGeneralSpacetimeOrbifold}
  \orbisingular
  \big(
    X^{} \!\sslash\! \ZTwo
  \big)
  \;\;\;\;
  :
  \;\;\;\;
  \raisebox{20pt}{
  \xymatrix@R=.8em{
    \ZTwo/1
    \ar@(ul,ur)|-{\; \ZTwo \, }
    \ar[d]
    &\longmapsto&
    X^{11}
    \ar@(ul,ur)|-{\; \ZTwo \,}
    \ar@{<-^{)}}[d]
    \\
    \ZTwo/\ZTwo
    &\longmapsto&
    \mathclap{\phantom{\vert^{\vert^{\vert}}}}
    X^{\Grefl}
  }
  }
\end{equation}

\noindent
and equipped with $\ZTwo$-invariant $\SpLR$-structure $\tau$,
compatible $\Grefl$-invariant $\SpLR$-connection
$\nabla \in \SpLR \mathrm{Connections}(X)$,
and corresponding tangential de Rham twist
(Example \ref{TangentialDeRhamTwistsOnGOrbifoldsWithTStructure})
$$
  \xymatrix@R=-3pt{
    \Omega^\bullet_{\mathrm{dR}}
    \big(
      \orbisingular
      (
        X \!\sslash\! \ZTwo
      )
    \big)
    \ar@{<-}[rr]^-{ \tau_{\mathrm{dR}} }
    &&
    \mathrm{CE}
    \big(
      \mathfrak{l} B \SpLR
    \big).
    \\
    \tfrac{1}{4}p_1(\nabla)
    \ar@{}[rr]|-{ \longmapsfrom }
    &&
    \tfrac{1}{4}p_1
      }
$$

\noindent
Then the set
of flat $\tau_{\mathrm{dR}}$-twisted equivariant differential  forms
(Def. \ref{SetOfFlatTwistedEquivariantLIninfityAlgebraValuedDifferentialForms})
with local coefficients in \eqref{WhiteheadLInfinityOfEquivariantParametrizedTwistorSpaceAsLocalCoefficients}
is of the following form:
  \begin{equation}
    \label{TheBianchiIdentities}
    \hspace{-2mm}
    \adjustbox{scale=0.85}{$
    \begin{aligned}
    &
    \overset{
      \mathclap{
      \raisebox{3pt}{
        \tiny
        \color{darkblue}
        \bf
        {\begin{tabular}{c}
          flat equivariant
          twistorial 
          \\[-2pt]
          differential forms
          on $\ZTwo$-orbifold $X$
        \end{tabular}}
      }
      }
    }{
      \Omega^{\tau_{\mathrm{dR}}}_{\mathrm{dR}}
      \Big(
      \!
      \orbisingular
      \big(
        X \!\sslash\! \ZTwo
      \big)
      ;
      \;
      \mathfrak{l}
      \orbisingular
      \big(
        \mathbb{C}P^3 \!\sslash\! \Grefl
      \big)
      \!
    \Big)_{\mathrm{flat}}
    }
    \\
    &
   % \phantom{AAA}
    =
    {\small
    \left\{\!
      \!\!\!
      {\begin{array}{c}
        \phantom{2} H_3,
        \\
        \phantom{2} F_2,
        \\
        2 G_7,
        \\
        \phantom{2} \widetilde G_4
        \\
        \in \Omega^\bullet_{\mathrm{dR}} \big(
        X^{11}
      \big)
      \end{array}}
      \!\!
      % \in \Omega^\bullet_{\mathrm{dR}}
      % \big(
      %   X^{11}
      % \big)
    \left\vert 
          \overset{
        \raisebox{3pt}{
          \tiny
          \color{orangeii}
          \bf
          \begin{tabular}{c}
            twisted Bianchi identities
            \\[-2pt]
            in bulk $\Grefl$-orientifold
          \end{tabular}
        }
      }{
      {\begin{aligned}
        d\, \phantom{2} H_3
          & =
          \widetilde G_4 - \tfrac{1}{2} p_1(\nabla) -  F_2 \wedge F_2,
        \\
        d\, \phantom{2} F_2 & = 0,
        \\
        d\, 2G_7 &
          = - \widetilde G_4 \wedge
          \big(
            \widetilde G_4 -
            \tfrac{1}{2}p_1(\nabla)
          \big)
        \\
        d\, \widetilde G_4 & = 0,
      \end{aligned}}
      }
    \right.
    \;\;
    \overset{
      \raisebox{4pt}{
      \tiny
      \color{orangeii}
      \bf
      {\begin{tabular}{c}
        restriction to
        $\Grefl$-fixed locus
      \end{tabular}}
      }
    }
    {\begin{aligned}
      d H_3\vert_{X^{\scalebox{.5}{$\Grefl$}}}
        & =
        - \tfrac{1}{2}p_1
        \big(
          \nabla\vert_{X^{\scalebox{.5}{$\Grefl$}}}
        \big)
        - F_2 \wedge F_2\vert_{X^{\scalebox{.5}{$\Grefl$}}}
      \\
      \phantom{F_2}
      \\
      G_7\vert_{X^{\scalebox{.5}{$\Grefl$}}}
      &
      = 0,
      \\
      \widetilde G_4\vert_{X^{\scalebox{.5}{$\Grefl$}}}
      &
      = 0
    \end{aligned}}
   \! \right\}.
    }
  \end{aligned}
  $}
\end{equation}

\noindent
This follows as an immediate consequence of Prop. \ref{Z2EquivariantRelativeMinimalModelOfSpin3ParametrizedTwistorSpace},
according to which an element $\mathcal{F}$ of this set of forms
is a morphism of equivariant dgc-algebras
of the following form 
  \begin{equation}
    \label{EquivariantDgcHomomorphismForTwistorialDifferentialForms}
    \hspace{-2mm}
    \adjustbox{scale=0.78}{$
    \mathcal{F}
    :
    \!\!
  \raisebox{40pt}{
  \xymatrix@C=-20pt@R=1.8em{
    \ZTwo/1
    \ar@{->}[d]
    \ar@(ul,ur)|-{\;\ZTwo}
    &
    \phantom{--}
    \longmapsto
    \phantom{--}
    &
    \Omega^\bullet_{\mathrm{dR}}(X)
    \ar@{<-}[rr]^-{
      \scalebox{.7}{$
        \arraycolsep=1.6pt\def\arraystretch{1}
        \begin{array}{lcl}
          H_3 &\mapsfrom& h_3
          \\
          F_2 &\mapsfrom& f_2
                       \end{array}
                       $}
                       }_-{
        \scalebox{.7}{$
        \arraycolsep=1.6pt
        \def\arraystretch{1}
        \begin{array}{lcl}
                    2 G_7 &\mapsfrom& \omega_7
          \\
          \widetilde G_4 &\mapsfrom& \widetilde \omega_4
        \end{array}
      $}
    }
    \ar[d]^-{
      \scalebox{.8}{$
        \arraycolsep=1.6pt
        \def\arraystretch{.7}
        \begin{array}{c}
          \alpha
          \\
          \mapsdown
          \\
          \alpha \vert_{X^{\ZTwo}}
        \end{array}
      $}
    }
    &
    {\phantom{AAAAAAAAA}}
    &
    \mathrm{CE}
    \big(
      \mathfrak{l} B \SpLR
    \big)
    \!\!
    \left[
      \!\!
      {\begin{array}{c}
        h_3,
        \\
        f_2
        \\
        \omega_7,
        \\
        \widetilde \omega_4
      \end{array}}
      \!\!
    \right]
    \!\big/\!
    \left(
      {\begin{aligned}
        d\, h_3  & = \widetilde \omega_4 - \tfrac{1}{2}p_1 -  f_2 \wedge f_2
        \\[-4pt]
        d\, f_2 & = 0
        \\[-4pt]
        d\, \omega_7
          & =
          -
          \widetilde \omega_4 \wedge
          \big(
            \widetilde \omega_4 - \tfrac{1}{2}p_1
          \big)
        \\[-4pt]
        d\, \widetilde \omega_4 & = 0
      \end{aligned}}
    \right)
    \ar@<-42pt>@{->>}[d]
    \\
    \ZTwo/\ZTwo
    &\longmapsto&
    \Omega^\bullet_{\mathrm{dR}}
    \big(
      X^{\ZTwo}
    \big)
    \ar@{<-}[rr]
    &&
    \mathrm{CE}
    \big(
      \mathfrak{l} B \SpLR
    \big)
    \!\!
    \left[
      \!\!
      {\begin{array}{c}
        h_3,
        \\
        f_2
      \end{array}}
      \!\!
    \right]
    \!\big/\!
    \left(
      {\begin{aligned}
        d\, h_3  & = \phantom{\omega_4}\; - \tfrac{1}{2}p_1 - f_2 \wedge f_2
        \\[-4pt]
        d\, f_2 & = 0
      \end{aligned}}
    \right)
    .
  }
  }
  $}
\end{equation}

\end{example}

%%%%%%%%%%%%%%%%%%%%%%%%%%%%%%%%%%%%%%%%%%%%%%%%%%%%%%%%%%
\subsection{Equivariant non-abelian character map}
  \label{TheEquivariantTwistedNonAbelianCharacterMap}
%%%%%%%%%%%%%%%%%%%%%%%%%%%%%%%%%%%%%%%%%%%%%%%%%%%%%%%%%%

The Chern character in K-theory is just one
special case of a plethora of character maps in a variety of
flavors of generalized cohomology theories.
As highlighted in \cite{FSS23-Char}\cite{SS24-Flux}, from the
point of view of homotopy-theoretic non-abelian
cohomology theory -- where all cohomology classes
are represented by (relative, parametrized)
homotopy classes of maps into a classifying space
(fibered, parametrized $\infty$-stack) --
character maps are naturally realized as the non-abelian
cohomology operations induced by {\it rationalization}
of the classifying space
(followed by a de Rham-Dold-type equivalence
bringing the resulting rational cohomology theory
into canonical shape).

\medskip
Seen through the lens of Elmendorf's theorem
(Prop. \ref{ElmendorfTheorem}),
rationalization in proper equivariant homotopy theory
(Def. \ref{EquivariantRationalization})
is stage-wise, on fixed loci, given by rationalization in
non-equivariant homotopy theory. Consequently,  the equivariant character
maps are fixed loci-wise given by non-equivariant characters,
hence are fixed loci-wise given by rationalization
(followed by a de Rham equivalence).

\medskip
For this reason, we will be brief here and refer to \cite{FSS23-Char} for background and further detail.
We just make explicit
now the concrete model of the equivariant
non-abelian character map by means of
the equivariant PL de Rham Quillen adjunction
from Prop. \ref{QuillenAdjunctionBetweenEquivariantSSetAndEquivariantdgcAlgebras}.
and then we discuss one example:
the character map in equivariant twistorial Cohomotopy theory.

\medskip

\noindent
{\bf The character map in equivariant non-abelian cohomology.}
$\,$

\smallskip 
\noindent In equivariant generalization of \cite[Def. 4.1]{FSS23-Char}, we set:

\begin{defn}[Rationalization in equivariant non-abelian cohomology]
 \label{RationalizationInEquivariantNonabelianCohomology}
Let $\mathscr{A} \in \EquivariantSimplyConnectedRFiniteHomotopyTypes$
(Def. \ref{SubcategoryOfEquivariantSimplyConnectedRFiniteHomotopyTypes}).
Then we say that
{\it rationalization in $\mathscr{A}$-cohomology}
is the equivariant non-abelian cohomology operation
(Def. \ref{EquivariantNonabelianCohomologyOperation})
from $\mathscr{A}$-cohomology to real $L_{\mathbb{R}}\mathscr{A}$-cohomology
which is induced \eqref{MorphismInducingCohomologyOperation}
by the rationalization unit \eqref{RationalizationUnitOnEquivariantHomotopyTypes}
on $\mathscr{A}$:
$$
  \xymatrix{
    H
    (
      -;
      \mathscr{A}
    )
    \ar[rr]^-{\scalebox{.8}{$
      (
              \eta^{\mathbb{R}}_{\scalebox{.7}{$\mathscr{A}$}}
      )_\ast
      $}
    }
    &&
    H
    (
      -;
      L_{\mathbb{R}}
      \mathscr{A}
    )
    \,.
  }
$$
\end{defn}

In an equivariant generalization of \cite[Def. 4.2]{FSS23-Char}, we set:

\begin{defn}[Equivariant non-abelian character map]
  \label{EquivariantNonAbelianCharacterMap}
  Let
  $ G \acts \; X \,\in\, \GActionsOnSmoothManifolds$
  (Def. \ref{ProperGActionsOnSmoothManifolds})
  and
  $
    \underline{\mathfrak{g}}
  $
  (Def. \ref{EquivariantLInfinityAlgebras}).
  Then the \emph{equivariant non-abelian character map}
  on
  equivariant non-abelian $\mathscr{A}$-cohomology
  (Def. \ref{EquivariantNonAbelianCohomology})
  over the orbifold $\orbisingular \big( X \sslash G\big)$
  (Def. \ref{GOrbifolds})
  is the composite of the
  rationalization cohomology operation (Def. \ref{RationalizationInEquivariantNonabelianCohomology})
  with the equivariant non-abelian de Rham theorem
  (Prop. \ref{EquivariantNonabelianDeRhamTheorem})
  over the orbifold $\orbisingular(X \!\sslash\! G)$
  (Def. \ref{GOrbifolds})
  \begin{equation}
    \label{TheEquivariantNonabelianCharacterMap}
    \hspace{5mm}
    \adjustbox{scale=0.9}{$
    \overset{
      \mathclap{
      \raisebox{3pt}{
        \tiny
        \color{greenii}
        \bf
        \def\arraystretch{1}
        \begin{tabular}{c}
          Equivariant non-abelian
          \\
          character map
        \end{tabular}
      }
      }
    }{
      \mathrm{ch}_{\scalebox{.7}{$\mathscr{A}$}}(X)
    }
    \;\;:
    \xymatrix@C=23pt{
      \underset{
        \mathclap{
        \raisebox{-3pt}{
          \tiny
          \color{darkblue}
          \bf
          \def\arraystretch{1}
          \begin{tabular}{c}
            equivariant non-abelian
            \\
            $\mathscr{A}$-cohomology
          \end{tabular}
        }
        }
      }{
      H
      \big(\!
        \orbisingular
        (
           X \!\sslash\! G
        )
        ;
        \,
        \mathscr{A}
      \big)
      }
      \ar[rr]_-{\scalebox{.8}{$
        \big( \eta^{\mathbb{R}}_{\scalebox{.7}{$\mathscr{A}$}}\big)_\ast
     $}
      }^-{\!\!\!\!\!
        \adjustbox{
          raise=5pt,
          scale=.65
        }{
          \color{greenii}
          \bf
          rationalization
        }
      }
      \;
      &&
      H
      \big(\!
        \orbisingular
        (
           X \!\sslash\! G
        )
        ;
        \,
        L_{\mathbb{R}}\mathscr{A}
      \big)
      \ar[d]_-{ 
        \rotatebox[origin=c]{90}{$\sim$}
      }^-{\!\!\!\!
        \mbox{
          \hspace{-11pt}
          \tiny
          \color{greenii}
          \bf
          \def\arraystretch{1}
          \begin{tabular}{c}
            equivariant non-abelian
            \\
            de Rham theorem
          \end{tabular}
        }
      }
      \\
      &&
      \underset{
        \mathclap{
        \raisebox{-3pt}{
          \tiny
          \color{darkblue}
          \bf
          \def\arraystretch{1}
          \begin{tabular}{c}
            equivariant non-abelian
            de Rham cohomology
            \\
            with coefficient in
            equivariant Whitehead $L_\infty$-algebra
          \end{tabular}
        }
        }
      }{
      H_{\mathrm{dR}}
      \big(\!
        \orbisingular
        (
           X \!\sslash\! G
        )
        ;
        \,
        \mathfrak{l}\mathscr{A}
      \big).
      }
    }
    $}
  \end{equation}

\end{defn}

\medskip

\noindent {\bf The character map in twisted equivariant non-abelian cohomology.}

  \noindent  In equivariant generalization of \cite[Def. 5.2]{FSS23-Char}, we set:

\begin{defn}[Rationalization in twisted equivariant non-abelian cohomology]
 Let $\rho_{\scalebox{.7}{$\mathscr{A}$}}$
 be an equivariant local coefficient bundle
 of equivariantly 1-connected $G$-spaces of
 finite $\mathbb{R}$-homotopy type, which admits
 an equivariant relative minimal model; all as in
 Notation \ref{EquivariantLocalCoefficientBundleWithRelativeMinimalModel}.
 Then {\it rationalization} in twisted equivariant non-abelian cohomology
 with local coefficients in $\rho_{\scalebox{.7}{$\mathscr{A}$}}$
 (Def. \ref{EquivariantTwistedNonAbelianCohomology})
 is the equivariant non-abelian cohomology operation
 $$
   \big(
     \eta^{\mathbb{R}}_{
       \scalebox{.67}{$
         \rho_{
           \scalebox{.7}{$
             \mathscr{A}
           $}
         }
       $}
     }
   \big)_\ast
   \;\;:\;\;
   \xymatrix{
     H^\tau
     \big(
       \mathscr{X}
       ;
       \,
       \mathscr{A}
     \big)
     \ar[rrrr]^-{
       \scalebox{.75}{$
       \big(
         \mathbb{D}\eta^{\mathrm{PLdR}}_{\rho_{\scalebox{.7}{$\mathscr{A}$}}}
         \,\circ\,
         (-)
       \big)
       \,\circ\,
       \mathbb{L}
       \big(
         \eta^{\mathbb{R}}_{\scalebox{.7}{$B \mathscr{G}$}}
       \big)_!
       $}
     }
     &&&&
     H^{L_{\mathbb{R}}\tau}
     \big(
       \mathscr{X}
       ;
       \,
       L_{\mathbb{R}}\mathscr{A}
     \big)
   }
 $$

\noindent  which is induced
 (as shown in \cite[(264)]{FSS23-Char})
 by the pasting composite with the
 naturality square on $\rho_{\mathscr{A}}$
 of the rationalization unit
 (Def. \ref{EquivariantRationalization}).
 By the fundamental theorem
 (Prop. \ref{FundamentalTheoremOfdgcAlgebraicEquivariantRationalHomotopyTheory}),
 this means explicitly:
 the left derived base change
 (e.g. \cite[Ex. A.18]{FSS23-Char})
 along the PLdR-adjunction unit
 (Prop. \ref{QuillenAdjunctionBetweenEquivariantSSetAndEquivariantdgcAlgebras})
 on $B \mathscr{G}$ followed by composition with the
 following commuting square, regarded as a morphism in the
 slice over its bottom right object:
 \vspace{2mm} 
 $$
  % \hspace{-.1cm}
   \mathbb{D} \eta^{\mathbb{R}}_{
     \scalebox{.67}{$
       \rho_{
         \scalebox{.67}{$
           \mathscr{A}
         $}
       }
     $}
   }
     := \!
  \left(\!\!\!\!
   \raisebox{28pt}{
   \xymatrix@C=15pt@R=3em{
     \mathscr{A}
       \!\sslash\!
     \mathscr{G}
     \ar[d]_-{
       \scalebox{.7}{$
         \rho_{\scalebox{.7}{$\mathscr{A}$}}
       $}
     }
     \ar[rr]_-{
       \;
       \scalebox{.7}{$
       \eta^{\mathrm{PLdR}}_{\scalebox{.7}{$\mathscr{A} \!\sslash\! \mathscr{G}$}}
       $}
       \;
     }
     \ar@/^1.6pc/[rrrr]|-{
       \;
       \scalebox{.7}{$
         \mathbb{D}
         \eta^{\mathrm{PLdR}}_{
           \scalebox{.7}{$\mathscr{A}\!\sslash\! \mathscr{G} $}
         }
         \;\simeq\;
         \eta^{\mathbb{R}}_{
           \scalebox{.7}{$\mathscr{A}\!\sslash\! \mathscr{G} $}
         }
       $}
       \;
     }
     &&
     \exp
       \,\circ\,
     \Omega_{\mathrm{PLdR}}
     \big(
       \mathscr{A}
         \!\sslash\!
       \mathscr{G}
     \big)
     \ar[rr]_-{
       \;
       \scalebox{.7}{$
         p^{
           \mathrm{min}_{\scalebox{.6}{$B \mathscr{G}$}}
         }_{
           \scalebox{.7}{$
             \mathscr{A} \!\sslash\! \mathscr{G}
           $}
         }
       $}
       \;
     }
     \ar[d]_-{
       \scalebox{.7}{$
         \exp \,\circ\, \Omega^\bullet_{\mathrm{PLdR}}
         \big(
           \rho_{\scalebox{.7}{$\mathscr{A}$}}
         \big)
       $}
     }
     &&
     \exp \,\circ\,
     \mathrm{CE}
     \big(
       \mathfrak{l}_{\scalebox{.6}{$B \mathscr{G}$}}
       (\mathscr{A} \!\sslash\! \mathscr{G} )
     \big)
     \ar[d]_-{
       \exp
       \,\circ\,
       \mathrm{CE}
       (
         \mathfrak{l}
         \rho_{\scalebox{.7}{$\mathscr{A}$}}
       )
     }
     \\
     B \mathscr{G}
     \ar[rr]^-{
       \;
       \scalebox{.7}{$
       \eta^{\mathrm{PLdR}}_{\scalebox{.7}{$B \mathscr{G}$}}
       $}
       \;
     }
     \ar@/_1.6pc/[rrrr]|-{
       \;
       \scalebox{.7}{$
         \mathbb{D}
         \eta^{\mathrm{PLdR}}_{
           \scalebox{.7}{$B \mathscr{G} $}
         }
         \;\simeq\;
         \eta^{\mathbb{R}}_{
           \scalebox{.7}{$B \mathscr{G} $}
         }
       $}
       \;
     }
     &&
     \exp
     \,\circ\,
     \Omega^\bullet_{\mathrm{PLdR}}
     \big(
       B \mathscr{G}
     \big)
     \ar[rr]^-{
       \;
       \scalebox{.7}{$
         p^{
           \mathrm{min}
         }_{
           \scalebox{.7}{$
             B \mathscr{G}
           $}
         }
       $}
       \;
     }
     &&
     \exp \,\circ\,
     \mathrm{CE}
     \big(
       \mathfrak{l}
       ( B \mathscr{G} )
     \big)
   }
   }
\!\!\!\!   \right).
 $$
 
\vspace{2mm} 
 \noindent
 Here the left-hand side is the naturality square of the
 equivariant PL de Rham adjunction (Prop. \ref{QuillenAdjunctionBetweenEquivariantSSetAndEquivariantdgcAlgebras}),
 while the right-hand side is the
 image under $\exp$ of the relative minimal model
  \eqref{EquivariantRelativeMinimalModelOfLocalCoefficientBundle}.
 (Hence the composite represents the naturality square of the
 derived PL de Rham adjunction unit, see e.g. \cite[Ex. A.21]{FSS23-Char}).

\end{defn}

In equivariant generalization of \cite[Def. 5.4]{FSS23-Char}, we set:
\begin{defn}[Twisted equivariant non-abelian character map]
  \label{TwistedEquivariantNonabelianCharacterMap}
 Let $G \acts \; X$  $\in\, \GActionsOnSmoothManifolds$
 (Def. \ref{ProperGActionsOnSmoothManifolds}),
 and let $\rho_{\scalebox{.7}{$\mathscr{A}$}}$
 be an equivariant local coefficient bundle
 of equivariantly 1-connected $G$-spaces of
 finite $\mathbb{R}$-homotopy type, which admits
 an equivariant relative minimal model; all as in
 Notation \ref{EquivariantLocalCoefficientBundleWithRelativeMinimalModel}.
  Then the {\it twisted equivariant non-abelian character map}
 is the twisted equivariant cohomology operation
 \begin{equation}
\label{TheTwistedEquivariantNonabelianCharacterMap}
   \hspace{.8cm}
     \overset{
     \mathclap{
     \raisebox{3pt}{
       \tiny
       \color{greenii}
       \bf
       \def\arraystretch{1}
       \begin{tabular}{c}
         twisted equivariant
         \\
         non-abelian character
       \end{tabular}
     }
     }
   }{
     \mathrm{ch}^\tau_{\scalebox{.67}{$\mathscr{A}$}}
   }
   \;\;
     :
   \xymatrix@C=25pt{
     \underset{
       \mathclap{
       \raisebox{-3pt}{
         \tiny
         \color{darkblue}
         \bf
         \begin{tabular}{c}
           twisted equivariant
           \\[-2pt]
           non-abelian 
           $\mathscr{A}$-cohomology
         \end{tabular}
       }
       }
     }{
     H^\tau
     \big(
       \orbisingular
       (
         X \!\sslash\! G
       )
       ;
       \,
       \mathscr{A}
     \big)
     }
     \ar[rr]_-{
       \big(
         \eta^{\mathbb{R}}_{
           \scalebox{.67}{$
             \rho_{\scalebox{.67}{$\mathscr{A}$}}
           $}
         }
       \big)_\ast
     }^-{\!\!\!\!
       \mathclap{
       \mbox{
         \tiny
         \color{greenii}
         \bf
         rationalization
       }
       }
     }
     &&
     H^{L_{\mathbb{R}}\tau}
     \big(
       \orbisingular
       (
         X \!\sslash\! G
    )
       ;
       \,
       L_{\mathbb{R}}\mathscr{A}
     \big)
     \ar[d]_-{ 
       \rotatebox[origin=c]{-90}{$\sim$} 
     }^-{\!\!\!\!
       \mathrlap{
       \mbox{
         \hspace{-9pt}
         \tiny
         \color{greenii}
         \bf
         \def\arraystretch{1}
         \begin{tabular}{c}
           equivariant twisted 
           \\
           non-abelian
           \\
           de Rham theorem
         \end{tabular}
       }
       }
     }
     \\
     &&
     \underset{
       \mathclap{
       \raisebox{-3pt}{
         \tiny
         \color{darkblue}
         \bf
         \def\arraystretch{1}
         \begin{tabular}{c}
           twisted equivariant
           \\
           non-abelian de Rham cohomology
         \end{tabular}
       }
       }
     }{
     H^{\tau_{\mathrm{dR}}}
     \big(
       \orbisingular
       (
         X \!\sslash\! G
       )
       ;
       \,
       \mathfrak{l}\mathscr{A}
     \big)
     }
   }
 \end{equation}

\noindent
from twisted equivariant non-abelian cohomology
(Def. \ref{EquivariantTwistedNonAbelianCohomology})
with local coefficients in $\rho_{\scalebox{.7}{$\mathscr{A}$}}$
to twisted equivariant non-abelian de Rham cohomology
(Def. \ref{TwistedEquivariantNonAbelianDeRhamCohomology})
with coefficients in $\mathfrak{l}\rho_{\scalebox{.7}{$\mathscr{A}$}}$
(as in Notation \ref{EquivariantLocalCoefficientBundleWithRelativeMinimalModel}).

\end{defn}

Finally, we have:

\begin{remark}[Proof of Theorem \ref{FluxQuantizationInEquivariantTwistorialCohomotopy}]
\label{ProofOfMainTheorem}
We collect our results:

\noindent
{\bf (i)} That the Bianchi identities
in the
twistorial character map are as shown
on p. \pageref{FluxQuantizationInEquivariantTwistorialCohomotopy}
follows by Prop. \ref{Z2EquivariantRelativeMinimalModelOfSpin3ParametrizedTwistorSpace},
as discussed in Example \ref{FlatEquivariantTwistorialDifferentialForms}.

\noindent {\bf (ii)} That the quantization conditions in the twistorial
character are as shown in \eqref{IntegralityConditions}
follows by observing that the twisted equivariant character map
(Def. \ref{TwistedEquivariantNonabelianCharacterMap}) is
fixed-locus wise equivalent to the corresponding non-equivariant
twisted character map \cite[Def. 5.4]{FSS23-Char}
(for instance by the fundamental theorem, Prop. \ref{FundamentalTheoremOfdgcAlgebraicEquivariantRationalHomotopyTheory},
using that the equivariant PL de Rham adjunction is stage-wise
given by the non-equivariant PL de Rham adjunction,
Prop. \ref{QuillenAdjunctionBetweenEquivariantSSetAndEquivariantdgcAlgebras}).

\noindent {\bf (iii)}  In particular, at global stage $\Grefl/1 \,\in\, \ZTwo \mathrm{Orb}$
on the bulk $X^1 = X$, the equivariant
twistorial character restricts to the
non-equivariant twistorial character map for which the claimed flux
quantization conditions have been proven in
\cite[Prop. 3.13]{FSS20b}\cite[Thm. 4.8]{FSS20c}\cite[Cor. 3.11]{FSS20c}, see also \cite[\S 5.3]{FSS23-Char}.
\end{remark}
This establishes Thm. \ref{FluxQuantizationInEquivariantTwistorialCohomotopy}.

%%%%%%%%%%%%%%%%%%%%%%%%%%%%%%%%%%
\section{Application to flux-quantization}
\label{ApplicationToFLuxQuantization}
%%%%%%%%%%%%%%%%%%%%%%%%%%%%%%%%%%

Here we briefly indicate the meaning and significance of the above algebro-topological result in and to theoretical physics, specifically concerning the problem of ``flux quantization'' 
\cite{SS24-Flux}
in a candidate theory of strongly-coupled quantum systems going by the working title ``M-theory'' \cite{Duff99}.

\smallskip

\noindent
{\bf Cohomology and Gauge fields.}
Beyond all the details, a remarkable general fact --- that the applied algebraic topologists may find entertaining --- is the fundamental role that cohomology (generalized, twisted, equivariant, differential, non-abelian, ...) has come to play in the fine-grained description of gauge fields (``force fields'') in fundamental physics, especially of ``higher gauge fields'' -- whose flux-densities are higher-degree differential forms on spacetime satisfying differential ``Bianchi'' or ``Gau{\ss} law'' equations -- that appear in attempts to fill certain gaps in the contemporary understanding of fundamental physics.

\smallskip 
In short, such flux densities are to be regarded as but the character images \eqref{CharacterInIntroduction} of classes in some (generalized non-abelian) cohomology theory, the choice of which is a {\it flux-quantization law} that controls global (brane-) {\it charges} imprinted on the gauge field, and the further refinement of these to cocycles in {\it differential} cohomology encodes the ``gauge potentials'' typically discussed in the physics literature, 
on which the eponymous gauge transformations are given by the corresponding coboundaries.

\medskip

\def\arraystretch{.5}
\begin{tabular}{c}
\hypertarget{TableCG}{}
\def\tabcolsep{1.3}
\begin{minipage}{12cm}
  \footnotesize
  {\bf Table CG.}
  While cohomology has of course many and diverse applications, in physics no less than in other fields, the role of cohomology  specifically in the {\it global} description of (higher) gauge fields (``force fields'') is profound: In a generalization of the seminal historical observation (``Dirac charge quantization'') that electromagnetic field configurations are globally to be identified with 2-cocycles in ordinary differential cohomology of spacetime, higher gauge field species are similarly to be identified with generalized cohomology theories whose further properties and attributes closely reflect the field's physical nature, as indicated on the right.
\end{minipage}
\\
\\
{\small 
\def\arraystretch{1.3}
\def\tabcolsep{10pt}
\begin{tabular}{|c|c|}
  \hline
  {\bf Cohomology}
  &
  {\bf Gauge fields}
  \\
  \hline
  \rowcolor{lightgray}
  -Theory
  &
  Flux quantization law
  \\
  \rowcolor{white}
  Cocycle 
  &
  Field configuration
  \\
  \rowcolor{lightgray}
  Coboundary
  &
  Gauge transformation
  \\
  \rowcolor{white}
  Character 
    & 
  Flux densities    
  \\
  \rowcolor{lightgray}
  Ordinary-
  &
  Electromagnetic
  \\
  \rowcolor{white}
  Differential-
  &
  Gauge potentials
  \\
  \rowcolor{lightgray}
  Twisted-
  &
  Background fields
  \\
  \rowcolor{white}
  Equivariant-
  &
  on orbifolds
  \\
  \rowcolor{lightgray}
  Real- & on orientifolds
  \\
  \rowcolor{white}
  Nonabelian-
  &
  Nonlinear Gau{\ss} law
  \\
  \hline
\end{tabular}
}
\end{tabular}

\medskip

Conversely, this means that a fair amount of algebro-topological sophistication may be needed to propose or construct a cohomology theory suitable for flux quantization of a given higher gauge theory, and then to deduce its implications to be compared with physical expectations and, ultimately, with experiment. Much room is left here for applied algebraic topologists to get involved.

We briefly indicate how the equivariant twistorial Cohomotopy from the main text is motivated as a flux-quantization law, and what some of its implications are:

\medskip
\noindent
{\bf Characters arising in supergravity.}
With the character map 
\eqref{CharacterInIntroduction}
describing which generalized cohomology theories $\mathscr{A}$ may serve as flux-quantization laws for given generalized Gau{\ss} laws $\mathfrak{l}\mathscr{A}$ on flux densities, we have to ask: 
{\it What are natural generalized such Gau{\ss} laws $\mathfrak{l}\mathscr{A}$?}
Remarkably, a profound source is {\it super-gravity}, in the following way (pointers in \cite{GSS24-SuGra}):

\smallskip

It is a century-old observation due to {\'E}. Cartan that a field configuration of gravity is most usefully understood as a torsion-free {\it coframe field} $E$ (Cartan's ``moving frame'') on spacetime with coefficients in the typical tangent space $\mathbb{R}^{1,d}$ (Minkowski spacetime), subject to a corresponding ``1st order''-formulation of Einstein's equations. A miracle happens as this situation is generalized from ordinary tangent spaces to tangent super-spaces $\mathbb{R}^{1,d\,\vert\, \mathbf{N}}$, meaning to super-vector spaces (namely: $\ZTwo$-graded vector spaces regarded with the unique non-trivial symmetric braided monoidal category structure) whose odd component carries the structure of a real spinor representation $\mathbf{N} \,\in\, \mathrm{Rep}_{\mathbb{R}}\big(\mathrm{Spin}(1,d)\big)$:

\smallskip

\noindent
{\bf -- 11D Super-gravity.}
Namely a field configuration of 11D super-gravity is a {\it super}torsion-free super-coframe field $(E, \Psi)$
on super-spacetime with coefficients in $\mathbb{R}^{1,10\,\vert\, \mathbf{32}}$, where -- remarkably -- the corresponding Einstein-Rarita-Schwinger equations of motion are now {\it equivalent} 
\cite[Thm. 3.1]{GSS24-SuGra}
simply to the statement that flux super-densities of the following form (meaning: super-differential forms whose local expansion in the co-frame field is of this prescribed form): \footnote{
  In \eqref{11DSuperFluxDensities} we include a conventional sign in the definition of $G_7^s$ to comply with the sign convention used in the main text. 
}
\vspace{-2mm} 
\begin{equation}
  \label{11DSuperFluxDensities}
  \def\arraystretch{1.7}
  \begin{array}{ccl}
  G^s_4 
  &\defneq&
  \phantom{+}
  (G_4)_{a_1 \cdots a_4}
  E^{a_1}\cdots E^{a_4}
  \,+\,
  \tfrac{1}{2}\big(
    \overline{\Psi}
    \,\Gamma_{a_1 a_2}\,
    \Psi
  \big)
  E^{a_1} E^{a_2}
  \,,
  \\
  G^s_7 
  &\defneq&
  -
  (G_7)_{a_1 \cdots a_7}
  E^{a_1}\cdots E^{a_7}
  \,-\,
  \tfrac{1}{5!}\big(
    \overline{\Psi}
    \,\Gamma_{a_1 \cdots a_5}\,
    \Psi
  \big)
  E^{a_1} \cdots E^{a_5}
  \end{array}
\end{equation}
satisfy the non-linear Bianchi/Gau{\ss} law encoded by the Whitehead $L_\infty$-algebra of the 4-sphere:
\begin{equation}
  \label{SuperBianchiAndEOMOf11DSuGra}
  \def\arraystrech{2}
  \begin{array}{rcl}
  \begin{tikzcd}[column sep=large]
  X
  \ar[
    r,
    dashed,
    "{
      (
        G_4^s
        ,\,
        G_7^s
      )    
    }"
  ]
  &
  \Omega^1(-;\, \mathfrak{l}
  S^4)_{\mathrm{flat}}
  \end{tikzcd}
  &
  \Leftrightarrow
  &
  \left\{\!\!\!
  \adjustbox{raise=1pt}{
  \begin{tikzcd}[
    row sep=-2pt,
    column sep=0pt
  ]
  \mathrm{d}\, G_4^s
  &=&
  0
  \\
  \mathrm{d}\, G_7^s
  &=&
  -\tfrac{1}{2}G^s_4\, G_4^s
  \end{tikzcd}
  }
  \right\}
  \\
  &
  \Leftrightarrow
  &
  \left\{\!\!\!\!\!
  \scalebox{.8}{
    \def\arraystretch{.9}
    \begin{tabular}{c}
      Equations of Motion
      \\
      of 11D Supergravity
      \\
      on supertorsion-free
      \\
      super-coframe $(E,\Psi)$
    \end{tabular}
  }
  \right.
  \end{array}
\end{equation}
(In particular, the equations of motion include the Hodge duality relation $G_7 \;=\; - \star \, G_4$ over the underlying ordinary spacetime.)

Hence the non-linear Gau{\ss} law $\mathfrak{l}S^4$ not only arises in but effectively {\it constitutes} 11D supergravity.
But the miracle does not end here:

\smallskip

\noindent
{\bf -- M5-brane probes.}
Given the above Cartan-geometric formulation of 11D super-gravity, all based on consideration of the Kleinian local model space $\mathbb{R}^{1,d\,\vert\, \mathbf{N}}$, it is natural to consider Kleinian sub-spaces and ask for their globalization to sub-supermanifolds of 11D spacetime. These are the ``worldvolumes'' of ``probe super-branes''.
Concretely, any Clifford algebra basis element $\Gamma_{p+1 \cdots \ten} \,\in\, \mathrm{Pin}^+(1,d)$ which squares to $+ 1$ (a ``$p$-brane involution'' \cite[\S 4.1]{HSS18}) corresponds to a projection operator on the Kleinian model space
\begin{equation}
  \label{pBraneProjection}
  P \;:=\;
  \tfrac{1}{2}\big(
    \mathrm{id}
    \,+\,
    \Gamma_{p+1 \cdots \ten}
  \big)
  \;:\;
  \begin{tikzcd}
    \mathbb{R}^{1,d\,\vert\,\mathbf{N}}
    \ar[
      r,
      ->>
    ]
    &
    \mathbb{R}^{1,p\,\vert\, \mathbf{N}/2}
    \ar[
      r,
      hook
    ]
    &
    \mathbb{R}^{1,d\,\vert\,\mathbf{N}}
  \end{tikzcd}
\end{equation}
which projects out a sub-space $\mathbb{R}^{1,p\,\vert\,\mathbf{N}/2}$ of half the odd dimensionl (jargon: ``$\sfrac{1}{2}$BPS''). Thus we may ask for super-manifolds $\Sigma^{1,p\,\vert\,\mathbf{N}/2}$ carrying such a $\sfrac{1}{2}$BPS-valued coframe field $(e,\psi)$ and immersed into an ambient $X^{1,d\,\vert\, \mathbf{N}}$ with coframe field $(E,\Psi)$ such the inclusion relation \eqref{pBraneProjection} is suitably exhibited tangentspace-wise.

Such {\it $\sfrac{1}{2}$BPS super-immersions} (\cite[Def. 2.19]{GSS24-FluxOnM5} essentially known in the literature as ``super-embeddings'') exist in 11D supergravity in particular for $p = 5$, known as immersions of {\it probe M5-branes} into spacetime. Remarkably, the $\sfrac{1}{2}$BPS-immersion condition entails and is essentially implied by the existence of a 3-flux density super-form 
$$
  H^s_3 
  \;\defneq\;
  (H_3)_{a_1 a_2 a_3}
  e^{a_1} \, e^{a_2}\, e^{a_3}
$$
on the M5's worldvolume $\Sigma^{1,p\,\vert\,2\mathbf{8}_+}$, such that it is a coboundary for the pullback of the 4-flux to the worldvolume, and hence constitutes a lift to the Gau{\ss} law encoded by the quaternionic Hopf fibration $\mathfrak{l}_{S^4} S^7$:
\begin{equation}
  \label{GaussLawOnM5Immersion}
  \hspace{-6mm}
  \def\arraystretch{2}
  \begin{array}{rcl}
  \adjustbox{raise=3pt}{
  \begin{tikzcd}[column sep=huge, 
    row sep=15pt
  ]
    \Sigma^{1,5\,\vert\,2\cdot \mathbf{8}_+}
    \ar[
      r,
      dashed,
      "{
        H^s_3
      }"
    ]
    \ar[
      dd,
      "{ \phi }",
      "{
        \scalebox{.6}{
          \def\arraystretch{.8}
          \begin{tabular}{c}
          $\sfrac{1}{2}$BPS 
          \\
          immersion
          \end{tabular}
        }
      }"{sloped,  rotate=180}
    ]
    &
    \Omega^1_{\mathrm{dR}}\big(
      -;
      \mathfrak{l}_{S^4}
      \mathcolor{purple}{S^7}
    \big)_{\!\mathrlap{\mathrm{flat}}}
    \ar[
      dd,
      ->>,
      "{
        \mathfrak{l}(
          \mathbb{H}\scalebox{.7}{-Hopf fib.}
        )
      }"{description, pos=.4}
    ]
    \\
    \\
    X^{1,10\,\vert\,\mathbf{32}}
    \ar[
      r,
      "{
        (G_4^s,\, G_7^s)
      }"
    ]
    &
    \Omega^1_{\mathrm{dR}}\big(
      -;
      \mathfrak{l}S^4
    \big)_{\!\mathrlap{\mathrm{flat}}}
  \end{tikzcd}
  }
  &
  \Leftrightarrow
  &
  \left\{\!\!\!\!\!
  \adjustbox{raise=2pt}{
  \begin{tikzcd}[
    row sep=-3pt,
    column sep=0pt
  ]
    \mathrm{d}\,
    H_3 
      &=& 
    \phi^\ast G^s_4
    \\[+22pt]
    \mathrm{d}\, G_4
      &=&
    0
    \\
    \mathrm{d}\, G_7
    &=&
    - \tfrac{1}{2}G_4^s \, G_4^s
  \end{tikzcd}
  }
  \!\!\!\!\!\right\}
  \\
  &
  \Leftrightarrow
  &
  \left\{\!\!\!\!\!
  \scalebox{.8}{
    \begin{tabular}{c}
      $\sfrac{1}{2}$BPS immersion
      \\[-12pt]
      of M5-worldvolume
      \\[-12pt]
      in 11D SuGra solution.
    \end{tabular}
  }
  \right.
  \end{array}
\end{equation}

%\vspace{-2mm} 
\noindent This means that at this point, a valid flux quantization law for these fields is given by Cohmotopy: 4-Cohomotopy for the bulk C-field (as such proposed in \cite[\S 2.5]{Sati13} and developed in \cite{FSS19b}\cite{GradySati21}\cite{GSS24-SuGra}), twisting 3-Cohomotopy (classified by the $S^3$-fiber of the quaternionic Hopf fibration) on the brane's worldvolume (discussed in \cite{FSS19c}\cite{FSS21-TwistedString}\cite{GSS24-FluxOnM5}).

\medskip 
But here we take into account one more field:

\medskip

\noindent
{\bf -- Chern-Simons gauge field.} In view of this effective re-definition -- of on-shell 11D supergravity with probe branes -- in terms of (non-linear) Gau{\ss} laws for super-flux densities on super-space, we may go ahead and consider a further super-flux density
$$
  F_2^s
  \;\;
  \defneq
  \;\;
  (F_2)_{a_1 a_2}
  e^{a_1}\, e^{a_2}
$$
on the M5-worldvolume, subjected to the Gau{\ss} law for an ordinary gauge field, but again imposed on super-space:
$$
  \mathrm{d}
  \,
  F_2^s
  \;\;
  =
  \;\;
  0
  \,.
$$
Analysis of the super-components immediately shows that this is equivalent to the further equation of motion 
\begin{equation}
  \label{CSEquationOfMotion}
  F_2 \;=\; 0
\end{equation}
as befits a(n abelian) Chern-Simons gauge field.

While this equation of motion \eqref{CSEquationOfMotion} means that such super-flux $F^s_2$ is actually ``rationally invisible'', under flux-quantization it may still contribute pure torsion-effects to the other higher gauge fields: Namely if we add --- without changing the above equations of motion!, due to \eqref{CSEquationOfMotion} --- a summand of $F_2^s \, F_2^s$ to the Gau{\ss} law for $H^s_3$, then (according to Thm. \ref{FluxQuantizationInEquivariantTwistorialCohomotopy}) it is no longer controlled by the quaternionic Hopf fibration but by the twistor fibration:
\begin{equation}
  \label{GaussLawOnM5ImmersionWithCS}
  \hspace{-6mm}
  \def\arraystretch{2}
  \begin{array}{rcl}
  \adjustbox{raise=3pt}{
  \begin{tikzcd}[column sep=large, 
    row sep=15pt
  ]
    \Sigma^{1,5\,\vert\,2\cdot \mathbf{8}_+}
    \ar[
      r,
      dashed,
      "{
        (F_2^s,\, H^s_3)
      }"
    ]
    \ar[
      dd,
      "{ \phi }"
    ]
    &
    \Omega^1_{\mathrm{dR}}\big(
      -;
      \mathfrak{l}_{S^4}
      \mathcolor{purple}{
        \mathbb{C}P^3
      }
    \big)_{\!\mathrlap{\mathrm{flat}}}
    \ar[
      dd,
      ->>,
      "{
        \mathfrak{l}(
        \scalebox{.7}{twistor. fib.}
        )_\ast
      }"{description, pos=.4}
    ]
    \\
    \\
    X^{1,10\,\vert\,\mathbf{32}}
    \ar[
      r,
      "{
        (G_4^s,\, G_7^s)
      }"
    ]
    &
    \Omega^1_{\mathrm{dR}}\big(
      -;
      \mathfrak{l}S^4
    \big)_{\!\mathrlap{\mathrm{flat}}}
  \end{tikzcd}
  }
  &
  \Leftrightarrow
  &
  \left\{\!\!\!\!\!
  \adjustbox{raise=2pt}{
  \begin{tikzcd}[
    row sep=-4pt,
    column sep=0pt
  ]
    \mathrm{d}\,
    F_2^s &=& 0
    \\
    \mathrm{d}\,
    H^s_3 
      &=& 
    \phi^\ast G^s_4
    \\
    && - F_2^s\, F_2^s
    \\[+4pt]
    \mathrm{d}\, G^s_4
      &=&
    0
    \\
    \mathrm{d}\, G^s_7
    &=&
    - \tfrac{1}{2}G_4^s \, G_4^s
  \end{tikzcd}
  }
  \!\!\!\!\! \right\}
  \\[35pt]
  &
  \Leftrightarrow
  &
  \left\{\!\!\!\!\!
  \scalebox{.8}{
    \def\arraystretch{1}
    \begin{tabular}{c}
      $\sfrac{1}{2}$BPS immersion
      \\
      of M5-worldvolume
      \\
      \color{purple}
      with CS gauge field
      \\
      in 11D SuGra solution
    \end{tabular}
  }
  \right.
  \end{array}
\end{equation}

As the notation already suggests, a flux-quantization law admissible for this system of non-linear Gau{\ss} laws is the non-abelian cohomology theory whose classifying space is $\mathbb{C}P^3$ (over $S^4$), hence the ``twistorial Cohomotopy'' of \cite{FSS20c}\cite{SS23-Mf}.
The character map on this cohomology theory is just what we develop in the main text, in twisted equivariant generalization, and we close by commenting on the consequences:

\medskip

\noindent
{\bf -- Tangential twisting and shifted integrality.} 
The higher gauge fields (flux densities) considered above are all defined on given (immersions of) super-spacetimes, and as such with respect to the {\it background field} of (super-gravity).
According to the dictionary of \hyperlink{TableCG}{\it Table CG}, background fields manifest as {\it twisting} of the flux-quantizing cohomology theory.
Since the topological charges of gravity are encoded in the frame bundle (or its associated tangent bundle) of spacetime, classified by a map $X \xrightarrow{\mathrm{Fr}_X} B \mathrm{Spin}(1,10)$, we are looking for a corresponding ``tangential'' twisting of twistorial Cohomotopy. The subgroup of $\mathrm{Spin}(1,10)$ that preserves the quaternionic Hopf fibration is $\mathrm{Sp}(2)\cdot \mathrm{Sp}(1)$ \cite[Prop. 2.20]{FSS19b}, and the subgroup that preserves the twistor fibration is still $\mathrm{Sp}(2)$ \cite[Prop. 2.2]{FSS20c}, of which finally in the main text we consider the further subgroup $\mathrm{Sp}(1)$, for definiteness.

The shifted flux quantization \eqref{IntegralityConditions} of $G_4$, which is implied 
\cite[Prop. 3.13]{FSS19b}
by this tangential twisting is thought to be \cite{Witten97} a key aspect of the completion of 11D SuGra to ``M-theory''.

\medskip

\noindent
{\bf -- Equivariance and anyonic solitons.}
As indicated in \hyperlink{TableCG}{\it Table CG}, passage to equivariant cohomology on $G$-spaces corresponds to considering higher gauge fields on (super) $G$-orbifolds (cf. \cite{SS19a}\cite{BSS19}).
In the present context, an interesting situation are M5-branes wrapped on $S^1$-bundles over 2-dimensional orbifolds locally of the following form (cf. \cite[p. 7]{GukovPei17}):
$$
  \Sigma^{1,5}
  \;\defneq\;
  \begin{tikzcd}[
    column sep=-8pt
  ]
  \mathbb{R}^{1,0}
  &\times&
  \mathbb{R}^{2}_{\cup\{\infty\}}
  &\times&
  S^1
  &\times&
  \mathbb{R}^2
  \ar[
    in=60,
    out=180-60,
    looseness=3.5,
    shift right=3pt,
    "{
      \;\;\,\mathclap{\ZTwo}\;\;\,
    }"{description}
  ]
  \mathrlap{\,,}
  \end{tikzcd}
$$
where $\ZTwo$ acts on $\mathbb{R}^2$ by point reflection, and where $(-)_{\cup\{\infty\}}$ denotes one-point compactification by adding a ``point at infinity'', as suitable for measuring solitonic charges (\cite[\S 2.2]{SS24-Flux}).

On worldvolume domains of this form, flux-quantization  in equivariant twistorial Cohomotopy restricts on the orbi-singularity 
to flux-quantization in 2-Cohomotopy 
\eqref{GHetEquivariantSpLRParametrizedTwistorSpace}.
$$
  \hspace{8mm}
  \underset{
    \mathclap{
      \raisebox{-4pt}{
        \scalebox{.7}{
          \color{darkblue}
          \bf
          \def\arraystretch{.9}
          \begin{tabular}{c}
            equivariant twistorial
            Cohomotopy
            \\
            of M5-brane worldvolume
            \\
            wrapped on Seifert-orbisingularity
          \end{tabular}
        }
      }
    }
  }{
  \mathcal{T}_{\ZTwo}
  \big(\!
    \orbisingular
    (
      \Sigma^{1,5}
      \!\sslash\!
      \ZTwo
    )
  \big)
  }
  \quad \;\; 
  =
  \left\{\!\!\!
  \begin{tikzcd}[
    column sep=12pt
  ]
    \ZTwo/1
  \ar[
    in=60,
    out=180-60,
    looseness=3.5,
    shift right=3pt,
    "{
      \;\;\mathclap{\ZTwo}\;\;\,
    }"{description}
  ]
    \mathrlap{\;\;:}
    \ar[d]
    &[-17pt]
    \mathbb{R}^{1,0}
    \times
    \mathbb{R}^2_{\cup \{\infty\}}
    \times
    S^1
    \times 
    \mathbb{R}^2
    \ar[r, dashed]
    &
    \mathbb{C}P^3
    \\
    \ZTwo/\ZTwo
    \mathrlap{\;\;:}
    &
    \mathbb{R}^{1,0}
    \times
    \mathbb{R}^2_{\cup \{\infty\}}
    \times
    S^1
    \ar[
      u,
      hook
    ]
    \ar[
      r,
      dashed
    ]
    &
    S^2
    \ar[
      u,
      hook
    ]
  \end{tikzcd}
 \!\!\! \right\}_{\!\!\big/\sim}
$$

By a recent result \cite{SS24-AbAnyons}, this has the interesting consequence of implying that the corresponding solitonic field configurations have {\it anyonic quantum states}  described by abelian Chern-Simons quantum observables. Such a derivation is of considerable interest in application to quantum materials and to quantum computation (cf. \cite{SS23-ToplOrder}); several authors have argued for a similar conclusion on more informal grounds, following \cite{ChoGangKim20}. 

We discuss this application in more detail in the companion article \cite{SS24-AbAnyonsOnSeifert}.

%\newpage

%%%%%%%%%%%%%%%%%

\end{document}